\def\isarxiv{1}
\DeclareMathAlphabet{\mathcal}{OMS}{zplm}{m}{n}
\LetLtxMacro{\originaleqref}{\eqref}
\renewcommand{\eqref}{Eq.~\originaleqref}
\definecolor{b2}{rgb}{0.0, 0.53, 0.74}
\definecolor{royalpurple}{rgb}{0.47, 0.32, 0.66}
\definecolor{sinopia}{rgb}{0.8, 0.25, 0.04}
\def\th@plain{%
  \thm@notefont{}%
  \itshape %
}
\def\th@definition{%
  \thm@notefont{}%
  \normalfont %
}
\crefname{equation}{Eq.}{Eqs.}
\crefname{claim}{Claim}{Claims}
\newcommand*{\email}[1]{\href{mailto:#1}{\nolinkurl{#1}} } 
\numberwithin{equation}{section}
\theoremstyle{plain}
\newtheorem{theorem}{Theorem}[section]
\newtheorem{lemma}[theorem]{Lemma}
\newtheorem{proposition}[theorem]{Proposition}
\newtheorem{corollary}[theorem]{Corollary}
\newtheorem{fact}[theorem]{Fact}
\newtheorem{claim}[theorem]{Claim}
\theoremstyle{definition}
\newtheorem{definition}[theorem]{Definition}
\theoremstyle{remark}
\newtheorem{remark}[theorem]{Remark}
\newcommand{\real}{\mathbb{R}}
\newcommand{\complex}{\mathbb{C}}
\newcommand{\zdom}{\vec{z}_{\mathrm{dom}}}
\newcommand{\ztail}{\vec{z}_{\mathrm{tail}}}
\newcommand{\mudom}{\vec{\mu}_{\mathrm{dom}}}
\newcommand{\mutail}{\vec{\mu}_{\mathrm{tail}}}
\DeclareMathOperator{\diag}{diag}
\newcommand{\mat}[1]{\boldsymbol{#1}}
\renewcommand{\vec}[1]{\boldsymbol{#1}}
\newcommand{\norm}[1]{\mleft\| #1 \mright\|}
\newcommand{\uinorm}[1]{{\left\vert\kern-0.25ex\left\vert\kern-0.25ex\left\vert #1 
                \right\vert\kern-0.25ex\right\vert\kern-0.25ex\right\vert}}
\newcommand{\Id}{\mathbf{I}}
\DeclareMathOperator{\expect}{\mathbb{E}}
\DeclareMathOperator{\prob}{\mathbb{P}}
\newcommand{\order}{\mathcal{O}}
\definecolor{mygreen}{RGB}{80,180,0}
\renewcommand{\d}{\mathrm{d}}
\newcommand{\iu}{\mathrm{i}}
\renewcommand{\Re}{\mathrm{Re}}
\renewcommand{\Im}{\mathrm{Im}}
\renewcommand{\hat}[1]{\widehat{#1}}
\renewcommand{\tilde}[1]{\widetilde{#1}}
\newcommand{\C}{\mathbb{C}}
\newcommand{\mds}{\mathrm{md}}
\newcommand{\unitcircle}{\mathbb{T}}
\newcommand{\for}{\quad \text{for }}
\newcommand{\zgap}{\Delta_{\vec{z}}}
\DeclareMathOperator{\Toep}{\mathbf{Toep}}
\DeclareMathOperator{\Vand}{\mat{V}}
\newcommand{\pinv}{+}
\newcommand{\adj}{\dagger}
\newcommand{\vvec}{\vec{v}}
\newcommand{\outprod}[1]{#1^{\vphantom{\adj}}#1^\adj}
\newcommand{\pinvk}[2]{\left({#1}^{\pinv}\right)^{#2}}
\newcommand{\abs}[1]{\left\lvert#1\right\rvert}
\newcommand{\poly}{\operatorname{poly}}
\author[1]{Zhiyan Ding\thanks{\texttt{zding.m@berkeley.edu}.}}
\author[2]{Ethan N. Epperly\thanks{\texttt{eepperly@caltech.edu}.}}
\author[1,3,4]{Lin Lin\thanks{\texttt{linlin@math.berkeley.edu}.}}
\author[5]{Ruizhe Zhang\thanks{\texttt{rzzhang@berkeley.edu}.}}
\affil[1]{\emph{Department of Mathematics, University of California, Berkeley}}
\affil[2]{\emph{Department of Computing and Mathematical Sciences, California Institute of Technology, Pasadena, CA, USA}}
\affil[3]{\emph{Applied Mathematics and Computational Research Division, Lawrence Berkeley National Laboratory}}
\affil[4]{\emph{Challenge Institute for Quantum Computation, University of California, Berkeley}}
\affil[5]{\emph{Simons Institute for the Theory of Computing}}
\author{}
\begin{document}
\date{} 
\title{The ESPRIT algorithm under high noise: \\ Optimal error scaling and noisy super-resolution}

\begin{titlepage}
  \maketitle

\begin{abstract}
Subspace-based signal processing techniques, such as the Estimation of Signal Parameters via Rotational Invariant Techniques (ESPRIT) algorithm, are popular methods for spectral estimation.
These algorithms can achieve the so-called super-resolution scaling under low noise conditions, surpassing the well-known Nyquist limit. However, the performance of these algorithms under high-noise conditions is not as well understood.
Existing state-of-the-art analysis indicates that ESPRIT and related algorithms can be resilient even for signals where each observation is corrupted by statistically independent, mean-zero noise of size $\order(1)$, but these analyses only show that the error $\epsilon$ decays at a slow rate $\epsilon=\mathcal{\tilde{O}}(n^{-1/2})$ with respect to the cutoff frequency $n$ (i.e., the maximum frequency of the measurements).
In this work, we prove that under certain assumptions, the ESPRIT algorithm can attain a significantly improved error scaling $\epsilon = \mathcal{\tilde{O}}(n^{-3/2})$, exhibiting noisy super-resolution scaling beyond the Nyquist limit $\epsilon = \mathcal{O}(n^{-1})$ given by the Nyquist-–Shannon sampling theorem. We further establish a theoretical lower bound and show that this scaling is optimal.
Our analysis introduces novel matrix perturbation results, which could be of independent interest.
\end{abstract}
  \thispagestyle{empty}
\end{titlepage}

{\hypersetup{linkcolor=black}
\thispagestyle{empty}
\tableofcontents
\thispagestyle{empty}
}
\newpage
\setcounter{page}{1} 

\section{Introduction}
Spectral estimation is a fundamental problem in statistical signal processing.
The goal of spectral estimation is to reconstruct fine details of a signal from noisy measurements. 
Algorithms for spectral estimation have broad applications spanning image/audio processing and compression \cite{gnh08}, acoustics and electromagnetics~\cite{Krim_1996,Schmidt_1986}, geophysics \cite{ssf90}, inverse problems~\cite{Fannjiang_2010,Fannjiang_remote}, time series analysis~\cite{Cryer2008}, spectrometry \cite{scg96}, machine learning \cite{hbh18,qgr22}, and quantum computing~\cite{Somma2019,LinTong2022}. 

In this paper, we will consider the following formulation of the spectral estimation problem: Consider a positive spectral measure %
\begin{equation*}
    \mu = \sum_{i=1}^{d} \mu_i\delta_{f_i}, \quad \mu_i\ge 0
\end{equation*}
representing the superposition of $d$ point sources.
The \emph{locations} of the sources are denoted by $f_i \in [0, 1)$ and the \emph{intensities} are denoted by $\mu_i$.

We assume access to noisy measurements of the Fourier transform of the spectral measure:
\begin{equation} \label{eq:noisy-signal}
    g_j =\int^1_{0} \exp(2\pi \iu  j \omega) \mathrm{d}\mu(\omega) + E_j \coloneqq \sum_{i=1}^d \mu_i z_i^j + E_j \for j = 0,\ldots,n-1.
\end{equation}
Here, $j$ is the frequency for each measurement and $z_i \coloneqq \exp(2\pi \iu f_i)$ are complex numbers on the unit circle $\unitcircle \coloneqq \{ z \in \complex : |z| = 1\}$. The parameter $n$ is called the \emph{cutoff frequency}\footnote{For some applications, the indices $j$ have units of time, meaning the $f_i$ have units of $\mathrm{time}^{-1}$.
For this reason, the $f_i$ are sometimes referred to as ``frequencies'' in some applications. 
This clash of terminology is unfortunate.
For this work, the term ``frequency'' refers to a frequency $j$ of a complex exponential $\omega \mapsto \exp(2\pi \iu j \omega)$ defining the measurements \cref{eq:noisy-signal}, and the quantities $f_i$ are referred to as ``locations''.}.

The goal of spectral estimation is to recover the locations $f_i$ (or equivalently, the numbers $z_i = \exp(2\pi \iu f_i)$) and the intensities $\mu_i$ from the measurements \cref{eq:noisy-signal}.
In applications, it is typically sufficient to recover only the first $r \ll d$ locations and intensities, which we refer to as the \emph{dominant} part of the spectrum. Suppose $\mu_i$ are in a non-increasing order.
In this setting, the measurements $g_j$ can be decomposed as follows:
\begin{equation*}
    g_j = \underbrace{\sum_{i=1}^r \mu_i z_i^j}_{\text{signal}\, \equiv \,\text{dominant spectrum}} + \underbrace{\sum_{i=r+1}^d \mu_i z_i^j}_{\text{bias} \, \equiv \, \text{tail spectrum}} + \underbrace{E_j}_{\text{measurement noise}}.
\end{equation*}
We see that $g_j$ decomposes as a sum of the signal $\sum_{i=1}^r \mu_i z_i^j$ from the dominant part of the spectrum, a deterministic \emph{bias} $\sum_{i=r+1}^d \mu_i z_i^j$ from the tail of the spectrum, and \emph{measurement noise} $E_j$. 
In this work, we assume the following conditions on the locations, intensities, and measurement noise:

\begin{enumerate}[label=\Alph*.]
\item \textbf{Separation of locations.} All dominant locations are separated from each other and from non-dominant locations:
\begin{equation}\label{eqn:condition_zgap}
\zgap \coloneqq  \min_{1 \leq i \leq r, 1 \leq i'\leq d, i \neq i'} |z_i - z_{i'}| > 0.
\end{equation}
It is important to note that separation between non-dominant locations is not required.

\item \textbf{Separation of intensities.} 
We assume the intensities are positive and ordered
\begin{equation*}
    1\geq \mu_1 \ge \mu_2 \ge \cdots \ge \mu_r > \mu_{r+1} \ge \mu_{r+2} \ge \cdots \mu_d>0\,.
\end{equation*}
We assume the cumulative intensity of non-dominant locations is bounded\footnote{The $1/8$ factor is an arbitrary constant that we chose. We believe any constant strictly less than one (say, 0.999) will work by slightly changing the proof, but we have not checked the proof in detail with this modification.}:
\begin{equation}\label{eqn:condition_mutal}
\mu_{\rm tail} \coloneqq \sum_{i=r+1}^{d} \mu_i\le \frac{1}{8} \cdot \mu_r.
\end{equation}

\item \textbf{Independent, mean-zero, sub-Gaussian measurement noise.} We assume that $\{E_j\}_{j=1}^n$ are independent complex random variables with zero mean $\expect[E_j] = 0$ and sub-Gaussian tail decay. More specifically, we assume there exists $\alpha > 0$ such that
\begin{equation}\label{eqn:alpha_E_j}
\mathbb{P}(|E_j|\geq t)\leq 2\exp\left(-\frac{t^2}{2\alpha^2}\right) \text{ for every } t\geq 0 \text{ and every } j = 0,\ldots,n-1.
\end{equation}

\end{enumerate}

\begin{tcolorbox}[colback=white,colframe=black,width=\columnwidth,boxsep=5pt,arc=4pt]
  \textbf{Spectral estimation problem (with bias and noise):} Under assumptions A--C, estimate the dominant locations $\{f_i\}_{i=1}^r$ (or, equivalently, $\{z_i\}_{i=1}^r$) and intensities $\{\mu_i\}_{i=1}^r$ to precision $\epsilon$.
\end{tcolorbox}

This setup is directly motivated by the eigenvalue estimation problem on quantum computers, which has received significant attention recently~\cite{Somma2019,Stroeks_2022,Dutkiewicz2022heisenberglimited,LinTong2022,wfz22,WanBertaCampbell2022,PhysRevA.108.062408,ShenCampsSzaszEtAl2023,Ding2023simultaneous}.
We also expect that this setup could be applicable in many fields, such as communication, sensing, and audio processing.

When $\mu_{\rm tail}=0$ (no bias) and $\sup_j \abs{E_j}=\poly(\epsilon)$ (small noise), it is possible to estimate the locations (all of which are dominant, since we have assumed $\mu_{\rm tail}=0$) to arbitrary precision $\epsilon$ as long as $n=\Omega(\zgap^{-1})$~\cite{moitra_super-resolution_2015}.  This is the \emph{super-resolution} scaling in signal processing (see e.g.~\cite{don92,crt06,cf13,moitra_super-resolution_2015,9000636,li2022stability}), which is superior to the \emph{Nyquist error scaling} $\epsilon = \mathcal{O}(n^{-1})$ given by the  Nyquist--Shannon sampling theorem. 
This scaling is clearly not feasible in the presence of large bias \emph{or} measurement noise. 
Even when $\mu_{\mathrm{tail}} = 0$, the ability to solve the spectral estimation problem with \emph{arbitrary} precision $\epsilon$ as $n \rightarrow \infty$ is not immediately obvious when the measurement noise is much larger than $\epsilon$, though we may still hope to resolve dominant locations with the Nyquist error scaling.
In support of this, Price and Song introduced a sparse Fourier transform algorithm designed for a continuous version of the spectral estimation problem \cite{ps15}.
The Price--Song algorithm requires access to noisy estimates of the continuous Fourier transform $g(t) = \int_{0}^{1} \exp(2\pi \iu t\omega) \mathrm{d}\mu(\omega)$ for $t \in \mathbb{R}$ and, under the conditions $\mu_{\mathrm{tail}},\ 
\sup_j |E_j| = \mathcal{O}(\mu_r)$, produces estimates of the dominant locations with error $\epsilon = \mathcal{O}(n^{-1})$, where $n\in\mathbb{R}$ represents the maximum $t$ utilized in the algorithm.
This work inspires us to adopt a generalized definition of super-resolution:

\begin{definition}[Noisy super-resolution scaling] \label{def:noisy-super}
    An algorithm satisfies \emph{noisy super-resolution scaling}\footnote{Our use of the term ``noisy super-resolution scaling'' differs from that in  \cite{moitra_super-resolution_2015}, where the term refers to classical super-resolution scaling in the presence of small noise. We note that, in this definition, the error $\epsilon$ may also depend on other parameters, which we suppress in the $o$-notation. Throughout this paper, the estimation error $\epsilon$ is measured in the \emph{matching distance} (see \eqref{eqn:optimal_matching}).} if it recovers the locations up to error strictly superior to the Nyquist error scaling, i.e., $\epsilon = o(n^{-1})$.
\end{definition}

In this paper, we focus on the Estimation of Signal Parameters via Rotational Invariant Techniques (ESPRIT) algorithm~\cite{paulraj_1986_subspace}. %
Since its introduction in 1986, the ESPRIT algorithm has proven one of the most effective methods for spectral estimation in practice.
The main question of this paper is as follows:

\begin{tcolorbox}[colback=white,colframe=black,width=\columnwidth,boxsep=5pt,arc=4pt]
\textbf{Question:} Can ESPRIT achieve a noisy super-resolution scaling for solving the spectral estimation problem with bias and measurement noise?
\end{tcolorbox}

Why might we expect noisy super-resolution scaling for the ESPRIT algorithm?
First, due to noise cancellation, the cumulative impact of the noise might asymptotically decrease relative to the signal as $n$ increases. Specifically, given a fixed, normalized vector $\vec{v}\in\mathbb{C}^n$, the magnitude of the inner product between the signal and $\vec{v}$ can be as large as $\order(\sqrt{n})$. From the central limit theorem, the typical magnitude of the inner product between the noise $\{E_j\}$ and $\vec{v}$ is $\order(1)$, much smaller. 
By adapting the matrix perturbation bounds derived by Moitra in \cite{moitra_super-resolution_2015} and employing matrix concentration inequalities, we prove that ESPRIT can estimate dominant locations with a \emph{central limit error scaling}  $\epsilon=\tilde{\order}(n^{-1/2})$ (\cref{thm:esprit-0.5}).  In particular, when $\mu_{\text{tail}}=0$ and $\sup_{j} \abs{E_j}=\poly(\epsilon)$, this calculation reduce to the standard \emph{super-resolution} scaling in \cite{moitra_super-resolution_2015}. %

Second, a Cram\'er--Rao bound for signal processing (see, e.g., \cite{sn89})  hints that noisy super-resolution scaling may be reachable: the accuracy in estimating dominant locations could potentially be as low as \(\epsilon = \Omega(n^{-3/2-o(1)})\). This superior scaling can be attributed to the substantial increase in information contained in high-frequency data, which extends beyond mere noise cancellation as \(n\) increases. This paper includes a self-contained proof of the lower bound for the spectral estimation problem (\cref{thm:lower_bound}), confirming that no algorithm can achieve an error rate better than \(\epsilon = \Omega(n^{-3/2})\) in estimating dominant locations (\cref{thm:lower_bound}). We refer to \(\epsilon = \Omega(n^{-3/2})\)as the \emph{optimal error scaling}.

The main contribution of this paper is a significant improvement over the central limit error scaling outlined in \cref{thm:esprit-0.5}. We establish that ESPRIT achieves the optimal error scaling for estimating dominant locations \emph{and} dominant intensities (\cref{thm:esprit-1.5}). To our knowledge, this is the first demonstration of a noisy super-resolution scaling (in the sense of \cref{def:noisy-super}) for spectral estimation under comparable assumptions.

\paragraph{Notation.}
A complete summary of our notation is provided in \cref{sec:notation}. 
We make special comments about our use of asymptotic notation.
Throughout this paper, most of our results are non-asymptotic, and we use the big-${\cal O}$ notation $f(n)=\order(g(n))$ to indicate that there exists a universal constant $C>0$ such that $f(n)\leq C\cdot g(n)$. Similarly, $f(n)=\Omega(g(n))$ or $f(n)=\Theta(g(n))$ imply the existence of a uniform constant $C>0$ such that $f(n)\geq C\cdot g(n)$ or $C^{-1}\cdot g(n)\leq f(n)\leq C\cdot g(n)$.
We augment $\order(B)$, $\Omega(B)$, and $\Theta(B)$ with a tilde to suppress polylogarithmic factors in $B$.%

\subsection{ESPRIT algorithm and central limit error scaling}\label{sec:esprit}

The ESPRIT algorithm~\cite{roy_esprit-estimation_1989} starts by rearranging the noisy measurements $\vec{g} = (g_0,\ldots,g_{n-1})$ into a Hankel or Toeplitz matrix. In this paper, we consider the Toeplitz version of the algorithm,
\begin{equation}\label{eqn:toeplitz}
    \hat{\mat{T}} \coloneqq \Toep(\vec{g}) \coloneqq \begin{bmatrix}
        g_0 & \overline{g_1} & \overline{g_2} & \cdots & \overline{g_{n-1}} \\
        g_1 & g_0 & \overline{g_1} & \cdots & \overline{g_{n-2}} \\
        g_2 & g_1 & g_0 & \cdots & \overline{g_{n-3}} \\ 
        \vdots & \vdots & \vdots & \ddots & \vdots \\
        g_{n-1} & g_{n-2} & g_{n-3} & \cdots & g_0
    \end{bmatrix}.
\end{equation}
The ESPRIT algorithm (see \cref{alg:ESPRIT}) is a subspace-type algorithm, which uses properties of the subspace spanned by the $r$ dominant eigenvectors of the Hermitian Toeplitz matrix $\hat{\mat{T}}$. 
Note that the version of ESPRIT discussed here assumes that the intensities $\mu_i$ are real and positive.\footnote{
ESPRIT can also be utilized even when the coefficients $\{\mu_i\}^d_{i=1}$ are complex numbers. In cases where $\{\mu_i\}^d_{i=1}$ are not real and positive, we adapt the first step of ESPRIT by substituting the eigendecomposition with the singular value decomposition. The output of the algorithm consists of dominant locations and intensities with large $|\mu_i|$.
}

\begin{algorithm}[t]
\caption{ESPRIT algorithm for spectral estimation}\label{alg:ESPRIT}
\begin{algorithmic}[1]
    \Require Measurements $\vec{g} = (g_0,\ldots,g_{n-1})\in\complex^n$; number of dominant locations $r$
    \Ensure Estimates $\vec{\hat{z}}$ and $\vec{\hat{\mu}}$ for dominant locations and intensities
    \State $(\hat{\mat{Q}},\hat{\mat{\Sigma}}) \gets \Call{Eig}{\Toep(\vec{g})}$ \Comment{Eigendecomposition with eigenvalues in decreasing order}
    \State $\mat{\hat{Q}}_\uparrow\gets\mat{\hat{Q}}(1:n-1,1:r)$\Comment{First $n-1$ rows and first $r$ columns of $\mat{\hat{Q}}$}
    \State $\mat{\hat{Q}}_\downarrow\gets\mat{\hat{Q}}(2:n,1:r)$\Comment{Last $n-1$ rows and first $r$ columns of $\mat{\hat{Q}}$}
    \State $\mat{\hat{W}} \gets \mat{\hat{Q}}_\uparrow^\pinv \mat{\hat{Q}}_\downarrow$ \Comment{Apply pseudoinverse} \label{line:W_matrix}
    \State $\vec{\hat{\lambda}} \gets \Call{Eig}{\hat{\mat{W}}}$ \Comment{Compute eigendecomposition}
    \State Reorder $\vec{\hat{\lambda}}$ so that $0\leq \mathrm{arg}(\hat{\lambda}_1)\leq \mathrm{arg}(\hat{\lambda}_2)\leq \cdots\leq \mathrm{arg}(\hat{\lambda}_r)<2\pi$ 
    \State $\vec{\hat{z}}_r \gets \exp(\iu\mathrm{arg}(\vec{\hat{\lambda}}))$
    \State $\vec{\hat{\mu}}_r\gets \Vand_n(\hat{\vec{z}}_r)^\pinv \vec{g}$\label{ln:mu_r} \Comment{Optimal intensities by least-squares}
\end{algorithmic}    
\end{algorithm}

Define the location and intensity vectors
\[
\zdom=\left(z_1,z_2,\cdots,z_r\right)\in\mathbb{T}^r,\quad \ztail=\left(z_{r+1},\cdots,z_d\right)\in\mathbb{T}^{d-r}\,,
\]
\[
\mudom=\left(\mu_1,\mu_2,\cdots,\mu_r\right)\in\mathbb{R}_+^r,\quad \vec{\mu}_{\rm tail}=\left(\mu_{r+1},\cdots,\mu_d\right)\in\mathbb{R}_+^{d-r}\,,
\]
In the limit of small bias and noise, $\hat{\mat{T}}$ is an approximation for the Toeplitz matrix of the \emph{exact} signal:
\begin{equation}\label{eqn:T}
\mat{T}=\mat{V}_n(\zdom) \cdot \diag(\mudom) \cdot \mat{V}_n(\zdom)^\dagger\,.
\end{equation}
where $\dagger$ indicates the conjugate transpose and the Vandermonde matrix $\mat{V}_n(\vec{z})\in \C^{n\times k}$ with vector $\vec{z}\in \mathbb{T}^k$ is defined as 
\begin{equation} \label{eq:vander-def}
    \mat{V}_n(\vec{z}) \coloneqq \begin{bmatrix}
        \vvec_n(z_1) & \cdots & \vvec_n(z_k)
    \end{bmatrix} \quad \text{where }~ \vvec_n(z) \coloneqq \begin{bmatrix}
        1 & z & z^2 & \cdots & z^{n-1} 
    \end{bmatrix}^\top.
\end{equation}
The Toeplitz matrix $\mat{\hat{T}}$ can then be written
\begin{equation}\label{T_hat}
\hat{\mat{T}}=\mat{T}+\mat{E}, \quad \mat{E}\coloneqq \mat{E}_{\rm tail}+\mat{E}_{\mathrm{random}}, \quad \mat{E}_{\rm tail} \coloneqq \mat{V}_n(\ztail) \cdot \diag(\mutail) \cdot \mat{V}_n(\ztail)^\dagger\,.
\end{equation}
Here, the measurement noise is also rearranged into a Hermitian Toeplitz matrix $\mat{E}_{\mathrm{random}}=\Toep((E_0,\dots,E_{n-1}))\in \C^{n\times n}$ according to \eqref{eqn:toeplitz}.

When there is no bias or noise, the ESPRIT algorithm can reconstruct the signal exactly \cite{paulraj_1986_subspace}, which is briefly reviewed in \cref{sec:vand_mat_eig} for completeness. More generally, to quantify the error of our recovery, we employ the optimal matching distance: Given $\vec{a},\hat{\vec{a}}\in\complex^r$, the optimal matching distance is
\begin{equation}\label{eqn:optimal_matching}
\mds\left(\hat{\vec{a}},\vec{a}\right)=\min_{\pi}\max_{1\leq i\le r}|\vec{a}_{i}-\hat{\vec{a}}_{\pi(i)}|\,,
\end{equation}
The minimum is taken over all permutations $\pi$ on $\{1,\ldots,r\}$.

We first present the central limit error scaling of the ESPRIT algorithm. 
The proof is given in \cref{sec:esprit-0.5}, which uses matrix concentration inequalities (\cref{lem:Mecks}) to control $\norm{\mat{E}_{\mathrm{random}}}_2$, matrix perturbation theory, and the bound on the singular values of Vandermonde matrices
derived in \cite{moitra_super-resolution_2015}.

\begin{theorem}[Central limit error scaling of ESPRIT] \label{thm:esprit-0.5}
    Consider the spectral estimation problem under assumptions \eqref{eqn:condition_mutal} and \eqref{eqn:alpha_E_j}. Assume %
    $n=\Omega\left(\alpha^2/(\zgap^2\mu_r^2)+1/\zgap\right)$.
    With probability $1 - 1/n^2$,
    \begin{equation}\label{eqn:previous_z_est}
        \mds(\hat{\vec{z}}_r,\zdom)= \order \left( \frac{\mu_{\rm tail}}{\mu_r\zgap n} + \frac{\alpha\sqrt{\log n}}{\mu_r \sqrt{n}} \right).
    \end{equation}
\end{theorem}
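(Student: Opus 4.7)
The plan is to split the perturbation $\mat{E} = \mat{E}_{\rm tail} + \mat{E}_{\rm random}$, bound each piece in operator norm, exploit the spectral gap of the rank-$r$ signal matrix $\mat{T}$, and then convert the resulting subspace perturbation into an error on the recovered locations using an ESPRIT-specific argument in the style of Moitra~\cite{moitra_super-resolution_2015}.

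First, I would bound $\|\mat{E}\|_2$. For the bias, the triangle inequality gives
\begin{equation*}
\|\mat{E}_{\rm tail}\|_2 \le \sum_{i=r+1}^d \mu_i \|\vec{v}_n(z_i)\|_2^2 = n\mu_{\rm tail}.
\end{equation*}
For the random Hermitian Toeplitz matrix $\mat{E}_{\rm random}$, whose free entries are independent, zero-mean, and sub-Gaussian with parameter $\alpha$, a Meckes-type matrix concentration inequality (\cref{lem:Mecks}) yields $\|\mat{E}_{\rm random}\|_2 = \order(\alpha\sqrt{n\log n})$ with probability at least $1-1/n^2$.

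Second, I would establish a spectral gap of $\mat{T}$. Moitra's lower bound for Vandermonde matrices with $\zgap$-separated nodes gives $\sigma_{\min}(\Vand_n(\zdom))^2 = \Omega(n)$ whenever $n = \Omega(1/\zgap)$; since $\mat{T}$ has rank exactly $r$, this yields $\lambda_r(\mat{T}) = \Omega(n\mu_r)$ while $\lambda_{r+1}(\mat{T}) = 0$. The sample-complexity hypothesis $n = \Omega(\alpha^2/(\zgap^2\mu_r^2))$ then forces $\|\mat{E}\|_2 \ll \lambda_r(\mat{T})$ with the desired probability, so the Davis--Kahan $\sin\Theta$ theorem delivers a subspace perturbation bound between the leading $r$-dimensional eigenspaces of $\hat{\mat{T}}$ and $\mat{T}$ of order $\|\mat{E}\|_2/(n\mu_r)$.

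Third, I would pass from subspace perturbation to eigenvalue perturbation of $\hat{\mat{W}} = \hat{\mat{Q}}_\uparrow^\pinv\hat{\mat{Q}}_\downarrow$. Following Moitra, the unperturbed matrix $\mat{W}$ built from the true signal subspace is similar to $\diag(\zdom)$ with eigenvector matrix whose condition number is controlled by $\sigma_{\min}(\Vand_{n-1}(\zdom))$, which is again $\Omega(\sqrt{n})$ by Step~2. Bauer--Fike then turns a bound on $\hat{\mat{W}} - \mat{W}$ into a matching-distance bound on their eigenvalues. Combined with Steps~1--2, the random contribution gives $\order(\alpha\sqrt{\log n}/(\mu_r\sqrt{n}))$, matching the second term of the claim.

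The main technical obstacle is recovering the additional $1/(\zgap n)$ saving in the bias term: a direct application of Davis--Kahan plus Bauer--Fike gives only $\order(\mu_{\rm tail}/\mu_r)$ from the tail. The improvement must exploit the geometric separation between dominant and tail locations, since the inner products across the signal/tail boundary satisfy
\begin{equation*}
\abs{\vec{v}_n(z_i)^\adj \vec{v}_n(z_j)} = \abs{\frac{1-(z_j/z_i)^n}{1-z_j/z_i}} \le \frac{2}{|z_i-z_j|} \le \frac{2}{\zgap}
\end{equation*}
for $i\le r < j$, rather than the trivial $n$. Consequently, the action of $\mat{E}_{\rm tail}$ on the signal subspace is much weaker than its global operator norm suggests, and I would incorporate this structural refinement — rather than treating $\mat{E}_{\rm tail}$ as a generic perturbation — into the $\sin\Theta$ step to obtain the claimed $\order(\mu_{\rm tail}/(\mu_r\zgap n))$ scaling.
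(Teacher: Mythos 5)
Your proposal follows essentially the same route as the paper: operator-norm bounds on $\mat{E}_{\rm tail}$ and $\mat{E}_{\rm random}$, Ostrowski/Moitra for the spectral gap $\lambda_r(\mat{T})=\Omega(\mu_r n)$, the Davis--Kahan $\sin\theta$ theorem with the numerator refined to $\norm{\mat{E}_{\rm tail}\mat{Q}_r}_2=\order(\mu_{\rm tail}/\zgap)$ (the paper's \cref{prop:tail-error}, obtained there via Moitra's Gram-matrix bound on the extended node set rather than your entrywise geometric-sum estimate, which would cost an extra $\sqrt{r}$), and finally Bauer--Fike applied to $\mat{\hat{Q}}_\uparrow^\pinv\mat{\hat{Q}}_\downarrow$ via the near-unitary similarity to $\diag(\zdom)$. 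The key refinement you flag at the end — exploiting the separation between dominant and tail locations inside the $\sin\theta$ step rather than treating $\mat{E}_{\rm tail}$ generically — is exactly how the paper obtains the $\mu_{\rm tail}/(\mu_r\zgap n)$ term.
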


\begin{remark}[Classical super-resolution scaling]\label{rem:super_resolution} 
 We observe that the aforementioned theorem can capture the classical super-resolution scenario. Specifically, if $\mu_{\rm tail}=\mathcal{O}(\epsilon)$ and $n=\Omega(1/\zgap)$, setting the noise level $\alpha=\widetilde{\Theta}(\epsilon \sqrt{n}\mu_r)$ is sufficient to guarantee $\mds(\hat{\vec{z}}_r,\zdom)=\mathcal{O}(\epsilon)$. This condition corresponds to the scenario where the tail error and random noise are small, namely $\|\mat{E}\|=\mathcal{O}(\epsilon)$ with high probability, suppressing dependence on $n,\mu_r$ in the $\mathcal{O}$-notation.
\end{remark}

\subsection{Contribution}\label{sec:main_contribution}

The main contribution of this paper is a novel analysis of the ESPRIT algorithm, which demonstrates the optimal error scaling of ESPRIT. These findings are summarized in the following theorem:

\begin{theorem}[Optimal error scaling of ESPRIT, {\normalfont informal version of \cref{thm:esprit-1.5_formal}}] \label{thm:esprit-1.5}
Consider the spectral estimation problem under assumptions \eqref{eqn:condition_mutal} and \eqref{eqn:alpha_E_j}. Assume       $\alpha>1$ and
$n=\Omega\left(\frac{r\alpha^2}{\mu^2_r\zgap^{5/3}}\right)$. %
    With probability $1-1/n^2$, the location estimation of ESPRIT satisfies:%
\begin{align}\label{eqn:z_estimation_error}
       \mds(\hat{\vec{z}}_r,\zdom)&= \tilde{\order} \left( \frac{r^{1.5}\alpha^3}{\mu_r^{3} \zgap^{1.5} n^{1.5}}\right)\,.
    \end{align}
The intensity estimation of ESPRIT satisfies:
\begin{align}\label{eqn:mu_estimation_error}
        \mds(\hat{\vec{\mu}}_r,\mudom)= \tilde{\order}\left(\frac{r^{2.5}\alpha^3}{\mu_r^3\Delta_{\vec{z}}^{1.5}\sqrt{n}}\right)\,.
    \end{align}
Here, $\widetilde{\mathcal{O}}$ suppresses polylogarithmic factors in $n$, $\mu^{-1}_r$, and $\zgap^{-1}$. 
Note that the permutation in the definition of the matching distance is the same for both \cref{eqn:z_estimation_error,eqn:mu_estimation_error}.
\end{theorem}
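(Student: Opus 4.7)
My plan is to sharpen the analysis underlying \cref{thm:esprit-0.5} by isolating, for each eigenvalue $\hat{\lambda}_i$ of $\hat{\mat{W}}$, a \emph{linear-in-}$\mat{E}$ leading term whose fluctuations are controlled by a Hanson--Wright-type bound on specific bilinear forms in the independent entries $E_0,\dots,E_{n-1}$, rather than by the operator norm $\|\mat{E}_{\mathrm{random}}\|_2 = \widetilde{\order}(\alpha\sqrt{n})$ that drives the pessimistic $n^{-1/2}$ rate in \cref{thm:esprit-0.5}. The starting point is the noiseless identity $\mat{W} = \mat{Q}_\uparrow^\pinv \mat{Q}_\downarrow$, with spectrum exactly $\{z_i\}_{i=1}^r$, together with the fact that on the top-$r$ eigenspace $\mat{Q}$ is related to $\mat{V}_n(\zdom)\diag(\mudom)^{1/2}$ by a well-conditioned basis change, since $\sigma_{\min}(\mat{V}_n(\zdom)) = \Theta(\sqrt{n})$ once $n \gtrsim 1/\zgap$; essentially all $\zgap$ and $r$ factors in the final bound will trace back to this Vandermonde conditioning.

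The core step is a second-order perturbation expansion of the ESPRIT matrix. Letting $\mat{P},\hat{\mat{P}}$ be the top-$r$ spectral projectors of $\mat{T}$ and $\hat{\mat{T}}=\mat{T}+\mat{E}$, the matrix $\hat{\mat{W}}$ depends on $\hat{\mat{Q}}$ only through $\hat{\mat{P}}$, so I expand $\hat{\mat{P}}-\mat{P}$ via the resolvent/Sylvester formula and push the expansion through the pseudoinverse on \cref{line:W_matrix} to obtain
\[
\hat{\mat{W}} - \mat{W} = \mat{W}^{(1)}(\mat{E}) + \mat{W}^{(2)}(\mat{E},\mat{E}) + \mat{R},
\]
where $\mat{W}^{(k)}$ is $k$-linear in $\mat{E}$ and $\|\mat{R}\|_2 = \order(\|\hat{\mat{P}}-\mat{P}\|_2^3)$. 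Standard eigenvalue perturbation for $\mat{W}$, whose left/right eigenvectors $\vec{u}_i,\vec{v}_i$ at $z_i$ are explicitly computable from $\mat{V}_n(\zdom)$, then gives
\[
\hat\lambda_i - z_i = \vec{u}_i^\adj \mat{W}^{(1)}(\mat{E})\vec{v}_i + \vec{u}_i^\adj \mat{W}^{(2)}(\mat{E},\mat{E})\vec{v}_i + \order\bigl(\|\hat{\mat{P}}-\mat{P}\|_2^3\bigr).
\]

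The decisive estimate is that the scalar random variable $\vec{u}_i^\adj \mat{W}^{(1)}(\mat{E}_{\mathrm{random}})\vec{v}_i$ can be rewritten as $\sum_{k=0}^{n-1} c_k^{(i)} E_k$ for a coefficient vector $\vec{c}^{(i)} \in \complex^n$ whose $\ell^2$ norm is $\widetilde{\order}(\mu_r^{-1}\zgap^{-1/2}\, n^{-1})$, which is an extra factor of $1/n$ over what the naive operator-norm bound on $\mat{E}_{\mathrm{random}}$ would produce; this is the arithmetic source of the $n^{-3/2}$ scaling. A sub-Gaussian concentration bound applied to $\sum_k c_k^{(i)} E_k$ under \eqref{eqn:alpha_E_j} then yields $|\vec{u}_i^\adj \mat{W}^{(1)}(\mat{E}_{\mathrm{random}})\vec{v}_i| = \widetilde{\order}(\alpha/(\mu_r n^{3/2}))$ with probability $1-n^{-3}$. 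The bilinear and cubic terms are bounded deterministically by $\|\hat{\mat{P}}-\mat{P}\|_2^{2,3} \lesssim (\alpha\sqrt{n\log n}/(\mu_r n))^{2,3}$, each carrying appropriate $r$- and $\zgap$-conditioning factors; assembling these produces \cref{eqn:z_estimation_error}. The tail contribution $\mat{E}_{\mathrm{tail}}$ is treated by inserting \eqref{eqn:condition_mutal} into the same expansion and using that $\|\mat{V}_n(\ztail)^\adj \mat{V}_n(\zdom)\|_2$ is controlled through $\zgap$.

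Finally, the intensity bound \cref{eqn:mu_estimation_error} follows from the least-squares step on \cref{ln:mu_r}: writing $\vec{g} = \mat{V}_n(\zdom)\mudom + \mat{V}_n(\ztail)\mutail + \vec{E}$ and expanding $\mat{V}_n(\hat{\vec{z}}_r)^\pinv \vec{g}$ about $\mat{V}_n(\zdom)^\pinv$ via a Lipschitz estimate for the pseudoinverse yields three contributions---one proportional to $\mds(\hat{\vec{z}}_r,\zdom)$, one proportional to $\mu_{\mathrm{tail}}$, and a residual $\mat{V}_n(\zdom)^\pinv \vec{E}$ term that concentrates coordinatewise at the standard $n^{-1/2}$ rate weighted by $\|\mat{V}_n(\zdom)^\pinv\|_2 \lesssim 1/\sqrt{n}$. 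The principal obstacle will be the third paragraph: carrying out the pseudoinverse perturbation and reordering the sums with enough care that the coefficient vector $\vec{c}^{(i)}$ truly has $\ell^2$ norm $\widetilde{\order}(n^{-1})$ and not merely $\widetilde{\order}(n^{-1/2})$, while simultaneously tracking how the Vandermonde conditioning, which contributes the $r^{3/2}\zgap^{-3/2}$ prefactor, propagates through all three orders of the expansion.
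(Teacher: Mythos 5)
Your overall architecture (second-order perturbation of the ESPRIT matrix, plus the observation that the \emph{linear} term in $\mat{E}_{\rm random}$ gains an extra factor of $n^{-1/2}$ through cancellation) is the right family of ideas, but there is a concrete arithmetic gap that defeats the claimed rate: you bound the bilinear and cubic terms \emph{deterministically} by $\norm{\hat{\mat{P}}-\mat{P}}_2^{2,3}$ with $\norm{\hat{\mat{P}}-\mat{P}}_2 = \tilde{\order}(\alpha\sqrt{\log n}/(\mu_r\sqrt{n}))$. The square of this is $\tilde{\order}(\alpha^2\log n/(\mu_r^2 n))$, which is asymptotically \emph{larger} than the target $n^{-3/2}$, so under your own estimates the quadratic term dominates and the final bound degrades to $\tilde{\order}(n^{-1})$. (The same issue appears in the eigenvalue-perturbation remainder: for a simple eigenvalue the second-order correction is generically of size $\norm{\mat{W}^{(1)}}_2^2/\zgap \sim n^{-1}/\zgap$, and it also contains the off-diagonal sum $\sum_{j\neq i}(\vec{u}_i^\adj\mat{W}^{(1)}\vec{v}_j)(\vec{u}_j^\adj\mat{W}^{(1)}\vec{v}_i)/(z_i-z_j)$, which your expansion omits.) To reach $n^{-3/2}$ one must exhibit cancellation in the second-order terms as well, not just the first-order one; a norm bound on the eigenspace perturbation cannot do this.

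The paper closes exactly this gap by two devices you would need analogues of. First, it replaces the unitary alignment by a carefully constructed \emph{non-unitary} similarity $\mat{P}$ (\cref{lem:good_P_construction}) so that the leading eigenvector error $\mat{E}^Q = \hat{\mat{Q}}_r\mat{U}_r\mat{P}^{-1}-\mat{Q}_r$ is, to leading order, orthogonal to the column space of $\mat{Q}_r$; the quadratic contribution then takes the form $(\mat{E}^Q_\downarrow)^\adj\bigl(\mat{E}^Q_\uparrow-\mat{E}^Q_\downarrow(\mat{Q}_\downarrow)^\adj\mat{Q}_\uparrow\bigr)$. Second, it shows that the shifted combination $\mat{E}^Q_\uparrow-\mat{E}^Q_\downarrow(\mat{Q}_\downarrow)^\adj\mat{Q}_\uparrow$ is $\tilde{\order}(n^{-1})$ rather than the naive $\tilde{\order}(n^{-1/2})$, because for a Toeplitz matrix $\mat{A}$ the difference $\mat{A}_\uparrow\vvec_n(z_i)-\mat{A}_\downarrow\vvec_n(z_i)z_i^{-1}$ telescopes to two boundary entries per coordinate (\cref{lem:fine_grained_1}); hence the quadratic term is $n^{-1/2}\cdot n^{-1}=n^{-3/2}$. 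This same Toeplitz telescoping, combined with the entrywise bound $|E_j|=\order(\alpha\sqrt{\log n})$, is also what delivers the first-order gain in the paper; your alternative of proving $\|\vec{c}^{(i)}\|_2=\tilde{\order}(n^{-1})$ and invoking sub-Gaussian concentration is plausible but unverified, and the coefficient cancellation you would need is precisely the one the Toeplitz structure provides. Without both the orthogonalizing choice of $\mat{P}$ (or an equivalent) and a second-order cancellation argument, the proof as proposed stalls at $\tilde{\order}(n^{-1})$.
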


\begin{remark}[The assumption $\alpha > 1$]
In the above theorem, we assume $\alpha>1$ to simplify the form of our bounds.
Note that \cref{thm:esprit-1.5} still yields results for the ESPRIT algorithm under small noise and zero noise $E_j \equiv 0$ since $\alpha$ is only required to be an \emph{upper bound} on the sub-Gaussian rate of tail decay.
Our proof can be extended to provide sharper estimates for small values of $\alpha$.

\begin{remark}[Intensity normalization]
We assume $\|\vec{\mu}\|_1 =1$ throughout this paper.
Without assuming this normalization, the intensity error \eqref{eqn:mu_estimation_error} will have an extra factor of $\|\vec{\mu}\|_1$.  
\end{remark}

\end{remark}

We provide the proof of Theorem \ref{thm:esprit-1.5} in \cref{sec:pf_main_result}. %
In our analysis, we use the result that $\|\mat{E}_{\mathrm{random}}\|=\widetilde{\mathcal{O}}(\alpha\sqrt{n})$ and $|E_j|=\order(\alpha \sqrt{\log n})$ with a probability at least $1-1/n^2$ (see the detailed discussion in \cref{sec:error-matrix-bounds}, \cref{lem:Mecks}). Moreover, it is worth noting that these same properties remain true (albeit with larger implied constants in the $\tilde{\order}$ notation) for any success probability of $1-1/n^C$, where $C > 0$ is a constant. Thus, our results hold for any desired inverse-polynomial failure probability.

To our knowledge, the above theorem is the \textbf{first} demonstration of $\tilde{\order}(n^{-1.5})$ error scaling---or indeed any noisy super-resolution scaling (\cref{def:noisy-super})---for any frequency estimation algorithm under assumptions A--C.
Moreover, as a byproduct of our $\tilde{\order}(n^{-1.5})$ scaling for the locations $\{z_i\}_{i=1}^r$, we also show that the intensities $\{\mu_i\}_{i=1}^r$ can be recovered with an error of $\mathcal{\tilde{O}}(n^{-0.5})$.

To complement our $\tilde{\order}(n^{-1.5})$ scaling result for the ESPRIT algorithm (\cref{thm:esprit-1.5}), we provide a lower bound showing that this scaling is \textbf{optimal} for any algorithm. The lower bound result is summarized in the following theorem.

\begin{theorem}[Spectral estimation lower bound, {\normalfont informal version of \cref{thm:lower_bound_formal}}]\label{thm:lower_bound} 
Let $\{g_j\}_{j=1}^n$ be defined in \eqref{eq:noisy-signal} and let $\{E_j\}_{j=1}^n$ be i.i.d.\ standard Gaussians ${\cal N}(0,1)$. Then, for any $\eta > 0$, no algorithm can estimate $z_i$ with error $\order(n^{-1.5-\eta})$ or $\mu_i$ with error $\order(n^{-0.5-\eta})$ error for all $i\in \{1,\ldots,r\}$ with constant success probability.

\end{theorem}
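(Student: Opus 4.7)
The plan is to prove the lower bound via Le Cam's two-point method, which is well suited to minimax lower bounds under Gaussian noise. The idea is to exhibit two instances of the spectral estimation problem whose parameter vectors are separated by $\Delta_z = \Theta(n^{-1.5-\eta/2})$ in one location coordinate (or $\Delta_\mu = \Theta(n^{-0.5-\eta/2})$ in one intensity coordinate), but whose induced distributions on the measurement vector $\vec{g} \in \complex^n$ have KL divergence $o(1)$. Any algorithm that produces estimates with error $\order(n^{-1.5-\eta})$ for all $z_i$ (respectively $\order(n^{-0.5-\eta})$ for all $\mu_i$) with constant success probability could be converted into a hypothesis test distinguishing the two instances with constant advantage; this would contradict Le Cam's inequality, which bounds the advantage by the total variation distance, and hence by the KL divergence via Pinsker.

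Concretely, I would first fix a configuration $(\vec{z}^{(0)},\vec{\mu}^{(0)})$ satisfying assumptions A--B with $r=1$ (or any small $r$ that makes the bookkeeping clean), and consider the perturbed configuration in which only a single coordinate is changed: either $z_1^{(1)} = z_1^{(0)}\e^{\iu \Delta_z}$ keeping $\vec\mu$ fixed, or $\mu_1^{(1)} = \mu_1^{(0)} + \Delta_\mu$ keeping $\vec z$ fixed. Under ${\cal N}(0,1)$ noise on each $g_j$, the measurement distributions are complex Gaussians with the same covariance and means differing by the deterministic vector $\vec{\Delta g} \in \complex^n$ whose $j$-th entry is $\mu_1(z_1^{(1)j}-z_1^{(0)j})$ in the location case and $\Delta_\mu z_1^{(0)j}$ in the intensity case. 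The KL divergence is then $\tfrac{1}{2}\|\vec{\Delta g}\|_2^2$.

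Next I would estimate $\|\vec{\Delta g}\|_2^2$. For the location perturbation, a Taylor expansion gives $z_1^{(1)j}-z_1^{(0)j}= \iu j \Delta_z z_1^{(0)j}+\order(j^2\Delta_z^2)$, so
\begin{equation*}
\|\vec{\Delta g}\|_2^2 = \mu_1^2 \Delta_z^2 \sum_{j=0}^{n-1} j^2 + \order(\Delta_z^3 n^4) = \Theta(\mu_1^2 \Delta_z^2 n^3),
\end{equation*}
provided $\Delta_z = o(n^{-1})$, which will be satisfied. Setting $\Delta_z = c\, n^{-1.5-\eta/2}$ for a small constant $c>0$ yields $\mathrm{KL} = \order(n^{-\eta}) = o(1)$, hence $\mathrm{TV} = o(1)$ by Pinsker. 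For the intensity perturbation, $\|\vec{\Delta g}\|_2^2 = \Delta_\mu^2\cdot n$, and choosing $\Delta_\mu = c\, n^{-0.5-\eta/2}$ again makes the KL divergence vanish.

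Finally, I would invoke the standard Le Cam reduction: if an estimator achieves error at most $\Delta_z/3$ (resp.\ $\Delta_\mu/3$) with probability at least $p$ under each hypothesis, then the test ``output $H_0$ iff the estimate lies within $\Delta_z/3$ of $z_1^{(0)}$'' distinguishes the two hypotheses with advantage at least $2p-1$, contradicting $\mathrm{TV}\to 0$. For $n$ large enough, $\Delta_z/3 \gg n^{-1.5-\eta}$ and $\Delta_\mu/3 \gg n^{-0.5-\eta}$, giving the stated impossibility. The main technical obstacle is the Taylor expansion bookkeeping to ensure the linearization of $z^j$ is valid uniformly in $j\le n-1$ for the chosen $\Delta_z$, and to verify that the perturbed instance still satisfies assumptions A--B (in particular that $\Delta_z \ll \zgap^{(0)}$ and the intensity separation is preserved); both are easy once $\eta>0$ since the perturbations are polynomially small while the gaps are constants of the construction.
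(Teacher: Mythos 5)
Your proposal is correct and follows essentially the same route as the paper: a two-point construction perturbing a single location (resp.\ intensity), an $\ell_2$ bound $\|\vec{\Delta g}\|_2^2=\order(n^3\Delta_z^2)$ (resp.\ $\order(n\Delta_\mu^2)$) on the difference of the Gaussian means, and a reduction from estimation to hypothesis testing. The only cosmetic difference is that you pass through KL divergence and Pinsker, whereas the paper bounds the total variation distance directly by $\|\vec{g}-\vec{g}'\|_2$ (its \cref{lem:gaussian_tv}); both give $\mathrm{TV}=o(1)$ and the same conclusion.
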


We present the proof of \cref{thm:lower_bound} in \cref{sec:lower_bound}.
This result is similar to the Cram\'er--Rao bound in signal processing (cf.\ \cite{sn89}).
Due to the different settings, we provide a self-contained proof using the total variation distance between Gaussian distributions.

\subsection{Related work}\label{sec:related_work}

There is a long list of algorithms and theoretical results have been developed for spectral estimation problems under various assumptions: Approaches include Prony’s method~\cite{Prony_first,sauer_pronys_2018}; subspace-based methods such as ESPRIT \cite{roy_esprit-estimation_1989,9000636,li2022stability,PhysRevA.108.062408}, MUSIC~\cite{Schmidt_1986,lf16,li2022stability}, and the matrix pencil method~\cite{hua_matrix_1990,moitra_super-resolution_2015}; optimization-based methods~\cite{crt06,cf13,cf14,Ding2023simultaneous,6576276}; sparse Fourier transform methods~\cite{Boufou_2012,ps15,jls23}; Fourier interpolation~\cite{ckps16,sswz23,ms24}; and tensor decomposition-based methods~\cite{hk15,cm21}, to name a few. 
Among these works, a substantial body of research~\cite{li2022stability,9000636,Li_2015,Yihong_2020,Fan2022EfficientAF,moitra_super-resolution_2015,cf13,cf14,6576276,ps15,jls23,hk15} is dedicated to the classical ``super-resolution" regime. This regime prioritizes identifying the smallest $n$ required to achieve high precision, operating under the assumption of small noise. In particular, when $\mu_{\rm tail}=0$ and $E_j=\mathcal{O}(\mathrm{poly}(\epsilon))$, setting $n=\Theta(1/\zgap)$ is enough for achieving $\epsilon$-accuracy.

Although the super-resolution regime only requires a small $n$ that is independent of the precision $\epsilon$, the constraints of this regime may be unrealistic in certain real-world applications. For example, in quantum computing, energy estimation problems~\cite{PhysRevA.108.062408,Ding2023simultaneous,LinTong2022,wfz22,WanBertaCampbell2022}, where the goal is to estimate $f_j$ accurately, often involve signals affected by non-dominant signals and random noise of order $\order(1)$ with zero expectation. In this case, the classical super-resolution type assumption $n=\Theta(1/\zgap)$ is no longer adequate.
To tackle this challenge, significant research has been done using approaches based on sparse Fourier transforms (e.g., in~\cite{ps15,jls23} for the continuous setting and in~\cite{Boufou_2012,hikp12a,hikp12_soda,ghikps13,ik14,ikp14,kap16,kap17,kvz19,nsw19} for the discrete setting). For example, in~\cite{ps15}, the authors propose a sparse Fourier transform algorithm that assumes access to the continuous Fourier transform $g(t) = \int_0^1 \exp(2\pi \iu \omega t) \mathrm{d}\mu(\omega)$. To achieve $\epsilon$-accuracy for the location estimation, the algorithm requires the cutoff frequency $n = \max |t| =\Theta(1/\epsilon)$ when $\epsilon$ is smaller than the spectral gap $\zgap$. 
The spectral estimation problem with bias and noise also appears naturally in quantum algorithms for eigenvalue estimation~\cite{Somma2019,Stroeks_2022,Dutkiewicz2022heisenberglimited,PhysRevA.108.062408,Ding2023simultaneous,LinTong2022,wfz22,ShenCampsSzaszEtAl2023,WanBertaCampbell2022}, but the cost model is different from the current setting\footnote{In quantum algorithms for eigenvalue estimation, the observation cost $g_j$ typically increases linearly with $j$. Hence, to gather a dataset for $j=0,\ldots,n-1$, the cumulative cost is $\propto \sum_{j=0}^{n-1} j=\mathcal{O}(n^2)$. Applying the ESPRIT algorithm with the current analysis $n = \tilde{\mathcal{O}}(\epsilon^{-2/3})$ leads to a total cost of $\tilde{\mathcal{O}}(\epsilon^{-4/3})$. Consequently, the total cost of ESPRIT is higher than that of several referenced quantum algorithms, which use a higher cutoff frequency $n = \tilde{\mathcal{O}}(\epsilon^{-1})$ and non-uniform samples to achieve the so-called Heisenberg-limited scaling of the total cost, $\tilde{\mathcal{O}}(\epsilon^{-1})$. This study may inspire further examination of the performance of quantum eigenvalue estimation algorithms when the cutoff frequency is set to $n = \tilde{\mathcal{O}}(\epsilon^{-2/3})$, a topic not yet explored in the existing literature to our knowledge.
}. 
We also note that the access model in this work is different from many previous works.
In this work, we assume we have access to a single noisy measurement at the equispaced frequencies $j = 0,\ldots,n-1$; many other works assume either the ability to query a frequency multiple times, to reduce variance or the ability to query arbitrary real- or integer-valued frequencies $j$.

\subsection{Technical overview}\label{sec:technical_overview}

In this section, we present an overview of the new technical advancements that are essential in our improved analysis.
While these findings are specifically designed for the ESPRIT algorithm, we believe the underlying concepts may also apply to other subspace-based signal processing techniques.
Therefore, we anticipate that these technical results may be of independent interest.

\subsubsection{Vandermonde matrix and eigenbasis}\label{sec:vand_mat_eig}

Recall that the ESPRIT algorithm reduces spectral estimation to matrix operations on the Toeplitz matrix $\mat{\hat{T}}$ of measurements, defined in \cref{eqn:toeplitz}.
In turn, we treat $\mat{\hat{T}}$ as a perturbation of a ``clean'' Toeplitz matrix $\mat{T}$ comprising only the dominant part of the spectrum; see \cref{eqn:T}.
When ESPRIT is run using the clean Toeplitz matrix $\mat{T}$, the algorithm is exact.
Therefore, understanding the ESPRIT algorithm requires matrix perturbation theory to relate the matrices $\mat{\hat{T}}$ and $\mat{T}$.

The beating heart of the ESPRIT algorithm is the relation between two factorizations of the Toeplitz matrix $\mat{T}$, the Vandermonde decomposition (also defined above in \cref{eqn:T})
\begin{equation*}
    \mat{T} = \Vand_n(\zdom) \cdot \diag(\mudom) \cdot \Vand_n(\zdom)^\adj
\end{equation*}
and the eigenvalue decomposition
\begin{equation*}
    \mat{T} = \mat{Q} \mat{\Sigma}\mat{Q}^\adj, \quad \mat{\Sigma} = \diag(\lambda_1(\mat{T}),\ldots,\lambda_r(\mat{T}),\ldots,0,\ldots,0).
\end{equation*}
See \cref{eq:vander-def} for the definition of $\Vand_n(\zdom)$.
Since $\mat{T}$ is Hermitian, its eigenvalues are real, and its eigenvectors are orthonormal. As $\mat{T}$ has rank $r$, it possesses $r$ nonzero eigenvalues, which we arrange in decreasing order. Since $\mat{T}$ has many zero eigenvalues, the eigenvalue decomposition can be expressed more compactly as:
\begin{equation*}
    \mat{T} = \mat{Q}_r^{\vphantom{\adj}} \mat{\Sigma}_r^{\vphantom{\adj}}\mat{Q}_r^\adj, \quad \mat{\Sigma}_r = \diag(\lambda_1(\mat{T}),\ldots,\lambda_r(\mat{T})),
\end{equation*}
where $\mat{Q}_r$ denotes the first $r$ columns of $\mat{Q}$.
We are ultimately interested in computing the Vandermonde decomposition of $\mat{T}$, as it contains the spectral information $\zdom$ and $\mudom$.
Lacking direct access to this decomposition, we instead compute the eigenvalue decomposition and use the ESPRIT algorithm to discover the spectral information.

The first of our main technical results is the following lemma relating the Vandermonde matrix $\Vand_n(\zdom)$ to the basis $\mat{Q}_r$ of eigenvectors. The proof is deferred to \cref{sec:vand-eig}.

\begin{restatable}[Vandermonde matrix and eigenbasis]{lemma}{vandereigen} \label{lem:vander-eigen}
    Assume that $n = \Omega(1/\zgap)$.
    There is a unique invertible matrix $\mat{P}_v$ for which
    \begin{equation}\label{eqn:reshape_Q}
        \mat{Q}_r = \frac{1}{\sqrt{n}} \Vand_n(\zdom) \mat{P}_v.
    \end{equation}
    This matrix is nearly unitary in the sense that
    \begin{equation}\label{eqn:P_orthogonal}
        \norm{\mat{P}_v^\adj \mat{P}_v - \Id_r}_2 = \norm{\mat{P}_v \mat{P}_v^\adj - \Id_r}_2 = \order \left( \frac{1}{\zgap n} \right).
    \end{equation}
    In particular, 
    \begin{equation*}
        \norm{\mat{P}_v}_2, \norm{\smash{\mat{P}_v^{-1}}}_2 = \order(1).
    \end{equation*}
\end{restatable}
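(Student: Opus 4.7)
The plan is to exploit the fact that $\mat{Q}_r$ and $\Vand_n(\zdom)$ are two bases for the same $r$-dimensional subspace, and then read off the near-unitarity of the change-of-basis matrix from the Gram matrix of $\Vand_n(\zdom)$. First I would establish existence and uniqueness. Since $\mat{T} = \Vand_n(\zdom)\cdot\diag(\mudom)\cdot\Vand_n(\zdom)^\adj$ with $\diag(\mudom)$ invertible (the intensities are strictly positive) and $\Vand_n(\zdom)$ of full column rank $r$ (the $z_i$ are distinct, and $n \geq r$ for $n=\Omega(1/\zgap)$ large enough), the column span of $\Vand_n(\zdom)$ coincides with the range of $\mat{T}$, which in turn equals the column span of $\mat{Q}_r$. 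Hence a unique invertible $r\times r$ change-of-basis matrix $\mat{M}$ satisfies $\mat{Q}_r = \Vand_n(\zdom)\mat{M}$; setting $\mat{P}_v = \sqrt{n}\,\mat{M}$ gives the claimed identity \eqref{eqn:reshape_Q}, and uniqueness is inherited.

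For the near-unitarity, I would introduce the normalized Gram matrix $\mat{G} \coloneqq \frac{1}{n}\Vand_n(\zdom)^\adj \Vand_n(\zdom) \in \C^{r\times r}$. The orthonormality $\mat{Q}_r^\adj \mat{Q}_r = \Id_r$, substituted into \eqref{eqn:reshape_Q}, yields
\begin{equation*}
    \mat{P}_v^\adj \mat{G} \mat{P}_v = \Id_r, \qquad \text{equivalently}\qquad \mat{P}_v \mat{P}_v^\adj = \mat{G}^{-1}.
\end{equation*}
The key quantitative input is the bound $\norm{\mat{G} - \Id_r}_2 = \order(1/(\zgap n))$. The diagonal entries of $\mat{G}$ are exactly $1$, while each off-diagonal equals $\frac{1}{n}\sum_{k=0}^{n-1}(z_j\overline{z_i})^k$, whose magnitude is controlled by $2/(n\abs{z_i-z_j}) \leq 2/(n\zgap)$. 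A naive Gershgorin/row-sum argument would only give $\order(r/(n\zgap))$, but Moitra's Vandermonde singular value estimate (which bounds all singular values of $\Vand_n(\zdom)$ inside $[\sqrt{n-1/\zgap},\sqrt{n+1/\zgap}]$ once $n = \Omega(1/\zgap)$) sharpens this to $\norm{\mat{G}-\Id_r}_2 \leq 1/(n\zgap)$ without any $r$-dependence. Assuming the implicit constant in $n=\Omega(1/\zgap)$ is large enough to ensure $\norm{\mat{G}-\Id_r}_2 \leq 1/2$, a Neumann series then gives $\norm{\mat{G}^{-1}-\Id_r}_2 \leq 2\,\norm{\mat{G}-\Id_r}_2 = \order(1/(\zgap n))$, which is the desired bound for $\mat{P}_v\mat{P}_v^\adj - \Id_r$. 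The bound for $\mat{P}_v^\adj \mat{P}_v - \Id_r$ follows immediately from the fact that $\mat{P}_v^\adj\mat{P}_v$ and $\mat{P}_v\mat{P}_v^\adj$ share the same nonzero eigenvalues (both equal to the squared singular values of $\mat{P}_v$), so their spectral distances from $\Id_r$ agree. The bound $\norm{\mat{P}_v}_2,\norm{\smash{\mat{P}_v^{-1}}}_2 = \order(1)$ is then immediate from the singular-value characterization.

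The main (and really only nontrivial) obstacle is the sharp $\order(1/(\zgap n))$ estimate on $\norm{\mat{G}-\Id_r}_2$: elementary row-sum arguments carry a parasitic factor of $r$, which would force a stronger separation assumption $n = \Omega(r/\zgap)$ and propagate to every subsequent perturbation bound in the paper. Everything else—the change-of-basis construction, the translation of orthonormality into the identity $\mat{P}_v\mat{P}_v^\adj = \mat{G}^{-1}$, the Neumann series inversion, and the left-right transfer—is routine. I would therefore isolate the Vandermonde spectral estimate as the pivotal ingredient and import it directly from the Beurling--Selberg-based bounds in Moitra's work, rather than re-deriving it.
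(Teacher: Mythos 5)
Your proposal is correct and follows essentially the same route as the paper: existence and uniqueness of $\mat{P}_v$ from the coincidence of the column spans of $\mat{Q}_r$ and $\Vand_n(\zdom)$, with Moitra's singular-value bound (the paper's \cref{cor:moitra}) supplying the crucial $r$-free estimate $\norm{\tfrac{1}{n}\Vand_n(\zdom)^\adj\Vand_n(\zdom)-\Id_r}_2 = \order(1/(\zgap n))$. The only (cosmetic) difference is in the last step: you derive the exact identity $\mat{P}_v\mat{P}_v^\adj = \mat{G}^{-1}$ and invert by a Neumann series, whereas the paper bounds $\norm{\mat{P}_v^\adj\mat{P}_v-\Id_r}_2$ through a self-bounding inequality $\norm{\mat{P}_v^\adj\mat{P}_v-\Id_r}_2 \le \order(1/(\zgap n))\bigl(\norm{\mat{P}_v^\adj\mat{P}_v-\Id_r}_2+1\bigr)$ and rearranges; both are valid.
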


Here, and going forward, $\norm{\cdot}_2$ denotes the $\ell_2\to\ell_2$ operator norm.
This result allows us to transfer between the eigenbasis $\mat{Q}_r$ and the Vandermonde matrix $\Vand_n(\zdom)$ by multiplying by a nearly unitary matrix $\mat{P}_v$.
In particular, the matrix $\mat{P}_v$ is a near-isometry, so we do not lose much by interconverting between $\mat{Q}_r$ and $\Vand_n(\zdom)$.
We reference \cref{lem:vander-eigen} throughout our analysis, and the presence of this result is a crucial factor in determining the effectiveness of the ESPRIT algorithm.

\cref{lem:vander-eigen} immediately implies the correctness of the ESPRIT algorithm when it runs with the clean Toeplitz matrix $\mat{T}$. More specifically, let $\mat{P}:=\frac{1}{\sqrt{n}}\mat{P}_v$. By \cref{lem:vander-eigen}, we have $(\mat{Q}_r)_\uparrow = \mat{V}_n(\zdom)_\uparrow \cdot \mat{P}$, and 
\begin{align*}
    (\mat{Q}_r)_\downarrow = \mat{V}_n(\zdom)_\downarrow \cdot \mat{P} = \mat{V}_n(\zdom)_\uparrow \cdot \diag(\zdom) \cdot \mat{P} = (\mat{Q}_r)_\uparrow \cdot \mat{P}^{-1} \diag(\zdom)\mat{P}\,,
\end{align*}
where the second step follows from the structure of the Vandermonde matrix. Then, we get that
\begin{align*}
    \mat{W} = (\mat{Q}_r)_\uparrow^+ (\mat{Q}_r)_\downarrow = (\mat{Q}_r)_\uparrow^+ \cdot (\mat{Q}_r)_\uparrow \cdot \mat{P}^{-1} \diag(\zdom)\mat{P} = \mat{P}^{-1} \diag(\zdom)\mat{P}\,,
\end{align*}
whose eigenvalues are exactly the location vector $\zdom$. Therefore, the ESPRIT algorithm can recover $\zdom$ exactly when the input signal has no biase and noise.

\subsubsection{Second-order perturbation for dominant eigenspace}

The next main technical result is a perturbation expansion for the dominant eigenvectors of the noisy matrix $\mat{\hat{T}}$ in terms of the error matrices $\mat{E} = \mat{E}_{\rm random} + \mat{E}_{\rm tail}$, defined in \cref{T_hat}.
Let 
\begin{equation*}
    \mat{\hat{T}} = \mat{\hat{Q}} \mat{\hat\Sigma}\mat{\hat{Q}}^\adj
\end{equation*}
be the eigenvalue decomposition of $\mat{\hat{T}}$, and let $\mat{\hat{Q}}_r$ denote the first $r$ columns of the eigenmatrix $\mat{\hat{Q}}$.
In the existing analysis of the ESPRIT algorithm~\cite{9000636}, a crucial step involves quantifying the ``distance" between $\hat{\mat{Q}}_r$ and $\mat{Q}_r$.
In our setting, application of standard matrix perturbation theory (i.e., the Davis--Kahan $\sin \theta$ theorem, \cref{thm:sin}) can be used to show that there exists a unitary matrix $\mat{P}_r$ such that
\begin{equation} \label{eq:eigen-elementary}
    \norm{\mat{\hat{Q}}_r \mat{P}_r - \mat{Q}_r}_2 \le \tilde{\order}\left( \frac{1}{\sqrt{n}} \right).
\end{equation}
Here, we have suppressed the dependence of the right-hand side on all parameters except for $n$; see \cref{lem:eigenvectors-weak} for a full statement.
In order to observe the optimal $\tilde{\order}(n^{-1.5})$ scaling of the ESPRIT algorithm, a more precise estimate is required.
To this end, we developed the following new lemma:

\begin{lemma}[Second-order perturbation for dominant eigenspace, {\normalfont informal version of \cref{lem:second-order-formal}}]\label{lem:second_order_expansion_informal}
Assume that the conditions of \cref{thm:esprit-1.5} are satisfied.
Then there exists a unitary %
matrix $\mat{U}_r\in\mathbb{C}^{r\times r}$ such that%
\begin{align*}
\hat{\mat{Q}}_r\mat{U}_r=&~\mat{Q}_r+
\underbrace{\sum^\infty_{k=0}\mat{\Pi}_{\mat{Q}^\perp_r}\left(\mat{E}_{\rm tail}\mat{\Pi}_{\mat{Q}^\perp_r}\right)^{k}\mat{E}_{\rm random}\mat{Q}_r\left(\mat{\Sigma}_r^{-1}\right)^{k+1}+\frac{\alpha\sqrt{\log n}}{\mu_r \sqrt{\Delta_{\vec{z}}n}}\mat{\Pi}_{\mat{Q}_r}\tilde{\mat{Q}}_1}_{\text{first order term}}+\underbrace{\frac{\alpha^2\log n}{\mu_r^2\zgap n}\tilde{\mat{Q}}_{2}}_{\text{second order term}}\,.
\end{align*}
Here $\mat{\Pi}_{\mat{Q}_r}=\mat{Q}_r\mat{Q}^\dagger_r$ is the projector\footnote{Here, and going forward, all projectors are \emph{orthogonal projectors}.}  onto the column space of $\mat{Q}_r$, $\mat{\Pi}_{\mat{Q}^\perp_r} = \Id_r - \mat{\Pi}_{\mat{Q}_r}$ is the projector onto the space that is orthogonal to $\mat{Q}_r$, and $\tilde{\mat{Q}}_1$ and $\tilde{\mat{Q}}_2$ are matrices of $\norm{\cdot}_2$ norm $\mathcal{O}(1)$.
\end{lemma}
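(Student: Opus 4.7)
The plan is to use the Riesz contour-integral representation $\hat{\mat{\Pi}}_r = \hat{\mat{Q}}_r\hat{\mat{Q}}_r^\adj = \frac{1}{2\pi i}\oint_\Gamma(z\Id - \hat{\mat{T}})^{-1}\,dz$ on a circle $\Gamma \subset \mathbb{C}$ enclosing exactly the top $r$ eigenvalues of $\hat{\mat{T}}$, expand the resolvent asymmetrically (first order in $\mat{E}_{\rm random}$, all orders in $\mat{E}_{\rm tail}$), and then extract $\hat{\mat{Q}}_r$ by polar alignment with $\mat{Q}_r$. Existence of a valid $\Gamma$ follows from Weyl's inequality together with $\|\mat{E}_{\rm tail}\|_2 \le n\mu_{\rm tail}\le n\mu_r/8$ (assumption B) and $\|\mat{E}_{\rm random}\|_2 = \tilde{\mathcal{O}}(\alpha\sqrt{n})$ (\cref{lem:Mecks}), both much smaller than the unperturbed spectral gap $\lambda_r(\mat{T}) - 0 = \Omega(n\mu_r)$ (from \cref{lem:vander-eigen} and Moitra's Vandermonde singular-value lower bound). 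Writing $\mat{R}_{\rm tail}(z) = (z\Id - \mat{T} - \mat{E}_{\rm tail})^{-1}$ and $\mat{R}_0(z) = (z\Id - \mat{T})^{-1}$, I would use the double Neumann expansion
\[
(z\Id - \hat{\mat{T}})^{-1} = \mat{R}_{\rm tail}(z) + \mat{R}_{\rm tail}(z)\mat{E}_{\rm random}\mat{R}_{\rm tail}(z) + \mathcal{R}_{\ge 2}(z),\qquad \mat{R}_{\rm tail}(z) = \sum_{k\ge 0}\mat{R}_0(z)(\mat{E}_{\rm tail}\mat{R}_0(z))^k,
\]
with the inner series converging on $\Gamma$ since $\|\mat{E}_{\rm tail}\mat{R}_0(z)\|_2 \le 1/4$.

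The explicit first-order sum falls out of residue calculus in the block decomposition $[\mat{Q}_r,\mat{Q}_r^\perp]$, where $\mat{R}_0(z) = \mat{Q}_r(z\Id-\mat{\Sigma}_r)^{-1}\mat{Q}_r^\adj + z^{-1}\mat{\Pi}_{\mat{Q}_r^\perp}$ and only the $\mat{Q}_r$-block carries poles inside $\Gamma$ (the pole at $z=0$ of $z^{-1}\mat{\Pi}_{\mat{Q}_r^\perp}$ is excluded). A key observation is that for any rational function $1/\prod_{i=1}^m(z-\lambda_{a_i})$ with $m\ge 2$ poles inside $\Gamma$ the sum of residues vanishes (since the function decays like $z^{-m}$ at infinity); consequently, any configuration in the expansion containing two or more $\mat{Q}_r$-block resolvents integrates to zero, and the projector has no first-order in-subspace contribution. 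The surviving configurations place a single $\mat{Q}_r$-block resolvent at the far right of $\mat{E}_{\rm random}$, making every $\mat{R}_0$-factor to its left the $\mat{Q}_r^\perp$-block $z^{-1}\mat{\Pi}_{\mat{Q}_r^\perp}$; collecting by the number $k$ of intervening $\mat{E}_{\rm tail}$ insertions produces $\sum_{k\ge 0}\mat{\Pi}_{\mat{Q}_r^\perp}(\mat{E}_{\rm tail}\mat{\Pi}_{\mat{Q}_r^\perp})^k\mat{E}_{\rm random}\mat{Q}_r(\mat{\Sigma}_r^{-1})^{k+1}\mat{Q}_r^\adj$ together with its Hermitian conjugate; the latter drops out in the next step upon right multiplication by $\mat{Q}_r$, since it ends in $\mat{\Pi}_{\mat{Q}_r^\perp}\mat{Q}_r = 0$.

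Finally I would define $\mat{U}_r$ via the polar decomposition $\mat{Q}_r^\adj\hat{\mat{Q}}_r = \mat{U}_{\rm pol}\mat{H}$ with $\mat{U}_r = \mat{U}_{\rm pol}^\adj$, so that $\hat{\mat{Q}}_r\mat{U}_r = \mat{Q}_r\mat{H} + \mat{\Pi}_{\mat{Q}_r^\perp}\hat{\mat{Q}}_r\mat{U}_r$ with $\mat{H}$ Hermitian PSD satisfying $\Id - \mat{H}^2 = \hat{\mat{Q}}_r^\adj\mat{\Pi}_{\mat{Q}_r^\perp}\hat{\mat{Q}}_r$. The in-subspace contribution $\mat{Q}_r(\mat{H}-\Id)$ therefore has norm $\mathcal{O}(\|\mat{\Pi}_{\mat{Q}_r^\perp}\hat{\mat{Q}}_r\|_2^2)$, which together with the $\oint_\Gamma\mathcal{R}_{\ge 2}$ remainder I would repackage into $\mat{\Pi}_{\mat{Q}_r}\tilde{\mat{Q}}_1$ (capturing the first-order cross contributions between one $\mat{E}_{\rm random}$ and powers of $\mat{E}_{\rm tail}$) and $\tilde{\mat{Q}}_2$ (pure second-order in $\alpha$). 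The main obstacle will be controlling the pure-$\mat{E}_{\rm tail}$ contribution from $\oint_\Gamma\mat{R}_{\rm tail}(z)\,dz - \mat{\Pi}_{\mat{Q}_r}$: the naive Davis--Kahan estimate gives only $\mathcal{O}(\|\mat{E}_{\rm tail}\|_2/\lambda_r) = \mathcal{O}(\mu_{\rm tail}/\mu_r) = \mathcal{O}(1)$, a bounded constant too large to fit into either $\tilde{\mat{Q}}_j$ scale. The crucial refinement is the residual bound $\|\mat{E}_{\rm tail}\mat{Q}_r\|_2 = \mathcal{O}(\mu_{\rm tail}\sqrt{r}/\zgap)$, obtained from a direct geometric-sum calculation on $\mat{V}_n(\ztail)^\adj\mat{V}_n(\zdom)$ (each entry bounded by $2/|z_i - z_j| \le 2/\zgap$ by assumption A) combined with \cref{lem:vander-eigen}; rerunning the residue calculus of the previous paragraph with $\mat{E}_{\rm tail}$ in place of $\mat{E}_{\rm random}$ and using this sharpened bound shrinks the pure-tail contribution to $\tilde{\mathcal{O}}(\mu_{\rm tail}\sqrt{r}/(n\mu_r\zgap))$, which under the hypothesis $n = \Omega(r\alpha^2/(\mu_r^2\zgap^{5/3}))$ fits within the $\tilde{\mat{Q}}_2$ coefficient $\alpha^2\log n/(\mu_r^2\zgap n)$.
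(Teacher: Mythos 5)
Your overall architecture---the Riesz projector as a contour integral, a Neumann expansion taken to first order in $\mat{E}_{\rm random}$ and all orders in $\mat{E}_{\rm tail}$, residue calculus in the block decomposition $\mat{R}_0(z)=\mat{Q}_r(z-\mat{\Sigma}_r)^{-1}\mat{Q}_r^\adj+z^{-1}\mat{\Pi}_{\mat{Q}_r^\perp}$, and alignment by a unitary---is essentially the paper's route (\cref{prop:projector-expansion}, \cref{thm:expansion-explicit}, \cref{lem:terms}, \cref{lem:expansion-simplified}), and your polar-decomposition alignment is a clean alternative to the paper's Davis--Kahan-plus-multiply-by-adjoint step. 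The difficulty is in your residue calculus.

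The assertion that ``any configuration in the expansion containing two or more $\mat{Q}_r$-block resolvents integrates to zero'' is false, and it does not follow from your (correct) key observation. The sum of \emph{all} residues of $1/\prod_i(z-\lambda_{a_i})$ vanishes, so $\oint_\Gamma$ of a configuration vanishes only when every pole lies inside $\Gamma$, i.e.\ only when \emph{every} resolvent factor is a $\mat{Q}_r$-block. Once the configuration also contains a $z^{-1}\mat{\Pi}_{\mat{Q}_r^\perp}$ factor, the pole at $z=0$ sits outside $\Gamma$ and the contour integral equals $-2\pi\iu\,\mathrm{Res}_0\ne 0$ in general: for instance $\oint_\Gamma \frac{\d z}{(z-\lambda_i)\,z\,(z-\lambda_j)}=-\frac{2\pi\iu}{\lambda_i\lambda_j}$. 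Likewise a \emph{single} $\mat{Q}_r$-block in an interior position survives, since $\oint_\Gamma\frac{\d z}{z(z-\lambda_j)z}=\frac{2\pi\iu}{\lambda_j^2}$, producing terms such as $\mat{\Pi}_{\mat{Q}_r^\perp}\mat{E}_{\rm tail}\pinvk{\mat{T}}{2}\mat{E}_{\rm random}\mat{\Pi}_{\mat{Q}_r^\perp}$. So your list of ``surviving configurations'' silently discards genuinely nonzero terms. These terms do turn out to be negligible---each adjacency $\mat{E}\mat{\Pi}_{\mat{Q}_r}$ or $\mat{\Pi}_{\mat{Q}_r}\mat{E}$ contributes a factor $\order(\mu_{\rm tail}/\zgap+\alpha\sqrt{n\log n})$ rather than the crude $n\mu_{\rm tail}+\order(\alpha\sqrt{n\log n})$, so any configuration with an interior or repeated $\mat{Q}_r$-block is second order---but establishing that requires the term-by-term bookkeeping the paper performs in \cref{lem:terms} (its cases (b) and (c) are exactly the multi-block and interior-single-block configurations), not a vanishing argument. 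As written, your proof asserts these contributions are zero instead of bounding them; this is a genuine gap, though one you could close with the estimate $\norm{\mat{E}_{\rm tail}\mat{Q}_r}_2=\order(\mu_{\rm tail}/\zgap)$ (which you already invoke for the pure-tail part) applied at every $\mat{Q}_r$-block adjacency, together with a geometric-series control of the combinatorial multiplicity of configurations at each order.
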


Although the result of this lemma is complicated, the main conclusion is this: Up to rotation by a unitary, the perturbed matrix of eigenvectors $\hat{\mat{Q}}_r$ can be decomposed as a sum of four terms:
\begin{enumerate}
    \item The eigenvectors $\mat{Q}_r$ of the noiseless Toeplitz matrix $\mat{T}$, 
    \item A term of size $\tilde{\order}(1/\sqrt{n})$ that is orthogonal to the matrix $\mat{Q}_r$ and depends linearly on the matrix $\mat{E}_{\rm random}$ (whose norm is $\sim \tilde{\order}(1/\sqrt{n})$ with high probability),
    \item A term of size $\tilde{\order}(1/\sqrt{n})$ that is contained in the range of the matrix $\mat{Q}_r$, and
    \item Second-order terms of size $\tilde{\order}(1/n)$.
\end{enumerate}
Our new result \cref{lem:second_order_expansion_informal} has the same $\tilde{\order}(1/\sqrt{n})$ scaling as the standard result \cref{eq:eigen-elementary}, but we have identified the structure of the leading order terms, which is crucial to our sharp analysis of the ESPRIT algorithm.

To prove \cref{lem:second_order_expansion_informal}, we represent the spectral projector $\outprod{\mat{\hat{Q}}_r}$ as a contour integral over the resolvent $\left(\zeta \Id_n - \mat{\hat{T}}\right)^{-1}$ (see \cref{app:resolvents}), expand the resolvent in a Neumann series, and explicitly compute all the terms in the result perturbation series.
This approach results in an explicit convergent series representation of $\outprod{\mat{\hat{Q}}_r}$ (\cref{thm:expansion-explicit}) in terms of \emph{Schur polynomials} (\cref{def:schur_poly}).
From this expansion, it is relatively straightforward to prove \cref{lem:second_order_expansion_informal}.
See \cref{sec:second-order} for details.

We believe that a result of the form \cref{lem:second_order_expansion_informal} should also exist for non-Hermitian Hankel and Toeplitz matrices, allowing our methods to be extended to study the error scaling of other spectral estimation algorithms.

\subsubsection{Perturbation for \texorpdfstring{$\mat{Q}_\uparrow^\pinv \mat{Q}_\downarrow$}{Q\_up Q\_down}}

Up to a final post-processing step, the location estimates $\hat{z}_i$ produced by the ESPRIT algorithm applied to the noisy data \cref{eq:noisy-signal} are the eigenvalues of the matrix $\mat{\hat{Q}}_\uparrow^\pinv \mat{\hat{Q}}_\downarrow$.
Here, $\pinv$ denotes the Moore--Penrose pseudoinverse (\cref{sec:pseudoinverse}) and $\mat{\hat{Q}}_\uparrow,\mat{\hat{Q}}_\downarrow$ denote the first $n-1$ and last $n-1$ rows of $\mat{\hat{Q}}_r$, i.e., $\mat{\hat{Q}}_\uparrow = \mat{\hat{Q}}_r(1:n-1,1:r)$ and $\mat{\hat{Q}}_\downarrow = \mat{\hat{Q}}_r(2:n,1:r)$.
To show ESPRIT is effective, we need to show the eigenvalues of $\mat{\hat{Q}}_\uparrow^\pinv \mat{\hat{Q}}_\downarrow$ are close to the eigenvalues of the matrix $\mat{Q}_\uparrow^\pinv \mat{Q}_\downarrow$ associated with the clean Toeplitz matrix $\mat{T}$.
In general, the matrices $\mat{\hat{Q}}_\uparrow^\pinv \mat{\hat{Q}}_\downarrow$ and $\mat{Q}_\uparrow^\pinv \mat{Q}_\downarrow$ will be far apart. 
However, the eigenvalues of a matrix are invariant under a similarity transformation $\mat{A} \mapsto \mat{P}\mat{A}\mat{P}^{-1}$.
Thus, $\mat{P}\mat{\hat{Q}}_\uparrow^\pinv \mat{\hat{Q}}_\downarrow\mat{P}^{-1}$ has the same eigenvalues as $\mat{\hat{Q}}_\uparrow^\pinv \mat{\hat{Q}}_\downarrow$ for any invertible matrix $\mat{P}$.
Thus, to prove ESPRIT achieves error $\tilde{\order}(n^{-1.5})$ it is sufficient to construct an invertible matrix $\mat{P}$ for which
\begin{equation*}
    \norm{\mat{P}\mat{\hat{Q}}_\uparrow^\pinv \mat{\hat{Q}}_\downarrow\mat{P}^{-1} - \mat{Q}_\uparrow^\pinv \mat{Q}_\downarrow}_2 \le \tilde{\order}\left(\frac{1}{n^{1.5}} \right)
\end{equation*}
and $\mat{P}$ is an approximate isometry $\norm{\mat{P}}_2, \norm{\mat{P}^{-1}}_2 = \order(1)$.
This is exactly the result of the following theorem:

\begin{theorem}[Eigenvectors comparison, strong estimate, {\normalfont informal version of \cref{thm:key_comparision}}]\label{thm:improve_similarity}
Assume that the conditions of \cref{thm:esprit-1.5} are satisfied. Then, with probability $1-1/n^2$, there exists an invertible matrix $\mat{P}\in\mathbb{C}^{r\times r}$ such that%
\begin{equation}\label{eqn:eigen_comparison_strong}
\left\|\mat{P}\mat{\hat{Q}}_\uparrow^\pinv \mat{\hat{Q}}_\downarrow\mat{P}^{-1}-\mat{Q}_\uparrow^\pinv \mat{Q}_\downarrow\right\|_2=\tilde{{\cal O}}\left(\frac{r^{1.5}\alpha^3}{\mu_r^3\Delta_{\vec{z}}^{1.5}n^{1.5}}\right)\,.%
\end{equation}
Moreover, $\mat{P}$ is an approximate isometry:
\begin{equation} \label{eq:p-approx-isometry}
\norm{\mat{P}}_2, \norm{\smash{\mat{P}^{-1}}}_2 = \order(1)\,.
\end{equation}
\end{theorem}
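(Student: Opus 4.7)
The plan is to exploit the similarity invariance $(\hat{\mat{Q}}_r\mat{S})_\uparrow^\pinv(\hat{\mat{Q}}_r\mat{S})_\downarrow=\mat{S}^{-1}\hat{\mat{Q}}_\uparrow^\pinv\hat{\mat{Q}}_\downarrow\mat{S}$ valid for any invertible $\mat{S}\in\mathbb{C}^{r\times r}$: setting $\mat{P}=\mat{S}^{-1}$ reduces the task to finding $\mat{S}$ with $\hat{\mat{Q}}_r\mat{S}\approx\mat{Q}_r$ while keeping $\|\mat{S}^{\pm 1}\|_2=\order(1)$. The natural choice is the oblique projector $\mat{S}\coloneqq(\mat{Q}_r^\adj\hat{\mat{Q}}_r)^{-1}$. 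The weak bound $\|\hat{\mat{Q}}_r\mat{U}_r-\mat{Q}_r\|_2=\tilde{\order}(1/\sqrt{n})$ from \cref{eq:eigen-elementary} yields $\|\mat{Q}_r^\adj\hat{\mat{Q}}_r-\mat{U}_r^{-1}\|_2=\tilde{\order}(1/\sqrt{n})$, so $\mat{S}$ is well-defined with $\mat{S}\approx\mat{U}_r$; the approximate-isometry property \cref{eq:p-approx-isometry} for $\mat{P}=\mat{S}^{-1}$ follows immediately from the unitarity of $\mat{U}_r$. By construction $\hat{\mat{Q}}_r\mat{S}=\mat{Q}_r+\mat{E}_\perp$ with $\mat{Q}_r^\adj\mat{E}_\perp=\mat{0}$, and crucially this choice automatically absorbs the $\mat{\Pi}_{\mat{Q}_r}\tilde{\mat{Q}}_1$ first-order \emph{parallel} term from \cref{lem:second_order_expansion_informal} into $\mat{S}$, leaving
\begin{equation*}
\mat{E}_\perp=\mat{\Pi}_{\mat{Q}_r^\perp}\mat{M}+\mat{R},\quad \mat{M}\coloneqq\sum_{k\ge 0}\bigl(\mat{E}_{\rm tail}\mat{\Pi}_{\mat{Q}_r^\perp}\bigr)^{k}\mat{E}_{\rm random}\mat{Q}_r\bigl(\mat{\Sigma}_r^{-1}\bigr)^{k+1},\quad \|\mat{R}\|_2=\tilde{\order}\!\bigl(\tfrac{\alpha^2}{\mu_r^2\zgap n}\bigr).
\end{equation*}

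Next, since $\sigma_{\min}(\mat{Q}_\uparrow)=\Omega(1)$ by \cref{lem:vander-eigen} combined with standard Vandermonde singular-value bounds, I would apply the first-order pseudoinverse expansion to $(\mat{Q}_\uparrow+\mat{E}_{\perp,\uparrow})^\pinv(\mat{Q}_\downarrow+\mat{E}_{\perp,\downarrow})-\mat{Q}_\uparrow^\pinv\mat{Q}_\downarrow$. Of the two first-order terms that appear, one -- namely $(\mat{Q}_\uparrow^\adj\mat{Q}_\uparrow)^{-1}\mat{E}_{\perp,\uparrow}^\adj\mat{\Pi}_{\mat{Q}_\uparrow^\perp}\mat{Q}_\downarrow$ -- vanishes identically because $\mat{\Pi}_{\mat{Q}_\uparrow^\perp}\mat{Q}_\downarrow=\mat{\Pi}_{\mat{Q}_\uparrow^\perp}\mat{Q}_\uparrow\mat{W}=\mat{0}$. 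The surviving piece $\mat{Q}_\uparrow^\pinv(\mat{E}_{\perp,\downarrow}-\mat{E}_{\perp,\uparrow}\mat{W})$ simplifies further by substituting $\mat{E}_\perp=\mat{M}-\mat{Q}_r(\mat{Q}_r^\adj\mat{M})$ and using $\mat{Q}_\downarrow=\mat{Q}_\uparrow\mat{W}$, yielding $\mat{Q}_\uparrow^\pinv(\mat{M}_\downarrow-\mat{M}_\uparrow\mat{W})+[\mat{Q}_r^\adj\mat{M},\mat{W}]$. This replaces a quantity of naive size $\tilde{\order}(\|\mat{E}_\perp\|_2)$ by two \emph{shift-difference} quantities whose size is the crux of the matter.

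The main obstacle is quantitatively sharpening these expressions. Naively $\|\mat{E}_\perp\|_2=\tilde{\order}(\alpha/(\mu_r\sqrt{n}))$, so the $k$-th-order pseudoinverse term is at most $\tilde{\order}(\alpha^k/(\mu_r^k n^{k/2}))$; to reach the target $\tilde{\order}(\alpha^3/(\mu_r^3 n^{1.5}))$, the first-order term must be improved by a factor $1/n$ and the second-order term by $1/\sqrt{n}$, while third-and-higher orders automatically match under the hypothesis $\alpha>1$ and $n=\Omega(r\alpha^2/(\mu_r^2\zgap^{5/3}))$. The extra $1/n$ at first order comes from matrix concentration (cf.~\cref{lem:Mecks}) applied both to the Toeplitz shift-difference acting on $\mat{Q}_r(\mat{\Sigma}_r^{-1})^{k+1}$ and to the commutator $[\mat{Q}_r^\adj\mat{M},\mat{W}]$: each exploits the near-unitarity of $\mat{P}_v$ and the relation $\mat{W}=\mat{P}_v^{-1}\diag(\zdom)\mat{P}_v$ from \cref{lem:vander-eigen}, together with the factor $\|\mat{\Sigma}_r^{-1}\|_2=\order(1/(\mu_r n))$ that contributes the missing $1/n$. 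The Neumann tail with $k\ge 1$ is dominated by a geometric sum using $\|\mat{E}_{\rm tail}\mat{\Pi}_{\mat{Q}_r^\perp}\|_2\le\mu_{\rm tail}\le\mu_r/8$ from \eqref{eqn:condition_mutal}. The second-order pseudoinverse contribution requires an analogous but more delicate structural argument, combining the explicit second-order eigenvector expansion of \cref{lem:second_order_expansion_informal} with the shift-invariance of $\text{range}(\mat{Q}_r)$. Assembling all pieces yields the target bound in \cref{eqn:eigen_comparison_strong}.
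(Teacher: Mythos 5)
Your overall architecture coincides with the paper's: choose $\mat{P}$ so that the residual $\hat{\mat{Q}}_r\mat{S}-\mat{Q}_r$ is (nearly) orthogonal to $\mat{Q}_r$, thereby absorbing the parallel first-order term of \cref{lem:second_order_expansion_informal}; expand the pseudoinverse product; and reduce everything to shift-difference quantities of the form $\mat{E}_{\perp,\downarrow}-\mat{E}_{\perp,\uparrow}\mat{W}$. Your choice $\mat{S}=(\mat{Q}_r^\adj\hat{\mat{Q}}_r)^{-1}$ is a clean variant of the paper's explicit construction (\cref{lem:good_P_construction}) and agrees with it to the relevant order, and your observation that $\mat{\Pi}_{\mat{Q}_\uparrow^\perp}\mat{Q}_\downarrow=\mat{0}$ is correct and matches the paper's use of $\mathbf{R}(\mat{Q}_\uparrow)=\mathbf{R}(\mat{Q}_\downarrow)$. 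Your accounting of how much each order must be improved ($1/n$ at first order, $1/\sqrt{n}$ at second) also matches \cref{cor:first_component,cor:second_component}.

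The genuine gap is in the step that actually carries the theorem: the claimed source of the extra factor of $1/n$. You attribute it to ``matrix concentration (cf.~\cref{lem:Mecks})'' together with ``the factor $\norm{\mat{\Sigma}_r^{-1}}_2=\order(1/(\mu_r n))$ that contributes the missing $1/n$.'' Neither works. The factor $\mat{\Sigma}_r^{-1}$ is already consumed in producing the \emph{naive} bound $\norm{\mat{E}_\perp}_2=\tilde\order(\alpha/(\mu_r\sqrt{n}))$ (it is what converts $\norm{\mat{E}_{\rm random}}_2=\tilde\order(\alpha\sqrt{n})$ into $\tilde\order(\alpha/(\mu_r\sqrt n))$), so it cannot be spent a second time; and \cref{lem:Mecks} only bounds $\norm{\mat{E}_{\rm random}}_2$ and contains no cancellation between the up- and down-shifted copies. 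The actual mechanism is deterministic and entrywise: because $\mat{E}_{\rm random}$ (and $\mat{E}_{\rm tail}$, and $\frac1n\Vand_n\Vand_n^\adj$) are Toeplitz, consecutive rows are shifts of one another, so for $z\in\unitcircle$ the $k$-th entry of $\mat{A}_\uparrow\vvec_n(z)-\mat{A}_\downarrow\vvec_n(z)z^{-1}$ telescopes to two boundary terms $-\mat{A}_{k+1,1}z^{-1}+\overline{\mat{A}_{k,n}}z^{n-1}$, each controlled by $\max_{i,j}|\mat{A}_{ij}|=\tilde\order(\alpha)$ rather than by $\norm{\mat{A}}_2=\tilde\order(\alpha\sqrt n)$; summing over $k$ gives $\order(n^{-1})$ in spectral norm instead of $\order(n^{-1/2})$ (this is \cref{lem:fine_grained_1}). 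You gesture at ``the Toeplitz shift-difference'' but never state this identity, and without it the bound you assert does not follow. The same issue recurs for your commutator $[\mat{Q}_r^\adj\mat{M},\mat{W}]$ and for the stronger $n^{-1.5}$ bound needed on the $\mat{Q}_\downarrow^\adj$-projected first-order term (the paper's \cref{lem:decompose_2,lem:fine_grained_2}), both of which you defer to ``an analogous but more delicate structural argument.'' As written, the proposal establishes the framework but not the estimate that separates this theorem from the weak $\tilde\order(n^{-1/2})$ bound of \cref{lem:eigenvectors-weak}.
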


\begin{remark}[The matrix $\mat{P}$] We emphasize that the matrix $\mat{P}$ in the above theorem is not unitary.
In fact, we believe that if $\mat{P}$ were restricted to be unitary, then the best possible scaling for the right-hand side of \cref{eqn:eigen_comparison_strong} would be $\tilde{\order}(1/\sqrt{n})$. In our proof, $\mat{P}$ is explicitly constructed to ensure the column space of $\hat{\mat{Q}}_r\mat{P}^{-1}-\mat{Q}_r$ is almost orthogonal to that of $\mat{Q}_r$. This helps us to ensure 
\begin{equation}\label{eqn:Q_r_orthogonal}
\hat{\mat{Q}}_r\mat{P}^{-1}=\mat{Q}_r+\underbrace{\mat{E}_{\rm first}}_{=\mathcal{O}(n^{-0.5})}+\underbrace{\mat{E}_{\rm second}}_{=\mathcal{O}(n^{-1})}+\mathcal{O}(n^{-1.5})\,,
\end{equation}
where the column spaces of $\mat{E}_{\rm first}$ and $\mat{E}_{\rm second}$ are orthogonal to that of $\mat{Q}_r$. The detailed statement is presented in \cref{lem:good_P_construction} and \cref{sec:align_step1_defer}. 
\end{remark}

Using this estimate, the optimal ESPRIT scaling \cref{thm:esprit-1.5} follows immediately from an application of the Bauer--Fike-type result (\cref{lem:W-to-frequency}), a standard perturbation result for eigenvalues.

This result cannot be proven using standard matrix perturbation theory alone, which only provides a weak estimate of $\tilde{\mathcal{O}}(1/\sqrt{n})$ (\cref{lem:eigenvectors-weak}). To prove Theorem \ref{thm:improve_similarity}, we rely on the novel eigenspace perturbation result \cref{lem:second_order_expansion_informal}, a careful series expansion of $\mat{P}\mat{\hat{Q}}_\uparrow^\pinv \mat{\hat{Q}}_\downarrow\mat{P}^{-1}$, and the Toeplitz structure. Following the structure result in \cref{lem:second_order_expansion_informal}, we construct an invertible matrix $\mat{P}$ satisfying \eqref{eqn:Q_r_orthogonal} (for details see \cref{sec:align_step1_defer} and \cref{lem:good_P_construction}). We then calculate the series expansion of $\mat{P}\mat{\hat{Q}}_\uparrow^\pinv \mat{\hat{Q}}_\downarrow\mat{P}^{-1}$ and plug in the formulas of $\mat{E}_{\rm first}$ and $\mat{E}_{\rm second}$.
Ignoring higher-order terms, proving \cref{thm:improve_similarity} hinges on establishing:
\[
\left\|\mat{Q}_\downarrow^\dagger\left(\mat{E}_\uparrow-\mat{E}_\downarrow(\mat{Q}_\downarrow)^\dagger\mat{Q}_\uparrow\right)\right\|_2\sim n^{-1.5}\quad \text{and}\quad \left\|(\mat{E}_{\rm first})^Q_\uparrow-(\mat{E}_{\rm first})_\downarrow(\mat{Q}_\downarrow)^\dagger\mat{Q}_\uparrow\right\|_2\sim n^{-1}\,,
\]
where $\mat{E}=\mat{E}_{\rm first}+\mat{E}_{\rm second}$. The first bound can be shown by the orthogonality between $\mat{Q}_r$ and $\mat{E}$. Although the left-hand side involves the vectors with $\uparrow$ and $\downarrow$, the difference caused by the missing entry is small, which ensures the bound of $\mathcal{O}(n^{-1})$. The rigorous result is summarized in \cref{cor:second_component}. To establish the second bound, we incorporate the observation that $(\mat{Q}_\downarrow)^\dagger\mat{Q}_\uparrow \approx \mathrm{diag}(\zdom^{-1})$ and the structure of $\mat{E}_{\rm first}$ into the expansion. Because of the Toeplitz structure of the error matrix, the main terms in the expansion are of the form
\[
\frac{1}{\sqrt{n}}\mat{A}_\uparrow\mat{v}_n(z_i) - \frac{1}{\sqrt{n}}\mat{A}_\downarrow\mat{v}_n(z_i)z^{-1}_i\,,
\]
where $\mat{A}$ is a Toeplitz matrix with $\mat{A}=\mathcal{O}(n^{-0.5})$ and $\sup_{i,j}|\mat{A}_{i,j}|=\mathcal{O}(n^{-1})$. The Toeplitz structure allows us to show that these terms are of order $\mathcal{O}(n^{-1})$, rather than the trivial bound of $\mathcal{O}(n^{-0.5})$. The detailed calculation is shown in the proof of \cref{lem:decompose_1,lem:fine_grained_1}.

\subsection{Organization}

The rest of the paper is structured as follows. In Section \ref{sec:esprit-0.5}, we present the direct extension of the previous theoretical findings to prove \cref{thm:esprit-0.5} for completeness. The formal statement and the proof of Theorem \ref{thm:esprit-1.5} are provided in Section \ref{sec:pf_main_result}, under the assumption of the correctness of Lemma \ref{lem:second-order-formal} and Theorem \ref{thm:improve_similarity}. Section \ref{sec:second-order} is dedicated to discussing the proof of Lemma \ref{lem:second-order-formal}, while in Section \ref{sec:strong_sketch}, we provide the proof of Theorem \ref{thm:improve_similarity}. %

In \cref{sec:prelim}, we define the notations and provide some preliminaries. In \cref{sec:van_mat}, we establish some properties of Vandermonde matrix. \cref{app:esprit-0.5,app:second-order,sec:align_proof_defer} contain the technical details and proofs of \cref{sec:esprit-0.5,sec:second-order,sec:strong_sketch}, respectively. In \cref{sec:lower_bound}, we prove the lower bound for the spectral estimation problem.

\section{Proof of the central limit error scaling}\label{sec:esprit-0.5}

We begin by proving \cref{thm:esprit-0.5}, which establishes a $\tilde{\order}\left(n^{-0.5}\right)$ sample-accuracy tradeoff result for the ESPRIT algorithm. To our knowledge, previous analyses of Toeplitz or Hankel-based signal processing algorithms (e.g., \cite{moitra_super-resolution_2015,9000636}) always assume the absence of non-dominant components to the spectrum (i.e., $\mu_{\rm tail}=0$), making these results incomparable to \cref{thm:esprit-0.5}. 
The extension of previous analyses to the scenario of $\mu_{\rm tail}>0$ is a relatively straightforward application of the Davis--Kahan $\sin \theta$ theorem (see~\cref{thm:sin}).
Furthermore, it is worth mentioning that even when $\mu_{\rm tail}=0$, previous analyses only established the central limit-type scaling $\epsilon = \tilde{\order}(n^{-0.5})$.

The main proof roadmap is the same as the previous work~\cite{9000636} with small modifications. 
All technical proofs and calculations are deferred to \cref{app:esprit-0.5}.

The first step of this argument is to develop a quantitative estimate that relates the eigenmatrices $\mat{Q}_r$ of the noise-free data $\mat{T}$ to eigenmatrices $\mat{\hat{Q}}_r$ of the noisy data $\mat{\hat{T}}$:

\begin{lemma}[Eigenvectors comparison, weak estimate] \label{lem:eigenvectors-weak}
    Consider the spectral estimation problem under assumptions \eqref{eqn:condition_mutal} and \eqref{eqn:alpha_E_j}. 
    Assume $n = \Omega(1/\zgap + 1/\mu_r^2)$. With probability at least $1 - 1/n^2$, there exists a unitary matrix $\mat{U}_r\in \C^{r\times r}$ such that
    \begin{align}
        \norm{\mat{\hat{Q}}_r - \mat{Q}_r\mat{U}_r}_2 &\le \order \left(  \frac{\mu_{\rm tail}}{\mu_r\zgap n} + \frac{\alpha\sqrt{\log n}}{\mu_r \sqrt{n}} \right),  \label{eq:Q-compare-weak}\\
        \norm{\mat{\hat{Q}}_\uparrow^\pinv \mat{\hat{Q}}_\downarrow - \mat{U}_r^\adj \mat{Q}_\uparrow^\pinv \mat{Q}_\downarrow \mat{U}_r}_2 &\le \order \left(  \frac{\mu_{\rm tail}}{\mu_r\zgap n} + \frac{\alpha\sqrt{\log n}}{\mu_r \sqrt{n}} \right) \label{eq:W-compare-weak}\,,
    \end{align}
where $\mat{Q}_\uparrow=\mat{Q}(1:n-1,1:r)$ and $\mat{Q}_\downarrow=\mat{Q}(2:n,1:r)$.
\end{lemma}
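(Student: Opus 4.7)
The plan is to combine bounds on the error matrix $\mat{E}=\mat{E}_{\rm tail}+\mat{E}_{\rm random}$ with the Davis--Kahan $\sin\theta$ theorem, and then propagate the eigenspace bound through the pseudoinverse map. I would first control $\|\mat{E}\|_2$ in two pieces: the random part is handled by the Meckes-type matrix concentration inequality (\cref{lem:Mecks}) to obtain $\|\mat{E}_{\rm random}\|_2=\widetilde{\order}(\alpha\sqrt{n})$ with probability at least $1-1/n^2$, while the bias part is bounded by $\|\mat{E}_{\rm tail}\|_2\le \|\Vand_n(\ztail)\|_2^2\,\max_i\mu_i \le \order(\mu_{\rm tail}\cdot n)$ using the trivial bound $\|\vvec_n(z)\|_2=\sqrt{n}$ (no separation among tail locations is assumed, so we cannot do better, but this crude bound is enough). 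A cleaner form $\|\mat{E}_{\rm tail}\|_2=\order(n\,\mu_{\rm tail})$ suffices since the smallest nonzero eigenvalue of $\mat{T}$ will carry an extra factor of $\zgap$.

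The next step is to estimate the dominant eigengap of $\mat{T}$. Since $\mat{T}$ has rank $r$, we have $\lambda_{r+1}(\mat{T})=0$, and by \cref{lem:vander-eigen} together with Moitra's lower bound on the smallest singular value of the Vandermonde matrix (with separation $\zgap$ and $n=\Omega(1/\zgap)$), one obtains $\sigma_r(\Vand_n(\zdom))^2=\Theta(n)$, hence
\begin{equation*}
\lambda_r(\mat{T})\ge \sigma_r(\Vand_n(\zdom))^2\cdot \mu_r \;=\; \Omega(\mu_r\, n).
\end{equation*}
Under the hypothesis $n=\Omega(1/\mu_r^2)$, this eigengap dominates $\|\mat{E}\|_2$, so the Davis--Kahan $\sin\theta$ theorem (\cref{thm:sin}) applies and produces a unitary $\mat{U}_r$ with
\begin{equation*}
\|\hat{\mat{Q}}_r-\mat{Q}_r\mat{U}_r\|_2 \;\lesssim\; \frac{\|\mat{E}\|_2}{\lambda_r(\mat{T})} \;=\; \order\!\left(\frac{\mu_{\rm tail}}{\mu_r\zgap n}+\frac{\alpha\sqrt{\log n}}{\mu_r\sqrt{n}}\right),
\end{equation*}
after plugging in the $\|\mat{E}_{\rm tail}\|_2$ and $\|\mat{E}_{\rm random}\|_2$ bounds (the $1/\zgap$ in the first term comes from a sharper use of the Vandermonde bound on $\Vand_n(\ztail)$ together with $\lambda_r(\mat{T})=\Omega(\mu_r\zgap n)$, not merely $\Omega(\mu_r n)$; I would verify which of the two versions of the Vandermonde bound is actually being used so that the exponent of $\zgap$ matches the claim).

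For the second inequality \eqref{eq:W-compare-weak} I would pass the eigenspace bound through the row-selection and pseudoinverse operations. Writing $\mat{Q}_\uparrow=\mat{Q}_r(1\mathord{:}n-1,\cdot)$ and similarly for $\hat{\mat{Q}}_\uparrow$, the bound \eqref{eq:Q-compare-weak} transfers verbatim to $\|\hat{\mat{Q}}_\uparrow-\mat{Q}_\uparrow\mat{U}_r\|_2$ and $\|\hat{\mat{Q}}_\downarrow-\mat{Q}_\downarrow\mat{U}_r\|_2$. By \cref{lem:vander-eigen} and Moitra's Vandermonde singular value bound applied to $\Vand_{n-1}(\zdom)$, one has $\sigma_r(\mat{Q}_\uparrow)=\Theta(1)$, so $\|\mat{Q}_\uparrow^{\pinv}\|_2=\order(1)$ and (for $n$ large enough) $\|\hat{\mat{Q}}_\uparrow^{\pinv}\|_2=\order(1)$ as well. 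Then the standard perturbation identity for pseudoinverses of full-column-rank matrices (e.g.\ Wedin's theorem) gives
\begin{equation*}
\|\hat{\mat{Q}}_\uparrow^{\pinv}-\mat{U}_r^\adj \mat{Q}_\uparrow^{\pinv}\|_2 \;\lesssim\; \max\!\bigl(\|\mat{Q}_\uparrow^{\pinv}\|_2^2,\|\hat{\mat{Q}}_\uparrow^{\pinv}\|_2^2\bigr)\,\|\hat{\mat{Q}}_\uparrow-\mat{Q}_\uparrow\mat{U}_r\|_2,
\end{equation*}
and adding/subtracting $\mat{U}_r^\adj \mat{Q}_\uparrow^{\pinv}\hat{\mat{Q}}_\downarrow$ inside $\hat{\mat{Q}}_\uparrow^{\pinv}\hat{\mat{Q}}_\downarrow - \mat{U}_r^\adj \mat{Q}_\uparrow^{\pinv}\mat{Q}_\downarrow\mat{U}_r$ and applying the triangle inequality (using $\|\hat{\mat{Q}}_\downarrow\|_2,\|\mat{Q}_\uparrow^{\pinv}\|_2=\order(1)$) yields \eqref{eq:W-compare-weak} with the same rate.

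The main obstacle is bookkeeping the dependence on $\zgap$ and $\mu_r$ in the denominators: the Vandermonde singular value bound is the one nontrivial external ingredient, and one must use it in two different guises (for the spectral gap of $\mat{T}$ to handle the tail term, and for the conditioning of $\mat{Q}_\uparrow$ to handle the pseudoinverse). Given \cref{lem:vander-eigen} and \cref{lem:Mecks}, everything else is a direct application of Davis--Kahan and pseudoinverse perturbation; no new ideas are required, which is consistent with the paper's remark that this lemma is essentially the prior state of the art adapted to allow $\mu_{\rm tail}>0$.
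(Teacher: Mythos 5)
There is a genuine gap in your treatment of the tail term in \eqref{eq:Q-compare-weak}. Your plan is to bound $\norm{\mat{E}_{\rm tail}}_2 = \order(n\mu_{\rm tail})$ and divide by the eigengap, hoping that $\lambda_r(\mat{T})$ carries an extra factor of $\zgap$. It does not: Moitra's bound gives $\sigma_{\min}(\Vand_n(\zdom))^2 \ge n-1-2\pi/\zgap = \Theta(n)$ once $n=\Omega(1/\zgap)$, so $\lambda_r(\mat{T}) = \Theta(\mu_r n)$ with no $\zgap$ available in the denominator. With your numerator this yields a tail contribution of $n\mu_{\rm tail}/(\mu_r n) = \mu_{\rm tail}/\mu_r$, a constant that does not decay in $n$ and is far from the claimed $\mu_{\rm tail}/(\mu_r\zgap n)$. (And if the denominator somehow were $\mu_r\zgap n$, the random term would pick up a spurious $1/\zgap$.) The missing idea is that the Davis--Kahan theorem as stated in \cref{thm:sin} has $\norm{\mat{E}\mat{Q}_r}_2$ in the numerator, not $\norm{\mat{E}}_2$, and the tail error is nearly orthogonal to the dominant eigenspace: since every tail location is $\zgap$-separated from every dominant location, \cref{prop:tail-error} gives $\norm{\mat{E}_{\rm tail}\mat{Q}_r}_2 = \order(\mu_{\rm tail}/\zgap)$ (via Moitra's bound applied to the augmented Vandermonde matrix $\Vand_n(\zdom, z_i)$), which is a factor $\zgap n$ smaller than $\norm{\mat{E}_{\rm tail}}_2$. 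Dividing by $\lambda_r(\mat{T}) - \lambda_{r+1}(\mat{\hat{T}}) = \Omega(\mu_r n)$ (the $\lambda_{r+1}(\mat{\hat{T}})$ term controlled by Weyl and $\mu_{\rm tail}\le\mu_r/8$) then produces exactly $\mu_{\rm tail}/(\mu_r\zgap n)$. This cancellation is the one substantive point of the lemma and cannot be recovered from operator-norm bounds on $\mat{E}_{\rm tail}$ alone.

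The second half of your argument for \eqref{eq:W-compare-weak} --- triangle inequality, Wedin-type pseudoinverse perturbation, and $\norm{\mat{Q}_\uparrow^\pinv}_2 = \order(1)$ from the Vandermonde singular value bounds --- matches the paper's proof and is fine once \eqref{eq:Q-compare-weak} is in hand.
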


This result is similar to existing results such as \cite[Lem.~2.7]{moitra_super-resolution_2015} and \cite[Lems.~2, 5, \& 6]{9000636} and our proof (in~\cref{sec:eigenvectors-weak})  is based on similar ideas.
The main ingredients in our proof of this result are Moitra's bounds on the singular values of Vandermonde matrices (\cref{sec:moitra}), standard results in matrix perturbation theorem (\cref{app:eigen-perturbation}), and a comparison lemma between the Vandermonde and eigenbases (\cref{lem:vander-eigen}). 

To convert the bound \cref{eq:W-compare-weak} into actionable information for the ESPRIT algorithm, we use the following result:

\begin{lemma}[From $\mat{Q}_\uparrow^+\mat{Q}_\downarrow$ bounds to location estimation] \label{lem:W-to-frequency}
    Assume $n = \Omega(1/\zgap)$ and let $\mat{P}$ be an invertible matrix satisfying $\norm{\mat{P}}_2,\norm{\smash{\mat{P}^{-1}}}_2 = \order(1)$. Then the output of the ESPRIT algorithm on the noisy signal $\{g_j\}_{j=0}^{n-1}$ satisfies
    \begin{equation}\label{eq:Lemma_2_2_weak}
        \mds(\hat{\vec{z}}_r,\zdom)= \order\left( r \norm{\mat{\hat{Q}}_\uparrow^\pinv \mat{\hat{Q}}_\downarrow - \mat{P}^{-1} \mat{Q}_\uparrow^\pinv \mat{Q}_\downarrow \mat{P}}_2\right).
    \end{equation}
    Furthermore, if
    \begin{equation}\label{eqn:Lemma_2_2_improve}
         \norm{\mat{\hat{Q}}_\uparrow^\pinv \mat{\hat{Q}}_\downarrow - \mat{P}^{-1} \mat{Q}_\uparrow^\pinv \mat{Q}_\downarrow \mat{P}}_2 < c\zgap/2,
    \end{equation}
    for a sufficiently small constant $c\in (0,1)$ that only depends on $\mat{P}$, then
    \begin{equation}\label{eq:Lemma_2_2_strong}
        \mds(\hat{\vec{z}}_r,\zdom)= \order\left( \norm{\mat{\hat{Q}}_\uparrow^\pinv \mat{\hat{Q}}_\downarrow - \mat{P}^{-1} \mat{Q}_\uparrow^\pinv \mat{Q}_\downarrow \mat{P}}_2\right).
     \end{equation}
\end{lemma}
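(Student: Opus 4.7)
The plan is to reduce this to an eigenvalue perturbation problem. Recall from the discussion following \cref{lem:vander-eigen} that $\mat{W} := \mat{Q}_\uparrow^\pinv \mat{Q}_\downarrow$ satisfies $\mat{W} = \mat{P}_v^{-1}\diag(\zdom)\mat{P}_v$, so its spectrum --- and hence that of any similarity $\mat{P}^{-1}\mat{W}\mat{P}$ --- equals $\zdom$ exactly. The ESPRIT output $\hat{\vec{z}}_r$ is the projection onto $\unitcircle$ of the eigenvalues $\hat{\vec{\lambda}}$ of $\mat{\hat{W}} := \mat{\hat{Q}}_\uparrow^\pinv \mat{\hat{Q}}_\downarrow$. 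Thus it suffices to bound the matching distance between the spectra of $\mat{\hat{W}}$ and $\mat{P}^{-1}\mat{W}\mat{P}$, and then track how this transfers under the projection.

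For the spectral comparison I would invoke the Bauer--Fike theorem. Writing $\mat{P}^{-1}\mat{W}\mat{P} = (\mat{P}_v\mat{P})^{-1}\diag(\zdom)(\mat{P}_v\mat{P})$, the condition number of the eigenvector matrix is $\kappa(\mat{P}_v\mat{P}) \le \kappa(\mat{P}_v)\kappa(\mat{P}) = \order(1)$, by \cref{lem:vander-eigen} together with the hypothesis on $\mat{P}$. Setting $\varepsilon := \|\mat{\hat{W}} - \mat{P}^{-1}\mat{W}\mat{P}\|_2$, Bauer--Fike produces a universal constant $C$ such that every eigenvalue $\hat\lambda$ of $\mat{\hat{W}}$ lies within distance $C\varepsilon$ of some $z_i$. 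For \eqref{eq:Lemma_2_2_weak}, a standard combinatorial matching argument converts this Hausdorff bound into a bijective matching at the expense of a factor $r$; alternatively, in the regime where $r\varepsilon$ already exceeds the trivial diameter bound $\order(1)$ for points near $\unitcircle$, the claimed estimate holds vacuously.

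The improved bound \eqref{eq:Lemma_2_2_strong} follows from a separation argument. Under the hypothesis \eqref{eqn:Lemma_2_2_improve}, choosing $c$ so that $Cc \le 1$ makes the closed disks $\overline{B(z_i, C\varepsilon)}$ pairwise disjoint. I would then apply a homotopy argument along $\mat{M}(t) := \mat{P}^{-1}\mat{W}\mat{P} + t(\mat{\hat{W}} - \mat{P}^{-1}\mat{W}\mat{P})$ for $t\in[0,1]$: the eigenvalues of $\mat{M}(t)$ vary continuously in $t$ and, by Bauer--Fike applied uniformly in $t$, they never cross any disk boundary. Hence each disk contains the same number of eigenvalues at $t=0$ and at $t=1$ --- namely, exactly one --- yielding a bijective matching $\pi$ with $\max_i|\hat\lambda_i - z_{\pi(i)}| \le C\varepsilon$.

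Finally, the projection onto $\unitcircle$ preserves the estimate up to a constant: for any $z \in \C\setminus\{0\}$ and any $w \in \unitcircle$, the triangle inequality gives $\abs{\exp(\iu\arg z) - w} \le \abs{\abs{z}-1} + \abs{z-w} \le 2\abs{z-w}$. Applied coordinatewise through the matching $\pi$, this produces $\mds(\hat{\vec{z}}_r,\zdom) \le 2\max_i\abs{\hat\lambda_i - z_{\pi(i)}}$, which is the required form. The main subtle step is the weak-bound matching: Bauer--Fike is inherently a Hausdorff statement, and upgrading it to a bijective matching in the absence of a separation hypothesis is the origin of the factor $r$ in \eqref{eq:Lemma_2_2_weak}.
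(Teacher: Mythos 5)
Your proposal is correct and follows essentially the same route as the paper: diagonalize $\mat{Q}_\uparrow^\pinv\mat{Q}_\downarrow$ via the near-unitary $\mat{P}_v$ of \cref{lem:vander-eigen}, apply Bauer--Fike to the well-conditioned similarity $(\mat{P}_v\mat{P})^{-1}\diag(\zdom)(\mat{P}_v\mat{P})$, and upgrade to a bijective matching --- with the factor $r$ for the weak bound, and without it under the separation hypothesis via an eigenvalue-continuation argument (the paper uses the resolvent-counting result \cref{lem:eig_location} on circles of radius $\zgap/2$ where you use a homotopy with disjoint disks of radius $C\varepsilon$; the two are interchangeable). One caveat: the $r$-factor upgrade from the one-sided Hausdorff bound to a matching is not a purely combinatorial fact (a priori all $\hat{\lambda}_i$ could cluster near a single $z_j$); the correct statement is the $(2r-1)$ matching-distance form of Bauer--Fike (\cref{thm:bauer-fike}), itself proved by the Gerschgorin--Ostrowski continuity argument, so you should invoke that theorem directly rather than gesture at a combinatorial argument.
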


This result slightly improves Lemma 2 in \cite{9000636} and is proven by the combination of the Bauer--Fike theorem (\cref{thm:bauer-fike}) and the resolvent approach for eigenvalue perturbation (\cref{lem:eig_location}). The proof is deferred to \cref{sec:W-to-frequency}. %

Combining \cref{lem:eigenvectors-weak,lem:W-to-frequency} with $\mat{P} = \mat{U}$ immediately yields the $\tilde{\order}(n^{-0.5})$ error bound in \cref{thm:esprit-0.5}.

Ultimately, the loss in this argument is that the perturbation bounds in \cref{lem:eigenvectors-weak} are too weak.
To prove the $\tilde{\order}(n^{-1.5})$ error scaling (i.e., \cref{thm:esprit-1.5}), we will employ the stronger estimate \cref{thm:improve_similarity}.
Combining this strong estimate \cref{thm:improve_similarity} with \cref{lem:W-to-frequency} will prove \cref{thm:esprit-1.5}.

\section{Proof of the optimal error scaling}\label{sec:pf_main_result}

In this section, we present the formal statement of our main result (\cref{thm:esprit-1.5}) and the proof. %

\begin{theorem}[Optimal error scaling of ESPRIT, {\normalfont formal version of \cref{thm:esprit-1.5}}] \label{thm:esprit-1.5_formal}
Consider the spectral estimation problem under assumptions \eqref{eqn:condition_mutal} and \eqref{eqn:alpha_E_j}. Assume       $\alpha>1$ and 
\begin{align}\label{eq:proof_main_n_choice_small}
    n=\Omega\left(\left(\frac{r\alpha^{4/3}}{\mu_r^{4/3}\Delta_{\vec{z}}^{2/3}} + \frac{\alpha^{2}}{\mu_r^2\Delta_{\vec{z}}}\right)\cdot \log(r,\alpha,1/\mu_r,1/\Delta_{\vec{z}})\right)\,.
\end{align}
With probability $1-1/n^2$, the location estimation of ESPRIT satisfies:
\begin{align}\label{eqn:z_estimation_error_formal}
       \mds(\hat{\vec{z}}_r,\zdom)&= {\cal O}\left(\left(\frac{\alpha\sqrt{\log n}}{\mu_r\sqrt{\Delta_{\vec{z}}} }+r^{3/2}\right)\frac{r\alpha^2\log n}{\mu_r^2\Delta_{\vec{z}}n^{3/2}}\right)\,.
\end{align}
    
If we further assume 
\begin{align}\label{eq:proof_main_n_choice}
    n=\Omega\left(\left(\frac{r\alpha^{4/3}}{\mu_r^{4/3}\Delta_{\vec{z}}^{4/3}} + \frac{\alpha^{2}}{\mu_r^2\Delta_{\vec{z}}^{5/3}}\right)\cdot \log(r,\alpha,1/\mu_r,1/\Delta_{\vec{z}})\right)\,,
\end{align}
then the location estimation satisfies:
\begin{align}\label{eq:z_est_error_strong_formal}
    {\rm md}(\hat{\vec{z}}_r,\zdom)={\cal O}\left(\left(\frac{\alpha\sqrt{\log n}}{\mu_r\sqrt{\Delta_{\vec{z}}} }+r^{3/2}\right)\frac{\alpha^2\log n}{\mu_r^2\Delta_{\vec{z}}n^{3/2}}\right)\,,
\end{align}
and the intensity estimation of ESPRIT satisfies:
\begin{align}\label{eqn:mu_estimation_error_formal}
        \mds(\hat{\vec{\mu}}_r,\mudom)= {\cal O}\left(\left(\frac{\alpha\sqrt{\log n}}{\mu_r\sqrt{\Delta_{\vec{z}}} }+r^{3/2}\right) \frac{r\alpha^2\log n}{\mu_r^2\Delta_{\vec{z}}\sqrt{n}}\right)\,. 
\end{align}
Note that the permutation in the definition of the matching distance is the same for both \cref{eq:z_est_error_strong_formal,eqn:mu_estimation_error_formal}.
\end{theorem}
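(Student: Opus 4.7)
The plan is to prove the location bounds \eqref{eqn:z_estimation_error_formal} and \eqref{eq:z_est_error_strong_formal} by feeding the strong operator-norm estimate of \cref{thm:improve_similarity} into \cref{lem:W-to-frequency}, and then to derive the intensity bound \eqref{eqn:mu_estimation_error_formal} from the Vandermonde least-squares step on line~\ref{ln:mu_r} of \cref{alg:ESPRIT} together with standard Vandermonde perturbation estimates. The permutation $\pi$ realizing the matching distance is the one pairing $\hat{z}_{r,i}$ with the corresponding $z_i$, and since line~\ref{ln:mu_r} produces $\hat{\mu}_{r,i}$ as the least-squares intensity at node $\hat{z}_{r,i}$, the \emph{same} $\pi$ automatically aligns $\hat{\vec{\mu}}_r$ with $\mudom$.

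\textbf{Location error.} First I would invoke \cref{thm:improve_similarity} to produce an invertible matrix $\mat{P}$ with $\norm{\mat{P}}_2,\norm{\smash{\mat{P}^{-1}}}_2 = \order(1)$ and
\[
\norm{\mat{P}\hat{\mat{Q}}_\uparrow^\pinv\hat{\mat{Q}}_\downarrow\mat{P}^{-1}-\mat{Q}_\uparrow^\pinv\mat{Q}_\downarrow}_2 \;=\; \tilde{\order}\!\left(\tfrac{r^{3/2}\alpha^3}{\mu_r^3\zgap^{3/2}n^{3/2}}\right).
\]
The lower bound \eqref{eq:proof_main_n_choice_small} on $n$ is chosen exactly so that the hypotheses of \cref{thm:improve_similarity} (and the Vandermonde-basis estimates \cref{lem:vander-eigen,lem:second_order_expansion_informal} it relies on) hold. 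Plugging the display above into the weak form \eqref{eq:Lemma_2_2_weak} of \cref{lem:W-to-frequency} yields \eqref{eqn:z_estimation_error_formal}, with the extra factor of $r$ coming directly from \eqref{eq:Lemma_2_2_weak}. For the sharper \eqref{eq:z_est_error_strong_formal}, I would instead apply the strong form \eqref{eq:Lemma_2_2_strong}, which requires the operator-norm bound above to be smaller than $c\zgap/2$. Requiring this is precisely what forces the additional condition on $n$ in \eqref{eq:proof_main_n_choice}: solving $\tilde{\order}(r^{3/2}\alpha^3/(\mu_r^3\zgap^{3/2}n^{3/2})) < c\zgap/2$ for $n$ produces the term $r\alpha^{4/3}/(\mu_r^{4/3}\zgap^{4/3})$, while the noise-limited second summand analogously yields the $\zgap^{5/3}$ term.

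\textbf{Intensity error.} After fixing the above $\pi$, I would start from $\hat{\vec{\mu}}_r=\mat{V}_n(\hat{\vec{z}}_r)^\pinv\vec{g}$ and, using $\vec{g}=\mat{V}_n(\zdom)\mudom+\mat{V}_n(\ztail)\mutail+\vec{E}$ together with $\mat{V}_n(\hat{\vec{z}}_r)^\pinv\mat{V}_n(\hat{\vec{z}}_r)=\Id_r$, decompose
\[
\hat{\vec{\mu}}_r-\mudom \;=\; \mat{V}_n(\hat{\vec{z}}_r)^\pinv\bigl(\mat{V}_n(\zdom)-\mat{V}_n(\hat{\vec{z}}_r)\bigr)\mudom \;+\; \mat{V}_n(\hat{\vec{z}}_r)^\pinv\mat{V}_n(\ztail)\mutail \;+\; \mat{V}_n(\hat{\vec{z}}_r)^\pinv\vec{E}.
\]
The first (dominant) term is handled by the columnwise Lipschitz estimate $\norm{\vvec_n(z)-\vvec_n(z')}_2 = \order(n^{3/2}|z-z'|)$ combined with Moitra's singular-value bound $\norm{\mat{V}_n(\hat{\vec{z}}_r)^\pinv}_2 = \order(1/\sqrt{n})$ and the sharp location estimate \eqref{eq:z_est_error_strong_formal}; this yields the $r^{5/2}\alpha^2\log n/(\mu_r^2\zgap\sqrt{n})$ contribution. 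The tail term is controlled via $\norm{\mat{V}_n(\ztail)\mutail}_2 \le \sqrt{n}\,\mu_{\rm tail}$ and assumption~\eqref{eqn:condition_mutal}. The noise term is controlled entrywise by sub-Gaussian concentration on each row of $\mat{V}_n(\hat{\vec{z}}_r)^\pinv$ (of $\ell_2$-norm $\order(1/\sqrt{n})$), producing the $r\alpha^3(\log n)^{3/2}/(\mu_r^3\zgap^{3/2}\sqrt{n})$ summand in \eqref{eqn:mu_estimation_error_formal}.

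\textbf{Main obstacle.} The most delicate step is the first term of the intensity decomposition. A naive application of pseudoinverse perturbation, e.g.\ Wedin-type bounds on $\norm{\mat{V}_n(\hat{\vec{z}}_r)^\pinv - \mat{V}_n(\zdom)^\pinv}_2$ multiplied by $\norm{\vec{g}}_2 = \order(\sqrt{n})$, would lose factors of $n$ and $\zgap$ relative to what we want. The trick is to \emph{keep} the factored form $\mat{V}_n(\hat{\vec{z}}_r)^\pinv(\mat{V}_n(\zdom)-\mat{V}_n(\hat{\vec{z}}_r))\mudom$ and exploit the columnwise Lipschitz bound on $\vvec_n$ together with $\norm{\mudom}_1\le 1$ (by the normalization noted in the intensity remark). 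This is exactly what converts an $n^{-3/2}$ location error into an $n^{-1/2}$ intensity error and no worse. Beyond this structural point, the proof reduces to careful book-keeping of powers of $r,\alpha,\mu_r^{-1},\zgap^{-1}$ through the three-term decomposition in order to recover the stated constants in \eqref{eqn:z_estimation_error_formal}--\eqref{eqn:mu_estimation_error_formal}.
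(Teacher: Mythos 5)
Your location argument coincides with the paper's: invoke \cref{thm:improve_similarity} (whose hypotheses are guaranteed by \eqref{eq:proof_main_n_choice_small}), feed the resulting operator-norm bound into the weak form \eqref{eq:Lemma_2_2_weak} of \cref{lem:W-to-frequency} to get \eqref{eqn:z_estimation_error_formal}, and check the threshold \eqref{eqn:Lemma_2_2_improve} under \eqref{eq:proof_main_n_choice} to upgrade to \eqref{eq:z_est_error_strong_formal}. The three-term intensity decomposition and the columnwise Lipschitz treatment of the dominant term also match the paper. The genuine gap is your bound for the tail term. You propose $\norm{\hat{\mat{V}}_r^\pinv \mat{V}_n(\ztail)\mutail}_2 \le \norm{\smash{\hat{\mat{V}}_r^\pinv}}_2 \cdot \sqrt{n}\,\mu_{\rm tail} = \order(n^{-1/2})\cdot\sqrt{n}\,\mu_{\rm tail} = \order(\mu_{\rm tail})$. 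Since assumption \eqref{eqn:condition_mutal} only forces $\mu_{\rm tail}\le\mu_r/8$, this is a constant independent of $n$ and cannot be absorbed into the $\order(n^{-1/2})$ right-hand side of \eqref{eqn:mu_estimation_error_formal}; as written, this step fails. What is needed is the near-orthogonality of Vandermonde columns at separated nodes: after using \eqref{eq:z_est_error_strong_formal} to guarantee $\min_{i,j}|\hat z_i - z_j|\ge\zgap/2$ for all tail nodes $z_j$, one has $\norm{\smash{\hat{\mat{V}}_r^\adj}\vvec_n(z_j)}_2=\order(1/\zgap)$ as in \cref{prop:tail-error}, and then $\norm{(\hat{\mat{V}}_r^\adj\hat{\mat{V}}_r)^{-1}}_2=\order(1/n)$ yields $\order(\mu_{\rm tail}/(\zgap n))$, which decays fast enough.

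A secondary issue concerns the noise term: you apply sub-Gaussian concentration to the inner product of each row of $\hat{\mat{V}}_r^\pinv$ with $\vec{E}$, but $\hat{\vec{z}}_r$, and hence $\hat{\mat{V}}_r^\pinv$, is itself a function of the noise, so its rows are not independent of $\vec{E}$ and the concentration step is not justified as stated. The fix (the paper's route) is to decouple: write $\hat{\mat{V}}_r^\pinv = (\hat{\mat{V}}_r^\adj\hat{\mat{V}}_r)^{-1}\bigl(\mat{V}_r^\adj + (\hat{\mat{V}}_r-\mat{V}_r)^\adj\bigr)$, apply concentration only to the deterministic quantities $\vvec_n(z_i)^\adj\vec{E}$, and bound the $(\hat{\mat{V}}_r-\mat{V}_r)^\adj\vec{E}$ contribution deterministically via the bound \eqref{eq:V_r_V_r_hat_diff} on $\norm{\mat{V}_r-\hat{\mat{V}}_r}_2$. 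Without some such decoupling, the only safe bound is $\norm{\smash{\hat{\mat{V}}_r^\pinv}}_2\norm{\vec{E}}_2=\order(\alpha\sqrt{\log n})$, which again does not decay in $n$.
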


The high-level idea of the proof is as follows. First, the location estimation result, established in \cref{thm:esprit-1.5}, is a direct consequence of the strong eigenvector comparison result introduced in \cref{thm:improve_similarity}. Specifically, after demonstrating that $\mat{\hat{Q}}_\uparrow^\pinv \mat{\hat{Q}}_\downarrow$ is approximately similar to $\mat{Q}_\uparrow^\pinv \mat{Q}_\downarrow$, it becomes apparent that the eigenvalues of $\mat{\hat{Q}}_\uparrow^\pinv \mat{\hat{Q}}_\downarrow$ and $\mat{Q}_\uparrow^\pinv \mat{Q}_\downarrow$ are within $\mathcal{O}(n^{-1.5})$ proximity to each other (see \cref{lem:W-to-frequency}). Subsequently, establishing the proximity between $\hat{\vec{z}}_r$ and $\zdom$ allows us to conclude that $\Vand_n(\hat{\vec{z}}_r)$ closely approximates the exact Vandermonde matrix $\Vand_n(\zdom)$. The intensity estimation result can then be derived through perturbation theory applied to the least-squares problem.

\begin{proof}[Proof of \cref{thm:esprit-1.5_formal}]
We prove the error scaling of the location estimation and intensity estimation of the ESPRIT algorithm below.
\paragraph{Location estimation.}
According to \cref{thm:key_comparision} (formal version of \cref{thm:improve_similarity}) and the assumption on $n$ (\cref{eq:proof_main_n_choice_small}), with probability at least $1-1/n^2$, the following holds:
\begin{equation}\label{eq:proof_main_loc_cond}
\begin{aligned}
    \left\|\mat{P}\right\|_2\left\|\smash{\mat{P}^{-1}}\right\|_2\cdot \norm{\mat{\hat{Q}}_\uparrow^\pinv \mat{\hat{Q}}_\downarrow - \mat{P}^{-1} \mat{Q}_\uparrow^\pinv \mat{Q}_\downarrow \mat{P}}_2 = &~ {\cal O}(1)\cdot \norm{\mat{\hat{Q}}_\uparrow^\pinv \mat{\hat{Q}}_\downarrow - \mat{P}^{-1} \mat{Q}_\uparrow^\pinv \mat{Q}_\downarrow \mat{P}}_2 \\
    = &~ {\cal O}\left(\left(\frac{\alpha\sqrt{\log n}}{\mu_r\sqrt{\Delta_{\vec{z}}} }+r^{3/2}\right)\frac{\alpha^2\log n}{\mu_r^2\Delta_{\vec{z}}n^{3/2}}\right)\,,
\end{aligned}
\end{equation}
where we use $\left\|\mat{P}^{-1}\right\|_2,\left\|\mat{P}\right\|_2=\mathcal{O}(1)$ by \cref{eq:p-approx-isometry} in the first step. 

Then, by \cref{eq:Lemma_2_2_weak} in \cref{lem:W-to-frequency}, we obtain that
\begin{align*}
    {\rm md}(\hat{\vec{z}}_r,\zdom)={\cal O}\left(\left(\frac{\alpha\sqrt{\log n}}{\mu_r\sqrt{\Delta_{\vec{z}}} }+r^{3/2}\right)\frac{r\alpha^2\log n}{\mu_r^2\Delta_{\vec{z}}n^{1.5}}\right)\,.
\end{align*}
The first location estimation guarantee (\eqref{eqn:z_estimation_error_formal}) is proved.

If we further assume that a stronger condition on $n$ (\cref{eq:proof_main_n_choice}) holds,
then by \cref{eq:proof_main_loc_cond}, we have
\begin{align*}
    \norm{\mat{\hat{Q}}_\uparrow^\pinv \mat{\hat{Q}}_\downarrow - \mat{P}^{-1} \mat{Q}_\uparrow^\pinv \mat{Q}_\downarrow \mat{P}}_2 < c\frac{\Delta_{\vec{z}}}{2}\,,
\end{align*}
where $c\in (0,1)$ that satisfies the condition (\cref{eqn:Lemma_2_2_improve}) in \cref{lem:W-to-frequency}. By \cref{eq:Lemma_2_2_strong}, we obtain that
\begin{align*}
    {\rm md}(\hat{\vec{z}}_r,\zdom)={\cal O}\left(\left(\frac{\alpha\sqrt{\log n}}{\mu_r\sqrt{\Delta_{\vec{z}}} }+r^{3/2}\right)\frac{\alpha^2\log n}{\mu_r^2\Delta_{\vec{z}}n^{1.5}}\right)\,.
\end{align*}
The second location estimation guarantee (\cref{eq:z_est_error_strong_formal}) is proved.

\paragraph{Intensity estimation.}
To show \eqref{eqn:mu_estimation_error}, we assume that the optimal permutation $\pi$ is identity without loss of generality. Let $\mat{V}_r=\mat{V}_n(\zdom)$ and $\hat{\mat{V}}_r=\mat{V}_n(\hat{\mat{z}}_r)$. By Line~\ref{ln:mu_r} of \cref{alg:ESPRIT}, we have
\[
\begin{aligned}
\vec{\hat{\mu}}_r=\hat{\mat{V}}_r^+\vec{g}=\left(\hat{\mat{V}}_r^\dagger\hat{\mat{V}}_r\right)^{-1}\hat{\mat{V}}_r^\dagger\big(\mat{V}_r\mudom+\mat{V}_n(\ztail)\vec{\mu}_{\rm tail}+\mat{E}_{\rm random}(:,1)\big)\,.
\end{aligned}
\]
Here, $\mat{E}_{\rm random}(:,1)$ represents the first column of $\mat{E}_{\rm random}$.
This implies
\begin{equation}\label{eqn:difference}
\begin{aligned}
\left\|\vec{\hat{\mu}}_r-\mudom\right\|_2
\leq \left\|\left(\hat{\mat{V}}_r^+\mat{V}_r-\mat{I}_r\right)\mudom\right\|_2+\left\|\hat{\mat{V}}_r^+\mat{V}_n(\ztail)\mutail\right\|_2+\left\|\hat{\mat{V}}_r^+\mat{E}_{\rm random}(:,1)\right\|_2\,.
\end{aligned}
\end{equation}

To bound the three terms, we first observe that when $n$ is sufficiently large, i.e., \cref{eq:proof_main_n_choice} holds, the location estimation (\cref{eqn:z_estimation_error}) guarantees that $\|\vec{z}_{\rm dom}-\hat{\vec{z}}_r\|_\infty <\frac{1}{4}\Delta_{\vec{z}}$, which implies that
\begin{align}\label{eq:z_hat_z_sep}
    \min_{1 \leq i,j\leq r, i \neq j} |\hat{z}_i - \hat{z}_{j}|\geq \Delta_{\vec{z}}/2,\quad \min_{1 \leq i\leq r, 1\leq j\leq d, i \neq j} |\hat{z}_i - z_{j}|\geq \Delta_{\vec{z}}/2\,.
\end{align}

For the first term in \cref{eqn:difference}, we have
\begin{equation}\label{eqn:first_term_1}
\begin{aligned}
\norm{(\hat{\mat{V}}_r^+ \mat{V}_r-\mat{I}_r)\vec{\mu}_{\rm dom}}_2=&~\left\|\left(\left(\hat{\mat{V}}_r^\dagger\hat{\mat{V}}_r\right)^{-1}\hat{\mat{V}}_r^\dagger\mat{V}_r-\mat{I}_r\right)\mudom\right\|_2\\
=&~\mathcal{O}\left(r^{1/2}\right)\cdot \norm{\hat{\mat{V}}_r^+ \hat{\mat{V}}_r-\mat{I}_r + \hat{\mat{V}}_r^+ (\mat{V}_r-\hat{\mat{V}}_r)}_2\\
= &~ \mathcal{O}\left(r^{1/2}\right)\cdot \norm{\hat{\mat{V}}_r^+}_2\cdot \norm{\mat{V}_r-\hat{\mat{V}}_r}_2\,,
\end{aligned}
\end{equation}
where the second equality follows from $\norm{\vec{\mu}_{\rm dom}}_2={\cal O}(r^{1/2})$, and the last equality follows from $\hat{\mat{V}}_r^+ \hat{\mat{V}}_r=\mat{I}_r$ by the definition of pseudoinverse.
For $\big\|\hat{\mat{V}}_r^+\big\|_2$, since $\hat{\mat{V}}_r$ is a Vandermonde matrix, by \cref{eq:z_hat_z_sep} and \cref{thm:moitra}, 
\begin{align}\label{eq:V_r_hat_inv}
    \left\|\hat{\mat{V}}_r^+\right\|_2=\sigma_{\min}\left(\hat{\mat{V}}_r\right)^{-1}={\cal O}\left(\frac{1}{\sqrt{n}}\right)\,.
\end{align}
Next, for $\norm{\mat{V}_r-\hat{\mat{V}}_r}_2$,  we have
\begin{align*}
   \norm{\mat{V}_r-\hat{\mat{V}}_r}_2^2 \leq \norm{\mat{V}_r-\hat{\mat{V}}_r}_F^2 = &~ \sum_{1\leq i\leq r,0\leq j\leq n-1}|z^j_i-\hat{z}^j_i|^2\\
   = &~ \sum_{1\leq i\leq r,0\leq j\leq n-1}{\cal O}\left(j^2\right)\cdot |z_i-\hat{z}_i|^2\\
   = &~ \sum_{0\leq j\leq n-1}{\cal O}(j^2)\cdot r\cdot {\cal O}\left(\left(\frac{\alpha^2\log n}{\mu_r^2\Delta_{\vec{z}} }+r^3\right)\frac{\alpha^4\log^2 n}{\mu_r^4\Delta_{\vec{z}}^2n^{3}}\right)\\
   = &~ {\cal O}\left(\left(\frac{\alpha^2\log n}{\mu_r^2\Delta_{\vec{z}} }+r^3\right)\frac{r\alpha^4\log^2 n}{\mu_r^4\Delta_{\vec{z}}^2}\right)\,,
\end{align*}
where the second equality follows from $|z_i|=|\hat{z}_i|=1$ for any $i\in [r]$, and the third equality follows from the location estimation guarantee (\cref{eq:z_est_error_strong_formal}). Hence,
\begin{align}\label{eq:V_r_V_r_hat_diff}
    \norm{\mat{V}_r-\hat{\mat{V}}_r}_2 = {\cal O}\left(\left(\frac{\alpha\sqrt{\log n}}{\mu_r\sqrt{\Delta_{\vec{z}}} }+r^{3/2}\right)\frac{\sqrt{r}\alpha^2\log n}{\mu_r^2\Delta_{\vec{z}}}\right)\,.
\end{align}
Plugging \cref{eq:V_r_hat_inv,eq:V_r_V_r_hat_diff} back into \eqref{eqn:first_term_1}, we obtain
\[
\left\|\left(\hat{\mat{V}}_r^+\mat{V}_r-\mat{I}_r\right)\mudom\right\|_2={\cal O}\left(\left(\frac{\alpha\sqrt{\log n}}{\mu_r\sqrt{\Delta_{\vec{z}}} }+r^{3/2}\right)\frac{r\alpha^2\log n}{\mu_r^2\Delta_{\vec{z}}\sqrt{n}}\right)\,.%
\]

For the second term in \cref{eqn:difference}, using \cref{eq:z_hat_z_sep} and a similar proof as \cref{prop:tail-error}, we have
\[
\left\|\hat{\mat{V}}_r^\dagger\mat{V}_n(\ztail)\mutail\right\|_2\leq \sum^d_{j=r+1}\mu_j\left\|\hat{\mat{V}}_r^\dagger\vec{v}_n(z_j)\right\|_2=\mathcal{O}\left(\frac{\mu_{\rm tail}}{\Delta_{\vec{z}}}\right)\,.
\]
Similar to \cref{eq:V_r_hat_inv}, by \cref{eq:z_hat_z_sep} and \cref{thm:moitra}, we also have $\|(\hat{\mat{V}}_r^\dagger\hat{\mat{V}}_r)^{-1}\|_2={\cal O}(\frac{1}{n})$.
Combining them together, we obtain
\begin{equation}\label{eqn:last_term}
\left\|\hat{\mat{V}}_r^+\mat{V}_n(\ztail)\mutail\right\|_2\leq \norm{(\hat{\mat{V}}_r^\dagger\hat{\mat{V}}_r)^{-1}}_2\cdot \left\|\hat{\mat{V}}_r^\dagger\mat{V}_n(\ztail)\mutail\right\|_2=\mathcal{O}\left(\frac{\mu_r}{\Delta_{\vec{z}}n}\right)\,.
\end{equation}

For the last term in \cref{eqn:difference}, by the concentration inequality of sub-Gaussian vector (see e.g., \cite{jng19}), $\|\mat{E}_{\rm random}(:,1)\|_2={\cal O}(\alpha \sqrt{n\log n})$ with probability $1-1/(2n^2)$. Then, we have
\[
\begin{aligned}
&\left\|\hat{\mat{V}}_r^+\mat{E}_{\rm random}(:,1)\right\|_2\\
\leq&~\left\|\left(\hat{\mat{V}}_r^\dagger\hat{\mat{V}}_r\right)^{-1}\mat{V}^\dagger_r\mat{E}_{\rm random}(:,1)\right\|_2+\left\|\left(\hat{\mat{V}}_r^\dagger\hat{\mat{V}}_r\right)^{-1}\right\|_2\cdot \left\|\hat{\mat{V}}_r-\mat{V}_r\right\|_2\cdot \left\|\mat{E}_{\rm random}(:,1)\right\|_2\\
=&~\left\|\left(\hat{\mat{V}}_r^\dagger\hat{\mat{V}}_r\right)^{-1}\mat{V}^\dagger_r\mat{E}_{\rm random}(:,1)\right\|_2+{\cal O}\left(\left(\frac{\alpha\sqrt{\log n}}{\mu_r\sqrt{\Delta_{\vec{z}}} }+r^{3/2}\right)\frac{\sqrt{r}\alpha^3\log^{3/2} n}{\mu_r^2\Delta_{\vec{z}}\sqrt{n}}\right)\\%\mathcal{O}\left(\frac{\alpha\sqrt{\log n}}{n}\right)\\
\leq&~{\cal O}\left(\frac{1}{n}\right)\cdot \left\|\mat{V}^\dagger_r\mat{E}_{\rm random}(:,1)\right\|_2+{\cal O}\left(\left(\frac{\alpha\sqrt{\log n}}{\mu_r\sqrt{\Delta_{\vec{z}}} }+r^{3/2}\right)\frac{\sqrt{r}\alpha^3\log^{3/2} n}{\mu_r^2\Delta_{\vec{z}}\sqrt{n}}\right)\\%\mathcal{O}\left(\frac{\alpha\sqrt{\log n}}{n}\right)\\
\leq &~ {\cal O}\left(\frac{\sqrt{r}}{n}\right)\cdot \max_{1\leq i\leq r}~\left|\mat{v}^\dagger_n(z_i)\mat{E}_{\rm random}(:,1)\right|+{\cal O}\left(\left(\frac{\alpha\sqrt{\log n}}{\mu_r\sqrt{\Delta_{\vec{z}}} }+r^{3/2}\right)\frac{\sqrt{r}\alpha^3\log^{3/2} n}{\mu_r^2\Delta_{\vec{z}}\sqrt{n}}\right)\,%
\end{aligned}
\]
Here, we use \cref{eq:V_r_V_r_hat_diff} in the third step and \cref{eq:V_r_hat_inv} in the third and fourth steps.
Notice that $\mat{v}^\dagger_n(z_i)\mat{E}_{\rm random}(:,1)$ is sub-Gaussian with parameter $\norm{\mat{v}_n(z_i)}_2\cdot \alpha=\alpha\sqrt{n}$. By sub-Gaussian concentration and union bound, we obtain that with probability $1-1/(2n^2)$,
\begin{align*}
    \max_{1\leq i\leq r}~\left|\mat{v}^\dagger_n(z_i)\mat{E}_{\rm random}(:,1)\right| = {\cal O}(\alpha \sqrt{n\log (nr)})\,.
\end{align*}
Hence, we obtain
\begin{align*}
    \left\|\hat{\mat{V}}_r^+\mat{E}_{\rm random}(:,1)\right\|_2 = &~ {\cal O}\left(\frac{\sqrt{r}\alpha \sqrt{\log (nr)}}{\sqrt{n}}+\left(\frac{\alpha\sqrt{\log n}}{\mu_r\sqrt{\Delta_{\vec{z}}} }+r^{3/2}\right)\frac{\sqrt{r}\alpha^3\log^{3/2} n}{\mu_r^2\Delta_{\vec{z}}\sqrt{n}}\right)\\
    = &~ {\cal O}\left(\left(\frac{\alpha\sqrt{\log n}}{\mu_r\sqrt{\Delta_{\vec{z}}} }+r^{3/2}\right)\frac{\sqrt{r}\alpha^3\log^{3/2} n}{\mu_r^2\Delta_{\vec{z}}\sqrt{n}}\right)\,.
\end{align*}

Plugging these three terms back into \cref{eqn:difference}, we have
\begin{align*}
    \left\|\vec{\hat{\mu}}_r-\mudom\right\|_2 = &~ {\cal O}\left(\left(\frac{\alpha\sqrt{\log n}}{\mu_r\sqrt{\Delta_{\vec{z}}} }+r^{3/2}\right)\cdot \left(\frac{r\alpha^2\log n}{\mu_r^2\Delta_{\vec{z}}\sqrt{n}} + \frac{\sqrt{r}\alpha^3\log^{3/2} n}{\mu_r^2\Delta_{\vec{z}}\sqrt{n}}\right) + \frac{\mu_r}{\Delta_{\vec{z}}n}\right)\\
    = &~ {\cal O}\left(\left(\frac{\alpha\sqrt{\log n}}{\mu_r\sqrt{\Delta_{\vec{z}}} }+r^{3/2}\right) \frac{r\alpha^2\log n}{\mu_r^2\Delta_{\vec{z}}\sqrt{n}}\right)\,.
\end{align*}
where the last step follows from the assumption of $n$ (\cref{eq:proof_main_n_choice}).

The proof of the theorem is completed.
\end{proof}

\section{Second-order eigenvector perturbation theory} \label{sec:second-order}

In this section, we provide a formal statement of the second-order eigenvector perturbation result in \cref{lem:second_order_expansion_informal}. We outline the main steps involved in proving this lemma and defer technical details to \cref{app:second-order}. In the following part of this section, we always assume \eqref{eqn:condition_mutal}, \eqref{eqn:alpha_E_j}, and \cref{thm:esprit-1.5_formal} \eqref{eq:proof_main_n_choice_small} and omit these conditions later.

\begin{lemma}[Second-order perturbation for dominant eigenspace, {\normalfont formal version of \cref{lem:second_order_expansion_informal}}] \label{lem:second-order-formal}
    There exists a unitary matrix $\mat{U}_r\in\complex^{r\times r}$ such that
    \begin{align}\label{eq:second_order_expansion_formal}
        \hat{\mat{Q}}_r \mat{U}_r = &~ \mat{Q}_r  + \sum_{k=1}^\infty  (\mat{\Pi}_{\mat{Q}_r^\perp} \mat{E})^{k-1}\mat{\Pi}_{\mat{Q}_r^\perp}\mat{E}_{\rm random}\mat{Q}_r\mat{\Lambda}_r^{-k} \notag\\ 
        ~&+  \order \left( \frac{1}{\zgap n} + \frac{\alpha\sqrt{\log n}}{\mu_r \sqrt{n}} \right) \mat{\Pi}_{\mat{Q}_r} \mat{\tilde{Q}}_1  + \order \left( \frac{1}{n} \left(\frac{1}{\zgap} + \frac{\alpha^2\log n}{\mu_r^2}\right) \right)\mat{\tilde{Q}}_2,
    \end{align}
    where $\mat{\tilde{Q}}_1,\mat{\tilde{Q}}_2\in \C^{r\times r}$ are matrices of unit norm $\norm{\tilde{\mat{Q}}_1}_2 = \norm{\tilde{\mat{Q}}_2}_2 = 1$.

    Furthermore, it holds that
    \begin{align}\label{eq:second_order_expansion_formal_2}
    \hat{\mat{Q}}_r \mat{U}_r = &~ \mat{Q}_r  + \sum_{k=1}^\infty  (\mat{\Pi}_{\mat{Q}_r^\perp} \mat{E}_{\rm tail})^{k-1}\mat{\Pi}_{\mat{Q}_r^\perp}\mat{E}_{\rm random}\mat{Q}_r\mat{\Lambda}_r^{-k} \notag\\ 
        ~&+  \order \left( \frac{\alpha\sqrt{\log n}}{\mu_r \sqrt{\Delta_{\vec{z}}n}} \right) \mat{\Pi}_{\mat{Q}_r} \mat{\tilde{Q}}_1  + \order \left( \frac{\alpha^2\log n}{\mu_r^2\zgap n}\right) \mat{\tilde{Q}}_2',    
    \end{align}
    where $\mat{\tilde{Q}}_2'\in \C^{r\times r}$ with $\norm{\mat{\tilde{Q}}_2'}_2=1$.
\end{lemma}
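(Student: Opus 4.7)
The plan is to derive \cref{lem:second-order-formal} from the standard Riesz projector / resolvent representation of the top-$r$ spectral projector of $\hat{\mat{T}}$, expand the resolvent in a Neumann series around $\mat{T}$, evaluate each contour integral term-by-term using the split structure of the resolvent of the low-rank $\mat{T}$, and finally convert the projector-level expansion to a basis-level statement by fixing $\mat{U}_r$ to be the unitary polar factor of $\hat{\mat{Q}}_r^\adj \mat{Q}_r$.

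First, write $\hat{\mat{T}} = \mat{T} + \mat{E}$ with $\mat{E} = \mat{E}_{\rm tail} + \mat{E}_{\rm random}$ and choose a circular contour $\Gamma$ in the complex plane that encloses the $r$ nonzero eigenvalues of $\mat{T}$ while staying uniformly bounded away from $0$. Under the assumed scaling of $n$, $\|\mat{E}\|_2$ is small compared to $\mu_r n$, so the resolvent $\mat{R}(\zeta) = (\zeta \Id_n - \mat{T})^{-1}$ and its perturbed analogue are well-defined on $\Gamma$, and
\begin{equation*}
\hat{\mat{\Pi}}_r \;:=\; \hat{\mat{Q}}_r \hat{\mat{Q}}_r^\adj \;=\; -\frac{1}{2\pi \iu} \oint_\Gamma \sum_{k=0}^\infty \mat{R}(\zeta) \bigl[\mat{E}\,\mat{R}(\zeta)\bigr]^k \, \d\zeta.
\end{equation*}
The key observation is that $\mat{R}(\zeta)$ decomposes as $\mat{Q}_r (\zeta \Id_r - \mat{\Lambda}_r)^{-1} \mat{Q}_r^\adj + \zeta^{-1} \mat{\Pi}_{\mat{Q}_r^\perp}$. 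Substituting this into the $k$-th term and distributing yields a sum over binary strings indexing which factor of $\mat{R}(\zeta)$ lives on the range versus the kernel of $\mat{T}$; each summand is a rational function of $\zeta$ whose contour integral reduces, by residues at $\zeta = \lambda_1(\mat{T}), \ldots, \lambda_r(\mat{T})$ and removal of the pole at $0$, to a closed-form expression in the inverse eigenvalues $\mat{\Lambda}_r^{-1}$. Collecting these gives the explicit convergent Schur-polynomial expansion promised in \cref{thm:expansion-explicit}.

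Next, separate the off-diagonal and diagonal pieces of the expansion. The ``all-$\mat{\Pi}_{\mat{Q}_r^\perp}$ on one side'' strings give exactly $(\mat{\Pi}_{\mat{Q}_r^\perp}\mat{E})^{k-1}\mat{\Pi}_{\mat{Q}_r^\perp} \mat{E}\, \mat{Q}_r \mat{\Lambda}_r^{-k} \mat{Q}_r^\adj$ and its adjoint, which are the leading structured terms in \eqref{eq:second_order_expansion_formal}; all remaining strings either sit in the range of $\mat{\Pi}_{\mat{Q}_r}$ on both sides, or contain an interior $\mat{\Pi}_{\mat{Q}_r}$ insertion and so gain an extra factor of $\|\mat{E}\mat{Q}_r\|_2 \cdot \|\mat{\Lambda}_r^{-1}\|_2$ which produces a second-order gain. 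Now pick $\mat{U}_r$ to be the unitary factor in the polar decomposition of $\hat{\mat{Q}}_r^\adj \mat{Q}_r$, so that $\mat{Q}_r^\adj(\hat{\mat{Q}}_r \mat{U}_r - \mat{Q}_r)$ is Hermitian and equals $\tfrac{1}{2}\bigl((\hat{\mat{\Pi}}_r - \mat{\Pi}_r)^2\bigr)_{\mat{Q}_r\text{-block}}$ up to higher-order terms; this bookkeeping trick forces the $\mat{\Pi}_{\mat{Q}_r}$-part of $\hat{\mat{Q}}_r \mat{U}_r - \mat{Q}_r$ to be controlled by $\|\hat{\mat{\Pi}}_r - \mat{\Pi}_r\|_2^2$, while the $\mat{\Pi}_{\mat{Q}_r^\perp}$-part inherits the first-order projector expansion. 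Reading off sizes with $\|\mat{E}_{\rm random}\|_2 = \tilde{\order}(\alpha\sqrt{n})$, $\|\mat{\Lambda}_r^{-1}\|_2 = \order(1/(\mu_r n))$, and the norm bounds on $\mat{E}_{\rm tail}\mat{Q}_r$ coming from \cref{lem:vander-eigen} together with $\|\mat{E}_{\rm tail}\|_2 = \order(\mu_{\rm tail}/\zgap)$, yields \eqref{eq:second_order_expansion_formal}.

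Finally, to upgrade to \eqref{eq:second_order_expansion_formal_2} one replaces $\mat{E}$ by $\mat{E}_{\rm tail}$ in the leading $(\mat{\Pi}_{\mat{Q}_r^\perp}\mat{E})^{k-1}$ prefix of each off-diagonal term and moves all occurrences of $\mat{E}_{\rm random}$ that appear before the rightmost slot into the error $\mat{\tilde{Q}}_2'$, using that every relocation of an $\mat{E}_{\rm random}$ from the rightmost slot costs a factor of $\|\mat{E}_{\rm random}\|_2 \cdot \|\mat{\Lambda}_r^{-1}\|_2 = \tilde{\order}(\alpha/(\mu_r\sqrt{n}))$ but buys the corresponding $\mat{\Lambda}_r^{-1}$, yielding the sharper $\alpha^2\log n/(\mu_r^2 \zgap n)$ size for the second-order remainder. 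I expect the main obstacle to be Step 3: systematically classifying all residue contributions from the Neumann series and verifying that every surviving string either reduces to the displayed first-order form or comes with an extra $\mat{Q}_r$-insertion boosting it to second order. A secondary difficulty is sharpening the $\mat{\Pi}_{\mat{Q}_r}$-part prefactor from $1/(\zgap n) + \alpha\sqrt{\log n}/(\mu_r\sqrt{n})$ in \eqref{eq:second_order_expansion_formal} to $\alpha\sqrt{\log n}/(\mu_r \sqrt{\zgap n})$ in \eqref{eq:second_order_expansion_formal_2}, which requires exploiting the near-orthogonality between $\Vand_n(\zdom)$ and $\Vand_n(\ztail)$ columns supplied by \cref{lem:vander-eigen} rather than estimating $\mat{E}_{\rm tail}$ by its operator norm alone.
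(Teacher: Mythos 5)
Your proposal follows essentially the same route as the paper: Riesz projector via contour integral, Neumann expansion of the resolvent, residue evaluation using the split $\mat{R}(\zeta) = \mat{Q}_r(\zeta-\mat{\Lambda}_r)^{-1}\mat{Q}_r^\adj + \zeta^{-1}\mat{\Pi}_{\mat{Q}_r^\perp}$ leading to the Schur-polynomial expansion, classification of terms by which resolvent slots carry $\mat{\Pi}_{\mat{Q}_r}$, and finally the binomial swap of $\mat{E}$ for $\mat{E}_{\rm tail}$ in the prefix at a cost of two $\mat{E}_{\rm random}$ factors. Your passage from projectors to the rectangular matrices via the polar factor of $\hat{\mat{Q}}_r^\adj\mat{Q}_r$ is equivalent to the paper's device (the paper multiplies its ansatz by its adjoint and matches against the projector expansion to kill the $\mat{\Pi}_{\mat{Q}_r^\perp}\mat{R}$ component); your remaining assertion that the $\mat{\Pi}_{\mat{Q}_r^\perp}$-part ``inherits'' the first-order projector series is the crux and should be made explicit, e.g.\ via $\mat{\Pi}_{\mat{Q}_r^\perp}\hat{\mat{Q}}_r\mat{U}_r = \mat{\Pi}_{\mat{Q}_r^\perp}\mat{\Pi}_{\hat{\mat{Q}}_r}\hat{\mat{Q}}_r\mat{U}_r = \mat{\Pi}_{\mat{Q}_r^\perp}(\mat{\Pi}_{\hat{\mat{Q}}_r}-\mat{\Pi}_{\mat{Q}_r})(\mat{Q}_r + \order(n^{-1/2}))$. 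Two corrections: first, $\norm{\mat{E}_{\rm tail}}_2 = \order(\mu_{\rm tail}/\zgap)$ is false — the tail locations need not be separated, so only $\norm{\mat{E}_{\rm tail}}_2 \le n\mu_{\rm tail}$ holds in general, and it is $\norm{\mat{E}_{\rm tail}\mat{Q}_r}_2 = \order(\mu_{\rm tail}/\zgap)$ that is small; the geometric convergence of the series then comes from $n\mu_{\rm tail}\cdot\norm{\mat{\Lambda}_r^{-1}}_2 \le \mu_{\rm tail}/\mu_r < 1/8$, not from smallness of $\norm{\mat{E}_{\rm tail}}_2$ itself. Second, the sharpening of the $\mat{\Pi}_{\mat{Q}_r}$-prefactor in \eqref{eq:second_order_expansion_formal_2} does not require any finer Vandermonde orthogonality — it is absorbed directly by the standing assumption $n=\Omega(\alpha^{-2}\zgap^{-1})$, under which $1/(\zgap n) = \order(\alpha\sqrt{\log n}/(\mu_r\sqrt{\zgap n}))$.
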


Recall that for a matrix $\mat{U}$ with orthonormal columns, we notate the projectors $\mat{\Pi}_{\mat{U}} = \mat{U}\mat{U}^\adj$ and $\mat{\Pi}_{\mat{U}^\perp} = \Id_n - \mat{\Pi}_{\mat{U}}$.
We will extensively use the contour integral representation of the spectral projectors $\outprod{\hat{\mat{Q}}_r}$ and $\outprod{\mat{Q}_r}$ of $\mat{\hat{T}}$ and $\mat{T}$ (\cref{app:resolvents}):

\begin{proposition}[Expansion of spectral projector] \label{prop:projector-expansion}
    It holds that
    \begin{equation*}
        \mat{\Pi}_{\hat{\mat{Q}}_r} = \mat{\Pi}_{\mat{Q}_r}  + \sum_{k=1}^\infty \frac{1}{2\pi \iu} \oint_{\mathcal{C}_r } (\zeta - \mat{T})^{-1} \left( \mat{E} (\zeta - \mat{T})^{-1} \right)^k \, \d\zeta\,,
    \end{equation*}
    where $\mathcal{C}_r $ is a circle in the complex plane defined below in \cref{eq:def_circle_C}.%
\end{proposition}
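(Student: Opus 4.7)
The plan is to start from the Riesz/Dunford contour representation of spectral projectors (recalled in \cref{app:resolvents}) and expand the perturbed resolvent in a Neumann series. Specifically, I would write
\begin{equation*}
\mat{\Pi}_{\hat{\mat{Q}}_r} = \frac{1}{2\pi\iu}\oint_{\mathcal{C}_r}(\zeta - \hat{\mat{T}})^{-1}\,\d\zeta, \qquad \mat{\Pi}_{\mat{Q}_r} = \frac{1}{2\pi\iu}\oint_{\mathcal{C}_r}(\zeta - \mat{T})^{-1}\,\d\zeta,
\end{equation*}
provided that the contour $\mathcal{C}_r$ (defined in \eqref{eq:def_circle_C}) encloses the top $r$ eigenvalues of both $\mat{T}$ and $\hat{\mat{T}}$ but no others. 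The first representation is the one that needs justification; the second is immediate once we note that $\mat{T}$ has exactly $r$ nonzero eigenvalues $\lambda_1(\mat{T}),\ldots,\lambda_r(\mat{T})$, each enclosed by $\mathcal{C}_r$.

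The key algebraic step is the identity $\hat{\mat{T}} = \mat{T} + \mat{E}$, which gives the resolvent identity
\begin{equation*}
(\zeta - \hat{\mat{T}})^{-1} = \bigl(\mat{I} - (\zeta-\mat{T})^{-1}\mat{E}\bigr)^{-1}(\zeta-\mat{T})^{-1} = \sum_{k=0}^\infty (\zeta-\mat{T})^{-1}\bigl(\mat{E}(\zeta-\mat{T})^{-1}\bigr)^k,
\end{equation*}
valid whenever $\|\mat{E}(\zeta-\mat{T})^{-1}\|_2 < 1$ uniformly on $\mathcal{C}_r$. Substituting this series into the Riesz formula, isolating the $k=0$ term (which contributes $\mat{\Pi}_{\mat{Q}_r}$), and swapping the finite integral with the (uniformly convergent) sum yields exactly the claimed expansion.

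The main obstacle is verifying two things simultaneously on the contour $\mathcal{C}_r$: (i) the perturbed eigenvalue count inside $\mathcal{C}_r$ is still $r$, so that the Riesz integral reproduces $\mat{\Pi}_{\hat{\mat{Q}}_r}$; and (ii) $\|\mat{E}(\zeta-\mat{T})^{-1}\|_2 < 1$ uniformly on $\mathcal{C}_r$, so the Neumann series converges. Both follow from the same input: bounds on $\|\mat{E}\|_2 = \|\mat{E}_{\rm tail} + \mat{E}_{\rm random}\|_2$ in terms of $\mu_{\rm tail}$ and $\alpha\sqrt{n}$ (using the matrix concentration estimate \cref{lem:Mecks} and the tail bound established elsewhere in the paper), combined with the lower bound $\|(\zeta-\mat{T})^{-1}\|_2^{-1} = \operatorname{dist}(\zeta,\operatorname{spec}(\mat{T}))$, which is controlled by the radius of $\mathcal{C}_r$ and the dominant eigengap $\lambda_r(\mat{T})$. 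The latter is in turn bounded below using \cref{lem:vander-eigen} and Moitra's Vandermonde estimates, giving $\lambda_r(\mat{T}) = \Omega(\mu_r n)$. Under the standing assumption \eqref{eq:proof_main_n_choice_small} on $n$, the perturbation $\|\mat{E}\|_2$ is small compared to both the eigengap and $\operatorname{dist}(\mathcal{C}_r,\operatorname{spec}(\mat{T}))$, so Weyl's inequality gives (i) and the strict contraction $\|\mat{E}(\zeta-\mat{T})^{-1}\|_2 < 1$ gives (ii). With these bounds in hand, uniform convergence of the Neumann series on the compact contour legitimizes the term-by-term integration and completes the proof.
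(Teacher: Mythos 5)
Your proposal is correct and follows essentially the same route as the paper: the paper packages the Riesz-formula-plus-Neumann-expansion step as \cref{lem:perturbation-resolvents}, and its proof of \cref{prop:projector-expansion} consists precisely of verifying that lemma's hypotheses by (i) locating the eigenvalues of $\mat{T}$ and $\hat{\mat{T}}$ relative to $\mathcal{C}_r$ via Weyl's theorem, Ostrowski's theorem, and the bounds on $\norm{\mat{E}}_2$, and (ii) showing $\norm{\mat{E}(\zeta-\mat{T})^{-1}}_2<1$ on the contour using $\operatorname{dist}(\mathcal{C}_r,\operatorname{spec}(\mat{T}))\ge n\mu_r/4$ together with $\mu_{\rm tail}\le\mu_r/8$ — exactly your two verification steps. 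The only cosmetic difference is that the paper splits the contraction bound into tail and random parts (using $\norm{\mat{E}_{\rm tail}\mat{Q}_r}_2=\order(\mu_{\rm tail}/\zgap)$ for the former), whereas your direct submultiplicative bound $\norm{\mat{E}}_2\cdot\norm{(\zeta-\mat{T})^{-1}}_2\le 4\mu_{\rm tail}/\mu_r+\order(\alpha\sqrt{\log n}/(\mu_r\sqrt{n}))<1$ already suffices.
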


\begin{proof}[Proof of \cref{prop:projector-expansion}]
    From \cref{lem:perturbation-resolvents}, it is sufficient to  prove $\|\mat{E} (\zeta - \mat{T})^{-1}\|<1$ on an appropriate circle $\mathcal{C}_r$.
    For convenience, we break the proof up into steps.
    
    \paragraph{Step 1: Where are the eigenvalues?}
    By \cref{cor:moitra_T} and $n>\frac{4\pi}{\Delta_z}+2$, we know that 
    \begin{align}\label{eq:eigen_pos_1}
       \lambda_1(\mat{T}) \leq  \mu_1 \left(n-1+\frac{2\pi}{\Delta_{\vec{z}}}\right) \leq \frac{3}{2}n\mu_1\,,\quad \lambda_r(\mat{T}) \geq \mu_r \left(n-1-\frac{2\pi}{\Delta_{\vec{z}}}\right) \geq \frac{1}{2}n\mu_{r}\,.
    \end{align}
    Then, by Weyl's perturbation theorem (\cref{thm:weyl}) and \cref{prop:error-matrix}, the eigenvalues of $\mat{\hat{T}}$ satisfy
    \begin{equation}\label{eq:eigen_pos_2}
    \begin{aligned}
        \lambda_1(\mat{\hat{T}}) \le&~ \lambda_1(\mat{T}) + \|\mat{E}\|_2 \leq \frac{3}{2} n\mu_1 + n\mu_{\rm tail} + {\cal O}(\alpha \sqrt{n\log n})\leq \frac{27}{16} n\mu_1\,,\\
        \lambda_r(\mat{\hat{T}}) \ge &~ \lambda_r(\mat{T})-\|\mat{E}\|_2 \geq \frac{1}{2} n\mu_r - n\mu_{\rm tail} - {\cal O}(\alpha \sqrt{n \log n}) \geq \frac{5}{16} n\mu_r\,,\\
        |\lambda_{r+1}(\mat{\hat{T}})| \le &~ \|\mat{E}\|_2 = n\mu_{\rm tail} + {\cal O}(\alpha \sqrt{n \log n}) \leq \frac{3}{16}n\mu_r\,,
    \end{aligned}
    \end{equation}
    assuming $\mu_{\rm tail}\leq \frac{1}{8}\mu_r$ and sufficiently large $n$ (\eqref{eq:proof_main_n_choice_small})
    
    Let $\mathcal{C}_r $ be a circle in the complex plane centered at $n(\mu_1 + \frac{1}{8}\mu_r)$ with radius $n(\mu_1 - \frac{1}{8}\mu_r)$, i.e.,
    \begin{align}\label{eq:def_circle_C}
        \mathcal{C}_r  = \left\{z\in \C~\Big|~\left|z-n\left(\mu_1 + \frac{1}{8}\mu_r\right)\right|=n\left(\mu_1 - \frac{1}{8}\mu_r\right)\right\}\,.
    \end{align}
    Then, by \cref{eq:eigen_pos_1,eq:eigen_pos_2}, %
    the following properties hold (see \cref{fig:B_r} for an illustration):
    \begin{enumerate}
        \item[a)] The first $r$ eigenvalues of $\mat{T}$ and $\hat{\mat{T}}$ are included in $\mathcal{C}_r $. 
        \item[b)] All remaining eigenvalues of $\mat{T}$ (which are zero) and $\hat{\mat{T}}$ are outside of $\mathcal{C}_r $.
        \item[c)] For any $z\in \mathcal{C}_r $ and $1\leq i\leq d$,
        \begin{align}\label{eq:circle_boundary_distance}
            |z-\lambda_i(\mat{T})|\geq \frac{1}{4}n\mu_r\,.
        \end{align}
    \end{enumerate}
    Thus, $\mathcal{C}_r $ satisfies the eigenvalue conditions in \cref{lem:perturbation-resolvents}.
    
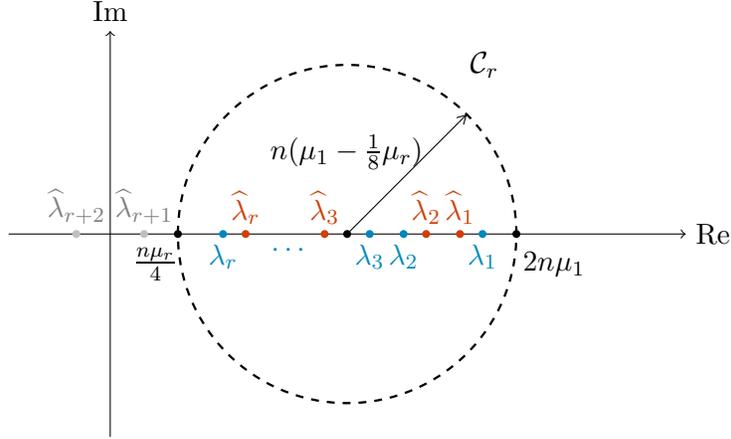
\begin{figure}[ht]
\centering
\begin{tikzpicture}[scale=1.5]
    \draw[->] (-1,0) -- (5,0) node[right] {$\Re$};
    \draw[->] (-0.1,-1.8) -- (-0.1,1.8) node[above] {$\Im$};
    
    \def\mx{3} %
    \def\dmu{2} %
    \def\r{\mx/2} %
    
    \draw[dashed, line width=0.3mm] (\mx/2+\dmu/4,0) circle (\mx/2); %

    \fill (\mx/2+\dmu/4,0) circle[radius=1pt]; %
    \draw[arrows={-{angle 60}}] (\mx/2+\dmu/4,0) -- (\mx/2+\dmu/4 + \r*0.70710678118,\r*0.70710678118) node[xshift=-1.6 cm, yshift = -0.5 cm]{$n(\mu_1-\frac{1}{8}\mu_r)$};
    \fill (\dmu/4,0) circle[radius=1pt] node[xshift=-0.3 cm, yshift = -0.4 cm]{$\frac{n\mu_r}{4}$};
    \fill (\mx+\dmu/4,0) circle[radius=1pt] node[xshift=0.5 cm, yshift = -0.4 cm]{$2n\mu_1$};
    \fill[fill=lightgray] (0.2, 0) circle[radius=1pt] node[above, text=gray]{$\hat{\lambda}_{r+1}$};
    \fill[fill=lightgray] (-0.4, 0) circle[radius=1pt] node[above, text=gray]{$\hat{\lambda}_{r+2}$};
    \fill[fill=b2] (3.2, 0) circle[radius=1pt] node[below, text=b2]{$\lambda_1$};
    \fill[fill=b2] (2.5, 0) circle[radius=1pt] node[below, text=b2]{$\lambda_2$};
    \fill[fill=b2] (2.2, 0) circle[radius=1pt] node[below, text=b2]{$\lambda_3$};
    \node at (1.5, 0) [below, text=b2] {$\cdots$};
    \fill[fill=b2] (0.9, 0) circle[radius=1pt] node[below, text=b2]{$\lambda_r$};
    \fill[fill=sinopia] (3, 0) circle[radius=1pt] node[above, text=sinopia]{$\hat{\lambda}_1$};
    \fill[fill=sinopia] (2.7, 0) circle[radius=1pt] node[above, text=sinopia]{$\hat{\lambda}_2$};
    \fill[fill=sinopia] (1.8, 0) circle[radius=1pt] node[above, text=sinopia]{$\hat{\lambda}_3$};
    \fill[fill=sinopia] (1.1, 0) circle[radius=1pt] node[above, text=sinopia]{$\hat{\lambda}_r$};
    \node at (\mx/2+1.5,\r-0.2) [above right] {$\mathcal{C}_r$};
\end{tikzpicture}
\caption{All the non-zero eigenvalues of $\mat{T}$ and the first $r$ eigenvalues of $\hat{\mat{T}}$ are contained in $\mathcal{C}_r$.}\label{fig:B_r}
\end{figure}
    
    \paragraph{Step 2: Bound the error on the circle.}
    Next, we establish a uniform bound on $\norm{(\zeta - \mat{T})^{-1}\mat{E}}_2$ so that we may apply \cref{lem:perturbation-resolvents}. Fix any $\zeta \in \mathcal{C}_r $.
    We begin with the triangle inequality
    \begin{equation} \label{eq:resolvent-error-bound}
        \norm{\mat{E}(\zeta - \mat{T})^{-1}}_2 \le \norm{\mat{E}_{\rm random}(\zeta - \mat{T})^{-1}}_2 + \norm{\mat{E}_{\rm tail}(\zeta - \mat{T})^{-1}}_2.
    \end{equation}
    
    For the first term, we use \cref{lem:toeplitz} to obtain the bound
    \begin{equation}\label{eq:zeta_T_inv_bound}
        \norm{\mat{E}_{\rm random}(\zeta - \mat{T})^{-1}}_2 \le \norm{\mat{E}_{\rm random}}_2 \cdot \norm{(\zeta - \mat{T})^{-1}}_2 \le {\cal O}(\alpha \sqrt{n\log n})\cdot \frac{4}{n\mu_r}=\order\left( \frac{\alpha\sqrt{\log n}}{\mu_r\sqrt{n}} \right)\,,
    \end{equation}
    where the second step follows from \cref{eq:circle_boundary_distance} that $$\norm{(\zeta - \mat{T})^{-1}}_2=\max_{i\in [d]}~|\zeta-\lambda_i(\mat{T})|^{-1}\leq \frac{4}{n\mu_r}\,.$$ %
    
    For the second term in \cref{eq:resolvent-error-bound}, consider the spectral decomposition 
    \begin{align*}
        (\zeta - \mat{T})^{-1} = \mat{Q}(\zeta - \mat{\Lambda})^{-1} \mat{Q}^\adj = \mat{Q}_r(\zeta - \mat{\Lambda}_r)^{-1} \mat{Q}^\adj_r + \zeta^{-1}\mat{\Pi}_{\mat{Q}_r^\perp}\,.
    \end{align*}
    Then, we have
    \begin{align*}
        \norm{\mat{E}_{\rm tail}(\zeta - \mat{T})^{-1}}_2 \leq &~ \norm{\mat{E}_{\rm tail} \mat{Q}_r(\zeta - \mat{\Lambda}_r)^{-1} \mat{Q}_r^\adj}_2 + |\zeta^{-1}|\cdot \norm{\mat{E}_{\rm tail} \mat{\Pi}_{\mat{Q}_r^\perp}}_2\\
        \leq &~ \norm{\mat{E}_{\rm tail} \mat{Q}_r}_2\cdot \norm{(\zeta - \mat{\Lambda}_r)^{-1}}_2 + |\zeta^{-1}|\cdot \norm{\mat{E}_{\rm tail}}_2\\
        \leq &~ \frac{4}{n\mu_r}\cdot \left(\norm{\mat{E}_{\rm tail} \mat{Q}}_2+\norm{\mat{E}_{\rm tail}}_2\right)\\
        \leq &~ \order\left(\frac{\mu_{\rm tail}}{\zgap\mu_r n}\right) + \frac{4 \mu_{\rm tail}}{\mu_r}\,,
    \end{align*}
    where the first step follows from the triangle inequality, the second step follows from $\|\mat{Q}_r\|_2=\|\mat{\Pi}_{\mat{Q}_r^\perp}\|_2=1$, the third step follows from $\norm{(\zeta - \mat{T})^{-1}}_2\leq \frac{4}{n\mu_r}$, the fourth step follows from \cref{prop:error-matrix} and \cref{prop:tail-error}.
    
    Plugging the above two displays into \cref{eq:resolvent-error-bound} and using the assumptions on $n$ and $\mu_{\rm tail}$, we obtain that 
    \begin{equation*}
        \norm{\mat{E}(\zeta - \mat{T})^{-1}}_2\leq {\cal O}\left(\frac{\alpha\sqrt{\log n}}{\mu_r\sqrt{n}} + \frac{\mu_{\rm tail}}{\zgap\mu_r n}\right) + \frac{4 \mu_{\rm tail}}{\mu_r}< 1\,.
    \end{equation*}
    This bound holds for any $\zeta\in \mathcal{C}_r $, so we obtain the desired expansion by \cref{lem:perturbation-resolvents}.
\end{proof}

The series expansion in \cref{prop:projector-expansion} is rather nontransparent, as the higher-order terms are expressed in terms of contour integrals.
Fortunately, these integrals can be evaluated exactly, leading to explicit formulas for all terms in the expansion in terms of Schur polynomials.
As the results are rather intricate, so we defer the statement and proof to \cref{app:explicit-expansion}.
We can then bound the higher-order terms in this expression, yielding the following representation:

\begin{restatable}[Expansion of spectral projector, simplified]{lemma}{expansionsimplified} \label{lem:expansion-simplified}
    It holds that
    \begin{equation} \label{eq:spectral-projector-simplified}
        \begin{split}
        \mat{\Pi}_{\hat{\mat{Q}}_r} =&~ \mat{\Pi}_{\mat{Q}_r} + \sum_{k=1}^\infty \left( \pinvk{\mat{T}}{k} \mat{E}_{\rm random} \mat{\Pi}_{\mat{Q}_r^\perp} (\mat{E}\mat{\Pi}_{\mat{Q}_r^\perp})^{k-1} + (\mat{\Pi}_{\mat{Q}_r^\perp} \mat{E})^{k-1}\mat{\Pi}_{\mat{Q}_r^\perp}\mat{E}_{\rm random}\pinvk{\mat{T}}{k} \right) \\ 
        ~&+ \order \left( \frac{1}{n} \left( \frac{\alpha^2 \log n}{\mu_r^2} + \frac{1}{\zgap} \right)  \right) \mat{G},
        \end{split}
    \end{equation}
    for some matrix $\mat{G}$ of norm $\norm{\mat{G}}_2 = 1$.
\end{restatable}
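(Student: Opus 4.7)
The plan is to start from the contour-integral representation in \cref{prop:projector-expansion} and decompose each resolvent as $(\zeta - \mat{T})^{-1} = \mat{A}(\zeta) + \mat{B}(\zeta)$, where $\mat{A}(\zeta) \coloneqq \mat{Q}_r(\zeta\Id_r - \mat{\Lambda}_r)^{-1}\mat{Q}_r^\adj$ has its poles at $\lambda_1(\mat{T}),\ldots,\lambda_r(\mat{T})$ (all lying inside $\mathcal{C}_r$) and $\mat{B}(\zeta) \coloneqq \zeta^{-1}\mat{\Pi}_{\mat{Q}_r^\perp}$ has its sole pole at $0$ (outside $\mathcal{C}_r$). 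Expanding each of the $k+1$ resolvent factors in the $k$-th contour integral of \cref{prop:projector-expansion} yields $2^{k+1}$ sub-terms, one for each assignment of $\mat{A}$ or $\mat{B}$ to each slot. Because $\mat{B}(\zeta)$ has no pole inside $\mathcal{C}_r$, any configuration using $\mat{B}$ in every slot integrates to zero; the remaining configurations can be evaluated in closed form via the residue theorem and the Schur-polynomial identities encapsulated in \cref{thm:expansion-explicit} of \cref{app:explicit-expansion}.

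Next I would identify the leading-order contributions as the configurations in which all relevant residues are concentrated on a single end. Specifically, a block of $k$ consecutive $\mat{A}$'s on the left (followed by a single $\mat{B}$) produces, after collecting residues, the factor $\pinvk{\mat{T}}{k}$ adjacent to a product $\mat{E}\mat{\Pi}_{\mat{Q}_r^\perp}(\mat{E}\mat{\Pi}_{\mat{Q}_r^\perp})^{k-1}$, and the symmetric configuration on the right end gives the adjoint term. These are exactly the two ``skeleton'' summands in \cref{eq:spectral-projector-simplified}, except that the first $\mat{E}$ is written as $\mat{E}_{\rm random}$ rather than the full $\mat{E} = \mat{E}_{\rm random} + \mat{E}_{\rm tail}$.

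The remaining work splits into two bounds. First, replacing the leftmost $\mat{E}$ by $\mat{E}_{\rm random}$ costs $\pinvk{\mat{T}}{k}\mat{E}_{\rm tail}(\cdots)$ per term; since $\pinvk{\mat{T}}{k}\mat{E}_{\rm tail} = \mat{Q}_r\mat{\Lambda}_r^{-k}(\mat{Q}_r^\adj\mat{E}_{\rm tail})$ and the near-orthogonality of $\mat{Q}_r$ to $\mat{V}_n(\ztail)$ (cf.\ \cref{prop:tail-error}) gives $\norm{\mat{Q}_r^\adj\mat{E}_{\rm tail}}_2 = O(\mu_{\rm tail}/\zgap)$, one obtains $\norm{\pinvk{\mat{T}}{k}\mat{E}_{\rm tail}}_2 = O\bigl((n\mu_r)^{-k}\mu_{\rm tail}/\zgap\bigr)$. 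Combined with $\norm{\mat{E}(\zeta - \mat{T})^{-1}}_2 < 1$ on $\mathcal{C}_r$ (established inside the proof of \cref{prop:projector-expansion}), summing over $k \ge 1$ gives a total contribution of order $O(1/(n\zgap))$. Second, the non-leading configurations (with $\mat{A}$'s and $\mat{B}$'s interleaved rather than grouped at one end) use at most $k-1$ residues of type $\pinvk{\mat{T}}{1}$ and contain at least one extra $\mat{\Pi}_{\mat{Q}_r^\perp}\mat{E}\mat{\Pi}_{\mat{Q}_r^\perp}$ factor compared with the leading configurations; bounding each integrand pointwise on $\mathcal{C}_r$ using $\norm{(\zeta - \mat{T})^{-1}}_2 = O(1/(n\mu_r))$ from \cref{eq:circle_boundary_distance} together with the circumference $O(n\mu_1)$ and the random-noise bound $\norm{\mat{E}_{\rm random}}_2 = \tilde{O}(\alpha\sqrt{n})$ from \cref{prop:error-matrix} yields a contribution of order $O\bigl(\frac{1}{n}\cdot\frac{\alpha^2\log n}{\mu_r^2}\bigr)$ after the geometric series over $k$.

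The principal obstacle is the combinatorial book-keeping of the Schur-polynomial residue calculation for the many interleaved configurations. My intention is to avoid evaluating each such residue exactly: since every interleaved configuration loses a factor of $\pinvk{\mat{T}}{1}\norm{\mat{E}}_2 \lesssim \norm{\mat{E}(\zeta-\mat{T})^{-1}}_2 < 1$ relative to the leading one at the same $k$, a uniform pointwise bound on $\mathcal{C}_r$ followed by a single geometric-series summation should suffice. The $\alpha^2\log n/\mu_r^2$ and $1/\zgap$ pieces of the error then correspond, respectively, to the $\mat{E}_{\rm random}$ and $\mat{E}_{\rm tail}$ contributions, recombining into the stated error matrix $\mat{G}$.
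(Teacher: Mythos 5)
Your overall route---the contour integral of \cref{prop:projector-expansion}, a binary decomposition of each resolvent factor into a part supported on the dominant eigenspace and the part $\zeta^{-1}\mat{\Pi}_{\mat{Q}_r^\perp}$, and evaluation by residues---is the same as the paper's (the subsets of slots carrying $\mat{A}$ are exactly the sets $\mathcal{J}$ of \cref{thm:expansion-explicit}), and your treatment of the all-$\mat{B}$ configurations and of the $\mat{E}\to\mat{E}_{\rm random}$ replacement via \cref{prop:tail-error} is sound. But two things go wrong. First, you misidentify the leading configurations: the term $\pinvk{\mat{T}}{k}\mat{E}_{\rm random}\mat{\Pi}_{\mat{Q}_r^\perp}(\mat{E}\mat{\Pi}_{\mat{Q}_r^\perp})^{k-1}$ contains $k$ copies of $\mat{\Pi}_{\mat{Q}_r^\perp}$ and a single dominant-eigenspace block, so it arises from \emph{one} $\mat{A}$ in the first slot and $k$ $\mat{B}$'s (the power $\pinvk{\mat{T}}{k}$ is the residue of the factor $\zeta^{-k}$ coming from the $\mat{B}$'s, evaluated at $\zeta=\lambda_{i_1}$), not from ``$k$ consecutive $\mat{A}$'s followed by one $\mat{B}$''; that latter configuration has only one $\mat{\Pi}_{\mat{Q}_r^\perp}$, evaluates to $\pm\,\mat{T}^\pinv\mat{E}\cdots\mat{T}^\pinv\mat{E}\mat{\Pi}_{\mat{Q}_r^\perp}$, and is in fact second order for $k\ge 2$.

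Second, and more seriously, your bound on the remaining configurations does not deliver the stated $\order(1/n)$ remainder. Losing ``a factor of $\norm{\mat{T}^\pinv}_2\norm{\mat{E}}_2\lesssim\norm{\mat{E}(\zeta-\mat{T})^{-1}}_2<1$'' relative to a leading term only loses a \emph{constant}: $\norm{\mat{T}^\pinv}_2\norm{\mat{E}}_2=\order(\mu_{\rm tail}/\mu_r+\alpha\sqrt{\log n}/(\mu_r\sqrt{n}))$, whose first piece can be as large as $1/8$. A uniform pointwise bound on $\mathcal{C}_r$ followed by a geometric series therefore only shows that the non-leading terms total $\tilde{\order}(1/\sqrt{n})$---first order, not second. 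Worse, the configurations with a single $\mat{A}$ in an \emph{interior} slot ($\mathcal{J}=\{j\}$ with $2\le j\le k$) have exactly the same number of $\mat{A}$'s and $\mat{B}$'s as the leading ones, so your heuristic assigns them no loss at all. What actually makes every non-leading term second order is the count of $\mat{Q}_r$--$\mat{E}$ interfaces: each factor $\mat{\Pi}_{\mat{Q}_r}\mat{E}$ or $\mat{E}\mat{\Pi}_{\mat{Q}_r}$ has norm $\order(\mu_{\rm tail}/\zgap+\alpha\sqrt{n\log n})$ by \cref{prop:tail-error}, far smaller than $\norm{\mat{E}}_2=\order(n\mu_{\rm tail}+\alpha\sqrt{n\log n})$, and paired with one $\norm{\mat{T}^\pinv}_2$ it yields $\order(1/(\zgap n)+\alpha\sqrt{\log n}/(\mu_r\sqrt{n}))$; an end $\mat{A}$ has one such interface, an interior $\mat{A}$ has two, and $|\mathcal{J}|\ge 2$ gives at least two. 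This interface count (the content of \cref{lem:terms}) is the missing ingredient, and without it the second-order error bound in \cref{eq:spectral-projector-simplified} is not established.
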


To complete the proof of \cref{lem:second-order-formal}, we just need to transfer this result about the projectors $\mat{\Pi}_{\hat{\mat{Q}}_r}$ and $\mat{\Pi}_{\mat{Q}_r}$ to the rectangular matrices $\hat{\mat{Q}}_r$ and $\mat{Q}_r$.
To do so, we use \cref{prop:distance-from-angles}.

\begin{proof}[Proof of \cref{lem:second-order-formal}]
    From the proof of \cref{lem:eigenvectors-weak} (specifically, \cref{eq:W-bound-2}), we know that there exists a unitary matrix $\mat{U}_r$ such that
    \begin{equation*}
        \norm{\hat{\mat{Q}}_r \mat{U}_r - \mat{Q}_r }_2 = \order \left( \frac{1}{\zgap n} + \frac{\alpha\sqrt{\log n}}{\mu_r \sqrt{n}} \right)\,,
    \end{equation*}
    where we omit $\mu_{\rm tail}/\mu_r<1$.
    
    On the other hand, notice that
    \begin{align*}
        \norm{\sum_{k=1}^\infty  (\mat{\Pi}_{\hat{\mat{Q}}_r^\perp} \mat{E})^{k-1}\mat{\Pi}_{\hat{\mat{Q}}_r^\perp}\mat{E}_{\rm random}\mat{Q}_r\mat{\Lambda}_r^{-k} }_2 \leq &~ \sum_{k=1}^\infty \norm{(\mat{\Pi}_{\hat{\mat{Q}}_r^\perp} \mat{E})^{k-1}\mat{\Pi}_{\hat{\mat{Q}}_r^\perp}\mat{E}_{\rm random}\mat{Q}_r\mat{\Lambda}_r^{-k}}_2\\
        = &~ \sum_{k=1}^\infty \norm{(\mat{\Pi}_{\hat{\mat{Q}}_r^\perp} \mat{E})^{k-1}\mat{\Pi}_{\hat{\mat{Q}}_r^\perp}\mat{E}_{\rm random}(\mat{T}^+)^k}_2\cdot \|\mat{Q}_r\|_2\\
        \leq &~ \sum_{k=1}^\infty \order\left(\frac{\alpha\sqrt{\log n}}{\sqrt{n} \mu_r}\right) \cdot C^{k-1}\\
        = &~ \order\left(\frac{\alpha\sqrt{\log n}}{\mu_r \sqrt{n}} \right)\,,
    \end{align*}
    where the first step follows from the triangle inequality, the second step follows from the definition of the pseudo-inverse of $\mat{T}$, the third step follows from \cref{eq:leading-order-bounds}, and the last step follows from $0<C<\frac{1}{2}$ by \cref{lem:terms}.
    
    Thus, we can write 
    \begin{equation} \label{eq:expansion-plus-remainder}
        \hat{\mat{Q}}_r \mat{U}_r = \mat{Q}_r  + \sum_{k=1}^\infty  (\mat{\Pi}_{\mat{Q}_r^\perp} \mat{E})^{k-1}\mat{\Pi}_{\mat{Q}_r^\perp}\mat{E}_{\rm random}\mat{Q}_r\mat{\Lambda}_r^{-k} + \left(\mat{\Pi}_{\mat{Q}}  + \mat{\Pi}_{\mat{Q}_r^\perp}\right) \mat{R}
    \end{equation}
    where $\mat{R}$ is the remainder matrix of norm
    \begin{align*}
        \norm{\mat{R}}_2 = &~ \norm{\hat{\mat{Q}}_r \mat{U}_r - \mat{Q}_r  - \sum_{k=1}^\infty  (\mat{\Pi}_{\mat{Q}_r^\perp} \mat{E})^{k-1}\mat{\Pi}_{\mat{Q}_r^\perp}\mat{E}_{\rm random}\mat{Q}_r\mat{\Lambda}_r^{-k}}_2\\
        \leq &~ \norm{\hat{\mat{Q}}_r \mat{U}_r - \mat{Q}_r }_2 + \norm{\sum_{k=1}^\infty  (\mat{\Pi}_{\mat{Q}_r^\perp} \mat{E})^{k-1}\mat{\Pi}_{\mat{Q}_r^\perp}\mat{E}_{\rm random}\mat{Q}_r\mat{\Lambda}_r^{-k} }_2\\
        = &~ \order \left( \frac{1}{\zgap n} + \frac{\alpha\sqrt{\log n}}{\mu_r \sqrt{n}} \right).
    \end{align*}

    To prove the result, we just need to bound $\norm{\smash{\mat{\Pi}_{\mat{Q}_r^\perp} \mat{R}}}_2$.
    To do so, we multiply each side of \cref{eq:expansion-plus-remainder} with its adjoint. Since $\mat{U}_r$ is a unitary matrix, the LHS becomes
    \begin{align*}
        \hat{\mat{Q}}_r \mat{U}_r \mat{U}_r^\adj \hat{\mat{Q}}_r^\adj = \hat{\mat{Q}}_r\hat{\mat{Q}}_r^\adj=\mat{\Pi}_{\hat{\mat{Q}}_r}\,.
    \end{align*}
    The RHS becomes
    \begin{align*}
        &\mat{Q}_r\mat{Q}_r^\dagger + \mat{Q}_r\left(\sum_{k=1}^\infty  (\mat{\Pi}_{\mat{Q}_r^\perp} \mat{E})^{k-1}\mat{\Pi}_{\mat{Q}_r^\perp}\mat{E}_{\rm random}\mat{Q}_r\mat{\Lambda}_r^{-k}\right)^\adj + \left(\sum_{k=1}^\infty  (\mat{\Pi}_{\mat{Q}_r^\perp} \mat{E})^{k-1}\mat{\Pi}_{\mat{Q}_r^\perp}\mat{E}_{\rm random}\mat{Q}_r\mat{\Lambda}_r^{-k}\right) \mat{Q}_r^\adj\\
        ~&+ \mat{Q}_r\mat{R}^\adj + \mat{R}\mat{Q}_r^\adj\\
        ~&+\left(\sum_{k=1}^\infty  (\mat{\Pi}_{\mat{Q}_r^\perp} \mat{E})^{k-1}\mat{\Pi}_{\mat{Q}_r^\perp}\mat{E}_{\rm random}\mat{Q}_r\mat{\Lambda}_r^{-k}+\mat{R}\right)\left(\sum_{k=1}^\infty  (\mat{\Pi}_{\mat{Q}_r^\perp} \mat{E})^{k-1}\mat{\Pi}_{\mat{Q}_r^\perp}\mat{E}_{\rm random}\mat{Q}_r\mat{\Lambda}_r^{-k}+\mat{R}\right)^\adj\\
        = &~ \mat{\Pi}_{\mat{Q}_r} + \sum_{k=1}^\infty (\mat{T}^+)^{k} \mat{E}_{\rm random} \mat{\Pi}_{\mat{Q}_r^\perp}(\mat{E} \mat{\Pi}_{\mat{Q}_r^\perp} )^{k-1} + \sum_{k=1}^\infty  (\mat{\Pi}_{\mat{Q}_r^\perp} \mat{E})^{k-1}\mat{\Pi}_{\mat{Q}_r^\perp}\mat{E}_{\rm random}(\mat{T}^+)^k\\
        ~&+ \mat{Q}_r\mat{R}^\adj + \mat{R}\mat{Q}_r^\adj +\order \left( \frac{\alpha^2 \log n}{\mu_r^2n} \right) \mat{G}'\,,
    \end{align*}
    where $\mat{G'}$ is some matrix of norm $\|\mat{G}'\|_2=1$, and the last step follows from
    \begin{align*}
        &\norm{\left(\sum_{k=1}^\infty  (\mat{\Pi}_{\mat{Q}_r^\perp} \mat{E})^{k-1}\mat{\Pi}_{\mat{Q}_r^\perp}\mat{E}_{\rm random}\mat{Q}_r\mat{\Lambda}_r^{-k}+\mat{R}\right)\left(\sum_{k=1}^\infty  (\mat{\Pi}_{\mat{Q}_r^\perp} \mat{E})^{k-1}\mat{\Pi}_{\mat{Q}_r^\perp}\mat{E}_{\rm random}\mat{Q}_r\mat{\Lambda}_r^{-k}+\mat{R}\right)^\adj}_2\\
        = &~ {\cal O}\left(\Big\|\sum_{k=1}^\infty  (\mat{\Pi}_{\mat{Q}_r^\perp} \mat{E})^{k-1}\mat{\Pi}_{\mat{Q}_r^\perp}\mat{E}_{\rm random}\mat{Q}_r\mat{\Lambda}_r^{-k} \Big\|_2^2 + \norm{\mat{R}}_2^2\right)\\
        = &~ {\cal O}\left(\frac{\alpha^2 \log n}{\mu_r^2 n} + \frac{1}{\Delta_{\vec{z}}^2 n^2}\right)={\cal O}\left(\frac{\alpha^2 \log n}{\mu_r^2 n}\right)
    \end{align*}
        
    Comparing LHS and RHS with \cref{eq:spectral-projector-simplified}, we see that
    \begin{align*}
        \norm{\mat{Q}_r\mat{R}^\dagger + \mat{R}^\adj \mat{Q}_r}_2 = &~  \norm{\order \left( \frac{1}{n} \left( \frac{\alpha^2 \log n}{\mu_r^2} + \frac{1}{\zgap} \right)  \right) \mat{G} - \order \left( \frac{\alpha^2 \log n}{\mu_r^2 n} \right) \mat{G}'}_2\\
        = &~\order \left( \frac{1}{n} \left( \frac{\alpha^2 \log n}{\mu_r^2} + \frac{1}{\zgap} \right)  \right)\,.
    \end{align*}
    Now, we estimate
    \begin{align*}
        \norm{\mat{\Pi}_{\mat{Q}_r^\perp} \mat{R}}_2 \leq &~  \norm{\mat{\Pi}_{\mat{Q}_r^\perp} \mat{R}\mat{Q}_r^\adj}_2 = \norm{\mat{\Pi}_{\mat{Q}_r^\perp} \left(\mat{Q}_r\mat{R}^\adj + \mat{R}\mat{Q}_r^\adj\right)^\dagger}_2 \\
        \le &~ \norm{\mat{Q}_r\mat{R}^\adj + \mat{R}\mat{Q}_r^\adj}_2= \order \left( \frac{1}{n} \left( \frac{\alpha^2\log n}{\mu_r^2} + \frac{1}{\zgap} \right)  \right)\,,
    \end{align*}
    where the second step follows from $\mat{\Pi}_{\mat{Q}_r^\perp}\mat{Q}_r=\mat{0}$.

    Plugging this estimate into \cref{eq:expansion-plus-remainder} gives \cref{eq:second_order_expansion_formal}. 
    
    For the furthermore part, fix any $k\geq 2$ and consider the difference of replacing $\mat{E}$ with $\mat{E}_{\rm tail}$ in \cref{eq:second_order_expansion_formal}:

    \begin{align*}
    &\left\|(\mat{\Pi}_{\mat{Q}_r^\perp} \mat{E})^{k-1}\mat{\Pi}_{\mat{Q}_r^\perp}\mat{E}_{\rm random}\mat{Q}_r\mat{\Lambda}_r^{-k} - (\mat{\Pi}_{\mat{Q}_r^\perp} \mat{E}_{\rm tail})^{k-1}\mat{\Pi}_{\mat{Q}_r^\perp}\mat{E}_{\rm random}\mat{Q}_r\mat{\Lambda}_r^{-k}\right\|_2\\
    \leq &~ \left\|(\mat{\Pi}_{\mat{Q}_r^\perp} (\mat{E}_{\rm tail} + \mat{E}_{\rm random}))^{k-1}-(\mat{\Pi}_{\mat{Q}_r^\perp} \mat{E}_{\rm tail})^{k-1}\right\|_2\cdot \norm{\mat{\Pi}_{\mat{Q}_r^\perp}\mat{E}_{\rm random}\mat{Q}_r\mat{\Lambda}_r^{-k}}_2\\
    \leq &~ \sum_{t=1}^{k-1} \binom{k-1}{t}\cdot \norm{\mat{E}_{\rm random}}_2^t \cdot \norm{\mat{E}_{\rm tail}}_2^{k-1-t}\cdot \norm{\mat{E}_{\rm random}}_2\cdot \norm{\mat{\Lambda}_r^{-k}}_2\\
    = &~ \sum_{t=1}^{k-1} \binom{k-1}{t}\cdot {\cal O}\left(\alpha \sqrt{n\log n}\right)^{t+1} \cdot (n\mu_{\rm tail})^{k-1-t} \cdot {\cal O}\left(\frac{1}{n\mu_r}\right)^k\\
    \leq &~ {\cal O}\left(\frac{\mu_{\rm tail}}{\mu_r}\right)^{k-2}\cdot {\cal O}\left(\frac{\alpha^2 \log n}{\mu_r^2 n}\right)\,,
\end{align*}
where the third step follows from \cref{lem:Mecks,prop:error-matrix,cor:moitra_T}, and the last step follows from the identity $\binom{m}{p}\leq 2^m$.

Summing over $k$, we get that
\begin{align*}
    &\norm{\sum_{k=1}^\infty(\mat{\Pi}_{\mat{Q}_r^\perp} \mat{E})^{k-1}\mat{\Pi}_{\mat{Q}_r^\perp}\mat{E}_{\rm random}\mat{Q}_r\mat{\Lambda}_r^{-k} - (\mat{\Pi}_{\mat{Q}_r^\perp} \mat{E}_{\rm tail})^{k-1}\mat{\Pi}_{\mat{Q}_r^\perp}\mat{E}_{\rm random}\mat{Q}_r\mat{\Lambda}_r^{-k}}_2\\
    \leq &~ \sum_{k=2}^\infty {\cal O}\left(\frac{\mu_{\rm tail}}{\mu_r}\right)^{k-2}\cdot {\cal O}\left(\frac{\alpha^2 \log n}{\mu_r^2 n}\right)\\
    = &~ {\cal O}\left(\frac{\alpha^2 \log n}{\mu_r^2 n}\right)\,,
\end{align*}
where the last step follows from $\mu_{\rm tail}<\mu_r / 8$. Therefore, we obtain that
\begin{align*}
    \hat{\mat{Q}}_r \mat{U}_r = &~ \mat{Q}_r  + \sum_{k=1}^\infty  (\mat{\Pi}_{\mat{Q}_r^\perp} \mat{E}_{\rm tail})^{k-1}\mat{\Pi}_{\mat{Q}_r^\perp}\mat{E}_{\rm random}\mat{Q}_r\mat{\Lambda}_r^{-k}\\ 
        ~&+  \order \left( \frac{1}{\zgap n} + \frac{\alpha\sqrt{\log n}}{\mu_r \sqrt{n}} \right) \mat{\Pi}_{\mat{Q}_r} \mat{\tilde{Q}}_1  + \order \left( \frac{1}{n} \left(\frac{1}{\zgap} + \frac{\alpha^2\log n}{\mu_r^2}\right) \right)\mat{\tilde{Q}}_2'\\
      = &~   \mat{Q}_r  + \sum_{k=1}^\infty  (\mat{\Pi}_{\mat{Q}_r^\perp} \mat{E}_{\rm tail})^{k-1}\mat{\Pi}_{\mat{Q}_r^\perp}\mat{E}_{\rm random}\mat{Q}_r\mat{\Lambda}_r^{-k} +\order \left( \frac{\alpha\sqrt{\log n}}{\mu_r \sqrt{\Delta_{\vec{z}}n}} \right) \mat{\Pi}_{\mat{Q}_r} \mat{\tilde{Q}}_1\\
      ~&+ \order \left( \frac{\alpha^2\log n}{\mu_r^2\zgap n}\right) \mat{\tilde{Q}}_2'\,,
\end{align*}
where $\norm{\mat{\tilde{Q}}_2'}_2=1$, and the last step follows from $n=\Omega(\alpha^{-2}\Delta_{\vec{z}}^{-1})$.
    
    The proof of the lemma is then completed.
\end{proof}

\section{Strong eigenvector comparison}\label{sec:strong_sketch}
The goal of this section is to prove the strong version of the eigenvectors comparison theorem stated below, which is the key component of obtaining the optimal convergence rate of the ESPRIT algorithm.  
\begin{theorem}[Eigenvectors comparison: strong estimation, {\normalfont formal version of Theroem~\ref{thm:improve_similarity}}]\label{thm:key_comparision} 
Assume \eqref{eqn:condition_mutal} and \eqref{eqn:alpha_E_j}. 
Suppose 
\[
n=\Omega\left(\left(\frac{r\alpha^{4/3}}{\mu_r^{4/3}\Delta_{\vec{z}}^{2/3}} + \frac{\alpha^{2}}{\mu_r^2\Delta_{\vec{z}}}\right)\cdot \log(r,\alpha,1/\mu_r,1/\Delta_{\vec{z}})\right)\,.
\]
Then, there exists a unitary matrix $\mat{U}_r\in \C^{r\times r}$ and an invertible (not necessarily unitary) matrix $\mat{P}\in \C^{r\times r}$ such that
\begin{equation}\label{eqn:matrix_perb_improve}
    \left\|\mat{P}\mat{U}_r^\dagger\mat{\hat{Q}}^{+}_\downarrow\mat{\hat{Q}}_\uparrow\mat{U}_r\mat{P}^{-1}-\mat{Q}^{+}_\downarrow\mat{Q}_\uparrow\right\|_2={\cal O}\left(\left(\frac{\alpha\sqrt{\log n}}{\mu_r\sqrt{\Delta_{\vec{z}}} }+r^{1.5}\right)\frac{\alpha^2\log n}{\mu_r^2\Delta_{\vec{z}}n^{1.5}}\right)\,,
\end{equation}
where $[\cdot]^{+}$ is the pseudo-inverse. Here $\mat{Q}_\uparrow=\mat{Q}(:n-1,:r)$ is the first $n-1$ rows and first $r$ columns of $\mat{Q}$ and $\mat{Q}_\uparrow=\mat{Q}(1:,:r)$ is the last $n-1$ rows and first $r$ columns of $\mat{Q}$.
\end{theorem}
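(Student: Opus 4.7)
The plan is to combine the second-order eigenspace expansion of \cref{lem:second-order-formal} with a carefully engineered similarity $\mat{P}$ so that the leading $\tilde{\cal O}(n^{-1/2})$ perturbation of $\hat{\mat{Q}}_r$ becomes \emph{orthogonal} to $\mat{Q}_r$, and then to exploit (i) the identity $\mat{Q}_r^\dagger \mat{E}=0$ together with the fact that $\uparrow$ and $\downarrow$ differ only by a single boundary row, and (ii) the Toeplitz/shift structure of the $\mat{E}_{\rm random}$-based first-order term, to gain two extra factors of $n^{-1/2}$ over the weak estimate of \cref{lem:eigenvectors-weak}. The final $\tilde{\cal O}(n^{-3/2})$ rate emerges by combining these two gains with the $\tilde{\cal O}(n^{-3/2})$ remainder already present in \cref{lem:second-order-formal}.

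First I would invoke \cref{lem:second-order-formal} to write
\[
    \hat{\mat{Q}}_r\mat{U}_r \;=\; \mat{Q}_r + \mat{F}_{1} + \mat{F}_2 + \mat{R}, \qquad \norm{\mat{R}}_2=\tilde{\cal O}(n^{-3/2}),
\]
where $\mat{F}_1=\sum_{k\ge 1}(\mat{\Pi}_{\mat{Q}_r^\perp}\mat{E}_{\rm tail})^{k-1}\mat{\Pi}_{\mat{Q}_r^\perp}\mat{E}_{\rm random}\mat{Q}_r\mat{\Lambda}_r^{-k}$ is \emph{automatically} orthogonal to $\mat{Q}_r$ on the left and of size $\tilde{\cal O}(n^{-1/2})$, and $\mat{F}_2$ carries an in-column-space piece $\order(\alpha\sqrt{\log n}/(\mu_r\sqrt{\zgap n}))\,\mat{\Pi}_{\mat{Q}_r}\tilde{\mat{Q}}_1$ together with an out-of-column-space piece of size $\tilde{\cal O}(n^{-1})$. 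Next I would construct $\mat{P}$ to cancel exactly the $\mat{\Pi}_{\mat{Q}_r}$-component of $\mat{F}_2$ inside the similarity: taking $\mat{P}=\Id_r+\mat{Q}_r^\dagger\mat{F}_2+(\text{higher order})$ (the construction of \cref{lem:good_P_construction}) yields $\norm{\mat{P}}_2,\norm{\mat{P}^{-1}}_2=\order(1)$ and
\[
\hat{\mat{Q}}_r\mat{U}_r\mat{P}^{-1} \;=\; \mat{Q}_r + \mat{E}_{\rm first}+\mat{E}_{\rm second} + \tilde{\cal O}(n^{-3/2}),
\]
with $\mat{Q}_r^\dagger\mat{E}_{\rm first}=\mat{Q}_r^\dagger\mat{E}_{\rm second}=\mat{0}$, $\norm{\mat{E}_{\rm first}}_2=\tilde{\cal O}(n^{-1/2})$, and $\norm{\mat{E}_{\rm second}}_2=\tilde{\cal O}(n^{-1})$.

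With this decomposition in hand, I would expand $\mat{P}\mat{U}_r^\dagger\hat{\mat{Q}}_\downarrow^+\hat{\mat{Q}}_\uparrow\mat{U}_r\mat{P}^{-1}$ around $\mat{Q}_\downarrow^+\mat{Q}_\uparrow$ via a Neumann series in the blocks $\mat{E}_\uparrow$ and $\mat{E}_\downarrow$ of $\mat{E}=\mat{E}_{\rm first}+\mat{E}_{\rm second}$. Using $\sigma_{\min}(\mat{Q}_\downarrow)=\Theta(1)$ obtained from \cref{lem:vander-eigen}, the leading correction takes the form $\mat{Q}_\downarrow^+\bigl(\mat{E}_\uparrow-\mat{E}_\downarrow\mat{Q}_\downarrow^+\mat{Q}_\uparrow\bigr)$ plus quadratic-in-$\mat{E}$ remainders, and the latter are already $\tilde{\cal O}(n^{-3/2})$ thanks to $\norm{\mat{E}_{\rm first}}_2\cdot\norm{\mat{E}_{\rm second}}_2=\tilde{\cal O}(n^{-3/2})$. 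The task thereby reduces to two model estimates: (a) the ``orthogonality'' bound $\bigl\|\mat{Q}_\downarrow^\dagger(\mat{E}_\uparrow-\mat{E}_\downarrow\mat{Q}_\downarrow^+\mat{Q}_\uparrow)\bigr\|_2=\tilde{\cal O}(n^{-3/2})$, to be proved by observing that $\mat{Q}_r^\dagger\mat{E}=0$ so that $\mat{Q}_\downarrow^\dagger\mat{E}_\uparrow$ and $\mat{Q}_\downarrow^\dagger\mat{E}_\downarrow$ differ only by a single missing-row boundary contribution whose size is bounded by one entry $|\mat{Q}_r|_{\rm row 1}=\tilde{\cal O}(n^{-1/2})$ times one entry of $\mat{E}$ of size $\tilde{\cal O}(n^{-1})$ (this is the content of \cref{cor:second_component}); and (b) the ``Toeplitz-shift'' bound $\bigl\|(\mat{E}_{\rm first})_\uparrow-(\mat{E}_{\rm first})_\downarrow\mat{Q}_\downarrow^+\mat{Q}_\uparrow\bigr\|_2=\tilde{\cal O}(n^{-1})$. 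For (b), substituting $\mat{Q}_r=\tfrac{1}{\sqrt n}\Vand_n(\zdom)\mat{P}_v$ and $\mat{Q}_\downarrow^+\mat{Q}_\uparrow\approx \mat{P}_v^{-1}\diag(\zdom^{-1})\mat{P}_v$ reduces every entry to the form $\tfrac{1}{\sqrt n}\mat{A}_\uparrow\vec{v}_n(z_i)-\tfrac{1}{\sqrt n}\mat{A}_\downarrow\vec{v}_n(z_i)z_i^{-1}$ with $\mat{A}$ a Toeplitz matrix built from $\mat{E}_{\rm random}$ satisfying $\norm{\mat{A}}_2=\tilde{\cal O}(n^{-1/2})$ but \emph{entries} $\max_{i,j}|\mat{A}_{ij}|=\tilde{\cal O}(n^{-1})$, and the Vandermonde shift identity $\vec{v}_n(z)_\downarrow=z\vec{v}_n(z)_\uparrow$ collapses all but two boundary terms of size $\tilde{\cal O}(n^{-1})$.

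The main obstacle is bound (b): a naive operator-norm estimate of the two pieces separately gives only $\tilde{\cal O}(n^{-1/2})$ each, so the $n^{-1}$ gain must be extracted from the exact cancellation of the Vandermonde shift, tracked diagonal-by-diagonal through the Toeplitz structure of $\mat{E}_{\rm random}$, combined with the entrywise sub-Gaussian concentration $|E_j|=\tilde{\cal O}(\alpha\sqrt{\log n})$ and a uniform-over-$i\in\{1,\ldots,r\}$ union bound on the Vandermonde-weighted noise sums $\sum_j E_j z_i^j$. This delicate bookkeeping is precisely the content of \cref{lem:decompose_1,lem:fine_grained_1}. Once both model bounds are in place, combining them with the $\mat{Q}_\downarrow^+$ prefactor (worth another $n^{-1/2}$ via $\sigma_{\min}(\mat{Q}_\downarrow)=\Theta(1)$), the $\tilde{\cal O}(n^{-3/2})$ remainder from \cref{lem:second-order-formal}, and the automatic $\tilde{\cal O}(n^{-3/2})$ quadratic-in-$\mat{E}$ pieces yields the claimed bound~\eqref{eqn:matrix_perb_improve}; the approximate-isometry property $\norm{\mat{P}}_2,\norm{\mat{P}^{-1}}_2=\order(1)$ is built into the construction of $\mat{P}$.
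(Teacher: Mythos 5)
Your overall architecture is the same as the paper's: build $\mat{P}$ from the second-order expansion of \cref{lem:second-order-formal} so that $\hat{\mat{Q}}_r\mat{U}_r\mat{P}^{-1}-\mat{Q}_r$ is (nearly) orthogonal to $\mat{Q}_r$, Neumann-expand $\mat{P}\hat{\mat{Q}}_\downarrow^{+}\hat{\mat{Q}}_\uparrow\mat{P}^{-1}$, and reduce everything to the two cancellation estimates you call (a) and (b), proved via the single-boundary-row argument and the Toeplitz/Vandermonde shift identity respectively. Those are exactly \cref{cor:second_component} and \cref{cor:first_component,lem:decompose_1,lem:fine_grained_1}, so the key ideas are all present.

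There is, however, a genuine gap in how you dispose of the quadratic terms. You claim they are ``already $\tilde{\cal O}(n^{-3/2})$ thanks to $\norm{\mat{E}_{\rm first}}_2\cdot\norm{\mat{E}_{\rm second}}_2=\tilde{\cal O}(n^{-3/2})$,'' but this only covers the cross terms: the pure $\mat{E}_{\rm first}\times\mat{E}_{\rm first}$ contributions such as $(\mat{E}^Q_\downarrow)^\dagger\mat{E}^Q_\uparrow$ and $(\mat{E}^Q_\downarrow)^\dagger\mat{E}^Q_\downarrow\,\mat{Q}_\downarrow^{+}\mat{Q}_\uparrow$ are a priori only $\norm{\mat{E}_{\rm first}}_2^2=\tilde{\cal O}(n^{-1})$, which would ruin the $n^{-3/2}$ rate. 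The paper's \cref{lem:coarse_bound} resolves this by regrouping the quadratic terms into $(\mat{E}^Q_\downarrow)^\dagger\left(\mat{E}^Q_\uparrow-\mat{E}^Q_\downarrow\mat{Q}_\downarrow^{+}\mat{Q}_\uparrow\right)$ so that your bound (b) supplies the $\tilde{\cal O}(n^{-1})$ factor and $\norm{\mat{E}^Q_\downarrow}_2=\tilde{\cal O}(n^{-1/2})$ supplies the rest; i.e., bound (b) is needed for the quadratic terms, not only the linear one. Relatedly, your remark that the ``$\mat{Q}_\downarrow^{+}$ prefactor is worth another $n^{-1/2}$ via $\sigma_{\min}(\mat{Q}_\downarrow)=\Theta(1)$'' is inconsistent — that condition gives $\norm{\mat{Q}_\downarrow^{+}}_2=\order(1)$, worth nothing; the extra $n^{-1/2}$ in estimate (a) relative to (b) comes instead from the entrywise smallness ($\sim n^{-1/2}$) of the rows of $\mat{V}_n(\zdom)/\sqrt{n}$ hitting the boundary-concentrated residual, which for the $\mat{E}_{\rm first}$ part requires the sub-Gaussian concentration of $\vec{v}_n(z_i)^\dagger\mat{E}_{\rm random}\vec{v}_n(z_j)$-type sums rather than a pure single-entry bound. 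Both fixes use machinery you have already set up, but as written the assembly of the final estimate does not close.
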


For simplicity, let $\mat{E}_1$ and $\mat{E}_2$ be the following two parts of the error matrix $\mat{E}$:
\[
\mat{E}_1:=\mat{E}_{\rm tail}=\mat{V}_n(\ztail) \cdot \diag(\mutail) \cdot \mat{V}_n(\ztail)^\dagger,\quad \mat{E}_2:=\mat{E}_{\mathrm{random}}\,.
\]

Our proof of \cref{thm:key_comparision} consists of three steps:

\begin{itemize}
    \item {\bf Step 1:} Construction of the ``good'' aligning matrix $\mat{P}$ (\cref{sec:align_step1})
    \item {\bf Step 2:} Taylor expansion with respect to the error terms (\cref{sec:align_step2})
    \item {\bf Step 3:} Error cancelation in the Taylor expansion (\cref{sec:align_step3})
\end{itemize}

We will introduce these three steps in \cref{sec:align_step1} - \cref{sec:align_step3} and prove \cref{thm:key_comparision} in \cref{sec:align_all}. Most detialed calculations are deferred to \cref{sec:align_proof_defer}.

\subsection{Construction of the ``good'' \texorpdfstring{$\mat{P}$}{P}}\label{sec:align_step1}
In this step, we employ the second-order perturbation expansion of $\hat{\mat{Q}}_r$ in terms of the error matrices $\mat{E}_{\rm tail}$ and $\mat{E}_{\rm random}$ (\cref{lem:second_order_expansion_informal}) to construct an invertible matrix $\mat{P}$ designed to eliminate all error terms whose column space aligns with that of $\mat{Q}_r$. More specifically, as summarized in the following lemma, the matrix $\mat{P}$ defined as \cref{eq:def_good_P} is almost-identity (Property I) and $\hat{\mat{Q}}_r \mat{U}_r \mat{P}^{-1} - \mat{Q}_r$ has certain structure (Property II), which is crucial for the subsequent two steps.

\begin{restatable}{lemma}{goodPconstruction}\label{lem:good_P_construction}
There exists two matrices $\tilde{\mat{Q}}_1,\tilde{\mat{Q}}_2\in \C^{n\times r}$ satisfying the following properties: 
\begin{itemize}
\item $\norm{\tilde{\mat{Q}}_1}_2={\cal O}(1)$, $\norm{\tilde{\mat{Q}}_2}_2={\cal O}(1)$.

\item Define an matrix $\mat{P}\in \C^{r\times r}$:%
\begin{align}\label{eq:def_good_P}
\mat{P}:=\mat{I}_r+{\cal O}\left(\frac{\alpha\sqrt{\log n}}{\mu_r\sqrt{\Delta_z n}}\right)\mat{Q}^\dagger_r\mat{\Pi}_{\mat{Q}_r}\tilde{\mat{Q}}_1+{\cal O}\left(\frac{\alpha^2\log n}{\mu^{2}_r\Delta_{\vec{z}} n}\right)\mat{Q}^\dagger_r\mat{\Pi}_{\mat{Q}_r}\tilde{\mat{Q}}_{2}\,,    
\end{align}
where $\mat{\Pi}_{\mat{Q}_r}$ is the projection onto the column space of $\mat{Q}_r$.  Then, $\mat{P}$ is an invertible matrix and satisfies the following two properties:
\begin{itemize}
    \item {\it Property I: } 
    \begin{equation}\label{eqn:P_bound}
\left\|\mat{P}^{-1}-\mat{I}_r\right\|_2={\cal O}\left(\frac{\alpha\sqrt{\log n}}{\mu_r\sqrt{\Delta_z n}}\right)\,,\quad \left\|\mat{P}-\mat{I}_r\right\|_2={\cal O}\left(\frac{\alpha\sqrt{\log n}}{\mu_r\sqrt{\Delta_z n}}\right)\,.
\end{equation}
    \item {\it Property II:} There exists a unitary matrix $\mat{U}_r\in\mathbb{C}^{r\times r}$ such that
\begin{equation}\label{eqn:second_Q_perturb}
\begin{aligned}
\left\|\mat{\hat{Q}}_r\mat{U}_r\mat{P}^{-1}-\mat{Q}_r\right\|_2\leq {\cal O}\left(\frac{\alpha\sqrt{\log n}}{\mu_r\sqrt{\Delta_z n}}\right)\,,
\end{aligned}
\end{equation}
and
\begin{equation}\label{eqn:informal_expansion}
\begin{aligned}
\hat{\mat{Q}}_r\mat{U}_r\mat{P}^{-1}=&~ \mat{Q}_r+\sum^\infty_{k=0}\left(\mat{I}_n-\frac{1}{n}\mat{V}_n(\zdom)\mat{V}_n(\zdom)^\dagger\right)\mat{E}_{\rm tail}^k\mat{E}_{\rm random}\mat{Q}_r\left(\mat{\Sigma}_r^{-1}\right)^{k+1}\mat{P}^{-1}\\
~&+ {\cal O}\left(\frac{\alpha^2\log n}{\mu_r^2\Delta_z n}\right)\mat{\Pi}_{\mat{Q}^\perp_r}\tilde{\mat{Q}}_{2}\mat{P}^{-1}+ {\cal O}\left(\frac{\alpha}{\mu_r\Delta_{\vec{z}}n^{1.5}}\right)\,.  
\end{aligned}
\end{equation}
\end{itemize}
\end{itemize}
\end{restatable}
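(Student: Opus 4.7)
The plan is to follow the construction hinted at by the statement: use the second-order expansion of $\hat{\mat{Q}}_r\mat{U}_r$ from \cref{lem:second-order-formal} to isolate the component of the perturbation lying in $\mathrm{range}(\mat{\Pi}_{\mat{Q}_r})$, then build $\mat{P}$ to absorb it via right-multiplication. Taking $\mat{U}_r$ to be the unitary supplied by \cref{lem:second-order-formal}, I set $\mat{C}:=\mat{Q}_r^\dagger(\hat{\mat{Q}}_r\mat{U}_r-\mat{Q}_r)$ and $\mat{P}:=\mat{I}_r+\mat{C}$. The algebraic identity $\mat{Q}_r\mat{P}=\mat{\Pi}_{\mat{Q}_r}\hat{\mat{Q}}_r\mat{U}_r$ then forces $\hat{\mat{Q}}_r\mat{U}_r-\mat{Q}_r\mat{P}=\mat{\Pi}_{\mat{Q}_r^\perp}\hat{\mat{Q}}_r\mat{U}_r$, putting the residual automatically in $\mathrm{range}(\mat{\Pi}_{\mat{Q}_r^\perp})$. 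Substituting \cref{eq:second_order_expansion_formal} and using $\mat{Q}_r^\dagger\mat{\Pi}_{\mat{Q}_r^\perp}=\mat{0}$ kills the entire first-order Neumann sum under left-multiplication by $\mat{Q}_r^\dagger$, so $\mat{C}$ collapses to the two terms displayed in \cref{eq:def_good_P}, with $\tilde{\mat{Q}}_2$ identified as the $\mat{\Pi}_{\mat{Q}_r}$-projection of the generic second-order matrix (still of norm $\order(1)$).

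Property~I is then immediate: $\|\mat{C}\|_2=\order(\alpha\sqrt{\log n}/(\mu_r\sqrt{\Delta_{\vec{z}}n}))$ follows from $\|\mat{Q}_r^\dagger\|_2\leq 1$ and $\|\tilde{\mat{Q}}_i\|_2=\order(1)$, and this is $o(1)$ by the hypothesis on $n$. A Neumann series gives invertibility of $\mat{P}$ and $\|\mat{P}^{-1}-\mat{I}_r\|_2=\order(\|\mat{C}\|_2)$, yielding \cref{eqn:P_bound}.

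For Property~II, I start from the clean decomposition $\hat{\mat{Q}}_r\mat{U}_r\mat{P}^{-1}=\mat{Q}_r+\mat{\Pi}_{\mat{Q}_r^\perp}\hat{\mat{Q}}_r\mat{U}_r\mat{P}^{-1}$ (immediate from $\mat{Q}_r\mat{P}\mat{P}^{-1}=\mat{Q}_r$) and plug in the refined form \cref{eq:second_order_expansion_formal_2}, whose first-order Neumann series involves only $\mat{E}_{\rm tail}$ and whose second-order remainder is $\order(\alpha^2\log n/(\mu_r^2\Delta_{\vec{z}}n))\tilde{\mat{Q}}_2'$. After applying $\mat{\Pi}_{\mat{Q}_r^\perp}$ and multiplying by $\mat{P}^{-1}$, the leading piece is $\sum_{k\geq 1}\mat{\Pi}_{\mat{Q}_r^\perp}(\mat{E}_{\rm tail}\mat{\Pi}_{\mat{Q}_r^\perp})^{k-1}\mat{E}_{\rm random}\mat{Q}_r\mat{\Lambda}_r^{-k}\mat{P}^{-1}$, which \cref{eqn:informal_expansion} requires be rewritten (after reindexing $j=k-1$) as $\sum_{j\geq 0}(\mat{I}_n-\tfrac{1}{n}\mat{V}_n(\zdom)\mat{V}_n(\zdom)^\dagger)\mat{E}_{\rm tail}^j\mat{E}_{\rm random}\mat{Q}_r\mat{\Sigma}_r^{-(j+1)}\mat{P}^{-1}$. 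Two comparisons are needed: (i) swapping the outer projector $\mat{\Pi}_{\mat{Q}_r^\perp}$ for $\mat{I}_n-\tfrac{1}{n}\mat{V}_n(\zdom)\mat{V}_n(\zdom)^\dagger$, which by \cref{lem:vander-eigen} differ by $\tfrac{1}{n}\mat{V}_n(\zdom)(\mat{I}_r-\mat{P}_v\mat{P}_v^\dagger)\mat{V}_n(\zdom)^\dagger$ of operator norm $\order(1/(\Delta_{\vec{z}}n))$; and (ii) deleting the interior $\mat{\Pi}_{\mat{Q}_r^\perp}$ factors at per-deletion cost $\|\mat{Q}_r^\dagger\mat{E}_{\rm tail}\|_2=\order(\mu_{\rm tail}/\Delta_{\vec{z}})$, obtainable from the Vandermonde correlation estimates in \cref{sec:van_mat}. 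Combining each cost with $\|\mat{E}_{\rm random}\|_2\|\mat{\Lambda}_r^{-1}\|_2=\order(\alpha\sqrt{\log n}/(\mu_r\sqrt{n}))$ and the geometric decay $\|\mat{E}_{\rm tail}\|_2\|\mat{\Lambda}_r^{-1}\|_2\leq\mu_{\rm tail}/\mu_r\leq 1/8$ from \cref{eqn:condition_mutal}, summing in $j$ produces a total correction bounded by $\order(\alpha/(\mu_r\Delta_{\vec{z}}n^{1.5}))$, which is the last remainder in \cref{eqn:informal_expansion}. Separately, projecting the second-order remainder of \cref{eq:second_order_expansion_formal_2} onto $\mathrm{range}(\mat{\Pi}_{\mat{Q}_r^\perp})$ and multiplying by $\mat{P}^{-1}$ delivers the middle term $\order(\alpha^2\log n/(\mu_r^2\Delta_{\vec{z}}n))\mat{\Pi}_{\mat{Q}_r^\perp}\tilde{\mat{Q}}_2\mat{P}^{-1}$.

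The main obstacle is step (ii): a naive termwise bound on the interior-projector deletion uses only $\|\mat{E}_{\rm tail}\|_2=\order(n\mu_{\rm tail})$ and therefore yields merely $\order(\alpha/(\mu_r\sqrt{n}))$ error, destroying the target scaling. The correct argument must exploit the sharper estimate $\|\mat{Q}_r^\dagger\mat{E}_{\rm tail}\|_2=\order(\mu_{\rm tail}/\Delta_{\vec{z}})$, which provides an extra $1/(\Delta_{\vec{z}}n)$ factor per deletion, together with the subcritical ratio $\mu_{\rm tail}/\mu_r\leq 1/8$ to guarantee convergence of the geometric sum over $k$. These two savings, combined with the $1/(\Delta_{\vec{z}}n)$ gain from \cref{lem:vander-eigen} for the outer-projector swap, are exactly what bring the cumulative remainder down to $\order(\alpha/(\mu_r\Delta_{\vec{z}}n^{1.5}))$, so the assumption $\mu_{\rm tail}\leq\mu_r/8$ is essential at this step.
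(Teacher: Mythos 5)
Your proposal is correct and follows essentially the same route as the paper: the paper likewise defines $\mat{P}$ by absorbing the $\mat{\Pi}_{\mat{Q}_r}$-component of the expansion from \cref{lem:second-order-formal} (so that $\hat{\mat{Q}}_r\mat{U}_r-\mat{Q}_r\mat{P}$ lies in $\mathrm{range}(\mat{\Pi}_{\mat{Q}_r^\perp})$), proves Property~I by a Neumann series, and proves Property~II via exactly the two projector replacements you describe, with the interior-projector deletion controlled by the sharp bound $\norm{\mat{E}_{\rm tail}\mat{Q}_r}_2=\order(\mu_{\rm tail}/\zgap)$ from \cref{prop:tail-error} and the geometric decay $\mu_{\rm tail}/\mu_r\le 1/8$ (this is the content of \cref{clm:first_order_simplify}). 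You correctly identified that a naive $\norm{\mat{E}_{\rm tail}}_2$-based bound would lose the $1/(\zgap n)$ factor needed for the $\order(\alpha/(\mu_r\zgap n^{1.5}))$ remainder.
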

The proof is deferred to \cref{sec:align_step1_defer}.

\subsection{Taylor expansion with respect to the error terms}\label{sec:align_step2}
In this step, we use the Neumann series to expand $\left(\hat{\mat{Q}}_r\mat{U}_r\mat{P}^{-1}\right)^\dagger_{\uparrow}\left(\hat{\mat{Q}}_r\mat{U}_r\mat{P}^{-1}\right)_{\downarrow}$ and insert the formula in \eqref{eqn:informal_expansion}. Upon organizing various terms and controlling all higher-order terms (as detailed in \cref{lem:step_2_first}), we conclude in \cref{lem:coarse_bound} that proving \cref{thm:improve_similarity} hinges on establishing: 
    \begin{equation}\label{eqn:key_main_text}
    \left\|(\mat{E}^Q_\downarrow)^\dagger\left(\mat{E}^Q_\uparrow-\mat{E}^Q_\downarrow(\mat{Q}_\downarrow)^\dagger\mat{Q}_\uparrow\right)\right\|_2\sim n^{-1.5}\quad \text{and}\quad\left\|\mat{Q}_\downarrow^\dagger\left(\mat{E}^Q_\uparrow-\mat{E}^Q_\downarrow(\mat{Q}_\downarrow)^\dagger\mat{Q}_\uparrow\right)\right\|_2\sim n^{-1.5}\,.
    \end{equation}

Without loss of generality, we assume $\mat{U}_r=\mat{I}_r$. For convenience, we define some notions for the error terms appearing in the expansion:%
\[
\mat{E}^Q=\mat{\hat{Q}}_r\mat{P}^{-1}-\mat{Q}_r,\quad \mat{E}^Q_\uparrow=\mat{E}^Q(:-1,:r),\quad \mat{E}^Q_\downarrow=\mat{E}^Q(1:,:r)\,.
\]
Recall
\[
\mat{Q}_{\uparrow}=\mat{Q}(:n-1,:r),\quad \hat{\mat{Q}}_{\uparrow}=\hat{\mat{Q}}(:n-1,:r),\quad \mat{Q}_{\downarrow}=\mat{Q}(1:,:r),\quad \hat{\mat{Q}}_{\downarrow}=\hat{\mat{Q}}(1:,:r)
\]
It's easy to check that 
\begin{align}\label{eq:E_Q_up_down}
    \mat{E}^Q_\uparrow=\hat{\mat{Q}}_{\uparrow}\mat{P}^{-1}-\mat{Q}_\uparrow=\left(\widetilde{\mat{E}}^Q_1+\widetilde{\mat{E}}^Q_2\right)_\uparrow,\quad \text{and}\quad \mat{E}^Q_\downarrow=\hat{\mat{Q}}_{\downarrow}\mat{P}^{-1}-\mat{Q}_\downarrow=\left(\widetilde{\mat{E}}^Q_1+\widetilde{\mat{E}}^Q_2\right)_\downarrow\,.
\end{align}

The following lemma provides a second-order truncation for the Neumann series expansion (\cref{def:neumann}) of $\mat{P}\mat{\hat{Q}}^{+}_\downarrow\mat{\hat{Q}}_\uparrow\mat{P}^{-1}$:
\begin{restatable}{lemma}{stepIIfirst}\label{lem:step_2_first}
Let $\mat{P}$ be defined as \eqref{eq:def_good_P}. Then, we have
\begin{align}
    & \mat{P}\mat{\hat{Q}}^{+}_\downarrow\mat{\hat{Q}}_\uparrow\mat{P}^{-1}-\mat{Q}^{+}_\downarrow\mat{Q}_\uparrow\notag\\
    =&~\left((\mat{E}^Q_\downarrow)^\dagger\mat{Q}_\uparrow+\mat{Q}_\downarrow^\dagger\mat{E}^Q_\uparrow\right)-\left((\mat{E}^Q_\downarrow)^\dagger\mat{Q}_\downarrow+\mat{Q}_\downarrow^\dagger\mat{E}^Q_\downarrow\right)(\mat{Q}_\downarrow)^\dagger\mat{Q}_\uparrow\tag{first order terms}\\
~&+(\mat{E}^Q_\downarrow)^\dagger(\mat{E}^Q_\uparrow)-\left((\mat{E}^Q_\downarrow)^\dagger\mat{Q}_\downarrow+\mat{Q}_\downarrow^\dagger\mat{E}^Q_\downarrow\right)\left((\mat{E}^Q_\downarrow)^\dagger\mat{Q}_\uparrow+\mat{Q}_\downarrow^\dagger\mat{E}^Q_\uparrow\right)\tag{second order terms}\\
~&+\left(-(\mat{E}^Q_\downarrow)^\dagger(\mat{E}^Q_\downarrow)+\left((\mat{E}^Q_\downarrow)^\dagger\mat{Q}_\downarrow+\mat{Q}_\downarrow^\dagger\mat{E}^Q_\downarrow\right)^2\right)(\mat{Q}_\downarrow)^\dagger\mat{Q}_\uparrow\tag{second order terms}\\
~&+\mathcal{O}\left(\frac{\alpha^3\log^3 n}{\mu^{3}_r(\Delta_{\vec{z}} n)^{1.5}}\right)\,,\label{eqn:error_expansion}
\end{align}
\end{restatable}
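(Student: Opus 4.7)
The plan is to recast $\mat{P}\mat{\hat{Q}}_\downarrow^{\pinv}\mat{\hat{Q}}_\uparrow\mat{P}^{-1}$ as a Neumann expansion in the error blocks $\mat{E}^Q_\downarrow$ and $\mat{E}^Q_\uparrow$, and then collect terms up to second order. Because $\mat{P}\in\C^{r\times r}$ is invertible and $\mat{\hat{Q}}_\downarrow$ has full column rank, the pseudoinverse satisfies $(\mat{\hat{Q}}_\downarrow\mat{P}^{-1})^{\pinv} = \mat{P}\mat{\hat{Q}}_\downarrow^{\pinv}$; combined with \eqref{eq:E_Q_up_down} this identifies the quantity of interest with $(\mat{Q}_\downarrow + \mat{E}^Q_\downarrow)^{\pinv}(\mat{Q}_\uparrow + \mat{E}^Q_\uparrow)$. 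Applying the normal-equation formula $\mat{A}^{\pinv}=(\mat{A}^\dagger \mat{A})^{-1}\mat{A}^\dagger$ and writing $(\mat{Q}_\downarrow+\mat{E}^Q_\downarrow)^\dagger(\mat{Q}_\downarrow+\mat{E}^Q_\downarrow) = \mat{S}_0 + \mat{D}_A$ and $(\mat{Q}_\downarrow+\mat{E}^Q_\downarrow)^\dagger(\mat{Q}_\uparrow+\mat{E}^Q_\uparrow) = \mat{T}_0 + \mat{D}_B$, with $\mat{S}_0 = \mat{Q}_\downarrow^\dagger\mat{Q}_\downarrow$, $\mat{T}_0 = \mat{Q}_\downarrow^\dagger\mat{Q}_\uparrow$, $\mat{D}_A = \mat{Q}_\downarrow^\dagger\mat{E}^Q_\downarrow + (\mat{E}^Q_\downarrow)^\dagger\mat{Q}_\downarrow + (\mat{E}^Q_\downarrow)^\dagger\mat{E}^Q_\downarrow$, and $\mat{D}_B$ analogous, reduces the problem to expanding $(\mat{S}_0+\mat{D}_A)^{-1}(\mat{T}_0+\mat{D}_B) - \mat{S}_0^{-1}\mat{T}_0$.

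The key structural fact is $\mat{S}_0 = \mat{I}_r - \mat{q}_1\mat{q}_1^\dagger$, where $\mat{q}_1$ is the first row of $\mat{Q}_r$; by \cref{lem:vander-eigen} and the all-ones first row of $\mat{V}_n(\zdom)$, $\|\mat{q}_1\|_2^2 = \mathcal{O}(r/n)$, so $\mat{S}_0^{-1} = \mat{I}_r + \mathcal{O}(r/n)$. Combined with the size estimate $\|\mat{E}^Q\|_2 = \mathcal{O}(\alpha\sqrt{\log n}/(\mu_r\sqrt{\Delta_{\vec{z}}n}))$ from \cref{lem:good_P_construction}, the hypothesis on $n$ gives $\|\mat{S}_0^{-1}\mat{D}_A\|_2 < 1$, so the Neumann series $(\mat{S}_0+\mat{D}_A)^{-1} = \sum_{k\ge 0}(-\mat{S}_0^{-1}\mat{D}_A)^k\mat{S}_0^{-1}$ converges. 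Multiplying out and subtracting $\mat{Q}_\downarrow^{\pinv}\mat{Q}_\uparrow = \mat{S}_0^{-1}\mat{T}_0$, the first- and second-order terms in $\mat{E}^Q$ are
\[
\mat{S}_0^{-1}\mat{D}_B - \mat{S}_0^{-1}\mat{D}_A\mat{S}_0^{-1}\mat{T}_0 - \mat{S}_0^{-1}\mat{D}_A\mat{S}_0^{-1}\mat{D}_B + (\mat{S}_0^{-1}\mat{D}_A)^2\mat{S}_0^{-1}\mat{T}_0.
\]
Replacing every $\mat{S}_0^{-1}$ here by $\mat{I}_r$ and separating the linear and quadratic (in $\mat{E}^Q$) parts of $\mat{D}_A,\mat{D}_B$ precisely reproduces the first- and second-order lines of the claimed expansion, with $\mat{T}_0$ read as $\mat{Q}_\downarrow^\dagger\mat{Q}_\uparrow$.

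The principal obstacle is bounding the remainder, which aggregates three kinds of contributions: (a) the cubic-and-higher Neumann terms, controlled by $\sum_{k\ge 3}\|\mat{E}^Q\|_2^k = \mathcal{O}(\|\mat{E}^Q\|_2^3)$, giving the dominant scale $\alpha^3(\log n)^{3/2}/(\mu_r^3\Delta_{\vec{z}}^{3/2}n^{3/2})$; (b) the corrections $(\mat{S}_0^{-1}-\mat{I}_r)\mat{D}_{(\cdot)}\cdots$ produced by each substitution $\mat{S}_0^{-1}\mapsto\mat{I}_r$ in the kept terms, each of order $\mathcal{O}(r/n)$ times a corresponding power of $\|\mat{E}^Q\|_2$; and (c) the gap $\mat{Q}_\downarrow^{\pinv}\mat{Q}_\uparrow - \mat{Q}_\downarrow^\dagger\mat{Q}_\uparrow = (\mat{S}_0^{-1}-\mat{I}_r)\mat{T}_0$ incurred by writing the reference point on the right-hand side with $\mat{Q}_\downarrow^\dagger$ rather than $\mat{Q}_\downarrow^{\pinv}$, again of size $\mathcal{O}(r/n)$. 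Each piece must be absorbed into the stated remainder $\mathcal{O}(\alpha^3\log^3 n/(\mu_r^3(\Delta_{\vec{z}}n)^{3/2}))$ using the standing hypothesis on $n$ (in particular the $r\alpha^{4/3}/(\mu_r^{4/3}\Delta_{\vec{z}}^{2/3})$ and $\alpha^2/(\mu_r^2\Delta_{\vec{z}})$ terms) together with $\alpha>1$. The simultaneous bookkeeping of these cross terms, rather than any single clever estimate, is where the technical work concentrates.
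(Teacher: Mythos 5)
Your route is essentially the paper's: reverse-order law for the pseudoinverse, the normal-equation formula, a Neumann expansion of $(\mat{S}_0+\mat{D}_A)^{-1}$, truncation after the quadratic terms, and replacement of $\mat{S}_0^{-1}=(\mat{Q}_\downarrow^\adj\mat{Q}_\downarrow)^{-1}$ by $\Id_r$ inside every kept term. Your remainder items (a) and (b) coincide with the paper's \cref{clm:bound_Neumann_series} and the estimates around \cref{eq:P_Q_Q_P_0th}--\cref{eq:P_Q_Q_P_res}; your derivation of $\norm{\mat{S}_0-\Id_r}_2$ via the deleted-row identity $\mat{S}_0=\Id_r-\mat{q}_1\mat{q}_1^\adj$ is a harmless variant of the paper's Vandermonde bound $\order(1/(\zgap n))$ (and is compatible with it, since $r$ separated points on $\unitcircle$ force $r=\order(1/\zgap)$).

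The one point you should fix is item (c). The lemma's left-hand side subtracts $\mat{Q}_\downarrow^{\pinv}\mat{Q}_\uparrow=\mat{S}_0^{-1}\mat{T}_0$, which is exactly the zeroth-order term of your expansion, so the zeroth-order pieces cancel identically and no standalone gap $(\mat{S}_0^{-1}-\Id_r)\mat{T}_0$ ever enters the remainder; the factors $(\mat{Q}_\downarrow)^\adj\mat{Q}_\uparrow$ on the right-hand side of the lemma appear only multiplied by at least one power of $\mat{E}^Q$, so the corresponding corrections are already covered by your item (b). This matters because, as stated, item (c) could not be absorbed: a genuine $\order(r/n)$ (or $\order(1/(\zgap n))$) zeroth-order contribution decays like $n^{-1}$ and dominates the claimed $\order\bigl(\alpha^3\log^3 n/(\mu_r^3(\zgap n)^{1.5})\bigr)$ remainder for large $n$, so if that term were real the lemma would fail. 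Deleting item (c) (or noting that it vanishes by exact cancellation) completes the argument.
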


By re-grouping and upper-bounding some high-order terms, we obtain the key result of this sub-section:

\begin{restatable}{lemma}{coarsebound}\label{lem:coarse_bound}
It holds that
\begin{equation}\label{eq:PQQP_QQ_key_ineq}
\begin{aligned}
&\left\|\mat{P}\mat{\hat{Q}}^{+}_\downarrow\mat{\hat{Q}}_\uparrow\mat{P}^{-1}-\mat{Q}^{+}_\downarrow\mat{Q}_\uparrow\right\|_2\\
\leq &~ \left\|(\mat{E}^Q_\downarrow)^\dagger\left(\mat{E}^Q_\uparrow-\mat{E}^Q_\downarrow(\mat{Q}_\downarrow)^\dagger\mat{Q}_\uparrow\right)\right\|_2+\mathcal{O}\left(1+\frac{\alpha\sqrt{\log n}}{\mu_r\sqrt{\Delta_{\vec{z}} n}}\right)\left\|\mat{Q}_\downarrow^\dagger\left(\mat{E}^Q_\uparrow-\mat{E}^Q_\downarrow(\mat{Q}_\downarrow)^\dagger\mat{Q}_\uparrow\right)\right\|_2\\
~&+ \mathcal{O}\left(\frac{\alpha^3\log^3 n}{\mu^{3}_r(\Delta_{\vec{z}} n)^{1.5}} \right)\,.
\end{aligned}
\end{equation}
\end{restatable}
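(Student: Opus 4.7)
The plan is to take \cref{lem:step_2_first} as my starting point and show that every summand in its expansion of $\mat{P}\mat{\hat{Q}}^{\pinv}_\downarrow\mat{\hat{Q}}_\uparrow\mat{P}^{-1}-\mat{Q}^{\pinv}_\downarrow\mat{Q}_\uparrow$ can, after a careful rearrangement, be expressed through the two special combinations $\mat{Q}_\downarrow^\dagger(\mat{E}^Q_\uparrow - \mat{E}^Q_\downarrow(\mat{Q}_\downarrow)^\dagger\mat{Q}_\uparrow)$ and $(\mat{E}^Q_\downarrow)^\dagger(\mat{E}^Q_\uparrow - \mat{E}^Q_\downarrow(\mat{Q}_\downarrow)^\dagger\mat{Q}_\uparrow)$ appearing on the right-hand side of \cref{eq:PQQP_QQ_key_ineq}, after which the bound drops out from a single triangle inequality.

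The structural identity driving this rearrangement is $\mat{Q}_\uparrow = \mat{Q}_\downarrow(\mat{Q}_\downarrow)^\dagger\mat{Q}_\uparrow$. I would derive this directly from \cref{lem:vander-eigen}: the representation $\mat{Q}_r = \frac{1}{\sqrt{n}}\Vand_n(\zdom)\mat{P}_v$ with $\mat{P}_v$ invertible shows that $\mat{Q}_\uparrow$ and $\mat{Q}_\downarrow$ share a common column space, equal to $\mathrm{Col}(\Vand_{n-1}(\zdom))$, which has dimension $r$ under our standing assumption $n = \Omega(1/\zgap)$. Hence the orthogonal projector $\mat{Q}_\downarrow(\mat{Q}_\downarrow)^\dagger$ onto that subspace fixes $\mat{Q}_\uparrow$. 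Substituting this identity into the first-order block of \cref{lem:step_2_first} cancels the $(\mat{E}^Q_\downarrow)^\dagger\mat{Q}$-terms exactly and collapses the first-order contribution to $\mat{Q}_\downarrow^\dagger(\mat{E}^Q_\uparrow - \mat{E}^Q_\downarrow(\mat{Q}_\downarrow)^\dagger\mat{Q}_\uparrow)$.

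For the two second-order blocks I would introduce the shorthand $\mat{A} := (\mat{E}^Q_\downarrow)^\dagger\mat{Q}_\downarrow + \mat{Q}_\downarrow^\dagger\mat{E}^Q_\downarrow$, $\mat{B} := (\mat{E}^Q_\downarrow)^\dagger\mat{Q}_\uparrow + \mat{Q}_\downarrow^\dagger\mat{E}^Q_\uparrow$, and $\mat{C} := (\mat{Q}_\downarrow)^\dagger\mat{Q}_\uparrow$. Their sum rearranges to $(\mat{E}^Q_\downarrow)^\dagger(\mat{E}^Q_\uparrow - \mat{E}^Q_\downarrow\mat{C}) + \mat{A}(\mat{A}\mat{C} - \mat{B})$, and reapplying $\mat{Q}_\uparrow = \mat{Q}_\downarrow\mat{C}$ yields the telescoping identity $\mat{A}\mat{C} - \mat{B} = -\mat{Q}_\downarrow^\dagger(\mat{E}^Q_\uparrow - \mat{E}^Q_\downarrow\mat{C})$. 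Combining with the first-order piece gives the clean expression
\[
\mat{P}\mat{\hat{Q}}^{\pinv}_\downarrow\mat{\hat{Q}}_\uparrow\mat{P}^{-1} - \mat{Q}^{\pinv}_\downarrow\mat{Q}_\uparrow = (\mat{I}_r - \mat{A})\,\mat{Q}_\downarrow^\dagger(\mat{E}^Q_\uparrow - \mat{E}^Q_\downarrow\mat{C}) + (\mat{E}^Q_\downarrow)^\dagger(\mat{E}^Q_\uparrow - \mat{E}^Q_\downarrow\mat{C}) + \mathcal{O}\bigl(\tfrac{\alpha^3\log^3 n}{\mu^{3}_r(\zgap n)^{1.5}}\bigr),
\]
inheriting the higher-order remainder already present in \cref{lem:step_2_first}.

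The finish is a triangle inequality on this identity, together with the norm bound $\|\mat{A}\|_2 \le 2\|\mat{Q}_\downarrow\|_2\|\mat{E}^Q_\downarrow\|_2 = \mathcal{O}(\alpha\sqrt{\log n}/(\mu_r\sqrt{\zgap n}))$ obtained from Property II of \cref{lem:good_P_construction} and $\|\mat{Q}_\downarrow\|_2 \le 1$; this produces exactly the $\mathcal{O}(1 + \alpha\sqrt{\log n}/(\mu_r\sqrt{\zgap n}))$ prefactor claimed in \cref{eq:PQQP_QQ_key_ineq}. The only real obstacle is algebraic bookkeeping in the second-order step: one must track carefully which instance of $\mat{Q}_\uparrow$ is being replaced by $\mat{Q}_\downarrow\mat{C}$ so that $\mat{A}\mat{C} - \mat{B}$ really telescopes into the factored difference $\mat{E}^Q_\uparrow - \mat{E}^Q_\downarrow\mat{C}$. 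Given this, \cref{lem:coarse_bound} is a direct consequence of \cref{lem:step_2_first}, the Vandermonde column-space identity from \cref{lem:vander-eigen}, and the almost-identity property of $\mat{P}$ from \cref{lem:good_P_construction}.
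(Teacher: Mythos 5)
Your overall route is the same as the paper's: start from \cref{lem:step_2_first}, regroup every block around the two combinations $(\mat{E}^Q_\downarrow)^\dagger(\mat{E}^Q_\uparrow-\mat{E}^Q_\downarrow(\mat{Q}_\downarrow)^\dagger\mat{Q}_\uparrow)$ and $\mat{Q}_\downarrow^\dagger(\mat{E}^Q_\uparrow-\mat{E}^Q_\downarrow(\mat{Q}_\downarrow)^\dagger\mat{Q}_\uparrow)$, and finish with a triangle inequality using $\norm{\mat{E}^Q_\downarrow}_2=\order(\alpha\sqrt{\log n}/(\mu_r\sqrt{\zgap n}))$. The algebraic regrouping of the second-order blocks via $\mat{A},\mat{B},\mat{C}$ is correct and matches the paper's computation.

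However, there is one step that fails as stated: the identity $\mat{Q}_\uparrow=\mat{Q}_\downarrow(\mat{Q}_\downarrow)^\dagger\mat{Q}_\uparrow$ is not exact, because $\mat{Q}_\downarrow(\mat{Q}_\downarrow)^\dagger$ is \emph{not} the orthogonal projector onto $\mathrm{Col}(\mat{Q}_\downarrow)$ — it is not even idempotent. In this paper $\dagger$ denotes the conjugate transpose, and while $\mat{Q}_r$ has exactly orthonormal columns, deleting a row destroys this: by \cref{cor:approximate-orthonormality} one only has $\norm{\mat{Q}_\downarrow^\dagger\mat{Q}_\downarrow-\Id_r}_2=\order(1/(\zgap n))$. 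The true projector is $\mat{Q}_\downarrow\mat{Q}_\downarrow^\pinv=\mat{Q}_\downarrow(\mat{Q}_\downarrow^\dagger\mat{Q}_\downarrow)^{-1}\mat{Q}_\downarrow^\dagger$, and it does fix $\mat{Q}_\uparrow$ (your column-space argument is right), but $\norm{\mat{Q}_\downarrow\mat{Q}_\downarrow^\dagger-\mat{\Pi}_{\range(\mat{Q}_\downarrow)}}_2=\order(1/(\zgap n))$. Consequently the term $(\mat{E}^Q_\downarrow)^\dagger(\mat{Q}_\uparrow-\mat{Q}_\downarrow(\mat{Q}_\downarrow)^\dagger\mat{Q}_\uparrow)$ does not vanish; it is only $\order\bigl(\alpha\sqrt{\log n}/(\mu_r(\zgap n)^{1.5})\bigr)$, and the analogous discrepancy reappears when you telescope $\mat{A}\mat{C}-\mat{B}$. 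Your conclusion survives because (using $\alpha>1$) all of these leftover pieces are dominated by the stated remainder $\order(\alpha^3\log^3 n/(\mu_r^3(\zgap n)^{1.5}))$ — this is exactly how the paper's proof handles them — but as written your proof asserts an exact cancellation that is false, so the approximate-projector estimate and the accounting of the resulting error terms must be supplied to close the argument.
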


The proofs \cref{lem:step_2_first,lem:coarse_bound} of are deferred to \cref{sec:align_step2_defer}.

\subsection{Error cancellation in the Taylor expansion}\label{sec:align_step3}
In this step, we will prove \cref{eqn:key_main_text}. Because $\norm{\mat{E}^Q_\downarrow}_2\leq \norm{\hat{\mat{Q}}_r\mat{P}^{-1}-\mat{Q}_r}_2\sim n^{-0.5}$ (\cref{eqn:second_Q_perturb}), it suffices to establish:
\begin{equation}\label{eqn:key_equality}
\begin{aligned}
\norm{\mat{E}^Q_\uparrow-\mat{E}^Q_\downarrow(\mat{Q}_\downarrow)^\dagger\mat{Q}_\uparrow}_2=&~\mathcal{O}\left(\frac{r^{1.5}\alpha}{\mu_r n}+\frac{\alpha^2\log n}{\mu^{2}_r\Delta_{\vec{z}} n}\right),\\
\norm{\mat{Q}_\downarrow^\dagger\left(\mat{E}^Q_\uparrow-\mat{E}^Q_\downarrow(\mat{Q}_\downarrow)^\dagger\mat{Q}_\uparrow\right)}_2=&~ \mathcal{O}\left(\frac{r^2\alpha^2\log n}{\mu^{2}_r\Delta_{\vec{z}} n^{1.5}} + \frac{\alpha\sqrt{\log n}}{\mu_r(\Delta_{\vec{z}} n)^{1.5}}\right)\,.
\end{aligned}
\end{equation}

We first define the errors in the column space of $\mat{Q}_r$ and orthogonal to this subspace:
\begin{subequations}\label{eqn:E_tilde_Q}
\begin{equation}%
\widetilde{\mat{E}}^Q_1:=\sum^\infty_{k=0}\left(\mat{I}_n-\frac{1}{n}\mat{V}_n(\zdom)\mat{V}_n(\zdom)^\dagger\right)\mat{E}_1^k\mat{E}_2\mat{Q}_r\left(\mat{\Sigma}_r^{-1}\right)^{k+1}\mat{P}^{-1}\,.
\end{equation}
and
\begin{equation}
\widetilde{\mat{E}}^Q_2:={\cal O}\left(\frac{\alpha^2\log n}{\mu^{2}_r\Delta_{\vec{z}} n}\right)\mat{\Pi}_{\mat{Q}_r^\perp}\hat{\mat{Q}}_{2}\mat{P}^{-1}\,.
\end{equation}
\end{subequations}
Then, we immediately have 
\begin{align}\label{eq:E_Q_up_down_2}
\mat{E}^Q_\uparrow=\left(\widetilde{\mat{E}}^Q_1+\widetilde{\mat{E}}^Q_2\right)_\uparrow,\quad \text{and}\quad \mat{E}^Q_\downarrow=\left(\widetilde{\mat{E}}^Q_1+\widetilde{\mat{E}}^Q_2\right)_\downarrow\,.
\end{align}

\paragraph{Proof of the first equation in \cref{eqn:key_equality}.}
Since $\norm{\widetilde{\mat{E}}^Q_2}_2\sim n^{-1}$ by its definition, we only need to consider $(\tilde{\mat{E}}_1^Q)_\uparrow - (\tilde{\mat{E}}_1^Q)_\downarrow(\mat{Q}_\downarrow)^\dagger \mat{Q}_\uparrow$. A key observation is that $\mat{Q}_r$ approximates $\frac{1}{\sqrt{n}}\mat{V}_n(\zdom)$ closely (see \cref{lem:vander-eigen}), implying $(\mat{Q}_\downarrow)^\dagger\mat{Q}_\uparrow\approx \mathrm{diag}(\zdom^{-1})$. After expanding $(\tilde{\mat{E}}_1^Q)_\uparrow - (\tilde{\mat{E}}_1^Q)_\downarrow(\mat{Q}_\downarrow)^\dagger \mat{Q}_\uparrow$, we obtain the following lemma:

\begin{restatable}{lemma}{decomposeI}\label{lem:decompose_1} 
Let $\mat{P}_v$ be defined in \cref{lem:vander-eigen}. We have
\begin{subequations}
\begin{equation}\label{eq:EPPV_lem}
\begin{aligned}
    \left\|\mat{E}^Q_\uparrow-\mat{E}^Q_\downarrow(\mat{Q}_\downarrow)^\dagger\mat{Q}_\uparrow\right\|_2=&~\mathcal{O}\left(\Big\|\left(\widetilde{\mat{E}}^Q_1\mat{P}\mat{P}^\dagger_v\right)_\uparrow-\left(\widetilde{\mat{E}}^Q_1\mat{P}\right)_\downarrow(\mat{Q}_\downarrow)^\dagger\mat{Q}_\uparrow\mat{P}^\dagger_v\Big\|_2+\frac{\alpha^2\log n}{\mu^{2}_r\Delta_{\vec{z}} n}\right)\,,
\end{aligned}
\end{equation}
where
\begin{equation}\label{eqn:EPPV_1_all}
\begin{aligned}
\widetilde{\mat{E}}^Q_1\mat{P}\mat{P}^\dagger_v=
&~\sum^\infty_{k=0}\mat{E}_1^k\mat{E}_2\left(\frac{1}{\sqrt{n}}\mat{V}_n(\zdom)\right)\left(n\diag(\mudom)\right)^{-(k+1)}\\
~&- \frac{1}{n}\mat{V}_n(\zdom)\mat{V}_n(\zdom)^\dagger \mat{E}_2\left(\frac{1}{\sqrt{n}}\mat{V}_n(\zdom)\right) \left(n\diag(\mudom)\right)^{-1}\\
~&+\mathcal{O}\left(\frac{\alpha\sqrt{\log n}}{\mu^{2}_r\Delta_{\vec{z}} n^{1.5}}\right)\,,    
\end{aligned}
\end{equation}
and
\begin{equation}\label{eqn:EPPV_2_all}
\begin{aligned}
&\left(\widetilde{\mat{E}}^Q_1\mat{P}\right)_\downarrow(\mat{Q}_\downarrow)^\dagger\mat{Q}_\uparrow\mat{P}^\dagger_v\\
=&~\left(\sum^\infty_{k=0}\mat{E}_1^k\mat{E}_2\left(\frac{1}{\sqrt{n}}\mat{V}_n(\zdom)\right)\left(n\diag(\mudom)\right)^{-(k+1)}\right)_\downarrow\diag(\zdom^{-1})\\
~&-\left(\frac{1}{n}\mat{V}_n(\zdom)\mat{V}_n(\zdom)^\dagger\mat{E}_2\left(\frac{1}{\sqrt{n}}\mat{V}_n(\zdom)\right)\left(n\diag(\mudom)\right)^{-1}\right)_\downarrow\diag(\zdom^{-1})\\
~&+\mathcal{O}\left(\frac{\alpha\sqrt{\log n}}{\mu^{2}_r\Delta_{\vec{z}} n^{1.5}}\right)\,.
\end{aligned}    
\end{equation}
\end{subequations}
\end{restatable}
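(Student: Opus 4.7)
The plan is to leverage the decomposition $\mat{E}^Q = \tilde{\mat{E}}^Q_1 + \tilde{\mat{E}}^Q_2$ from \eqref{eqn:E_tilde_Q} and \eqref{eq:E_Q_up_down_2}. Since $\norm{\tilde{\mat{E}}^Q_2}_2 = \order(\alpha^2\log n/(\mu_r^2\Delta_{\vec{z}} n))$ follows directly from its definition together with $\norm{\mat{P}^{-1}}_2=\order(1)$ in \eqref{eqn:P_bound}, and since $\norm{(\mat{Q}_\downarrow)^\dagger \mat{Q}_\uparrow}_2 \leq 1$ (both are submatrices of the orthogonal matrix $\mat{Q}$), the contribution of $\tilde{\mat{E}}^Q_2$ to $\mat{E}^Q_\uparrow - \mat{E}^Q_\downarrow (\mat{Q}_\downarrow)^\dagger \mat{Q}_\uparrow$ is at most $2\norm{\tilde{\mat{E}}^Q_2}_2$ and is absorbed into the $\order(\alpha^2\log n/(\mu_r^2\Delta_{\vec{z}} n))$ error term in \eqref{eq:EPPV_lem}. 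It then suffices to analyze the $\tilde{\mat{E}}^Q_1$ contribution, which I plan to transform into the form appearing on the right-hand side of \eqref{eq:EPPV_lem}.

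To do so, I would right-multiply both summands by $\mat{P}\mat{P}_v^\dagger$. By \cref{lem:vander-eigen} and \eqref{eqn:P_bound}, both $\mat{P}\mat{P}_v^\dagger$ and $(\mat{P}\mat{P}_v^\dagger)^{-1}$ have operator norm $\order(1)$, so this multiplication only changes norms by a constant factor. The first summand becomes $(\tilde{\mat{E}}^Q_1\mat{P}\mat{P}_v^\dagger)_\uparrow$ exactly, matching \eqref{eq:EPPV_lem}. To bring the second summand into the form $(\tilde{\mat{E}}^Q_1\mat{P})_\downarrow(\mat{Q}_\downarrow)^\dagger\mat{Q}_\uparrow\mat{P}_v^\dagger$, I would commute $\mat{P}$ past $(\mat{Q}_\downarrow)^\dagger\mat{Q}_\uparrow$; the residual commutator term $(\tilde{\mat{E}}^Q_1)_\downarrow\bigl[(\mat{Q}_\downarrow)^\dagger\mat{Q}_\uparrow\mat{P} - \mat{P}(\mat{Q}_\downarrow)^\dagger\mat{Q}_\uparrow\bigr]\mat{P}_v^\dagger$ has norm at most $\order\bigl(\norm{\tilde{\mat{E}}^Q_1}_2\cdot\norm{\mat{P}-\mat{I}_r}_2\bigr)$. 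Using $\norm{\tilde{\mat{E}}^Q_1}_2=\tilde{\order}(\alpha/(\mu_r\sqrt{n}))$ (inherited from \cref{lem:second-order-formal}) and the bound on $\norm{\mat{P}-\mat{I}_r}_2$ in \eqref{eqn:P_bound}, this commutator contributes $\tilde{\order}(\alpha^2/(\mu_r^2\sqrt{\Delta_{\vec{z}}}\,n))$, which is comfortably inside the advertised $\order(\alpha^2\log n/(\mu_r^2\Delta_{\vec{z}} n))$ error since $\sqrt{\Delta_{\vec{z}}} \geq \Delta_{\vec{z}}$ when $\Delta_{\vec{z}}\leq 1$.

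For the explicit formulas \eqref{eqn:EPPV_1_all} and \eqref{eqn:EPPV_2_all}, I would substitute $\mat{Q}_r = \frac{1}{\sqrt{n}}\mat{V}_n(\zdom)\mat{P}_v$ from \cref{lem:vander-eigen} into the Neumann series defining $\tilde{\mat{E}}^Q_1\mat{P}$, and iteratively apply the identity $\mat{P}_v\mat{\Sigma}_r\mat{P}_v^\dagger = n\diag(\mudom)$, which follows from equating the eigendecomposition $\mat{T}=\mat{Q}_r\mat{\Sigma}_r\mat{Q}_r^\dagger$ with the Vandermonde decomposition \eqref{eqn:T}. This yields $\mat{P}_v\mat{\Sigma}_r^{-(k+1)}\mat{P}_v^\dagger = (n\diag(\mudom))^{-(k+1)}$ modulo an error of order $\order((k+1)/(\Delta_{\vec{z}} n))\cdot(n\mu_r)^{-(k+1)}$ arising from the near-isometry bound $\mat{P}_v\mat{P}_v^\dagger=\mat{I}_r+\order(1/(\Delta_{\vec{z}} n))$ in \eqref{eqn:P_orthogonal}. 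The subtracted projector $\frac{1}{n}\mat{V}_n(\zdom)\mat{V}_n(\zdom)^\dagger$ is kept explicitly only for $k=0$ because, for $k\geq 1$, the factor $\mat{V}_n(\zdom)^\dagger\mat{E}_1^k$ can be controlled using the separation estimates underlying \cref{prop:tail-error} and the geometric decay of $\mat{\Sigma}_r^{-(k+1)}$, producing a contribution that is geometrically smaller than the target error. The main technical obstacle will be bookkeeping: tracking the accumulated errors from the $\tilde{\mat{E}}^Q_2$ absorption, the $\mat{P}$-commutator, the $\mat{P}_v\mat{P}_v^\dagger\approx\mat{I}_r$ substitution applied $k$ times, and the geometric-series tail, and verifying that each contribution individually fits within $\order(\alpha\sqrt{\log n}/(\mu_r^2\Delta_{\vec{z}} n^{1.5}))$---a bound that is an order of magnitude tighter than the leading-order size of $\tilde{\mat{E}}^Q_1$ itself, so there is no room for any single approximation to be wasted.
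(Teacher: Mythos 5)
Your proposal follows essentially the same route as the paper's proof: peel off $\tilde{\mat{E}}^Q_2$ using its $\order(\alpha^2\log n/(\mu_r^2\Delta_{\vec{z}}n))$ norm, conjugate by $\mat{P}\mat{P}_v^\dagger$ with the commutator controlled by $\norm{\mat{E}^Q}_2\norm{\mat{P}-\mat{I}_r}_2$, substitute $\mat{Q}_r=\frac{1}{\sqrt{n}}\Vand_n(\zdom)\mat{P}_v$, replace $\mat{P}_v\mat{\Sigma}_r^{-(k+1)}\mat{P}_v^\dagger$ by $(n\diag(\mudom))^{-(k+1)}$, and kill the $\frac{1}{n}\Vand_n\Vand_n^\dagger\mat{E}_1^k$ terms for $k\ge 1$ via the tail-separation bound of \cref{prop:tail-error}. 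Your derivation of the key substitution from the exact identity $\mat{P}_v\mat{\Sigma}_r\mat{P}_v^\dagger=n\diag(\mudom)$ is in fact cleaner than the paper's \cref{clm:P_v_properties} (which argues via kernels and ranges of $\mat{T}$), and the only step you leave unspoken is the companion identity $\mat{P}_v(\mat{Q}_\downarrow)^\dagger\mat{Q}_\uparrow\mat{P}_v^\dagger=\diag(\zdom^{-1})+\order(1/(\Delta_{\vec{z}}n))$ needed to produce the $\diag(\zdom^{-1})$ factor in \cref{eqn:EPPV_2_all} — but that follows from the same Vandermonde substitution and Moitra bound you already invoke.
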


Then, we find that the main components take the form:
    \[
    \frac{1}{\sqrt{n}}\mat{A}_\uparrow\mat{v}_n(z_i)-\frac{1}{\sqrt{n}}\mat{A}_\downarrow\mat{v}_n(z_i)z^{-1}_i\,,
    \]
    where $\mat{A}$ is a Toeplitz matrix satisfying $\mat{A}=\mathcal{O}(n^{-0.5})$ and $\sup_{i,j}\left|\mat{A}_{i,j}\right|=\mathcal{O}(n^{-1})$ and $\vvec_n(z)\in\mathbb{C}^n$ is defined as $(\vvec_n(z))_i=z^i$. By leveraging the structure of Toeplitz matrices, we obtain:
    \[
    \left\|\frac{1}{\sqrt{n}}\left(\mat{A}_\uparrow\mat{v}_n(z_i)-\mat{A}_\downarrow\mat{v}_n(z_i)z^{-1}_i\right)_k\right\|_\infty=\left\|\frac{1}{\sqrt{n}}\left(-\mat{A}_{k+1,1}z^{-1}_i+\overline{\mat{A}_{k,n}}z^{n-1}_i\right)\right\|_\infty=\mathcal{O}(n^{-1.5})
    \]
    for $1\leq k\leq n-1$. This implies:
    \[
    \left\|\frac{1}{\sqrt{n}}\mat{A}_\uparrow\mat{v}_n(z_i)-\frac{1}{\sqrt{n}}\mat{A}_\downarrow\mat{v}_n(z_i)z^{-1}_i\right\|_2=\mathcal{O}(n^{-1})
    \]
    instead of the trivial bound $\mathcal{O}(n^{-0.5})$. 
    Based on this intuition, we obtain the following lemma:

\begin{restatable}{lemma}{finegrainedI}\label{lem:fine_grained_1}
\begin{align}\label{eq:EPPV_finer_bound}
\Big\|\left(\widetilde{\mat{E}}^Q_1\mat{P}\mat{P}^\dagger_v\right)_\uparrow-\left(\widetilde{\mat{E}}^Q_1\mat{P}\right)_\downarrow(\mat{Q}_\downarrow)^\dagger\mat{Q}_\uparrow\mat{P}^\dagger_v\Big\|_2 \leq \mathcal{O}\left(\frac{r^{1.5}\alpha}{\mu_r n}+\frac{\alpha\sqrt{\log n}}{\mu^{2}_r\Delta_{\vec{z}} n^{1.5}}\right)\,.
\end{align}    
\end{restatable}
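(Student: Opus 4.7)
The plan is to start from the two explicit representations of $\widetilde{\mat{E}}^Q_1\mat{P}\mat{P}^\dagger_v$ and $(\widetilde{\mat{E}}^Q_1\mat{P})_\downarrow(\mat{Q}_\downarrow)^\dagger\mat{Q}_\uparrow\mat{P}^\dagger_v$ given by \cref{lem:decompose_1} and exploit cancellation between the $\uparrow$ and $\downarrow$ truncations. Subtracting the two formulas, the quantity to be bounded becomes $M_\uparrow - M_\downarrow\diag(\zdom^{-1}) + \mathcal{O}(\alpha\sqrt{\log n}/(\mu_r^2\zgap n^{1.5}))$, where
\[
M = \sum_{k=0}^\infty \mat{E}_1^k\mat{E}_2\tfrac{1}{\sqrt{n}}\mat{V}_n(\zdom)\bigl(n\diag(\mudom)\bigr)^{-(k+1)} - \tfrac{1}{n}\mat{V}_n(\zdom)\mat{V}_n(\zdom)^\dagger\mat{E}_2\tfrac{1}{\sqrt{n}}\mat{V}_n(\zdom)\bigl(n\diag(\mudom)\bigr)^{-1}.
\]
It therefore suffices to bound each column of $M_\uparrow - M_\downarrow\diag(\zdom^{-1})$ in $\ell_2$ and use that the operator norm of a matrix with $r$ columns is at most $\sqrt{r}$ times the maximum column norm.

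The key tool for the dominant $k=0$ term is a \emph{Toeplitz shift identity}: for any Toeplitz matrix $\mat{A}$ with generators $\{a_{p-q}\}$, a direct reindexing gives
\[
\bigl(\mat{A}\vvec_n(z)\bigr)_p - \bigl(\mat{A}\vvec_n(z)\bigr)_{p+1}z^{-1} = a_{p-(n-1)}z^{n-1} - a_{p+1}z^{-1}.
\]
Applying this with $\mat{A}=\mat{E}_2$ collapses the shift discrepancy to the boundary entries of $\mat{E}_2$. The column $\ell_2$ norm of the resulting difference is then bounded by the sum of the $\ell_2$ norms of the first and last columns of $\mat{E}_2$, which is $\mathcal{O}(\alpha\sqrt{n})$ with high probability by sub-Gaussian concentration of $\|(E_0,\ldots,E_{n-1})\|_2$; crucially, this avoids the $\sqrt{\log n}$ one would incur from a max-of-entries bound. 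After dividing by the prefactor $n^{3/2}\mu_i$ and assembling $r$ columns, the $k=0$ contribution to the operator norm is $\mathcal{O}(\sqrt{r}\,\alpha/(n\mu_r))$.

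For the $k\geq 1$ sum and the projector-correction term, the matrices $\mat{E}_1^k\mat{E}_2$ and $\frac{1}{n}\mat{V}_n(\zdom)\mat{V}_n(\zdom)^\dagger\mat{E}_2$ are no longer Toeplitz, so the boundary-only identity does not apply directly. Instead, since $\mat{E}_1=\mat{V}_n(\ztail)\diag(\mutail)\mat{V}_n(\ztail)^\dagger$ and since $\frac{1}{n}\mat{V}_n(\zdom)\mat{V}_n(\zdom)^\dagger$ has its range in the Vandermonde span, both objects applied to $\vvec_n(z_i)$ yield linear combinations $\sum_j c_j\vvec_n(z_j)$, and the \emph{Vandermonde shift relation} $(\vvec_n(z_j))_\downarrow = z_j(\vvec_n(z_j))_\uparrow$ gives
\[
(\vvec_n(z_j))_\uparrow - (\vvec_n(z_j))_\downarrow\, z_i^{-1} = (1 - z_j/z_i)\,(\vvec_n(z_j))_\uparrow.
\]
For the projector-correction term the $j=i$ contribution vanishes exactly (since $1-z_i/z_i=0$), and Moitra-type well-conditioning on the remaining dominant locations controls the rest. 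For $k\geq 1$, the coefficients $c_j$ are bilinear forms $\vvec_n(z_j)^\dagger\mat{E}_2\vvec_n(z_i)$ with $z_j\in\ztail$; expanding along the Toeplitz diagonals of $\mat{E}_2$ produces geometric sums of magnitude $\mathcal{O}(\zgap^{-1})$, so that sub-Gaussian concentration yields $|c_j|=\mathcal{O}(\mu_j\alpha\sqrt{n\log n}/\zgap)$. The decay $(\|\mat{E}_1\|_2/(n\mu_r))^k\leq(\mu_{\rm tail}/\mu_r)^k\leq 8^{-k}$ controls the $k$-sum. Collecting all pieces with the residual from Step 1 yields the stated bound.

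The hard part will be the $k\geq 1$ terms and the projector correction: without a Toeplitz structure for $\mat{E}_1^k\mat{E}_2$, the sharp cancellation from the boundary identity is not available, and one must substitute the Vandermonde expansion, which in turn requires sub-Gaussian concentration for bilinear forms against Toeplitz-patterned randomness --- the source of the $\zgap^{-1}$ factor and the $\sqrt{\log n}$ appearing in the second term of the final bound. Bookkeeping the interplay between the separation $\zgap$, the tail intensity $\mu_{\rm tail}$, and the geometric decay across $k$ is where the analysis demands the most care.
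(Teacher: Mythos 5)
Your proposal is correct and its skeleton --- column-by-column reduction, the $\sqrt{r}\times(\text{max column norm})$ conversion, and the Toeplitz boundary identity applied to $\mat{E}_2$ for the leading term --- is exactly the paper's argument (the paper's $\vec{p}_0,\vec{q}_0$ computation in \eqref{eqn:p_q_0} is your shift identity; your vector-norm concentration for the boundary columns is if anything slightly cleaner than the paper's entrywise bound). Where you diverge is in the correction terms. The paper observes that $\mat{E}_1$ and $\frac{1}{n}\mat{V}_n(\zdom)\mat{V}_n(\zdom)^\dagger$ are \emph{themselves} Toeplitz matrices with entrywise bounds $\mu_{\rm tail}/\lambda_1$ and $r/n$ respectively, so the boundary identity \emph{does} apply one factor at a time: it runs a recursion on $n\|(\vec{p}_k)_\uparrow-(\vec{q}_k)_\downarrow\|_\infty+\|\vec{p}_k\|_\infty+\|\vec{q}_k\|_\infty$ that contracts by $\mathcal{O}(\mu_{\rm tail}/\mu_r)$ per power of $\mat{E}_1$, and handles the projector correction the same way (this is the source of the dominant $r^{1.5}\alpha/(\mu_r n)$ term). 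Your claim that the identity "does not apply directly" to these terms is therefore not quite right, though your substitute works. In your Vandermonde route, be explicit about one point: the factor $1-z_j/z_i$ from the shift relation gives genuine cancellation only when $z_j=z_i$ (the $j=i$ term of the projector correction); for tail locations $z_j$ it is merely $\Theta(1)$, so the $k\ge 1$ terms are \emph{not} closed by shift cancellation at all but by a pure norm bound, which succeeds only because the refined bilinear estimate $|\vvec_n(z_j)^\dagger\mat{E}_2\vvec_n(z_i)|=\mathcal{O}(\alpha\sqrt{n\log n}/\zgap)$ beats the trivial $\alpha n^{1.5}\sqrt{\log n}$ by a factor $\zgap n$; with the extra $\lambda_i^{-1}$ per power of $\mat{E}_1$ and the geometric decay in $\mu_{\rm tail}/\mu_r$ this lands inside the $\alpha\sqrt{\log n}/(\mu_r^2\zgap n^{1.5})$ term. (In the paper that second term instead arrives as the residual already present in \cref{lem:decompose_1}.) What each approach buys: the paper's iterated-Toeplitz recursion is uniform and avoids any $\zgap^{-1}$ or $\log n$ in the correction terms; your route avoids the recursion but pays $\zgap^{-1}\sqrt{\log n}$ factors and requires a union bound over the (uncontrolled) number $d-r$ of tail locations in the bilinear concentration step, plus some care to bound $\sum_{j\ne i}|c_j|\,\|\vvec_n(z_j)\|_2$ without an extra $\sqrt{r}$ in the projector term --- all absorbable, but worth the bookkeeping you flag.
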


Proofs \cref{lem:decompose_1,lem:fine_grained_1} of are deferred to \cref{sec:defer_key_eq_1st}. They immediately imply the first equation of \eqref{eqn:key_equality}.

\begin{corollary}\label{cor:first_component}
It holds that
\begin{align}\label{eq:key_ineq_1_cor}
\left\|\mat{E}^Q_\uparrow-\mat{E}^Q_\downarrow(\mat{Q}_\downarrow)^\dagger\mat{Q}_\uparrow\right\|_2=&~\mathcal{O}\left(\frac{r^{1.5}\alpha}{\mu_r n}+\frac{\alpha^2\log n}{\mu^{2}_r\Delta_{\vec{z}} n}\right)
\end{align}
\end{corollary}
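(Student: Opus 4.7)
The corollary follows by directly chaining the two preceding lemmas, so the plan is very short. First I would invoke Lemma \ref{lem:decompose_1}, which rewrites the quantity of interest
$\|\mat{E}^Q_\uparrow-\mat{E}^Q_\downarrow(\mat{Q}_\downarrow)^\dagger\mat{Q}_\uparrow\|_2$
in terms of the ``shifted'' Toeplitz-structured difference
$\|(\widetilde{\mat{E}}^Q_1\mat{P}\mat{P}^\dagger_v)_\uparrow-(\widetilde{\mat{E}}^Q_1\mat{P})_\downarrow(\mat{Q}_\downarrow)^\dagger\mat{Q}_\uparrow\mat{P}^\dagger_v\|_2$
plus an additive remainder of order $\mathcal{O}(\alpha^2\log n/(\mu_r^2\Delta_{\vec{z}} n))$. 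This reduction is justified by the near-unitarity of $\mat{P}$ and $\mat{P}_v$ (Lemma \ref{lem:vander-eigen} and Property~I of Lemma \ref{lem:good_P_construction}), together with the decomposition $\mat{E}^Q_\uparrow=(\widetilde{\mat{E}}^Q_1+\widetilde{\mat{E}}^Q_2)_\uparrow$ and $\mat{E}^Q_\downarrow=(\widetilde{\mat{E}}^Q_1+\widetilde{\mat{E}}^Q_2)_\downarrow$ from \eqref{eq:E_Q_up_down_2}, in which $\widetilde{\mat{E}}^Q_2$ has operator norm $\mathcal{O}(\alpha^2\log n/(\mu_r^2\Delta_{\vec{z}} n))$ by construction.

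Next I would apply Lemma \ref{lem:fine_grained_1} to bound the shifted quantity by $\mathcal{O}(r^{1.5}\alpha/(\mu_r n) + \alpha\sqrt{\log n}/(\mu_r^2\Delta_{\vec{z}} n^{1.5}))$. Adding this to the remainder from the first step gives three candidate terms on the right-hand side. The only remaining step is a routine comparison: the middle term $\alpha\sqrt{\log n}/(\mu_r^2\Delta_{\vec{z}} n^{1.5})$ is dominated by $\alpha^2\log n/(\mu_r^2\Delta_{\vec{z}} n)$, since their ratio equals $1/(\alpha\sqrt{n\log n})$, which is at most a constant under the standing assumptions $\alpha>1$ and $n$ sufficiently large. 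Dropping this dominated term yields exactly $\mathcal{O}(r^{1.5}\alpha/(\mu_r n) + \alpha^2\log n/(\mu_r^2\Delta_{\vec{z}} n))$, matching the statement.

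There is no real obstacle internal to the corollary itself. All substantive content lives in Lemmas \ref{lem:decompose_1} and \ref{lem:fine_grained_1}, which I am invoking as black boxes: the former does the algebraic reorganization that extracts a shifted difference from $\mat{E}^Q$, and the latter carries the Toeplitz-cancellation argument sketched before its statement, exploiting $(\mat{Q}_\downarrow)^\dagger \mat{Q}_\uparrow \approx \diag(\vec{z}_{\mathrm{dom}}^{-1})$ so that expressions of the form $n^{-1/2}(\mat{A}_\uparrow \vvec_n(z_i) - \mat{A}_\downarrow \vvec_n(z_i) z_i^{-1})$ contribute at the improved order $\mathcal{O}(n^{-1})$ rather than the trivial $\mathcal{O}(n^{-1/2})$. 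Here we merely compose those two bounds and tidy the $\mathcal{O}$-notation.
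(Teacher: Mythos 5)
Your proposal is correct and follows essentially the same route as the paper's proof: apply \cref{lem:decompose_1} (\eqref{eq:EPPV_lem}) to reduce to the shifted quantity plus the $\mathcal{O}(\alpha^2\log n/(\mu_r^2\Delta_{\vec{z}} n))$ remainder, then \cref{lem:fine_grained_1} (\eqref{eq:EPPV_finer_bound}), and absorb the dominated middle term. Your explicit ratio computation justifying that absorption is a small addition the paper leaves implicit, but the argument is the same.
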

\begin{proof}
We have
\begin{align*}
&\left\|\mat{E}^Q_\uparrow-\mat{E}^Q_\downarrow(\mat{Q}_\downarrow)^\dagger\mat{Q}_\uparrow\right\|_2\\
\leq &~ \mathcal{O}\left(\Big\|\left(\widetilde{\mat{E}}^Q_1\mat{P}\mat{P}^\dagger_v\right)_\uparrow-\left(\widetilde{\mat{E}}^Q_1\mat{P}\right)_\downarrow(\mat{Q}_\downarrow)^\dagger\mat{Q}_\uparrow\mat{P}^\dagger_v\Big\|_2+\frac{\alpha^2\log n}{\mu^{2}_r\Delta_{\vec{z}} n}\right)\\
\leq &~ \mathcal{O}\left(\frac{r^{1.5}\alpha}{\mu_r n}+\frac{\alpha\sqrt{\log n}}{\mu^{2}_r\Delta_{\vec{z}} n^{1.5}} + \frac{\alpha^2\log n}{\mu^{2}_r\Delta_{\vec{z}} n}\right)\\
= &~ \mathcal{O}\left(\frac{r^{1.5}\alpha}{\mu_r n}+\frac{\alpha^2\log n}{\mu^{2}_r\Delta_{\vec{z}} n}\right)\,,
\end{align*}
where the first step follows from \eqref{eq:EPPV_lem}, the second step follows from \eqref{eq:EPPV_finer_bound}.
\end{proof}

\paragraph{Proof of the second equation in \cref{eqn:key_equality}.}
The idea is similar to the proof of the first equation. First, we (approximately) decompose $\mat{Q}_\downarrow^\dagger\left(\mat{E}^Q_\uparrow-\mat{E}^Q_\downarrow(\mat{Q}_\downarrow)^\dagger\mat{Q}_\uparrow\right)$ into four terms: 
\begin{restatable}{lemma}{decomposeII}\label{lem:decompose_2} 
Let $\mat{V}_r:=\frac{1}{\sqrt{n}}\mat{V}_n(\zdom)$. We have
\begin{equation}\label{eq:deompose_2}
\begin{aligned}
\left\|\mat{Q}_\downarrow^\dagger\left(\mat{E}^Q_\uparrow-\mat{E}^Q_\downarrow(\mat{Q}_\downarrow)^\dagger\mat{Q}_\uparrow\right)\right\|_2\leq &~\mathcal{O}\left(\left\|\left((\mat{V}_r)_\uparrow\right)^\dagger\left(\widetilde{\mat{E}}^Q_1\right)_\uparrow\mat{P}\right\|_2+\left\|\left((\mat{V}_r)_\uparrow\right)^\dagger\left(\widetilde{\mat{E}}^Q_2\mat{P}\right)_\uparrow\right\|_2\right.\\
~&+ \left\|\left((\mat{V}_r)_\downarrow\right)^\dagger\left(\widetilde{\mat{E}}^Q_1\right)_\downarrow\mat{P}\right\|_2+\left\|\left((\mat{V}_r)_\downarrow\right)^\dagger\left(\widetilde{\mat{E}}^Q_2\mat{P}\right)_\downarrow\right\|_2\\
~&+\left.\frac{\alpha\sqrt{\log n}}{\mu_r(\Delta_{\vec{z}} n)^{1.5}}\right)\,.    
\end{aligned}
\end{equation}
\end{restatable}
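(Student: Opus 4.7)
The plan is to reduce the quantity $\|\mat{Q}_\downarrow^\dagger(\mat{E}^Q_\uparrow - \mat{E}^Q_\downarrow(\mat{Q}_\downarrow)^\dagger\mat{Q}_\uparrow)\|_2$ to the stated four terms by a chain of substitutions. First I apply the triangle inequality to separate the $\mat{E}^Q_\uparrow$ and $\mat{E}^Q_\downarrow$ contributions, using $\|(\mat{Q}_\downarrow)^\dagger \mat{Q}_\uparrow\|_2 \le 1$ to absorb the trailing factor in the second piece. Then I substitute the exact factorization $\mat{Q}_\downarrow = (\mat{V}_r)_\downarrow \mat{P}_v$ obtained from \cref{lem:vander-eigen}, and use $\|\mat{P}_v\|_2 = \mathcal{O}(1)$ to bound the quantity by $\mathcal{O}(\|(\mat{V}_r)_\downarrow^\dagger \mat{E}^Q_\uparrow\|_2 + \|(\mat{V}_r)_\downarrow^\dagger \mat{E}^Q_\downarrow\|_2)$. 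For the first of these I apply the Vandermonde shift identity $(\mat{V}_r)_\downarrow^\dagger = \diag(\zdom^{-1})\,(\mat{V}_r)_\uparrow^\dagger$ (which holds because $|z_i|=1$) and absorb the unit-norm diagonal factor into the constant, converting the $\uparrow$ piece to $(\mat{V}_r)_\uparrow^\dagger \mat{E}^Q_\uparrow$. This matches the upper-slice-with-upper-slice pattern in the target bound, while the $\downarrow$ piece stays in its natural $(\mat{V}_r)_\downarrow^\dagger \mat{E}^Q_\downarrow$ form.

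Next I substitute the second-order expansion \eqref{eqn:informal_expansion} from \cref{lem:good_P_construction}, which lets me write $\mat{E}^Q = \widetilde{\mat{E}}^Q_1 + \widetilde{\mat{E}}^Q_2 + \mat{R}$ with $\|\mat{R}\|_2 = \mathcal{O}(\alpha/(\mu_r\Delta_{\vec{z}} n^{1.5}))$. A further triangle inequality breaks each of the two remaining terms into three pieces: the $\widetilde{\mat{E}}^Q_1$ and $\widetilde{\mat{E}}^Q_2$ contributions plus a residual from $\mat{R}$. To produce the specific $\mat{P}$-decorated form in the target bound I insert $\mat{P}\mat{P}^{-1}$ and use $\|\mat{P}^{-1}\|_2 = \mathcal{O}(1)$ from \eqref{eqn:P_bound}. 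Since right-multiplication by the $r \times r$ matrix $\mat{P}$ commutes with the row-slicing operators, $\mat{P}$ may be placed either inside or outside the $\uparrow, \downarrow$ brackets to match the asymmetric forms $\|((\mat{V}_r)_\uparrow)^\dagger (\widetilde{\mat{E}}^Q_1)_\uparrow \mat{P}\|_2$ and $\|((\mat{V}_r)_\uparrow)^\dagger (\widetilde{\mat{E}}^Q_2 \mat{P})_\uparrow\|_2$ stated in the lemma. The residual from $\mat{R}$ is controlled via \cref{thm:moitra} by $\|\mat{V}_r\|_2 \cdot \|\mat{R}\|_2 = \mathcal{O}(\alpha/(\mu_r\Delta_{\vec{z}} n^{1.5}))$, which is dominated by the stated remainder $\mathcal{O}(\alpha\sqrt{\log n}/(\mu_r(\Delta_{\vec{z}} n)^{1.5}))$ using $\Delta_{\vec{z}} \le 2$ and $\log n \ge 1$.

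Since \cref{lem:decompose_2} is a purely structural decomposition rather than a sharp bound on any individual component, the genuinely hard work of establishing cancellation on each of the four right-hand-side quantities is deferred to the lemmas that follow (most notably the fine-grained Toeplitz estimates in \cref{lem:decompose_1} and \cref{lem:fine_grained_1}, together with their analogs applied to $\widetilde{\mat{E}}^Q_2$). Here the only real obstacle is careful bookkeeping: one must track which factor of $\mat{P}^{\pm 1}$ lives on which side of every expression, verify that row slicing truly commutes with right-multiplication by $r \times r$ matrices, and confirm that the $\mat{R}$ residual from the second-order expansion fits cleanly inside the final error budget.
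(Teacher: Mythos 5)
Your proposal is correct and follows essentially the same route as the paper: triangle inequality to split off the $(\mat{Q}_\downarrow)^\dagger\mat{Q}_\uparrow$ factor, conversion of $\mat{Q}_\downarrow^\dagger$ to the Vandermonde basis via \cref{lem:vander-eigen}, the decomposition $\mat{E}^Q=\widetilde{\mat{E}}^Q_1+\widetilde{\mat{E}}^Q_2+(\text{residual})$, and insertion of $\mat{P}\mat{P}^{-1}$ to reach the four stated terms. Your use of the exact shift identity $((\mat{V}_r)_\downarrow)^\dagger=\diag(\zdom^{-1})((\mat{V}_r)_\uparrow)^\dagger$ is in fact slightly cleaner than the paper's handling, which makes a round trip through approximate $\mat{P}_v\mat{P}_v^\dagger\approx\Id_r$ insertions and absorbs the resulting $\order(1/(\zgap n))$ errors into the remainder.
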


Then, we upper-bound the four terms. 
\begin{itemize}
    \item For $\left((\mat{V}_r)_\uparrow\right)^\dagger\left(\widetilde{\mat{E}}^Q_2\mat{P}\right)_\uparrow$ and $\left((\mat{V}_r)_\downarrow\right)^\dagger\left(\widetilde{\mat{E}}^Q_2\mat{P}\right)_\downarrow$, first notice that $(\mat{V}_r)^\adj (\tilde{\mat{E}}_2^Q)=\mat{0}$, since $\widetilde{\mat{E}}^Q_2\mat{P}\propto \mat{\Pi}_{\mat{Q}^\perp_r}\tilde{\mat{Q}}_{2}$ and the column spaces of $\mat{V}_r$ and $\mat{Q}_r$ are the same. Although the matrices in these two terms involve $\uparrow$ and $\downarrow$, we are still able to show that their norms are small.
    \item For $\left((\mat{V}_r)_\uparrow\right)^\dagger\left(\widetilde{\mat{E}}^Q_1\mat{P}\right)_\uparrow$ and $\left((\mat{V}_r)_\downarrow\right)^\dagger\left(\widetilde{\mat{E}}^Q_1\mat{P}\right)_\downarrow$, we use the structural properties of $\mat{V}_r$, $\mat{E}_1$, and $\mat{E}_2$ to bound each entry of these terms, which will lead to the spectral norm bounds.
\end{itemize}
More specifically, we establish the following lemma:
\begin{restatable}{lemma}{finegrainedII}\label{lem:fine_grained_2}
\begin{subequations}
\begin{equation}\label{eq:V_EQ_P_up}
\left\|\left((\mat{V}_r)_\uparrow\right)^\dagger\left(\widetilde{\mat{E}}^Q_2\mat{P}\right)_\uparrow\right\|_2 \leq {\cal O}\left(\frac{r\alpha^2\log n}{\mu^{2}_r\Delta_{\vec{z}} n^{1.5}}\right),~~~
\left\|\left((\mat{V}_r)_\uparrow\right)^\dagger\left(\widetilde{\mat{E}}^Q_1\right)_\uparrow\mat{P}\right\|_2\leq \mathcal{O}\left(\left(r+\frac{\sqrt{r}}{\Delta_{\vec{z}}}\right)\frac{\alpha\sqrt{\log n}}{\mu_rn^{1.5}}\right)\,,    
\end{equation}
and
\begin{equation}\label{eq:V_EQ_P_down}
\left\|\left((\mat{V}_r)_\downarrow\right)^\dagger\left(\widetilde{\mat{E}}^Q_2\mat{P}\right)_\downarrow\right\|_2 \leq {\cal O}\left(\frac{r\alpha^2\log n}{\mu^{2}_r\Delta_{\vec{z}} n^{1.5}}\right),~~~
\left\|\left((\mat{V}_r)_\downarrow\right)^\dagger\left(\widetilde{\mat{E}}^Q_1\right)_\downarrow\mat{P}\right\|_2\leq \mathcal{O}\left(\left(r+\frac{\sqrt{r}}{\Delta_{\vec{z}}}\right)\frac{\alpha\sqrt{\log n}}{\mu_rn^{1.5}}\right)\,,     
\end{equation}
\end{subequations}
\end{restatable}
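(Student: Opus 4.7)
The argument hinges on the elementary ``row-sum splitting'' identity: for any $\mat{X}\in\mathbb{C}^{n\times r}$,
\begin{align*}
((\mat{V}_r)_\uparrow)^\dagger\mat{X}_\uparrow \;=\; \mat{V}_r^\dagger\mat{X} - ((\mat{V}_r)_{n,:})^\dagger\mat{X}_{n,:}, \qquad ((\mat{V}_r)_\downarrow)^\dagger\mat{X}_\downarrow \;=\; \mat{V}_r^\dagger\mat{X} - ((\mat{V}_r)_{1,:})^\dagger\mat{X}_{1,:}.
\end{align*}
Each of the four quantities in \cref{eq:V_EQ_P_up,eq:V_EQ_P_down} is thus decomposed into a ``full product'' $\mat{V}_r^\dagger\mat{X}$ and a rank-one ``missing-row correction'' of operator norm at most $\sqrt{r/n}\cdot\|\mat{X}_{n,:}\|_2$ or $\sqrt{r/n}\cdot\|\mat{X}_{1,:}\|_2$ (using $\|(\mat{V}_r)_{n,:}\|_2=\|(\mat{V}_r)_{1,:}\|_2=\sqrt{r/n}$). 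The plan is to show the full products are small because $\mat{V}_r^\dagger$ (nearly) annihilates the left-most factor appearing in $\widetilde{\mat{E}}^Q_1$ and $\widetilde{\mat{E}}^Q_2$, and to bound the single-row corrections sharply using sub-Gaussian concentration on the Toeplitz structure of $\mat{E}_2$.

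For the $\widetilde{\mat{E}}^Q_2$ terms, \cref{lem:good_P_construction} gives $\widetilde{\mat{E}}^Q_2=\mathcal{O}(\alpha^2\log n/(\mu_r^2\zgap n))\cdot\mat{\Pi}_{\mat{Q}_r^\perp}\tilde{\mat{Q}}_2\mat{P}^{-1}$; since $\mat{V}_r$ and $\mat{Q}_r$ span the same column space by \cref{lem:vander-eigen}, $\mat{V}_r^\dagger\mat{\Pi}_{\mat{Q}_r^\perp}=\mat{0}$ and the full product vanishes \emph{exactly}. Only the missing-row correction survives; combining the trivial bound $\|(\widetilde{\mat{E}}^Q_2\mat{P})_{n,:}\|_2\le\|\widetilde{\mat{E}}^Q_2\mat{P}\|_2=\mathcal{O}(\alpha^2\log n/(\mu_r^2\zgap n))$ with the $\sqrt{r/n}$ factor yields $\mathcal{O}(\sqrt{r}\alpha^2\log n/(\mu_r^2\zgap n^{3/2}))$, well within the claim. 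The $\downarrow$ version is identical after replacing $\mat{e}_n$ by $\mat{e}_1$.

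For the $\widetilde{\mat{E}}^Q_1$ terms, write $\widetilde{\mat{E}}^Q_1\mat{P}=(\mat{I}_n-\mat{V}_r\mat{V}_r^\dagger)\mat{M}$ with $\mat{M}:=\sum_{k\ge 0}\mat{E}_1^k\mat{E}_2\mat{Q}_r\mat{\Sigma}_r^{-(k+1)}$, for which $\|\mat{M}\|_2=\mathcal{O}(\alpha\sqrt{\log n}/(\mu_r\sqrt{n}))$ via a geometric series in $\mu_{\rm tail}/\mu_r\le 1/8$. The full product is controlled by the identity $\mat{V}_r^\dagger(\mat{I}_n-\mat{V}_r\mat{V}_r^\dagger)=(\mat{I}_r-\mat{V}_r^\dagger\mat{V}_r)\mat{V}_r^\dagger$ combined with $\|\mat{I}_r-\mat{V}_r^\dagger\mat{V}_r\|_2=\mathcal{O}(1/(\zgap n))$ from \cref{lem:vander-eigen}, yielding $\|\mat{V}_r^\dagger\widetilde{\mat{E}}^Q_1\mat{P}\|_2=\mathcal{O}(\alpha\sqrt{\log n}/(\zgap\mu_r n^{3/2}))$---the $\sqrt{r}/\zgap$ piece of the bound. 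For the missing-row correction, the trivial estimate $\|(\widetilde{\mat{E}}^Q_1\mat{P})_{n,:}\|_2\le\|\widetilde{\mat{E}}^Q_1\mat{P}\|_2$ misses the target by a factor of $\sqrt{n/r}$; instead I write the $n$-th row as $\mat{u}^\dagger\mat{M}$ with $\mat{u}=(\mat{I}_n-\mat{V}_r\mat{V}_r^\dagger)\mat{e}_n$, $\|\mat{u}\|_2\le 1$, and establish the sharper subclaim that, with high probability, $\|\mat{w}^\dagger\mat{E}_2\mat{Q}_r\|_2=\mathcal{O}(\sqrt{r}\alpha\sqrt{\log n}\cdot\|\mat{w}\|_2)$ uniformly over the deterministic directions $\mat{w}=\mat{E}_1^k\mat{u}$. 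Via the Toeplitz generator decomposition $\mat{E}_2=E_0\mat{I}_n+\sum_{k\ge 1}(E_k\mat{J}_k+\overline{E_k}\mat{J}_k^\dagger)$ (with $\mat{J}_k$ the $k$-th shift matrix), each scalar $\mat{w}^\dagger\mat{E}_2(\mat{Q}_r)_{:,j}$ becomes a linear combination of the independent sub-Gaussians $\{E_k\}$ with total sub-Gaussian parameter $\mathcal{O}(\alpha\|\mat{w}\|_2)$, and a union bound over $j\in[r]$ gives the subclaim. Combining with $\|\mat{E}_1^k\mat{u}\|_2\le(n\mu_{\rm tail})^k$ and $\|\mat{\Sigma}_r^{-1}\|_2^{k+1}=\mathcal{O}((n\mu_r)^{-(k+1)})$, summing in $k$ (again a geometric series in $\mu_{\rm tail}/\mu_r$) yields $\|(\widetilde{\mat{E}}^Q_1\mat{P})_{n,:}\|_2=\mathcal{O}(\sqrt{r}\alpha\sqrt{\log n}/(\mu_r n))$, and multiplying by $\sqrt{r/n}$ recovers the $r$ part of the claim. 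The $\downarrow$ bound follows symmetrically with $\mat{e}_1$ in place of $\mat{e}_n$.

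The main obstacle is the sub-Gaussian concentration subclaim: any proof relying solely on the spectral norm $\|\mat{E}_2\|_2=\mathcal{O}(\alpha\sqrt{n\log n})$ loses a factor of $\sqrt{n/r}$ in the row-correction step, which would break the target $n^{-3/2}$ scaling of \cref{thm:improve_similarity}. Recovering this factor genuinely requires the independence of the Toeplitz generators $\{E_k\}$, exploited through the generator decomposition of $\mat{E}_2$ so that $\mat{w}^\dagger\mat{E}_2\mat{Q}_r$ can be handled entry-wise as a sum of independent sub-Gaussians of controlled variance.
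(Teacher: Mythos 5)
Your treatment of the $\widetilde{\mat{E}}^Q_2$ terms and of the ``full product'' part of the $\widetilde{\mat{E}}^Q_1$ terms is sound and essentially parallels the paper's proof (the paper phrases the row-splitting entrywise and passes through the Frobenius norm, but the content is the same). The gap is in your key concentration subclaim. It is not true that for an arbitrary deterministic $\mat{w}$ and a unit vector $\vec{q}$ the scalar $\mat{w}^\dagger\mat{E}_2\vec{q}$ is sub-Gaussian with parameter $\order(\alpha\norm{\mat{w}}_2)$: the generator decomposition expresses it as $E_0\,\mat{w}^\dagger\vec{q}+\sum_{k\ge1}\bigl(E_k\,\mat{w}^\dagger\mat{J}_k\vec{q}+\overline{E_k}\,\mat{w}^\dagger\mat{J}_k^\dagger\vec{q}\bigr)$, and although each coefficient is at most $\norm{\mat{w}}_2\norm{\vec{q}}_2$, there are $\Theta(n)$ of them, so the variance proxy can be as large as $\Theta(n\alpha^2\norm{\mat{w}}_2^2)$. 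Taking $\mat{w}=\vec{q}=\onevec/\sqrt{n}$ gives $\mat{w}^\dagger\mat{E}_2\vec{q}=E_0+\tfrac{2}{n}\sum_{k\ge1}(n-k)\Re(E_k)$, whose standard deviation is $\Theta(\sqrt{n})\,\alpha$, not $\order(\alpha)$. The claim does hold when $\mat{w}$ is a standard basis vector (then $\mat{w}^\dagger\mat{J}_k\vec{q}$ picks out single entries of $\vec{q}$ and the squared coefficients sum to $\norm{\vec{q}}_2^2$) --- this is exactly the estimate the paper uses for the term $(\mat{E}_2\vvec_n(z_1))_n$ --- but your directions $\mat{w}_k=\mat{E}_1^k\mat{u}$ for $k\ge1$ lie in the span of the tail Vandermonde vectors and are nothing like basis vectors, so independence of the generators alone does not deliver the claimed parameter.

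This matters quantitatively: with only the spectral-norm bound on $\mat{E}_2$, the $k\ge1$ part of the row correction contributes $\sum_{k\ge1}(\mu_{\rm tail}/\mu_r)^k\cdot\order(\alpha\sqrt{\log n}/(\mu_r\sqrt{n}))=\order(\mu_{\rm tail}\alpha\sqrt{\log n}/(\mu_r^2\sqrt n))$ to $\norm{(\widetilde{\mat{E}}^Q_1\mat{P})_{n,:}}_2$, which exceeds the needed $\order(\sqrt{r}\alpha\sqrt{\log n}/(\mu_r n))$ by a factor $\sim\mu_{\rm tail}\sqrt{n}/(\sqrt{r}\mu_r)$, since $\mu_{\rm tail}$ may be as large as $\mu_r/8$. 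Closing this requires an ingredient you have not invoked: the paper suppresses all $k\ge1$ terms via the near-orthogonality $\norm{\mat{V}_r^\dagger\mat{E}_{\rm tail}}_2=\order(\mu_{\rm tail}/\zgap)$ (\cref{clm:bound_V_dom_E1}, which rests on the separation between dominant and tail locations), handling them in \eqref{eq:second_Q_E_t1_bound} rather than as part of any row correction; alternatively one can use the Toeplitz entrywise bound $|(\mat{E}_1)_{ab}|\le\mu_{\rm tail}$ to get $\norm{\mat{e}_n^\dagger\mat{E}_1}_2\le\sqrt{n}\,\mu_{\rm tail}$, a deterministic $\sqrt{n}$ gain that again has nothing to do with concentration of $\mat{E}_2$. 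As written, your argument for the $\widetilde{\mat{E}}^Q_1$ missing-row correction does not go through.
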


Proofs of \cref{lem:decompose_2,lem:fine_grained_2} are deferred to \cref{sec:defer_key_eq_2nd}. They imply the second inequality of \eqref{eqn:key_equality}.

\begin{corollary}\label{cor:second_component}
It holds that
\begin{align}\label{eq:key_ineq_2_cor}
    \left\|\mat{Q}_\downarrow^\dagger\left(\mat{E}^Q_\uparrow-\mat{E}^Q_\downarrow(\mat{Q}_\downarrow)^\dagger\mat{Q}_\uparrow\right)\right\|_2=&~ \mathcal{O}\left(\frac{r\alpha^2\log n}{\mu^{2}_r\Delta_{\vec{z}} n^{1.5}} + \frac{\alpha\sqrt{\log n}}{\mu_r(\Delta_{\vec{z}} n)^{1.5}}\right)\,.
\end{align}
\end{corollary}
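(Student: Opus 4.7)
The plan is to obtain \cref{cor:second_component} by a direct combination of the decomposition in \cref{lem:decompose_2} with the four term-by-term bounds in \cref{lem:fine_grained_2}, followed by a routine simplification that absorbs the subdominant contributions into the stated right-hand side. The structure mirrors the proof of \cref{cor:first_component}, so no new conceptual ingredient is needed beyond the two lemmas the corollary invokes.

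Concretely, I would first apply \cref{eq:deompose_2} to reduce $\|\mat{Q}_\downarrow^\dagger(\mat{E}^Q_\uparrow-\mat{E}^Q_\downarrow(\mat{Q}_\downarrow)^\dagger\mat{Q}_\uparrow)\|_2$ to a sum of the four terms $\|((\mat{V}_r)_\uparrow)^\dagger(\widetilde{\mat{E}}^Q_1)_\uparrow\mat{P}\|_2$, $\|((\mat{V}_r)_\uparrow)^\dagger(\widetilde{\mat{E}}^Q_2\mat{P})_\uparrow\|_2$, $\|((\mat{V}_r)_\downarrow)^\dagger(\widetilde{\mat{E}}^Q_1)_\downarrow\mat{P}\|_2$, $\|((\mat{V}_r)_\downarrow)^\dagger(\widetilde{\mat{E}}^Q_2\mat{P})_\downarrow\|_2$, plus the explicit residual $\mathcal{O}(\alpha\sqrt{\log n}/(\mu_r(\Delta_{\vec{z}}n)^{1.5}))$, which already matches the second term of the advertised bound. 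Next I would substitute \cref{eq:V_EQ_P_up,eq:V_EQ_P_down}: the two $\widetilde{\mat{E}}^Q_2$ pieces each contribute $\mathcal{O}(r\alpha^2\log n/(\mu_r^2\Delta_{\vec{z}}n^{1.5}))$, and the two $\widetilde{\mat{E}}^Q_1$ pieces each contribute $\mathcal{O}((r+\sqrt{r}/\Delta_{\vec{z}})\alpha\sqrt{\log n}/(\mu_r n^{1.5}))$.

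The only remaining task is to verify that the $\widetilde{\mat{E}}^Q_1$ contribution is dominated by the first term of the advertised bound. Splitting $r+\sqrt{r}/\Delta_{\vec{z}}$ into its two pieces, the summand $r\alpha\sqrt{\log n}/(\mu_r n^{1.5})$ is absorbed into $r\alpha^2\log n/(\mu_r^2\Delta_{\vec{z}}n^{1.5})$ because the standing hypotheses $\alpha>1$ and $\mu_r,\Delta_{\vec{z}}\le 1$ force $\alpha\sqrt{\log n}/(\mu_r\Delta_{\vec{z}})\ge 1$; similarly, the summand $\sqrt{r}\alpha\sqrt{\log n}/(\mu_r\Delta_{\vec{z}}n^{1.5})$ is absorbed into the same term since the ratio $\sqrt{r}\alpha\sqrt{\log n}/\mu_r$ is $\ge 1$. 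Collecting everything produces precisely \cref{eq:key_ineq_2_cor}.

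I do not foresee any genuine obstacle at this stage: the heavy lifting (the Toeplitz structure of $\mat{E}_2$, the orthogonality $\mat{V}_r^\dagger\widetilde{\mat{E}}^Q_2=\mat{0}$ up to $\uparrow/\downarrow$ truncation artifacts, and the Moitra-type singular-value bounds on $\mat{V}_r$) is exactly what \cref{lem:decompose_2,lem:fine_grained_2} have already extracted. The one thing I would double-check when writing the proof out in full is that the asymptotic absorption above is uniformly valid under the sample-complexity assumption on $n$ in \cref{thm:esprit-1.5_formal}, so that none of the implicit constants degrade when $\Delta_{\vec{z}}$ or $\mu_r$ is small.
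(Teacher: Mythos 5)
Your proposal is correct and follows exactly the paper's own proof: apply the decomposition of \cref{lem:decompose_2}, substitute the four bounds from \cref{lem:fine_grained_2}, and absorb the $\widetilde{\mat{E}}^Q_1$ contributions $r\alpha\sqrt{\log n}/(\mu_r n^{1.5})$ and $\sqrt{r}\alpha\sqrt{\log n}/(\mu_r\Delta_{\vec{z}}n^{1.5})$ into the leading term using $\alpha>1$, $\mu_r\le 1$, and $r\ge 1$. No differences worth noting.
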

\begin{proof}
We have
\begin{align*}
\left\|\mat{Q}_\downarrow^\dagger\left(\mat{E}^Q_\uparrow-\mat{E}^Q_\downarrow(\mat{Q}_\downarrow)^\dagger\mat{Q}_\uparrow\right)\right\|_2
\leq&~ \mathcal{O}\left(\left\|\left((\mat{V}_r)_\uparrow\right)^\dagger\left(\widetilde{\mat{E}}^Q_1\right)_\uparrow\mat{P}\right\|_2+\left\|\left((\mat{V}_r)_\uparrow\right)^\dagger\left(\widetilde{\mat{E}}^Q_2\mat{P}\right)_\uparrow\right\|_2\right.\\
~&+\left.
\left\|\left((\mat{V}_r)_\downarrow\right)^\dagger\left(\widetilde{\mat{E}}^Q_1\right)_\downarrow\mat{P}\right\|_2+\left\|\left((\mat{V}_r)_\downarrow\right)^\dagger\left(\widetilde{\mat{E}}^Q_2\mat{P}\right)_\downarrow\right\|_2\right.\\
~&+\left.\frac{\alpha\sqrt{\log n}}{\mu_r(\Delta_{\vec{z}} n)^{1.5}}\right)\\
\leq &~ \mathcal{O}\left(\frac{r\alpha^2\log n}{\mu^{2}_r\Delta_{\vec{z}} n^{1.5}} + \frac{r\alpha\sqrt{\log n}}{\mu_rn^{1.5}}+\frac{\sqrt{r}\alpha\sqrt{\log n}}{\mu_r\Delta_{\vec{z}}n^{1.5}}+\frac{\alpha\sqrt{\log n}}{\mu_r(\Delta_{\vec{z}} n)^{1.5}}\right)\\
= &~ \mathcal{O}\left(\frac{r\alpha^2\log n}{\mu^{2}_r\Delta_{\vec{z}} n^{1.5}} + \frac{\alpha\sqrt{\log n}}{\mu_r(\Delta_{\vec{z}} n)^{1.5}}\right)\,,
\end{align*}
where the first step follows from \eqref{eq:deompose_2}, and the second step follows from \eqref{eq:V_EQ_P_up} and \eqref{eq:V_EQ_P_down}.
\end{proof}

\subsection{Proof of \texorpdfstring{\cref{thm:key_comparision}}{Theorem}}\label{sec:align_all}
\begin{proof}%
First, the constructions of the matrices $\mat{U}_r$ and $\mat{P}$ are shown in \cref{lem:good_P_construction}. Since $\mat{U}_r$ is an orthogonal matrix that changes the basis of $\mat{Q}$, without loss of generality, we may consider $\mat{U}_r = \mat{I}_r$. It remains to prove that \eqref{eqn:matrix_perb_improve} holds for this $\mat{P}$.  

By \cref{lem:coarse_bound}, we have
\begin{align*}
&\left\|\mat{P}\mat{\hat{Q}}^{+}_\downarrow\mat{\hat{Q}}_\uparrow\mat{P}^{-1}-\mat{Q}^{+}_\downarrow\mat{Q}_\uparrow\right\|_2\\
\leq &~ \left\|(\mat{E}^Q_\downarrow)^\dagger\left(\mat{E}^Q_\uparrow-\mat{E}^Q_\downarrow(\mat{Q}_\downarrow)^\dagger\mat{Q}_\uparrow\right)\right\|_2+\mathcal{O}\left(1+\frac{\alpha\sqrt{\log n}}{\mu_r\sqrt{\Delta_{\vec{z}} n}}\right)\left\|\mat{Q}_\downarrow^\dagger\left(\mat{E}^Q_\uparrow-\mat{E}^Q_\downarrow(\mat{Q}_\downarrow)^\dagger\mat{Q}_\uparrow\right)\right\|_2\\
~&+ \mathcal{O}\left(\frac{\alpha^3\log^3 n}{\mu^{3}_r(\Delta_{\vec{z}} n)^{1.5}} \right)\\
\leq &~ \left\|\mat{E}^Q\right\|_2\cdot \left\|\left(\mat{E}^Q_\uparrow-\mat{E}^Q_\downarrow(\mat{Q}_\downarrow)^\dagger\mat{Q}_\uparrow\right)\right\|_2+\mathcal{O}(1)\cdot \left\|\mat{Q}_\downarrow^\dagger\left(\mat{E}^Q_\uparrow-\mat{E}^Q_\downarrow(\mat{Q}_\downarrow)^\dagger\mat{Q}_\uparrow\right)\right\|_2\\
&+\mathcal{O}\left(\frac{\alpha^3\log^3 n}{\mu^{3}_r(\Delta_{\vec{z}} n)^{1.5}} \right)\,,
\end{align*}
where we use $\left\|(\mat{E}^Q_\downarrow)^\dagger\right\|_2\leq \left\|\mat{E}^Q\right\|_2$ and $n=\Omega\left(\frac{\alpha^2}{\mu_r^2\Delta_{\vec{z}}}\right)$ in the second step.

By \eqref{eqn:second_Q_perturb} and \eqref{eq:key_ineq_1_cor},
\begin{align*}
    \left\|\mat{E}^Q\right\|_2\cdot \left\|\left(\mat{E}^Q_\uparrow-\mat{E}^Q_\downarrow(\mat{Q}_\downarrow)^\dagger\mat{Q}_\uparrow\right)\right\|_2 \leq &~ \mathcal{O}\left(\frac{\alpha\sqrt{\log n}}{\mu_r\sqrt{\Delta_{\vec{z}}n}}\cdot \left(\frac{r^{1.5}\alpha}{\mu_r n}+\frac{\alpha^2\log n}{\mu^{2}_r\Delta_{\vec{z}} n}\right)\right)\\
    =&~ {\cal O}\left(\frac{r^{1.5}\alpha^2\sqrt{\log n}}{\mu_r^{2}\Delta_{\vec{z}}^{0.5} n^{1.5}}+\frac{\alpha^3\log^{1.5} n}{\mu^{3}_r\Delta_{\vec{z}}^{1.5} n^{1.5}}\right)\,.
\end{align*}
And by \eqref{eq:key_ineq_2_cor},
\begin{align*}
    {\cal O}(1)\cdot \left\|\mat{Q}_\downarrow^\dagger\left(\mat{E}^Q_\uparrow-\mat{E}^Q_\downarrow(\mat{Q}_\downarrow)^\dagger\mat{Q}_\uparrow\right)\right\|_2=&~ \mathcal{O}\left(\frac{r\alpha^2\log n}{\mu^{2}_r\Delta_{\vec{z}} n^{1.5}} + \frac{\alpha\sqrt{\log n}}{\mu_r(\Delta_{\vec{z}} n)^{1.5}}\right)\,.
\end{align*}

Thus, we obtain that
\begin{align*}
    \left\|\mat{P}\mat{\hat{Q}}^{+}_\downarrow\mat{\hat{Q}}_\uparrow\mat{P}^{-1}-\mat{Q}^{+}_\downarrow\mat{Q}_\uparrow\right\|_2
    \leq &~ {\cal O}\left(\frac{\alpha^3\log^{1.5} n}{\mu^{3}_r\Delta_{\vec{z}}^{1.5} n^{1.5}}+\frac{r^{1.5}\alpha^2\log n}{\mu^{2}_r\Delta_{\vec{z}} n^{1.5}} \right)\,,
\end{align*}
which completes the proof of the theorem.
\end{proof}

\ifdefined\isarxiv
\vspace{1em}
\noindent\textbf{Acknowledgments.}
This material is partially supported by the U.S. Department of Energy, Office of Science, National Quantum Information Science Research Centers, Quantum Systems Accelerator (Z.D.), by the U.S.\ Department of Energy, Office of Science, Office of Advanced Scientific Computing Research, Department of Energy Computational Science Graduate Fellowship under Award Number DE-SC0021110 (E.N.E.), by the Applied Mathematics Program of the US Department of Energy (DOE) Office of Advanced Scientific Computing Research under contract number DE-AC02-05CH1123 and the Simons Investigator program (L.L.), and by DOE Grant No. DE-SC0024124 (R.Z.). 
We thank the Institute for Pure and Applied Mathematics (IPAM) for its hospitality throughout the semester-long program   ``Mathematical and Computational Challenges in Quantum Computing'' in Fall 2023 during which this work was initiated. We thank Haoya Li, Hongkang Ni, Joel Tropp, Rob Webber, and Lexing Ying for insightful discussions.

\vspace{1em}
\noindent\textbf{Disclaimer.}
This report was prepared as an account of work sponsored by an agency of the
United States Government. Neither the United States Government nor any agency thereof, nor
any of their employees, makes any warranty, express or implied, or assumes any legal liability
or responsibility for the accuracy, completeness, or usefulness of any information, apparatus,
product, or process disclosed, or represents that its use would not infringe privately owned
rights. Reference herein to any specific commercial product, process, or service by trade name,
trademark, manufacturer, or otherwise does not necessarily constitute or imply its
endorsement, recommendation, or favoring by the United States Government or any agency
thereof. The views and opinions of authors expressed herein do not necessarily state or reflect
those of the United States Government or any agency thereof.
\fi

\newpage

\appendix

\section{Preliminaries}\label{sec:prelim}
\subsection{Notation} \label{sec:notation}
We use $\iu\coloneqq\sqrt{-1}$ to denote the imaginary unit. For a complex number $z=a + b\iu\in \C$, we use $\overline{z}=a-b\iu$ to denote the complex conjugate of $z$.  

We use $\mathbb{T}$ to denote the unit circle in the complex plane. We use $\|\cdot\|_2$ to denote the matrix spectral norm and $\|\cdot\|_{\rm F}$ to denote the Frobenius norm. 
$\norm{\cdot}_p$ denotes the vector $p$-norm.

For a matrix $\mat{A}\in \C^{n\times m}$, we use $\mat{A}(:a,:b)$ to denote the submatrix with rows from 1 to $a$ and columns from 1 to $b$ and define $\mat{A}_r$ as the matrix that contains first $r$ columns of $\mat{A}$. We use $\mat{A}^\adj$ to represent the Hermitian transpose (or conjugate transpose) of $\mat{A}$, i.e., $(\mat{A}^\adj)_{ij}=\overline{\mat{A}_{ji}}$. In addition, $\lambda_i(\mat{A})$ means the $i$-th eigenvalue of $\mat{A}$, $\sigma_i(\mat{A})$ means the $i$-th singular value of $\mat{A}$, $\lambda_{\min}(\mat{A})$ means the smallest nonzero eigenvalue of $\mat{A}$ if $\mat{A}$ is nonnegative definite, and $\sigma_{\min}(\mat{A})$ means the smallest nonzero singular value of $\mat{A}$. For a matrix $\mat{A}$ with orthonormal columns, we use $\mat{\Pi}_{\mat{A}}$ and $\mat{\Pi}_{\mat{A}^\perp}$ to denote the projectors to the column space and the orthogonal subspace:
\begin{equation*}
    \mat{\Pi}_{\mat{A}} \coloneqq \mat{A}\mat{A}^\adj, \quad \mat{\Pi}_{\mat{A}^\perp} \coloneqq \Id - \mat{\Pi}_{\mat{A}}.
\end{equation*}
For two matrices $\mat{A},\mat{B}\in \C^{m\times n}$ and some quantity $f\in \mathbb{R}_+$, we use $\mat{A}\leq \mat{B}+f$ to denote $\norm{\mat{A}-\mat{B}}_2\leq f$.

\subsection{Pseudoinverses} \label{sec:pseudoinverse}

We will make frequent use of the (Moore--Penrose) pseudoinverse:

\begin{definition}[Moore--Penrose pseudoinverse]\label{def:pseudoinverse}
Let $\mat{A}\in\mathbb{C}^{m\times n}$ be a matrix with SVD $\mat{A}=\mat{U}\mat{\Sigma}\mat{V}^\dagger$. The \emph{Moore--Penrose pseudoinverse} (or just pseudoinverse) of $\mat{A}$ is defined as
\[
\mat{A}^+=\mat{V}\mat{\Sigma}^{+}\mat{U}^\dagger\,,
\]
where $\mat{\Sigma}^\pinv$ is an $n\times m$ matrix with $(\mat{\Sigma}^\pinv)_{ij}=(\mat{\Sigma}_{ii})^{-1}$ if $i = j$ and $\mat{\Sigma}_{ii}\ne 0$ and  $(\mat{\Sigma}^\pinv)_{ij} = 0$ otherwise.
\end{definition}
In this paper, we only use the pseudoinverse of full-rank (rectangular) matrices. In particular, if $\mat{A}\in\mathbb{C}^{m\times n}$ is a matrix of full column rank, its pseudoinverse is
\begin{equation}\label{eq:pseudo_inv_col}
\mat{A}^+=\left(\mat{A}^\dagger\mat{A}\right)^{-1}\mat{A}^\dagger\,.
\end{equation}

We will make use of the reverse order for the pseudoinverse \cite{gre66}:

\begin{fact}[Reverse order law of pseudoinverse]\label{fac:rev_order_pinv}
For a full column-rank matrix $\mat{A}\in \C^{m\times n}$ and a full row-rank matrix $\mat{B}\in \C^{n\times p}$ , $(\mat{A}\mat{B})^{\pinv}=\mat{B}^{\pinv}\mat{A}^{\pinv}$.    
\end{fact}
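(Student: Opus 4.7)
The plan is to invoke uniqueness of the Moore--Penrose pseudoinverse: $(\mat{A}\mat{B})^\pinv$ is the unique matrix $\mat{X}$ satisfying the four Penrose conditions $\mat{M}\mat{X}\mat{M}=\mat{M}$, $\mat{X}\mat{M}\mat{X}=\mat{X}$, $(\mat{M}\mat{X})^\dagger=\mat{M}\mat{X}$, and $(\mat{X}\mat{M})^\dagger=\mat{X}\mat{M}$ for $\mat{M}=\mat{A}\mat{B}$. I will take the candidate $\mat{X}=\mat{B}^\pinv\mat{A}^\pinv$ and verify the four identities directly, thereby concluding $(\mat{A}\mat{B})^\pinv = \mat{B}^\pinv\mat{A}^\pinv$.

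The key simplification is that the full-rank hypotheses trivialize the usual subtleties in the reverse-order law. Since $\mat{A}$ has full column rank, the one-sided formula \cref{eq:pseudo_inv_col} gives $\mat{A}^\pinv=(\mat{A}^\dagger\mat{A})^{-1}\mat{A}^\dagger$, so $\mat{A}^\pinv\mat{A}=\mat{I}_n$. A transposed argument applied to $\mat{B}^\dagger$ (which has full column rank $n$) yields $\mat{B}^\pinv=\mat{B}^\dagger(\mat{B}\mat{B}^\dagger)^{-1}$ and hence $\mat{B}\mat{B}^\pinv=\mat{I}_n$. Substituting $\mat{X}=\mat{B}^\pinv\mat{A}^\pinv$ into $\mat{M}\mat{X}\mat{M}$ and $\mat{X}\mat{M}\mat{X}$, the two interior identities $\mat{B}\mat{B}^\pinv=\mat{I}_n$ and $\mat{A}^\pinv\mat{A}=\mat{I}_n$ each cancel a pair of adjacent factors and collapse the products to $\mat{A}\mat{B}$ and $\mat{B}^\pinv\mat{A}^\pinv$, respectively, establishing the first two Penrose conditions.

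For the Hermiticity conditions, the same cancellations reduce $\mat{M}\mat{X}$ to $\mat{A}\mat{A}^\pinv$ and $\mat{X}\mat{M}$ to $\mat{B}^\pinv\mat{B}$. Both are manifestly self-adjoint from the one-sided formulas: $\mat{A}\mat{A}^\pinv=\mat{A}(\mat{A}^\dagger\mat{A})^{-1}\mat{A}^\dagger$ and $\mat{B}^\pinv\mat{B}=\mat{B}^\dagger(\mat{B}\mat{B}^\dagger)^{-1}\mat{B}$ are each of the form $\mat{C}\mat{H}^{-1}\mat{C}^\dagger$ with $\mat{H}$ Hermitian positive definite, hence equal to their own adjoints. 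Uniqueness of the pseudoinverse then yields $(\mat{A}\mat{B})^\pinv=\mat{B}^\pinv\mat{A}^\pinv$.

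There is no serious obstacle: the argument is entirely elementary, and the full column-rank and full row-rank hypotheses are exactly what makes the two cancellations $\mat{A}^\pinv\mat{A}=\mat{I}_n$ and $\mat{B}\mat{B}^\pinv=\mat{I}_n$ available. In the general rank-deficient setting the reverse-order law can fail and must be replaced by Greville-type compatibility conditions; the assumed full-rank structure short-circuits all of these subtleties.
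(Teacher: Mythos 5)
Your proof is correct. The paper does not actually prove this statement -- it is recorded as a Fact with a citation to Greville -- so there is no in-paper argument to compare against; your verification of the four Penrose conditions for the candidate $\mat{X}=\mat{B}^\pinv\mat{A}^\pinv$ is the standard and complete way to establish it. The two cancellation identities $\mat{A}^\pinv\mat{A}=\Id_n$ and $\mat{B}\mat{B}^\pinv=\Id_n$ follow as you say from \cref{eq:pseudo_inv_col} and its row-rank analogue (obtained from $\mat{B}^\pinv=((\mat{B}^\adj)^\pinv)^\adj$), and they immediately give $\mat{M}\mat{X}\mat{M}=\mat{M}$, $\mat{X}\mat{M}\mat{X}=\mat{X}$, $\mat{M}\mat{X}=\mat{A}\mat{A}^\pinv$, and $\mat{X}\mat{M}=\mat{B}^\pinv\mat{B}$, the last two being manifestly Hermitian. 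The only implicit ingredient is that the SVD-based pseudoinverse of \cref{def:pseudoinverse} is the unique solution of the four Penrose equations; this equivalence is standard and worth a one-line remark, but it is not a gap.
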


Perturbation theory for pseudoinverses is provided by the following theorem \cite[eq.~(21)]{Han87}:

\begin{theorem}[Perturbation for pseudoinverses] \label{thm:pseudoinverse-perturbation}
    Let $\mat{A}$ be a matrix of full column rank and suppose that $\norm{\mat{E}}_2 < \sigma_{\rm min}(\mat{A})$.
    Then
    \begin{equation*}
        \norm{(\mat{A}+\mat{E})^\pinv - \mat{A}^\pinv}_2 \le \frac{3\norm{\smash{\mat{A}^\pinv}}_2^2\norm{\mat{E}}_2}{1 - \norm{\smash{\mat{A}^\pinv}}_2\norm{\mat{E}}_2}.
    \end{equation*}
\end{theorem}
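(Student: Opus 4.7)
The plan is to derive a compact Wedin-style perturbation identity for $\mat{B}^\pinv - \mat{A}^\pinv$, where $\mat{B} \coloneqq \mat{A}+\mat{E}$, and then bound its two pieces directly. First, I would verify that $\mat{B}$ also has full column rank. Weyl's inequality for singular values combined with $\norm{\mat{E}}_2 < \sigma_{\min}(\mat{A}) = 1/\norm{\mat{A}^\pinv}_2$ gives $\sigma_{\min}(\mat{B}) \geq \sigma_{\min}(\mat{A}) - \norm{\mat{E}}_2 > 0$, so $\mat{B}^\pinv = (\mat{B}^\adj\mat{B})^{-1}\mat{B}^\adj$ is well-defined, both $\mat{A}^\pinv\mat{A} = \Id$ and $\mat{B}^\pinv\mat{B} = \Id$ hold, and one gets the standard bound $\norm{\mat{B}^\pinv}_2 \leq \norm{\mat{A}^\pinv}_2/(1-\eta)$ where $\eta \coloneqq \norm{\mat{A}^\pinv}_2\norm{\mat{E}}_2 \in [0,1)$.

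The central step is to establish the identity
\begin{equation*}
\mat{B}^\pinv - \mat{A}^\pinv = -\mat{A}^\pinv\mat{E}\mat{B}^\pinv - \mat{A}^\pinv(\Id - \mat{B}\mat{B}^\pinv).
\end{equation*}
This can be verified directly: $-\mat{A}^\pinv\bigl(\mat{E}\mat{B}^\pinv + \Id - \mat{B}\mat{B}^\pinv\bigr) = -\mat{A}^\pinv\bigl(\Id - (\mat{B}-\mat{E})\mat{B}^\pinv\bigr) = -\mat{A}^\pinv + \mat{A}^\pinv\mat{A}\mat{B}^\pinv = -\mat{A}^\pinv + \mat{B}^\pinv$, using $\mat{A}^\pinv\mat{A} = \Id$. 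I expect the main obstacle to be bounding the residual projector term $\mat{A}^\pinv(\Id - \mat{B}\mat{B}^\pinv)$: the trivial bound $\norm{\mat{A}^\pinv}_2$ is too weak and would lead to a vacuous estimate, while the more commonly cited Wedin form that routes through $(\mat{B}^\adj\mat{B})^{-1}\mat{E}^\adj(\Id - \mat{A}\mat{A}^\pinv)$ would cost two factors of $\norm{\mat{B}^\pinv}_2$ and yield only a $1/(1-\eta)^2$ denominator, losing the form claimed.

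The resolution is to exploit that $\mat{A}$ is close to $\mat{B}$: for any unit vector $v$, set $w \coloneqq (\Id - \mat{B}\mat{B}^\pinv)v$. Then $\mat{B}^\adj w = 0$ forces $\mat{A}^\adj w = -\mat{E}^\adj w$, so $\mat{A}^\pinv w = (\mat{A}^\adj\mat{A})^{-1}\mat{A}^\adj w = -(\mat{A}^\adj\mat{A})^{-1}\mat{E}^\adj w$ and, with $\norm{(\mat{A}^\adj\mat{A})^{-1}}_2 = \norm{\mat{A}^\pinv}_2^2$, this yields $\norm{\mat{A}^\pinv(\Id - \mat{B}\mat{B}^\pinv)}_2 \leq \norm{\mat{A}^\pinv}_2^2\norm{\mat{E}}_2$. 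Combining this with $\norm{\mat{A}^\pinv\mat{E}\mat{B}^\pinv}_2 \leq \norm{\mat{A}^\pinv}_2\norm{\mat{E}}_2 \cdot \norm{\mat{A}^\pinv}_2/(1-\eta)$ via the triangle inequality in the identity gives
\begin{equation*}
\norm{\mat{B}^\pinv - \mat{A}^\pinv}_2 \leq \norm{\mat{A}^\pinv}_2^2 \norm{\mat{E}}_2 \left( 1 + \frac{1}{1-\eta} \right) = \frac{(2-\eta)\,\norm{\mat{A}^\pinv}_2^2 \norm{\mat{E}}_2}{1-\eta},
\end{equation*}
and since $2-\eta \leq 2 \leq 3$ the advertised bound follows (in fact with the slightly better constant $2$). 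The only non-routine insight in this plan is the reduction of the residual projector term via the orthogonality identity $\mat{A}^\adj w = -\mat{E}^\adj w$ for $w$ orthogonal to the column space of $\mat{B}$; everything else is bookkeeping.
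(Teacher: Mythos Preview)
The paper does not prove this theorem; it simply cites it as \cite[eq.~(21)]{Han87}. Your argument is correct and self-contained: the identity $\mat{B}^\pinv - \mat{A}^\pinv = -\mat{A}^\pinv\mat{E}\mat{B}^\pinv - \mat{A}^\pinv(\Id - \mat{B}\mat{B}^\pinv)$ is verified cleanly, and the key trick of rewriting $\mat{A}^\adj w = -\mat{E}^\adj w$ for $w$ in the orthogonal complement of the range of $\mat{B}$ gives exactly the extra factor of $\norm{\mat{A}^\pinv}_2\norm{\mat{E}}_2$ needed on the second term. As you note, this actually produces the sharper constant $2-\eta \le 2$ rather than $3$, so your proof is a slight improvement over the stated bound.
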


\subsection{Matrix norms}
In this paper, we mainly focus on the spectral norm of a matrix: $\|\mat{A}\|_2\coloneqq \sup_{\|\vec{x}\|_2=1}\|\mat{A}\vec{x}\|_2$ for any $\mat{M}\in \C^{n\times m}$. We will frequently use the following two basic facts about the spectral norm:
\begin{fact}[Sub-multiplicativity of spectral norm]
For any $\mat{A}\in \C^{n\times m}$ and $\mat{B}\in \C^{m\times p}$, it holds that
\begin{align*}
    \|\mat{A}\mat{B}\|_2\leq \|\mat{A}\|_2\cdot \|\mat{B}\|_2\,.
\end{align*}
\end{fact}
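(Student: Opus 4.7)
The plan is to prove this by directly unfolding the definition of the spectral norm and chaining two applications of the fundamental inequality $\|\mat{M}\vec{y}\|_2 \le \|\mat{M}\|_2 \|\vec{y}\|_2$, which is itself an immediate consequence of the definition $\|\mat{M}\|_2 = \sup_{\|\vec{x}\|_2=1}\|\mat{M}\vec{x}\|_2$ together with homogeneity.

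Concretely, I would fix an arbitrary vector $\vec{x}\in\mathbb{C}^p$ with $\|\vec{x}\|_2 = 1$ and apply the basic inequality first to the pair $(\mat{A},\mat{B}\vec{x})$ to obtain $\|\mat{A}(\mat{B}\vec{x})\|_2 \le \|\mat{A}\|_2\,\|\mat{B}\vec{x}\|_2$, and then to the pair $(\mat{B},\vec{x})$ to get $\|\mat{B}\vec{x}\|_2 \le \|\mat{B}\|_2\,\|\vec{x}\|_2 = \|\mat{B}\|_2$. Chaining these two bounds yields $\|(\mat{A}\mat{B})\vec{x}\|_2 \le \|\mat{A}\|_2\|\mat{B}\|_2$ for every unit vector $\vec{x}$, and taking the supremum over such $\vec{x}$ on the left-hand side gives $\|\mat{A}\mat{B}\|_2 \le \|\mat{A}\|_2\|\mat{B}\|_2$ by definition of the spectral norm of $\mat{A}\mat{B}$.

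There is no real obstacle here; the only thing worth being careful about is the degenerate case $\mat{B}\vec{x} = \vec{0}$, which is handled automatically because the inequality $\|\mat{A}\vec{0}\|_2 = 0 \le \|\mat{A}\|_2 \cdot 0$ is trivial, so one does not actually need to normalize $\mat{B}\vec{x}$ before applying the bound for $\mat{A}$. Accordingly, I expect the proof to fit in two or three lines and require no additional lemmas beyond the definition of $\|\cdot\|_2$ already recalled in \cref{sec:notation}.
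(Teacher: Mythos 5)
Your proof is correct and complete: fixing a unit vector $\vec{x}$, chaining $\|\mat{A}(\mat{B}\vec{x})\|_2 \le \|\mat{A}\|_2\|\mat{B}\vec{x}\|_2 \le \|\mat{A}\|_2\|\mat{B}\|_2$, and taking the supremum is exactly the standard argument. The paper states this as a fact without giving any proof, so there is nothing to compare against; your two-line derivation from the definition of $\|\cdot\|_2$ is precisely what is implicitly intended.
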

\begin{fact}[Spectral norm of submatrix]\label{fac:submatrix_norm}
Let $\mat{A}\in \C^{n\times m}$ and $\mat{B}$ be any sub-matrix of $\mat{A}$. Then, we have
\begin{align*}
    \|\mat{B}\|_2\leq \|\mat{A}\|_2\,.
\end{align*}
\end{fact}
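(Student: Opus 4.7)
The plan is to reduce the fact to a straightforward consequence of the variational characterization of the spectral norm, $\|\mat{M}\|_2 = \sup_{\|\vec{x}\|_2=1,\|\vec{y}\|_2=1}|\vec{y}^\adj \mat{M}\vec{x}|$, together with the observation that any submatrix can be realized as $\mat{S}_1 \mat{A}\mat{S}_2^\adj$ where $\mat{S}_1$ and $\mat{S}_2$ are row-selection matrices (matrices whose rows are a subset of the rows of an identity matrix).

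First, I would write $\mat{B} = \mat{S}_1 \mat{A} \mat{S}_2^\adj$, where $\mat{S}_1 \in \complex^{n'\times n}$ consists of the rows of $\Id_n$ indexed by the selected row set of $\mat{B}$ and $\mat{S}_2 \in \complex^{m'\times m}$ consists of the rows of $\Id_m$ indexed by the selected column set. Since the rows of each $\mat{S}_i$ are distinct standard basis vectors, they are orthonormal, so $\mat{S}_i \mat{S}_i^\adj = \Id$, which yields $\|\mat{S}_i\|_2 = 1$ (equivalently, $\mat{S}_i^\adj$ is an isometric embedding).

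Then I would apply submultiplicativity of the spectral norm to conclude
\begin{equation*}
\|\mat{B}\|_2 = \|\mat{S}_1 \mat{A} \mat{S}_2^\adj\|_2 \le \|\mat{S}_1\|_2 \|\mat{A}\|_2 \|\mat{S}_2^\adj\|_2 = \|\mat{A}\|_2.
\end{equation*}
An alternative, self-contained path uses the variational characterization directly: given unit vectors $\vec{x}\in\complex^{m'}$ and $\vec{y}\in\complex^{n'}$ attaining the supremum for $\mat{B}$, extend them to $\tilde{\vec{x}} = \mat{S}_2^\adj \vec{x}\in\complex^m$ and $\tilde{\vec{y}} = \mat{S}_1^\adj\vec{y}\in\complex^n$. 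Since $\mat{S}_i^\adj$ is an isometric embedding, $\|\tilde{\vec{x}}\|_2 = \|\tilde{\vec{y}}\|_2 = 1$, and $\vec{y}^\adj \mat{B}\vec{x} = \tilde{\vec{y}}^\adj \mat{A}\tilde{\vec{x}}$, so $\|\mat{B}\|_2 \le \|\mat{A}\|_2$.

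There is no real obstacle here; the only thing worth being careful about is that ``submatrix'' is interpreted in the standard sense of selecting a subset of rows and a subset of columns (not necessarily contiguous), which is exactly the setting covered by the two selection matrices $\mat{S}_1,\mat{S}_2$ above.
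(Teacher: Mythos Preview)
Your proof is correct and follows essentially the same approach as the paper: both arguments reduce to zero-padding vectors so that the action of $\mat{B}$ is realized as the action of $\mat{A}$ on a restricted set of unit vectors. The only cosmetic difference is that you package this via selection matrices $\mat{S}_1,\mat{S}_2$ and submultiplicativity, whereas the paper writes $\mat{A}$ in block form after a permutation and works directly with the one-sided characterization $\|\mat{M}\|_2=\sup_{\|\vec{x}\|_2=1}\|\mat{M}\vec{x}\|_2$.
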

\begin{proof}
Without loss of generality, we may assume that $\mat{A}=\begin{bmatrix}\mat{B} & *\\ * & *\end{bmatrix}$, since permuting the columns and rows does not change the spectral norm. Then, we have
\begin{align*}
    \|\mat{B}\|_2=\sup_{\norm{\vec{x}}_2=1}~\norm{\mat{B}\vec{x}}_2 \leq \sup_{|\vec{x}|_2=1} ~ \norm{\begin{bmatrix}\mat{B} & *\\ * & *\end{bmatrix} \begin{bmatrix}\vec{x} & \vec{0}\end{bmatrix}}_2\leq \sup_{\|\vec{y}\|_2=1}~\norm{\mat{A}\vec{y}}_2=\|\mat{A}\|_2\,.
\end{align*}
The result is proven.
\end{proof}

\subsection{Neumann series}

The Neumann series is one of the most powerful tools in matrix analysis.
It is the matrix generalization of a geometric series.

\begin{definition}[Neumann series]\label{def:neumann}
Given a matrix $\mat{A}\in \C^{n\times n}$ with $\|\mat{A}\|_2< 1$, the \emph{Neumann series} of $\mat{A}\in \C^{n\times n}$ is the convergent series representation
\begin{equation*}
    (\Id_n - \mat{A})^{-1} = \sum_{k=0}^\infty \mat{A}^k.
\end{equation*}
\end{definition}

\subsection{Eigenvalue and eigenvector perturbation theory} \label{app:eigen-perturbation}

Our argument will require several standard results from matrix perturbation theory.
First, we will make use of additive and multiplicative perturbation theorems for eigenvalues of Hermitian matrices.
For additive perturbation, we will use Weyl's theorem:

\begin{theorem}[Weyl's theorem] \label{thm:weyl}
    Let $\mat{A},\mat{E}\in\complex^{n\times n}$ be Hermitian matrices.
    Then, 
    \begin{equation*}
        \lambda_i(\mat{A}) - \norm{\mat{E}}_2 \le \lambda_i(\mat{A}+\mat{E}) \le \lambda_i(\mat{A}) + \norm{\mat{E}}_2 \for i=1,2,\ldots,n.
    \end{equation*}
\end{theorem}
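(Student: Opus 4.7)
The plan is to prove Weyl's theorem via the Courant--Fischer min--max characterization of eigenvalues of Hermitian matrices. Since $\mat{A}$, $\mat{E}$, and $\mat{A}+\mat{E}$ are all Hermitian, their eigenvalues are real and, fixing (say) the decreasing convention $\lambda_1 \geq \cdots \geq \lambda_n$, admit the variational representation
\[
\lambda_i(\mat{A}) \;=\; \min_{\substack{S \subseteq \complex^n \\ \dim(S) = n-i+1}} \;\max_{\substack{\vec{v} \in S \\ \|\vec{v}\|_2 = 1}} \; \vec{v}^\adj \mat{A} \vec{v},
\]
with the analogous expression for $\mat{A}+\mat{E}$.

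The key observation that drives everything is that, because $\mat{E}$ is Hermitian, its spectral norm coincides with the supremum of the absolute Rayleigh quotient: $\|\mat{E}\|_2 = \max_{\|\vec{v}\|_2=1} |\vec{v}^\adj \mat{E} \vec{v}|$. Consequently, for every unit vector $\vec{v}$,
\[
\vec{v}^\adj \mat{A} \vec{v} - \|\mat{E}\|_2 \;\leq\; \vec{v}^\adj (\mat{A}+\mat{E}) \vec{v} \;\leq\; \vec{v}^\adj \mat{A} \vec{v} + \|\mat{E}\|_2,
\]
and the $\pm \|\mat{E}\|_2$ offset is uniform in $\vec{v}$. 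I would then substitute these pointwise bounds into the min--max formula for $\lambda_i(\mat{A}+\mat{E})$ and pull the uniform constants outside both the inner maximum over unit vectors in $S$ and the outer minimum over subspaces $S$ of dimension $n-i+1$. This immediately gives $\lambda_i(\mat{A}) - \|\mat{E}\|_2 \leq \lambda_i(\mat{A}+\mat{E}) \leq \lambda_i(\mat{A}) + \|\mat{E}\|_2$, which is the claimed two-sided bound.

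I do not anticipate any substantive obstacle: the result is a direct consequence of the variational formula together with the identification of $\|\mat{E}\|_2$ as a Rayleigh-quotient supremum. The only bookkeeping point is to be explicit about the ordering convention for eigenvalues (increasing versus decreasing); in either convention, the symmetric form $|\lambda_i(\mat{A}+\mat{E}) - \lambda_i(\mat{A})| \leq \|\mat{E}\|_2$ is preserved, since applying the argument once with $\mat{E}$ and once with $-\mat{E}$ yields both directions of the inequality.
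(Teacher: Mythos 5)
Your proof is correct. The paper itself offers no proof of this statement: it is quoted in Appendix A.5 as a standard result from matrix perturbation theory (alongside Ostrowski's theorem), so there is no in-paper argument to compare against. Your route — the Courant--Fischer min--max characterization combined with the uniform Rayleigh-quotient bound $|\vec{v}^\adj \mat{E}\vec{v}| \le \norm{\mat{E}}_2$ for unit vectors $\vec{v}$, pushing the constant $\pm\norm{\mat{E}}_2$ through the inner maximum and outer minimum — is the classical proof of Weyl's inequality, and every step you describe goes through; your remark that the symmetric form is insensitive to the ordering convention (apply the argument to $\mat{E}$ and $-\mat{E}$) is also accurate.
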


For multiplicative perturbation theory, we will use Ostrowski's theorem.

\begin{theorem}[Ostrowski's theorem]\label{thm:ostrowski}
    Let $\mat{A} \in \complex^{r\times r}$ be a nonsingular Hermitian matrix and $\mat{B} \in \complex^{n\times r}$ have full column-rank.
    Then, $\mat{B}\mat{A}\mat{B}^\adj$ has $r$ nonzero eigenvalues satisfying the bounds
    \begin{equation*}
        \sigma_{\rm min}^2(\mat{B}) \lambda_i(\mat{A}) \le \lambda_i\left(\mat{B}\mat{A}\mat{B}^\adj\right) \le \sigma_{\rm max}^2(\mat{B}) \lambda_i(\mat{A}) \for i=1,2,\ldots,r.
    \end{equation*}
\end{theorem}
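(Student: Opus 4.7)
The plan is to reduce the rectangular case to the classical square, Hermitian-congruence form of Ostrowski's theorem, and then prove that square case by Courant--Fischer.

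For the reduction, since $\mat{B}$ has full column rank, the Gram matrix $\mat{C} \coloneqq \mat{B}^\adj \mat{B} \in \complex^{r\times r}$ is Hermitian positive definite with eigenvalues $\sigma_i^2(\mat{B})$, and its Hermitian positive definite square root $\mat{S} \coloneqq \mat{C}^{1/2}$ satisfies $\sigma_{\min}(\mat{S}) = \sigma_{\min}(\mat{B})$ and $\sigma_{\max}(\mat{S}) = \sigma_{\max}(\mat{B})$. The nonzero eigenvalues of the outer product $\mat{B}\mat{A}\mat{B}^\adj$ coincide with the eigenvalues of $\mat{A}\mat{B}^\adj\mat{B} = \mat{A}\mat{C}$ (the standard fact that nonzero eigenvalues of $\mat{X}\mat{Y}$ and $\mat{Y}\mat{X}$ agree), and the latter is similar to the Hermitian matrix $\mat{S}\mat{A}\mat{S}$ under conjugation by $\mat{S}$. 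Thus, it suffices to show
\begin{equation*}
    \sigma_{\min}^2(\mat{S})\,\lambda_i(\mat{A}) \le \lambda_i(\mat{S}\mat{A}\mat{S}) \le \sigma_{\max}^2(\mat{S})\,\lambda_i(\mat{A}) \for i = 1,\ldots,r.
\end{equation*}

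For the square case, I would apply the Courant--Fischer min-max characterization
\begin{equation*}
    \lambda_i(\mat{S}\mat{A}\mat{S}) = \max_{\substack{V \subseteq \complex^r\\ \dim V = i}} \min_{\substack{x \in V\\ x \neq 0}} \frac{x^\adj \mat{S}\mat{A}\mat{S}\, x}{x^\adj x}.
\end{equation*}
The key substitution is $y = \mat{S} x$, which is a bijection on $\complex^r$ (since $\mat{S}$ is invertible). If $W \coloneqq \mat{S} V$, then $\dim W = \dim V = i$, the numerator becomes $y^\adj \mat{A} y$, and the denominator becomes $y^\adj \mat{S}^{-2} y$. Factoring the resulting Rayleigh quotient as $(y^\adj \mat{A} y / y^\adj y)\cdot(y^\adj y / y^\adj \mat{S}^{-2} y)$ and using the uniform bound $\sigma_{\min}^2(\mat{S}) \le y^\adj y / y^\adj \mat{S}^{-2} y \le \sigma_{\max}^2(\mat{S})$ allows one to sandwich the min-max expression between $\sigma_{\min}^2(\mat{S})\,\lambda_i(\mat{A})$ and $\sigma_{\max}^2(\mat{S})\,\lambda_i(\mat{A})$ via the analogous min-max characterization of $\lambda_i(\mat{A})$.

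The main obstacle is sign bookkeeping: when $\lambda_i(\mat{A}) < 0$, multiplying by the positive factor $y^\adj y / y^\adj \mat{S}^{-2} y$ flips the direction of the bounds, and the inequalities stated in the theorem implicitly assume the sign of $\lambda_i(\mat{A})$ is compatible with them (as is the case in the intended application, where the relevant Hermitian matrix is positive semidefinite). The cleanest way to handle this uniformly is to prove Ostrowski's original multiplicative form $\lambda_i(\mat{S}\mat{A}\mat{S}) = \theta_i\, \lambda_i(\mat{A})$ with $\theta_i \in [\sigma_{\min}^2(\mat{S}), \sigma_{\max}^2(\mat{S})]$, which immediately specializes to the stated two-sided inequality whenever $\lambda_i(\mat{A}) \ge 0$, and to split indices into positive and negative eigenvalue blocks (using the dual min-max for the negative block) to cover the general nonsingular Hermitian case.
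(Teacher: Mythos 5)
Your proposal is correct, but it is worth noting that the paper does not actually prove this statement: it cites Horn and Johnson (Thm.~4.5.9) for the square, nonsingular case and simply asserts that ``the generalization to $\mat{B}$ rectangular is straightforward.'' You supply both halves yourself. Your reduction --- passing from $\mat{B}\mat{A}\mat{B}^\adj$ to $\mat{A}\mat{B}^\adj\mat{B}$ via the equality of nonzero spectra of $\mat{X}\mat{Y}$ and $\mat{Y}\mat{X}$, and then to the Hermitian matrix $\mat{S}\mat{A}\mat{S}$ with $\mat{S}=(\mat{B}^\adj\mat{B})^{1/2}$ --- is equivalent to the thin-SVD reduction one would otherwise write ($\mat{B}=\mat{U}\mat{\Sigma}\mat{V}^\adj$, so the nonzero eigenvalues of $\mat{B}\mat{A}\mat{B}^\adj$ are those of $(\mat{\Sigma}\mat{V}^\adj)\mat{A}(\mat{\Sigma}\mat{V}^\adj)^\adj$), and your Courant--Fischer argument with the substitution $y=\mat{S}x$ and the uniform bound $\sigma_{\min}^2(\mat{S})\le y^\adj y/y^\adj\mat{S}^{-2}y\le\sigma_{\max}^2(\mat{S})$ is the standard proof of the square case. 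Your sign caveat is also well taken: as literally written, the chain $\sigma_{\min}^2(\mat{B})\lambda_i(\mat{A})\le\lambda_i(\mat{B}\mat{A}\mat{B}^\adj)\le\sigma_{\max}^2(\mat{B})\lambda_i(\mat{A})$ is inconsistent whenever $\lambda_i(\mat{A})<0$, so the theorem is only meaningful in the form you state (the multiplicative form $\lambda_i(\mat{S}\mat{A}\mat{S})=\theta_i\lambda_i(\mat{A})$ with $\theta_i\in[\sigma_{\min}^2,\sigma_{\max}^2]$) or under a sign restriction; the paper only ever applies it with $\mat{A}=\diag(\mudom)$ positive definite, where your direct sandwich argument already closes the proof without the positive/negative block splitting. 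The net effect is that your write-up is more self-contained and more careful about the statement's scope than the paper's one-line justification.
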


Ostrowski's theorem for the case in which $\mat{A}$ and $\mat{B}$ are both square appears in \cite[Thm.~4.5.9]{horn_matrix_2012}.
The generalization to $\mat{B}$ rectangular presented here is straightforward.

Next, we will need perturbation bounds of eigenspaces of Hermitian matrices.
The classical measure of the distance between subspaces is the principal angle, which admits multiple equivalent definitions \cite[Sec.~I.5]{SS90}.
We use the following definition: for matrices $\mat{U}$ and $\mat{V}$ with orthonormal columns, the sin of the principal angle between the ranges of $\mat{U}$ and $\mat{V}$ is
\begin{equation*}
    \sin \theta(\mat{U},\mat{V}) \coloneqq \norm{\outprod{\mat{U}} - \outprod{\mat{V}}}_2.
\end{equation*}
With this definition in hand, we now state the classical result for the perturbation of eigenspaces of Hermitian matrices, due to a Davis and Kahan
\cite[$\sin \theta$ theorem]{davis_rotation_1970}: 

\begin{theorem}[Davis--Kahan $\sin \theta$ theorem] \label{thm:sin}
Let $\mat{T}$, $\hat{\mat{T}}$, $\mat{Q}$, $\hat{\mat{Q}}$ be defined as \cref{sec:notation}, 
    \begin{equation*}
        \sin \theta\left(\mat{Q}_r,\mat{\hat{Q}}_r\right) \le \frac{\norm{\mat{E}\mat{Q}_r}_2}{\lambda_r(\mat{T}) - \lambda_{r+1}(\mat{\hat{T}})}.
    \end{equation*}
\end{theorem}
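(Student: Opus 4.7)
The plan is to prove this classical Davis--Kahan bound via the Sylvester equation technique, which gives the sharpest constant and automatically produces the refined form with $\norm{\mat{E}\mat{Q}_r}_2$ (rather than $\norm{\mat{E}}_2$) in the numerator. First, I would set up notation for the full spectral decompositions: write $\mat{T} = \mat{Q}_r \mat{\Sigma}_r \mat{Q}_r^\adj + \mat{Q}_r^\perp \mat{\Sigma}_r^\perp (\mat{Q}_r^\perp)^\adj$, and similarly $\hat{\mat{T}} = \hat{\mat{Q}}_r \hat{\mat{\Sigma}}_r \hat{\mat{Q}}_r^\adj + \hat{\mat{Q}}_r^\perp \hat{\mat{\Sigma}}_r^\perp (\hat{\mat{Q}}_r^\perp)^\adj$, where $\mat{Q}_r^\perp$ is an orthonormal basis for the orthogonal complement of $\mat{Q}_r$ (and similarly for $\hat{\mat{Q}}_r^\perp$). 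The key characterization of the principal-angle distance I will use is $\sin \theta(\mat{Q}_r, \hat{\mat{Q}}_r) = \norm{(\hat{\mat{Q}}_r^\perp)^\adj \mat{Q}_r}_2$, which follows from the definition $\norm{\mat{\Pi}_{\mat{Q}_r} - \mat{\Pi}_{\hat{\mat{Q}}_r}}_2$ for subspaces of equal dimension.

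The core step is to derive a Sylvester equation for $\mat{Y} \coloneqq (\hat{\mat{Q}}_r^\perp)^\adj \mat{Q}_r$. Starting from $\mat{T}\mat{Q}_r = \mat{Q}_r \mat{\Sigma}_r$ and $\hat{\mat{T}} = \mat{T} + \mat{E}$, I get $\hat{\mat{T}}\mat{Q}_r - \mat{E}\mat{Q}_r = \mat{Q}_r \mat{\Sigma}_r$. Multiplying on the left by $(\hat{\mat{Q}}_r^\perp)^\adj$ and using $(\hat{\mat{Q}}_r^\perp)^\adj \hat{\mat{T}} = \hat{\mat{\Sigma}}_r^\perp (\hat{\mat{Q}}_r^\perp)^\adj$ yields
\begin{equation*}
\hat{\mat{\Sigma}}_r^\perp \mat{Y} - \mat{Y} \mat{\Sigma}_r = (\hat{\mat{Q}}_r^\perp)^\adj \mat{E} \mat{Q}_r.
\end{equation*}

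Next, I invert this Sylvester equation. Since $\hat{\mat{\Sigma}}_r^\perp$ and $\mat{\Sigma}_r$ are diagonal with real eigenvalues, the equation can be solved entry-wise: $\mat{Y}_{ij} = [(\hat{\mat{Q}}_r^\perp)^\adj \mat{E} \mat{Q}_r]_{ij} / (\lambda_{r+i}(\hat{\mat{T}}) - \lambda_j(\mat{T}))$. Every denominator is at most $-(\lambda_r(\mat{T}) - \lambda_{r+1}(\hat{\mat{T}}))$ in magnitude, and the standard Sylvester operator bound (or direct norm estimate via the diagonal inverse) then gives
\begin{equation*}
\norm{\mat{Y}}_2 \le \frac{\norm{(\hat{\mat{Q}}_r^\perp)^\adj \mat{E} \mat{Q}_r}_2}{\lambda_r(\mat{T}) - \lambda_{r+1}(\hat{\mat{T}})} \le \frac{\norm{\mat{E}\mat{Q}_r}_2}{\lambda_r(\mat{T}) - \lambda_{r+1}(\hat{\mat{T}})},
\end{equation*}
where the last step uses $\norm{(\hat{\mat{Q}}_r^\perp)^\adj}_2 \le 1$ (submultiplicativity with an orthogonal projection). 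Combining this with the identification $\sin \theta(\mat{Q}_r, \hat{\mat{Q}}_r) = \norm{\mat{Y}}_2$ finishes the proof.

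This argument is essentially standard; the only subtlety is the choice of which projector/eigenvalue equation to multiply through, since one could equally derive a Sylvester equation for $(\mat{Q}_r^\perp)^\adj \hat{\mat{Q}}_r$ and obtain a bound involving $\lambda_r(\hat{\mat{T}}) - \lambda_{r+1}(\mat{T})$ instead. Since the statement in the theorem uses $\lambda_r(\mat{T}) - \lambda_{r+1}(\hat{\mat{T}})$ and the numerator $\norm{\mat{E}\mat{Q}_r}_2$ (rather than $\norm{\mat{E}\hat{\mat{Q}}_r}_2$), the variant I outline above is the correct one. No real obstacle is expected; the argument is short and the main care is just in keeping the roles of $\mat{T}$ versus $\hat{\mat{T}}$ consistent throughout.
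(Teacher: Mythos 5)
The paper does not prove this statement at all: it is quoted as the classical Davis--Kahan $\sin\theta$ theorem with a citation to Davis and Kahan (1970), so there is no in-paper argument to compare against. Your proposal supplies a correct, self-contained proof by the standard Sylvester-equation route, and the bookkeeping is right: the identity $\sin\theta(\mat{Q}_r,\hat{\mat{Q}}_r)=\|(\hat{\mat{Q}}_r^\perp)^\adj\mat{Q}_r\|_2$ holds for equal-dimensional subspaces, the equation $\hat{\mat{\Sigma}}_r^\perp\mat{Y}-\mat{Y}\mat{\Sigma}_r=(\hat{\mat{Q}}_r^\perp)^\adj\mat{E}\mat{Q}_r$ follows exactly as you derive it, and your choice of which side to project is the one that produces the denominator $\lambda_r(\mat{T})-\lambda_{r+1}(\hat{\mat{T}})$ and the numerator $\|\mat{E}\mat{Q}_r\|_2$ appearing in the statement.

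One point deserves more care than your parenthetical ``direct norm estimate via the diagonal inverse'' suggests. Solving the Sylvester equation entrywise means $\mat{Y}$ is the Schur (Hadamard) product of $(\hat{\mat{Q}}_r^\perp)^\adj\mat{E}\mat{Q}_r$ with the Cauchy-type matrix $\bigl[(\lambda_{r+i}(\hat{\mat{T}})-\lambda_j(\mat{T}))^{-1}\bigr]_{ij}$, and in general a Schur multiplier whose entries are bounded by $1/\delta$ need not have spectral-norm multiplier norm $1/\delta$. The bound $\|\mat{Y}\|_2\le\|\mat{C}\|_2/\delta$ is nonetheless valid here because the two spectra are separated by an \emph{interval} (all $\lambda_{r+i}(\hat{\mat{T}})\le\lambda_{r+1}(\hat{\mat{T}})$ and all $\lambda_j(\mat{T})\ge\lambda_r(\mat{T})$ for the relevant indices); one clean way to see it is the integral representation $\mat{Y}=-\int_0^\infty \e^{t\hat{\mat{\Sigma}}_r^\perp}\,\mat{C}\,\e^{-t\mat{\Sigma}_r}\,\d t$, whose norm integrates to $\|\mat{C}\|_2/\delta$. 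Your appeal to ``the standard Sylvester operator bound'' covers this, but the argument should rest on the interval separation rather than on the entrywise magnitude of the denominators. You should also state explicitly the implicit hypothesis $\lambda_r(\mat{T})>\lambda_{r+1}(\hat{\mat{T}})$, without which the bound is vacuous.
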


The following result (see, e.g., the proof of \cite[Lem.~2.7]{moitra_super-resolution_2015}) shows that a bound on the angle between subspaces yields a bound between $\mat{Q}$ and $\mat{\hat{Q}}$, after multiplication by a unitary matrix.

\begin{proposition}[Distance from angles] \label{prop:distance-from-angles}
    Let $\mat{Q}_r$ and $\mat{\hat{Q}}_r$ be matrices with orthonormal columns.
    There exists a unitary matrix such that
    \begin{equation*}
        \norm{\mat{\hat{Q}}_r - \mat{Q}_r\mat{U}}_2 \le 2 \sin \theta\left(\mat{Q}_r,\mat{\hat{Q}}_r\right) = 2\norm{\outprod{\mat{Q}_r} - \outprod{\mat{\hat{Q}}_r}}_2.
    \end{equation*}
\end{proposition}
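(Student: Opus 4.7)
The plan is to choose the unitary $\mat{U}$ as the orthogonal factor in the polar decomposition of $\mat{Q}_r^\adj \mat{\hat{Q}}_r$. Concretely, writing the SVD
\[
\mat{Q}_r^\adj \mat{\hat{Q}}_r = \mat{U}_1 \mat{\Sigma} \mat{U}_2^\adj,
\]
I would set $\mat{U} \coloneqq \mat{U}_1 \mat{U}_2^\adj$. This is the minimizer of $\norm{\mat{\hat{Q}}_r - \mat{Q}_r \mat{W}}_{\rm F}$ over unitary $\mat{W}$, but the utility here is algebraic: with this choice, $\mat{U}^\adj \mat{Q}_r^\adj \mat{\hat{Q}}_r = \mat{U}_2 \mat{\Sigma} \mat{U}_2^\adj$ is Hermitian positive semidefinite with eigenvalues equal to the singular values of $\mat{Q}_r^\adj \mat{\hat{Q}}_r$.

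Next, since $\mat{Q}_r^\adj \mat{Q}_r = \mat{\hat{Q}}_r^\adj \mat{\hat{Q}}_r = \Id_r$ (orthonormal columns), I would expand
\[
\norm{\mat{\hat{Q}}_r - \mat{Q}_r \mat{U}}_2^2 = \norm{(\mat{\hat{Q}}_r - \mat{Q}_r \mat{U})^\adj (\mat{\hat{Q}}_r - \mat{Q}_r \mat{U})}_2 = \norm{2\Id_r - 2\,\mat{U}_2 \mat{\Sigma} \mat{U}_2^\adj}_2 = 2 \max_{1\le i \le r} (1 - \sigma_i),
\]
where $\sigma_1, \ldots, \sigma_r$ are the singular values of $\mat{Q}_r^\adj \mat{\hat{Q}}_r$. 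These singular values are the cosines of the principal angles $\theta_i \in [0, \pi/2]$ between the two column spaces; I would either cite this standard fact or verify it from the CS decomposition. Correspondingly, $\sin \theta(\mat{Q}_r, \mat{\hat{Q}}_r) = \max_i \sin \theta_i$.

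Finally, to convert the bound above into the desired $2 \sin \theta$ bound, I would invoke the elementary inequality
\[
\sqrt{2(1 - \cos \theta)} \;=\; 2 \sin(\theta/2) \;\le\; 2 \sin \theta \quad \text{for } \theta \in [0, \pi/2],
\]
which is immediate since $\cos(\theta/2) \ge 1/\sqrt{2} \ge 1/2$ on this range. Applying it coordinatewise in $i$ yields $\norm{\mat{\hat{Q}}_r - \mat{Q}_r \mat{U}}_2 \le 2 \max_i \sin \theta_i = 2 \sin \theta(\mat{Q}_r, \mat{\hat{Q}}_r)$, and the stated equality $\sin \theta(\mat{Q}_r, \mat{\hat{Q}}_r) = \norm{\outprod{\mat{Q}_r} - \outprod{\mat{\hat{Q}}_r}}_2$ is the definition used in the paper.

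The only mildly delicate step is the identification of the $\sigma_i$ with the cosines of the principal angles (and the consistency of both characterizations of $\sin \theta$); this is standard from the CS decomposition, so it is not a genuine obstacle, but it is the one place where I would be careful to cite the appropriate reference (e.g., \cite[Sec.~I.5]{SS90}) rather than reprove from scratch.
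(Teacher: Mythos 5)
Your proof is correct. The paper does not supply its own proof of this proposition (it defers to the proof of Moitra's Lemma~2.7), and your argument---aligning by the unitary polar factor of $\mat{Q}_r^\adj\mat{\hat{Q}}_r$, computing $\norm{\mat{\hat{Q}}_r-\mat{Q}_r\mat{U}}_2^2=2(1-\cos\theta_{\max})$, and using $2\sin(\theta/2)\le 2\sin\theta$ on $[0,\pi/2]$---is exactly the standard route taken there; the only implicit hypothesis in equating $\norm{\outprod{\mat{Q}_r}-\outprod{\mat{\hat{Q}}_r}}_2$ with $\max_i\sin\theta_i$ is that the two column spaces have equal dimension, which holds here since both matrices are $n\times r$ with orthonormal columns, so citing Stewart--Sun for that equivalence is fine.
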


Last, we will need one result for the perturbation of diagonalizable matrices, a version of the famous Bauer--Fike theorem augmented by Gerschgorin and Ostrowski's continuity argument.
We take this result from \cite{SS90}:

\begin{theorem}[Bauer--Fike theorem, {\normalfont {\cite[Thm.~IV.3.3]{SS90}}}] \label{thm:bauer-fike}
    Let $\mat{A} = \mat{V} \diag(\vec{\lambda})\mat{V}^{-1} \in \complex^{r\times r}$ be a diagonalizable matrix and let $\vec{\hat{\lambda}}$ be the eigenvalues of $\mat{A}+\mat{E}$.
    Then
    \begin{equation*}
        \max_{1\le i\le r} \min_{1\le j\le r} | \hat{\lambda}_i - \lambda_j| \le \norm{\mat{V}}_2 \norm{\smash{\mat{V}^{-1}}}_2 \norm{\mat{E}}_2.
    \end{equation*}
    Moreover, we have the following bound in the optimal matching distance, defined in \cref{eqn:optimal_matching}:
    \begin{equation*}
        \operatorname{md} (\vec{\lambda},\vec{\hat{\lambda}}) \le (2r-1)\norm{\mat{V}}_2 \norm{\smash{\mat{V}^{-1}}}_2 \norm{\mat{E}}_2.
    \end{equation*}
\end{theorem}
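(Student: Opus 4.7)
The plan is to establish the two bounds in turn. For the first min--max bound, I would follow the classical Bauer--Fike argument. Fix any eigenvalue $\hat{\lambda}_i$ of $\mat{A}+\mat{E}$ and assume without loss of generality that $\hat{\lambda}_i \notin \{\lambda_1,\ldots,\lambda_r\}$ (otherwise the bound is trivial). Let $\vec{x}\ne \vec{0}$ satisfy $(\mat{A}+\mat{E})\vec{x} = \hat{\lambda}_i \vec{x}$. Rearranging gives $\vec{x} = (\hat{\lambda}_i \Id_r - \mat{A})^{-1}\mat{E}\vec{x}$, and using the diagonalization $\mat{A} = \mat{V}\diag(\vec{\lambda})\mat{V}^{-1}$ to bound the resolvent, I obtain
\begin{equation*}
\|(\hat{\lambda}_i \Id_r - \mat{A})^{-1}\|_2 \;\le\; \frac{\|\mat{V}\|_2 \, \|\smash{\mat{V}^{-1}}\|_2}{\min_{1 \le j \le r} |\hat{\lambda}_i - \lambda_j|}\,.
\end{equation*}
Taking $\ell_2$-norms in the identity for $\vec{x}$ and dividing by $\|\vec{x}\|_2>0$ immediately yields the first bound.

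The matching distance bound requires substantially more work, since the first bound does not guarantee that the nearest-point assignment is bijective. My plan is to use a homotopy argument of Gerschgorin--Ostrowski type. Introduce the interpolated family $\mat{A}(t) \coloneqq \mat{A} + t\mat{E}$ for $t \in [0,1]$. By continuity of the roots of the characteristic polynomial, the $r$ eigenvalues of $\mat{A}(t)$ admit a continuous labeling $\hat{\lambda}_i(t)$ on $[0,1]$ with $\hat{\lambda}_i(0) = \lambda_{\pi(i)}$ for some permutation $\pi$. Applying the already-proved first bound to $\mat{A}(t)$ (noting that $\mat{A}$'s diagonalizer $\mat{V}$ is unchanged) places every $\hat{\lambda}_i(t)$ inside the union $\bigcup_j D_j(t)$ of closed disks $D_j(t) = \{z : |z-\lambda_j| \le t \kappa \|\mat{E}\|_2\}$, where $\kappa \coloneqq \|\mat{V}\|_2 \|\smash{\mat{V}^{-1}}\|_2$. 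A standard continuity/counting argument then shows that each connected component of $\bigcup_j D_j(t)$ contains equal numbers of $\lambda_j$'s and of continuous trajectories $\hat{\lambda}_i(t)$.

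It remains to convert the per-component matching into the stated constant $(2r-1)$. Within a component formed by $m$ disks of radius $\rho \coloneqq t\kappa\|\mat{E}\|_2$, the centers $\lambda_j$ lying in it span a set of diameter at most $2(m-1)\rho$ (the worst case is a chain of tangent disks), while every $\hat{\lambda}_i(t)$ in the component is within distance $\rho$ of some $\lambda_j$ in the component. By the triangle inequality, any $\hat{\lambda}_i(t)$ is within $(2m-1)\rho$ of every $\lambda_j$ in that component, so any bijection between matched points has per-component maximum distance at most $(2m-1)\rho$. Gluing the per-component bijections into a global permutation, and using $m \le r$ at $t=1$, gives $\mds(\vec{\lambda},\vec{\hat{\lambda}}) \le (2r-1) \kappa \|\mat{E}\|_2$, which is the second bound.

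The main obstacle is the second bound: the first bound is a one-line resolvent estimate, but the matching distance claim hinges on two delicate ingredients --- choosing a globally continuous labeling of the eigenvalue curves (requiring care when curves collide, since a priori only set-valued continuity is automatic), and establishing the sharp per-component ``chain traversal'' constant $(2m-1)$ rather than a looser bound such as $2m$ or $2^m$. Getting this constant correct requires the geometric observation that each $\hat{\lambda}_i$ is within one radius $\rho$ of \emph{some} center in its component, combined with the at-most-$2(m-1)\rho$ diameter of the centers themselves.
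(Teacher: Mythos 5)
The paper does not actually prove this theorem; it is quoted verbatim from \cite{SS90} (Thm.~IV.3.3), with a one-line remark that the matching-distance version is ``a version of the famous Bauer--Fike theorem augmented by Gerschgorin and Ostrowski's continuity argument.'' Your proposal reconstructs exactly that classical argument, and as far as I can tell it is correct.

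Specifically: your proof of the first bound is the standard resolvent estimate — taking an eigenvector $\vec{x}$ with $(\mat{A}+\mat{E})\vec{x}=\hat{\lambda}_i\vec{x}$, writing $\vec{x}=(\hat{\lambda}_i\Id_r-\mat{A})^{-1}\mat{E}\vec{x}$, and bounding $\|(\hat{\lambda}_i\Id_r-\mat{A})^{-1}\|_2\le\|\mat{V}\|_2\|\mat{V}^{-1}\|_2/\min_j|\hat{\lambda}_i-\lambda_j|$ — and is sound. Your second part is the Gerschgorin--Ostrowski homotopy: as $t$ runs from $0$ to $1$, the eigenvalues of $\mat{A}+t\mat{E}$ stay inside the growing disks, disks can only merge and never split, and hence each connected component of the union holds equal numbers of centers and perturbed eigenvalues. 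The ``chain traversal'' constant $(2m-1)$ is correct: with a spanning tree on the overlap graph of the $m$ disks in a component, any two centers are within $2(m-1)\rho$ of each other, and each $\hat{\lambda}_i$ is within $\rho$ of some center, so any within-component matching has per-pair error at most $(2m-1)\rho\le(2r-1)\rho$.

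Two small remarks. First, the continuous global labeling $\hat{\lambda}_i(t)$ that you invoke is a genuine (if classical) nontrivial fact about continuous families of monic polynomials; for this particular argument you can sidestep it entirely, since all you really need is the counting statement that each connected component of $\bigcup_j D_j(t)$ contains as many eigenvalues of $\mat{A}+t\mat{E}$ (with multiplicity) as it contains centers $\lambda_j$, which follows from the continuity of the \emph{unordered} spectrum plus the fact that components of the disk union only merge as $t$ increases — the permutation at $t=1$ can then be assembled post hoc, component by component. Second, the argument implicitly uses that the number of $\lambda_j$ lying in a component equals the number of disks constituting it, i.e.\ that no $\lambda_j$ lies in a component whose disk $D_j$ is disjoint from it; this is automatic since $\lambda_j\in D_j$, but it is worth stating. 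Neither point is a gap — your proof is essentially complete and is the argument the cited reference itself uses.
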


\subsection{Spectral perturbation via resolvents} \label{app:resolvents}

The last results that we will need from classical matrix perturbation theory are formulas for spectral projectors using contour integrals and resolvents.
For this section, let $\mat{A} \in \complex^{n\times n}$ be a Hermitian matrix with eigenvalues $\lambda_1 \ge \cdots \ge \lambda_n$ associated to orthonormal eigenvectors $\vec{q}_1,\ldots,\vec{q}_n$.

The core object is the resolvent:

\begin{definition}[Resolvent]
    The resolvent function $\mat{R} : \complex \setminus \{\lambda_1,\ldots,\lambda_n\} \to \complex^{n\times n}$ of $\mat{A}$ is 
    \begin{equation*}
        \mat{R}(\zeta) \coloneqq (\zeta \Id - \mat{A})^{-1}.
    \end{equation*}
    For simplcity, we shall often omit ``$\Id$'' and write the resolvent merely as $\mat{R}(\zeta) = (\zeta - \mat{A})^{-1}$.
\end{definition}

We first state a lemma that estimates the locations of the perturbed eigenvalues via the resolvent method. 
\begin{lemma}[Eigenvalue perturbation via resolvents, {\normalfont {\cite[Lemma 2.1]{Micheal_2022}}, see also {\cite[Theorem IV.3.18]{Tos_1980}}}]\label{lem:eig_location}
Let $\mat{A}\in \C^{r\times r}$ and $\tilde{\mat{A}}$ be its perturbation. For any $i\in [r]$, suppose the $i$-th eigenvalue of $\mat{A}$ is $2\delta$-separated from other eigenvalues, i.e., $\min_{j\ne i}|\lambda_i - \lambda_j|\geq 2\delta>0$. Let ${\cal C}$ be a circle in the complex plane centered at $\lambda_i$ with radius $\delta$. If the following condition holds:
\begin{align*}
    \norm{\tilde{\mat{A}}-\mat{A}}_2 \cdot \sup_{z\in {\cal C}} \norm{(z\Id_r - \mat{A})^{-1}}_2<1\,,
\end{align*}
then $\tilde{\mat{A}}$ also has exactly one eigenvalue contained in ${\cal C}$.
\end{lemma}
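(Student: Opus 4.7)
The plan is to establish the result via the contour integral representation of spectral projectors. Since $\min_{j \neq i}|\lambda_i - \lambda_j| \ge 2\delta$, the eigenvalue $\lambda_i$ must be simple (any repetition would give distance $0$), and the circle $\mathcal{C}$ of radius $\delta$ around $\lambda_i$ encloses $\lambda_i$ while excluding every other eigenvalue of $\mat{A}$. Consequently, the matrix
$\mat{P}(\mat{A}) := \frac{1}{2\pi \iu} \oint_{\mathcal{C}} (z \Id_r - \mat{A})^{-1}\,\mathrm{d}z$
is the spectral projector onto the one-dimensional generalized eigenspace of $\lambda_i$, so $\tr \mat{P}(\mat{A}) = 1$.

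Next I would extend this to $\tilde{\mat{A}}$ via a Neumann-series argument. Factoring $z\Id_r - \tilde{\mat{A}} = (z\Id_r - \mat{A})\bigl(\Id_r - (z\Id_r - \mat{A})^{-1}(\tilde{\mat{A}} - \mat{A})\bigr)$ and using the hypothesis $\sup_{z \in \mathcal{C}}\norm{(z\Id_r - \mat{A})^{-1}}_2 \cdot \norm{\tilde{\mat{A}} - \mat{A}}_2 < 1$, the inner factor is invertible uniformly in $z \in \mathcal{C}$. Hence $\tilde{\mat{A}}$ has no eigenvalue on $\mathcal{C}$, and $\mat{P}(\tilde{\mat{A}}) := \frac{1}{2\pi \iu}\oint_{\mathcal{C}}(z\Id_r - \tilde{\mat{A}})^{-1}\,\mathrm{d}z$ is a well-defined spectral projector whose trace equals the total algebraic multiplicity of eigenvalues of $\tilde{\mat{A}}$ inside $\mathcal{C}$.

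The final and most delicate step is to pin down this trace as exactly $1$. The cleanest route is a homotopy argument: define $\mat{A}_t := \mat{A} + t(\tilde{\mat{A}} - \mat{A})$ for $t \in [0,1]$. Since $\norm{\mat{A}_t - \mat{A}}_2 \le \norm{\tilde{\mat{A}} - \mat{A}}_2$, the Neumann-series expansion above applies uniformly in $t$, so $\mat{P}(\mat{A}_t)$ is a continuous $\C^{r\times r}$-valued function of $t$. Its trace $\tr \mat{P}(\mat{A}_t)$ is integer-valued and continuous on the connected interval $[0,1]$, hence constant. Therefore $\tr \mat{P}(\tilde{\mat{A}}) = \tr \mat{P}(\mat{A}) = 1$, giving exactly one eigenvalue of $\tilde{\mat{A}}$ (with algebraic multiplicity) inside $\mathcal{C}$, as claimed. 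An alternative is to use the Neumann expansion to bound $\norm{\mat{P}(\tilde{\mat{A}}) - \mat{P}(\mat{A})}_2 < 1$ directly and then invoke the standard fact that two projectors at operator-norm distance less than $1$ have equal rank; the homotopy route, however, avoids having to track the constants in that estimate.
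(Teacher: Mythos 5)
Your proof is correct. Note that the paper does not actually prove this lemma---it is imported verbatim from the cited references (Kato's Theorem IV.3.18 and its restatement in the 2022 reference)---so there is no in-paper argument to compare against; your write-up is essentially the standard proof underlying those citations: the Riesz projector $\frac{1}{2\pi\iu}\oint_{\mathcal{C}}(z\Id_r-\mat{A})^{-1}\,\d z$ has integer trace equal to the total algebraic multiplicity inside $\mathcal{C}$, the hypothesis keeps the resolvent of $\mat{A}_t=\mat{A}+t(\tilde{\mat{A}}-\mat{A})$ invertible on $\mathcal{C}$ uniformly for $t\in[0,1]$, and continuity of an integer-valued function forces the trace to stay equal to $1$. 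You were also right to prefer the homotopy route over bounding $\norm{\mat{P}(\tilde{\mat{A}})-\mat{P}(\mat{A})}_2<1$ directly: under the bare hypothesis $\norm{\tilde{\mat{A}}-\mat{A}}_2\sup_{z\in\mathcal{C}}\norm{(z\Id_r-\mat{A})^{-1}}_2<1$ that projector-distance bound need not hold, whereas the rank-constancy-by-continuity argument needs nothing more.
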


Next, we present the following results that express the spectral projector as the contour 
 integral of the resolvent:

\begin{theorem}[Spectral theory via resolvents, {\normalfont {\cite[Sec. I.5]{Tos_1980}}}] \label{thm:spectral-resolvent}
    Let $\mathcal{C}$ be a simple closed curve in $\complex$ such that $\lambda_{i_1},\ldots,\lambda_{i_\ell}$ lie within the interior of $\mathcal{C}$ and all other eigenvalues lie outside $\mathcal{C}$.
    Then
    \begin{equation*}
        \sum_{j=1}^\ell \vec{q}_{i_j}^{\vphantom{\adj}}\vec{q}_{i_j}^\adj = \frac{1}{2\pi \iu} \oint_{\mathcal{C}} (\zeta - \mat{A})^{-1} \, \d \zeta.
    \end{equation*}
    Here, as always, $\oint_{\mathcal{C}}$ denotes a contour integral around the curve $\mathcal{C}$, traversed counterclockwise.
\end{theorem}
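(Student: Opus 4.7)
The plan is to exploit the spectral decomposition of the Hermitian matrix $\mat{A}$ and then evaluate the contour integral termwise using Cauchy's integral formula. Because $\mat{A}$ is Hermitian, it admits an orthonormal eigenbasis, so we may write $\mat{A} = \sum_{k=1}^n \lambda_k \vec{q}_k^{\vphantom{\adj}} \vec{q}_k^\adj$. For any $\zeta$ outside the spectrum $\{\lambda_1,\ldots,\lambda_n\}$, multiplying the natural candidate by $(\zeta \Id - \mat{A})$ and using $\mat{A}\vec{q}_k = \lambda_k \vec{q}_k$ together with the orthonormality of the $\vec{q}_k$ yields the partial-fraction expansion
\[ \mat{R}(\zeta) \;=\; (\zeta \Id - \mat{A})^{-1} \;=\; \sum_{k=1}^n \frac{1}{\zeta - \lambda_k}\, \vec{q}_k^{\vphantom{\adj}} \vec{q}_k^\adj. \]

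Next I would substitute this identity into the contour integral on the right-hand side of the claim and interchange the (finite) sum with the integral to obtain
\[ \frac{1}{2\pi \iu} \oint_{\mathcal{C}} (\zeta \Id - \mat{A})^{-1} \, \d\zeta \;=\; \sum_{k=1}^n \left( \frac{1}{2\pi \iu} \oint_{\mathcal{C}} \frac{\d\zeta}{\zeta - \lambda_k} \right) \vec{q}_k^{\vphantom{\adj}} \vec{q}_k^\adj. \]
By Cauchy's integral formula, each scalar contour integral equals $1$ when $\lambda_k$ lies in the interior of the counterclockwise simple closed curve $\mathcal{C}$ and $0$ when it lies outside. The hypothesis guarantees that no eigenvalue of $\mat{A}$ lies on $\mathcal{C}$ and specifies that the enclosed eigenvalues are exactly $\lambda_{i_1}, \ldots, \lambda_{i_\ell}$, so only those $\ell$ terms survive and the right-hand side collapses to $\sum_{j=1}^\ell \vec{q}_{i_j}^{\vphantom{\adj}} \vec{q}_{i_j}^\adj$, which is the claimed identity.

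There is essentially no real obstacle here: the argument is a direct application of the finite-dimensional functional calculus combined with the residue calculus, and well-definedness of the integrand on $\mathcal{C}$ is immediate from the hypothesis that the eigenvalues of $\mat{A}$ avoid $\mathcal{C}$. Hermiticity is used only to supply an orthonormal eigenbasis so that the resolvent admits the clean partial-fraction form above; the same strategy extends verbatim to any diagonalizable matrix by using a biorthogonal eigenbasis, and to arbitrary matrices by passing through the Jordan canonical form (with nilpotent corrections that integrate to zero against holomorphic functions), which is the more general route taken in the reference \cite{Tos_1980}.
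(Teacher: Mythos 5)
Your proof is correct and matches the paper's argument: the paper likewise notes that the result follows immediately from the Cauchy integral formula combined with the spectral representation of the resolvent $(\zeta - \mat{A})^{-1} = \sum_{i=1}^n (\zeta - \lambda_i)^{-1}\vec{q}_i^{\vphantom{\adj}}\vec{q}_i^\adj$. You have simply written out the same one-line argument in full detail.
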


This result follows immediately from the Cauchy integral formula and the following spectral representation of the resolvent $(\zeta - \mat{A})^{-1} = \sum_{i=1}^n (\zeta - \lambda_i)^{-1} \vec{q}_i^{\vphantom{\adj}} \vec{q}_i^{\adj}$.

\Cref{thm:spectral-resolvent} can be used to derive a convergent perturbation series for spectral projectors of a perturbation $\mat{A}+\mat{E}$ of $\mat{A}$ by expanding the resolvent $(\zeta - (\mat{A}+\mat{E}))^{-1}$ in a Neumann series in $\mat{E}$.
Here is one such result:

\begin{lemma}[Perturbation theory via resolvents] \label{lem:perturbation-resolvents}
    Assume $\mat{A}$ is rank-$r$, $\lambda_{r+1} = \cdots = \lambda_n = 0$ and let $\mat{A}+\mat{E}\in\complex^{n\times n}$ be a Hermitian perturbation of $\mat{A}$ with eigenvalues $\hat{\lambda}_1 \ge \cdots \hat{\lambda}_n$ with associated eigenvectors $\vec{\hat{q}}_1,\ldots,\vec{\hat{q}}_n$.
    Let $\mathcal{C}$ be a simple closed curve in $\complex$ such that $\mathcal{C}$ $\lambda_1,\ldots,\lambda_r,\hat{\lambda}_1,\ldots,\hat{\lambda}_r$ lie inside $\mathcal{C}$ and all other eigenvalues of both $\mat{A}$ and $\mat{A}+\mat{E}$ lie outside $\mathcal{C}$.
    Assume that
    \begin{equation*}
        \norm{\mat{E}(\zeta - \mat{A})^{-1}}_2 < 1 \for \zeta \in \mathcal{C}.
    \end{equation*}
    Then, denoting $\mat{\Pi} = \sum_{i=1}^r \outprod{\vec{q}_i}$ and $\mat{\hat{\Pi}} = \sum_{i=1}^r \outprod{\vec{\hat{q}}_i}$, we have
    \begin{equation*}
        \mat{\hat{\Pi}} = \mat{\Pi} + \sum_{k=1}^\infty \frac{1}{2\pi \iu}\oint_{\mathcal{C}} (\zeta - \mat{A})^{-1} \left( \mat{E} (\zeta - \mat{A})^{-1} \right)^k \, \d\zeta.
    \end{equation*}
\end{lemma}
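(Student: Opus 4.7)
\medskip

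\noindent\textbf{Proof proposal.} The plan is to derive the expansion directly from the Cauchy-type contour integral representation of the spectral projector (\cref{thm:spectral-resolvent}) by expanding the perturbed resolvent in a Neumann series and subtracting the unperturbed projector.

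First I would apply \cref{thm:spectral-resolvent} twice: once to $\mat{A}$ (whose dominant eigenvalues $\lambda_1,\ldots,\lambda_r$ lie inside $\mathcal{C}$ and whose remaining eigenvalues are zero and lie outside $\mathcal{C}$ by assumption), giving $\mat{\Pi} = \tfrac{1}{2\pi\iu}\oint_{\mathcal{C}}(\zeta-\mat{A})^{-1}\,\d\zeta$, and once to $\mat{A}+\mat{E}$ (whose top $r$ perturbed eigenvalues are the only ones inside $\mathcal{C}$, again by hypothesis), giving $\mat{\hat\Pi} = \tfrac{1}{2\pi\iu}\oint_{\mathcal{C}}(\zeta-(\mat{A}+\mat{E}))^{-1}\,\d\zeta$.

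Next I would algebraically rewrite the perturbed resolvent. Since $\zeta - (\mat{A}+\mat{E}) = (\Id - \mat{E}(\zeta-\mat{A})^{-1})(\zeta - \mat{A})$ whenever $\zeta$ is not an eigenvalue of $\mat{A}$, taking inverses yields
\[
(\zeta - (\mat{A}+\mat{E}))^{-1} = (\zeta - \mat{A})^{-1}\bigl(\Id - \mat{E}(\zeta - \mat{A})^{-1}\bigr)^{-1}.
\]
Under the hypothesis $\norm{\mat{E}(\zeta-\mat{A})^{-1}}_2 < 1$ on $\mathcal{C}$, the right factor admits a Neumann series expansion (\cref{def:neumann}), so
\[
(\zeta - (\mat{A}+\mat{E}))^{-1} = \sum_{k=0}^\infty (\zeta - \mat{A})^{-1}\bigl(\mat{E}(\zeta - \mat{A})^{-1}\bigr)^k.
\]

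Finally I would integrate term by term over $\mathcal{C}$ and subtract the $k=0$ contribution (which equals $\mat{\Pi}$) to obtain the claimed series representation of $\mat{\hat\Pi} - \mat{\Pi}$. The only non-trivial technical step is justifying the interchange of summation and contour integration; I expect this to be the main (and only real) obstacle, but it dissolves easily: $\mathcal{C}$ is a compact set, $\zeta\mapsto\norm{\mat{E}(\zeta-\mat{A})^{-1}}_2$ is continuous on $\mathcal{C}$ (since $\mathcal{C}$ avoids the spectrum of $\mat{A}$), so by compactness $\rho \coloneqq \sup_{\zeta\in\mathcal{C}}\norm{\mat{E}(\zeta-\mat{A})^{-1}}_2 < 1$. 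The Weierstrass $M$-test with $M_k = \sup_{\zeta\in\mathcal{C}}\norm{(\zeta-\mat{A})^{-1}}_2 \cdot \rho^k$ then gives uniform convergence of the Neumann series on $\mathcal{C}$, which permits the interchange and completes the proof.
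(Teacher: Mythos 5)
Your proposal is correct and follows essentially the same route as the paper: invoke \cref{thm:spectral-resolvent} for both projectors, expand the perturbed resolvent in a Neumann series using the hypothesis $\norm{\mat{E}(\zeta - \mat{A})^{-1}}_2 < 1$, and integrate term by term after establishing uniform convergence on the compact contour $\mathcal{C}$. You supply somewhat more explicit detail (the factorization $\zeta - (\mat{A}+\mat{E}) = (\Id - \mat{E}(\zeta-\mat{A})^{-1})(\zeta - \mat{A})$ and the Weierstrass $M$-test), but the argument is the same.
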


\begin{proof}
    Since $\mathcal{C}$ is a simple closed curve, it is compact.
    Therefore, there exists $0 < \alpha < 1$ such that 
    \begin{equation*}
        \norm{\mat{E}(\zeta - \mat{A})^{-1}}_2 \le \alpha \for \zeta \in \mathcal{C}.
    \end{equation*}
    Therefore, the resolvent $(\zeta - (\mat{A}+\mat{E}))^{-1}$ has a Neumann expansion
    \begin{equation*}
        (\zeta - (\mat{A}+\mat{E}))^{-1} = \sum_{k=0}^\infty (\zeta - \left(  (\zeta - \mat{A})^{-1} \mat{E}\right)^k\mat{A})^{-1}.
    \end{equation*}
    that is uniformly convergent on $\mathcal{C}$.
    Therefore, integrating both sides we obtain
    \begin{equation*}
        \frac{1}{2\pi \iu} \oint_{\mathcal{C}} (\zeta - (\mat{A}+\mat{E}))^{-1}\, \d \zeta = \frac{1}{2\pi \iu}\sum_{k=0}^\infty \oint_{\mathcal{C}} (\zeta - \mat{A})^{-1} \left( \mat{E} (\zeta - \mat{A})^{-1} \right)^k \, \d\zeta.
    \end{equation*}
    By \cref{thm:spectral-resolvent}, the left-hand side is $\mat{\hat{\Pi}}$ and the first term in the right-hand side is $\mat{\Pi}$:
    \begin{equation*}
        \mat{\hat{\Pi}} = \mat{\Pi} + \sum_{k=1}^\infty \frac{1}{2\pi \iu}\oint_{\mathcal{C}} (\zeta - \mat{A})^{-1} \left( \mat{E} (\zeta - \mat{A})^{-1} \right)^k \, \d\zeta. 
    \end{equation*}
    This is the stated conclusion.
\end{proof}

\section{Vandermonde matrices}\label{sec:van_mat}

In this section, we discuss results for Vandermonde matrices that will be heavily used throughout our analysis.
We begin in \cref{sec:moitra} by discussing Moitra's bounds \cite{moitra_super-resolution_2015} for the singular values of a Vandermonde matrix. 
Then, in \cref{sec:vand-eig}, we present a comparison lemma for the Vandermonde basis and eigenbasis of a Toeplitz matrix.

\subsection{Moitra's singular value bounds} \label{sec:moitra}

For us, one of the core technical tools for analyzing ESPRIT is bound on the singular values of Vandermonde matrices, proven in the seminal work of Moitra \cite{moitra_super-resolution_2015}.
The main result is as follows: 

\begin{theorem}[Vandermonde matrix: singular values, {\normalfont adapted from {\cite[Thm. 1.1]{moitra_super-resolution_2015}}}] \label{thm:moitra}
    Let $\vec{z} \in \unitcircle^s$ be a set of frequencies and define a gap
    \begin{equation*}
        \delta_{\vec{z}} \coloneqq \min_{i\neq j}|z_i-z_j|.
    \end{equation*}
    Then if $n > 1 + 2\pi/\delta_{\vec{z}}$, the largest and smallest singular values of $\Vand_n(\vec{z})$ satisfy
    \begin{equation*}
        \sigma_{\rm max}(\Vand_n(\vec{z})) \le \sqrt{n - 1 + \frac{2\pi}{\delta_{\vec{z}}}}, \quad \sigma_{\rm min}(\Vand_n(\vec{z})) \ge \sqrt{n - 1 - \frac{2\pi}{\delta_{\vec{z}}}}.
    \end{equation*}
\end{theorem}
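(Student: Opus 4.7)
The plan is to analyze the Gram matrix $\mat{G} = \mat{V}_n(\vec{z})^\adj \mat{V}_n(\vec{z}) \in \complex^{s\times s}$, whose eigenvalues are the squares of the singular values of $\mat{V}_n(\vec{z})$. The key identity is that for any $\vec{x}\in \complex^s$,
\[
\vec{x}^\adj \mat{G}\, \vec{x} = \sum_{k=0}^{n-1} \abs{p(k)}^2, \qquad p(t) \coloneqq \sum_{j=1}^s x_j z_j^t,
\]
so the theorem reduces to sandwiching $\sum_{k=0}^{n-1}\abs{p(k)}^2$ between $(n-1\pm 2\pi/\delta_{\vec{z}})\norm{\vec{x}}_2^2$. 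Writing $z_j = \exp(2\pi\iu f_j)$ with $f_j\in[0,1)$, the separation hypothesis $\abs{z_i - z_j}\ge \delta_{\vec{z}}$ translates (via $\abs{z_i - z_j}=2\abs{\sin(\pi(f_i-f_j))}\le 2\pi\abs{f_i - f_j}$) into a separation $\abs{f_i - f_j}\ge \delta_{\vec{z}}/(2\pi)$ in the circle $\mathbb{R}/\mathbb{Z}$.

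My strategy is to compare the finite sum $\sum_{k=0}^{n-1}\abs{p(k)}^2$ to a weighted sum $\sum_{k\in\mathbb{Z}} w(k)\abs{p(k)}^2$, where $w\colon \mathbb{R}\to\mathbb{R}$ sandwiches $\mathbf{1}_{[0,n-1]}$ at the integers and has a narrowly supported Fourier transform. Expanding $\abs{p(k)}^2 = \sum_{i,j} x_i \bar{x}_j (z_i\bar{z}_j)^k$ and applying Poisson summation,
\[
\sum_{k\in\mathbb{Z}} w(k)\,(z_i\bar{z}_j)^k = \sum_{m\in\mathbb{Z}} \hat{w}(f_i - f_j + m).
\]
If $\hat{w}$ is supported in $[-\delta_{\vec{z}}/(2\pi),\,\delta_{\vec{z}}/(2\pi)]$ (and since $\delta_{\vec{z}}\le 2<\pi$, this interval is contained in $[-1/2,1/2]$), every off-diagonal pair $i\ne j$ contributes zero because all translates $\hat{w}(f_i-f_j+m)$ lie outside the support, while the diagonal terms contribute $\sum_i\abs{x_i}^2\hat{w}(0) = \norm{\vec{x}}_2^2\int_\mathbb{R} w(t)\,dt$.

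The input from harmonic analysis is then the Beurling--Selberg extremal functions: there exist entire functions $M, m$ of exponential type $\delta_{\vec{z}}$ (equivalently, Fourier transforms supported in $[-\delta_{\vec{z}}/(2\pi),\,\delta_{\vec{z}}/(2\pi)]$) satisfying $m(t)\le \mathbf{1}_{[0,n-1]}(t)\le M(t)$ for all $t\in\mathbb{R}$, with sharp integrals $\int_\mathbb{R} M = n - 1 + 2\pi/\delta_{\vec{z}}$ and $\int_\mathbb{R} m = n - 1 - 2\pi/\delta_{\vec{z}}$. Plugging these into the identity above (which is justified because $\abs{p(k)}^2\ge 0$ preserves the inequality $m(k)\le \mathbf{1}_{[0,n-1]}(k)\le M(k)$ and $M, m$ decay like $t^{-2}$ so the sums converge) yields
\[
\left(n-1-\frac{2\pi}{\delta_{\vec{z}}}\right)\norm{\vec{x}}_2^2 \;\le\; \sum_{k=0}^{n-1}\abs{p(k)}^2 \;\le\; \left(n-1+\frac{2\pi}{\delta_{\vec{z}}}\right)\norm{\vec{x}}_2^2,
\]
and taking square roots gives the advertised singular value bounds.

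The main technical obstacle is the construction and sharp evaluation of the Beurling--Selberg majorants and minorants, which is classical but nontrivial; one must also verify that $M, m$ are in $L^1$ and decay fast enough to justify swapping sum and Poisson-dual side. The hypothesis $n > 1 + 2\pi/\delta_{\vec{z}}$ is precisely what makes the lower bound positive, certifying that $\mat{V}_n(\vec{z})$ has full column rank. If one is willing to lose a logarithmic factor in $1/\delta_{\vec{z}}$, a much more elementary Gershgorin argument on $\mat{G}$ works, using $\abs{\mat{G}_{ij}}\le 2/\abs{z_i - z_j}$ for $i\ne j$ together with a counting bound on the number of $z_j$ lying in short arcs around each $z_i$; that cruder bound would suffice for many applications, but fails to recover the clean constant $2\pi/\delta_{\vec{z}}$ claimed here.
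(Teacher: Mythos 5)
Your proposal is correct and is essentially the proof of Moitra's Theorem~1.1, which the paper cites without reproving: the reduction to the Gram quadratic form $\vec{x}^\adj\mat{V}_n(\vec{z})^\adj\mat{V}_n(\vec{z})\vec{x}=\sum_{k=0}^{n-1}|p(k)|^2$, Poisson summation against a band-limited weight, and the Beurling--Selberg majorant/minorant with $\int M = n-1+2\pi/\delta_{\vec{z}}$ and $\int m = n-1-2\pi/\delta_{\vec{z}}$ are exactly the ingredients of the cited argument. You also correctly carry out the only adaptation the paper itself performs, namely converting the chordal separation $|z_i-z_j|\ge\delta_{\vec{z}}$ into the wrap-around separation $\ge\delta_{\vec{z}}/(2\pi)$ of the $f_i$ via $|z_i-z_j|=2|\sin(\pi(f_i-f_j))|\le 2\pi|f_i-f_j|$.
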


Moitra's original results were stated in terms of the minimum separation of the $z_i$'s in the wrap-around metric; here, we have transferred these results to use the standard distance on $\complex$.

This result immediately yields the following useful corollary:

\begin{corollary}[Vandermonde matrix: Gram matrix and outer product] \label{cor:moitra}\label{cor:V_dagger_V_dom}
    In the setting of \cref{thm:moitra},
    \begin{align*}
        \norm{ \frac{1}{n} \Vand_n(\vec{z})^\adj \Vand_n(\vec{z}) - \Id_r }_2 = \order\left( \frac{1}{\delta_{\vec{z}} n}\right),\quad\norm{ \frac{1}{n} \Vand_n(\vec{z}) \Vand_n(\vec{z})^\adj - \Id_n}_2 \le 1\,.
    \end{align*}
\end{corollary}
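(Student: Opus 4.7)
My plan is to derive both inequalities as essentially immediate consequences of Moitra's singular value bounds in \cref{thm:moitra}, by translating those bounds on $\sigma_{\min}$ and $\sigma_{\max}$ of $\Vand_n(\vec{z})$ into bounds on the eigenvalues of the Gram matrix $\Vand_n(\vec{z})^\adj \Vand_n(\vec{z})$ and of the outer product $\Vand_n(\vec{z})\Vand_n(\vec{z})^\adj$. Since both of these matrices are Hermitian, their spectral norms are simply the maximum magnitude of their eigenvalues, so everything reduces to tracking how the interval $[n-1-2\pi/\delta_{\vec{z}},\, n-1+2\pi/\delta_{\vec{z}}]$ shifts after dividing by $n$ and subtracting the identity.

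For the first bound, I would observe that the eigenvalues of $\tfrac{1}{n}\Vand_n(\vec{z})^\adj \Vand_n(\vec{z})$ are exactly $\sigma_i(\Vand_n(\vec{z}))^2/n$ for $i=1,\ldots,r$, and by \cref{thm:moitra} each of these lies in the interval
\[
\left[\,1 - \tfrac{1 + 2\pi/\delta_{\vec{z}}}{n},\; 1 + \tfrac{-1 + 2\pi/\delta_{\vec{z}}}{n}\,\right].
\]
Subtracting $\Id_r$ shifts this interval to include $0$, and the spectral norm is controlled by the larger endpoint in absolute value, giving $\tfrac{1 + 2\pi/\delta_{\vec{z}}}{n}$. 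Since the $z_i$ lie on the unit circle we have $\delta_{\vec{z}} \le 2$, so $1 + 2\pi/\delta_{\vec{z}} = \order(1/\delta_{\vec{z}})$, yielding the stated $\order(1/(\delta_{\vec{z}} n))$ rate.

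For the second bound, the outer product $\tfrac{1}{n}\Vand_n(\vec{z})\Vand_n(\vec{z})^\adj$ has the same nonzero eigenvalues $\sigma_i^2/n$ as the Gram matrix together with $n-r$ extra zero eigenvalues, so after subtracting $\Id_n$ the eigenvalues are $\sigma_i^2/n - 1$ (which lie in $[-\tfrac{1+2\pi/\delta_{\vec{z}}}{n},\, \tfrac{-1+2\pi/\delta_{\vec{z}}}{n}]$) together with $-1$ of multiplicity $n-r$. The spectral norm is then the maximum of $1$ and $\tfrac{1+2\pi/\delta_{\vec{z}}}{n}$; under the hypothesis $n > 1 + 2\pi/\delta_{\vec{z}}$ the latter is strictly less than $1$, so the spectral norm is exactly bounded by $1$. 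There is no real obstacle in either step — the only thing to be careful about is verifying that the running hypothesis $n > 1 + 2\pi/\delta_{\vec{z}}$ is exactly what is needed to keep the rank-deficient direction ($-1$ eigenvalue) from being dominated by an even larger positive deviation $\sigma_{\max}^2/n - 1$.
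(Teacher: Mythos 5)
Your proof is correct and matches the paper's intent: the paper states this corollary as an immediate consequence of \cref{thm:moitra} without writing out a proof, and your argument (translating the singular-value interval into eigenvalue bounds for the Gram matrix and outer product, then noting $\delta_{\vec{z}}\le 2$ to absorb the additive $1$ into $\order(1/\delta_{\vec{z}})$, and using $n>1+2\pi/\delta_{\vec{z}}$ so the $-1$ eigenvalues of the rank-deficient outer product dominate) is exactly the intended verification.
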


\subsection{Vandermonde basis and eigenbasis} \label{sec:vand-eig}
The exact Toeplitz matrix $\mat{T}$ \cref{eqn:T} has two useful factorizations, each of which packages useful information about the matrix.
The first factorization is the eigendecomposition
\begin{equation}\label{eqn:eigendecomposition}
    \mat{T} = \mat{Q} \mat{\Sigma} \mat{Q}^\adj \for \mat{Q} \in \complex^{n\times n}, \: \mat{\Sigma} \in \real^{n\times n}.
\end{equation}
The eigendecomposition is useful because it can be easily computed using standard methods from numerical linear algebra.
The second useful factorization is the \emph{Vandermonde decomposition}
\begin{equation*}
    \mat{T} = \Vand_n(\zdom)\diag(\mudom) \Vand_n(\zdom)^\adj.
\end{equation*}
The Vandermonde decomposition contains information about the frequency spectrum.
However, it can only be computed indirectly, such as by procedures such as ESPRIT which use the eigendecomposition as a subroutine.

Recall that $$\zgap:=\min_{1\leq i\leq r,1\leq j\leq d,i\neq j}|z_i-z_j|\leq \min_{1\leq i,j\leq r,i\neq j}|z_i-z_j|.$$ A direct consequence of \cref{thm:moitra} is the following estimates for the eigenvalues of $\mat{T}$:

\begin{corollary}[Toeplitz matrix: eigenvalues]\label{cor:moitra_T}
Assume $n > 1 + 2\pi/\Delta_{\vec{z}}$.
It holds that
\begin{equation}\label{eqn:eigenvalue_prop}
\lambda_1(\mat{T}) = \mu_1\left(n-1+\frac{2\pi}{\Delta_{\vec{z}}}\right), \quad \lambda_r(\mat{T}) = \mu_r \left(n-1-\frac{2\pi}{\Delta_{\vec{z}}}\right)\,,
\end{equation}
and $\lambda_i(\mat{T}) = 0$ for $i=r+1,\ldots,n$. 
\end{corollary}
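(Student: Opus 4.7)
The plan is to read off both halves of the statement directly from the Vandermonde decomposition $\mat{T} = \Vand_n(\zdom)\diag(\mudom)\Vand_n(\zdom)^\adj$ given in \cref{eqn:T}, by combining Moitra's singular value bounds (\cref{thm:moitra}) with Ostrowski's theorem (\cref{thm:ostrowski}). The zero-eigenvalue claim is essentially free: since $\Vand_n(\zdom) \in \complex^{n \times r}$, the matrix $\mat{T}$ has rank at most $r$, so $\lambda_i(\mat{T}) = 0$ for $i = r+1, \ldots, n$.

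For the two nonzero eigenvalue estimates, I would apply \cref{thm:ostrowski} with $\mat{A} = \diag(\mudom)$ (whose eigenvalues are $\mu_1 \ge \cdots \ge \mu_r > 0$) and $\mat{B} = \Vand_n(\zdom)$. Under the hypothesis $n > 1 + 2\pi/\Delta_{\vec{z}}$, \cref{thm:moitra} certifies that $\Vand_n(\zdom)$ has full column rank, so Ostrowski yields
\[
\sigma_{\min}^2(\Vand_n(\zdom))\,\mu_i \;\le\; \lambda_i(\mat{T}) \;\le\; \sigma_{\max}^2(\Vand_n(\zdom))\,\mu_i \quad \text{for } i = 1, \ldots, r.
\]
Specializing to $i = 1$ (upper bound) and $i = r$ (lower bound) will produce the two stated inequalities once the singular values of $\Vand_n(\zdom)$ are controlled.

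To control those singular values, I would invoke \cref{thm:moitra} applied to the vector $\zdom \in \unitcircle^r$ with its intrinsic separation $\delta_{\zdom} \coloneqq \min_{1 \le i \neq j \le r} |z_i - z_j|$. By the definition of $\Delta_{\vec{z}}$ in \cref{eqn:condition_zgap}, which minimizes over a larger index set that also includes the tail, we have $\delta_{\zdom} \ge \Delta_{\vec{z}}$, hence $2\pi/\delta_{\zdom} \le 2\pi/\Delta_{\vec{z}}$. \cref{thm:moitra} therefore gives
\[
\sigma_{\max}^2(\Vand_n(\zdom)) \le n - 1 + \frac{2\pi}{\Delta_{\vec{z}}}, \qquad \sigma_{\min}^2(\Vand_n(\zdom)) \ge n - 1 - \frac{2\pi}{\Delta_{\vec{z}}},
\]
and plugging into the Ostrowski sandwich concludes the proof.

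There is essentially no substantive obstacle: the whole argument is a two-line reduction to \cref{thm:moitra,thm:ostrowski}. The only point requiring any care is the bookkeeping between the two separation quantities $\delta_{\zdom}$ and $\Delta_{\vec{z}}$, and in particular the observation that the hypothesis $n > 1 + 2\pi/\Delta_{\vec{z}}$ is automatically strong enough to activate \cref{thm:moitra} (which nominally requires $n > 1 + 2\pi/\delta_{\zdom}$), since $\delta_{\zdom} \ge \Delta_{\vec{z}}$ makes the latter threshold no larger than the former.
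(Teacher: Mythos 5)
Your proposal is correct and follows essentially the same route as the paper: apply Ostrowski's theorem to the Vandermonde decomposition $\mat{T} = \Vand_n(\zdom)\diag(\mudom)\Vand_n(\zdom)^\adj$ and then control $\sigma_{\min}(\Vand_n(\zdom))$ and $\sigma_{\max}(\Vand_n(\zdom))$ via Moitra's bounds, with the rank argument handling the zero eigenvalues. Your extra remark that $\delta_{\zdom}\ge\Delta_{\vec{z}}$ (and your reading of the displayed "equalities" as the inequalities the proof actually establishes) is consistent with how the paper uses the result.
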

\begin{proof}
Applying Ostrowski's theorem (\cref{thm:ostrowski}) to the Vandermonde decomposition $\mat{T} = \Vand_n(\zdom) \cdot \diag(\mudom) \Vand_n(\zdom)^\adj$ gives that for any $1\leq i \leq r$,
\begin{align*}
    \sigma_{\min}^2(\mat{V}_n(\zdom))\cdot \mu_i \leq \lambda_i(\mat{T})\leq \sigma_{\max}^2(\mat{V}_n(\zdom))\cdot \mu_i\,.
\end{align*}
By \cref{thm:moitra}, we have
\begin{align*}
    \lambda_1(\mat{T}) \leq \mu_1\left(n-1+\frac{2\pi}{\Delta_{\vec{z}}}\right),\quad \text{and}\quad \lambda_r(\mat{T})\geq \mu_r \left(n-1-\frac{2\pi}{\Delta_{\vec{z}}}\right)\,.
\end{align*}
The corollary then follows.
\end{proof}

The essential fact that we will make frequent use of throughout our analysis is that when $n\gg 1/\zgap$, the eigenbasis $\mat{Q}$ and the scaled Vandermonde basis $\Vand_n(\zdom)/\sqrt{n}$ are equivalent up to multiplication by a nearly unitary matrix $\mat{P}_v$.
The result is as follows:
\vandereigen*

\begin{proof}

Specifically, because the image of $\mat{Q}_r$ %
equals to the image of $\mat{V}_n(\zdom)$ by the definition of $\mat{Q}_r$, there exists an invertible matrix $\mat{P}_v\in\mathbb{C}^{r\times r}$ such that
\begin{align*}%
\mat{Q}_r=\frac{1}{\sqrt{n}}\mat{V}_n(\zdom)\mat{P}_v\,.
\end{align*}
By multiplying both sides of this equation by the pseudoinverse of $\mat{V}_n(\zdom)/\sqrt{n}$, we obtain
\begin{align}\label{eq:def_P_v}
\mat{P}_v:=\left(\frac{1}{n}\mat{V}^\dagger_n(\zdom)\mat{V}_n(\zdom)\right)^{-1}\frac{1}{\sqrt{n}}\mat{V}^\dagger_n(\zdom)\mat{Q}_r\,.
\end{align}

It remains to prove \cref{eqn:P_orthogonal}.
Notice that the matrices $\mat{P}_v \mat{P}_v^\dagger-\Id_r$ and $\mat{P}_v^\dagger \mat{P}_v-\Id_r$ are adjoints of each other. Thus, we only need to prove one of the equality in \cref{eqn:P_orthogonal}.
Consider the second equality. We have
\begin{equation} \label{eq:P_v_manipulations}
    \begin{split}
    \left\|\mat{P}_v^\dagger \mat{P}_v-\Id_r\right\|_2=&~\left\|\mat{P}_v^\dagger \mat{P}_v-\mat{Q}_r^\dagger\mat{Q}_r\right\|_2\\
    =&~ \left\|\mat{P}_v^\dagger\left(\Id_r-\frac{\mat{V}^\dagger_n(\zdom)\mat{V}_n(\zdom)}{n}\right)\mat{P}_v\right\|_2\\
    =&~ \mathcal{O}\left(\frac{1}{\Delta_{\vec{z}} n}\cdot \|\mat{P}_v^\dagger\mat{P}_v\|_2\right)\\
    = &~ {\cal O}\left(\frac{1}{\Delta_{\vec{z}}n}\cdot \left(\|\mat{P}^\dagger_v \mat{P}_v-\Id_r\|_2 + \|\Id_r\|_2\right)\right)\\
    = &~ {\cal O}\left(\frac{1}{\Delta_{\vec{z}}n}\right)\cdot \left\|\mat{P}^\dagger_v \mat{P}_v-\Id_r\right\|_2 + {\cal O}\left(\frac{1}{\Delta_{\vec{z}}n}\right)\,.
    \end{split}
\end{equation}
The first step follows from the fact that $\mat{Q}_r$ has orthonormal columns, $\mat{Q}_r^\dagger \mat{Q}_r=\Id_r$.
The second step follows from the identity $\mat{Q}_r=\mat{V}_n(\zdom)\mat{P}_v/\sqrt{n}$, the submultipicative property of the norm $\norm{\cdot}_2$, and the fact that $\norm{\mat{P}_v^\adj\mat{P}_v}_2^2 = \norm{\mat{P}_v}_2^2 = \norm{\mat{P}_v^\adj}_2^2$.
The third step follows from \cref{cor:V_dagger_V_dom}.
The fourth step follows from triangle inequality. 
Now, rearrange \cref{eq:P_v_manipulations} to obtain the desired conclusion
\begin{align*}%
    \left\|\mat{P}_v^\dagger \mat{P}_v-\Id_r\right\|_2 = {\cal O}\left(\frac{1/(\Delta_{\vec{z}}n)}{1-1/(\Delta_{\vec{z}}n)}\right)=\mathcal{O}\left(\frac{1}{\Delta_{\vec{z}} n}\right)\,.
\end{align*}    
The lemma is then proved.
\end{proof}

As a corollary of this result, we obtain a result for $\mat{Q}_\downarrow$, defined to be $\mat{Q}_r$ with its first row deleted.

\begin{corollary}[Approximate unitarity of $\mat{Q}_\downarrow$]\label{cor:approximate-orthonormality}
    Assuming $n > 1 + 2\pi/\zgap$, it holds that
    \begin{equation} \label{eq:approximate-orthonormality}
        \norm{\mat{Q}_\downarrow^\adj \mat{Q}_\downarrow - \Id_r}_2 = \order \left( \frac{1}{\zgap n} \right).
    \end{equation}
    In addition,
    \begin{equation} \label{eq:W-bound-1}
        \norm{\mat{Q}_\uparrow}_2, \norm{\mat{Q}_\uparrow^\pinv}_2 = \order(1).
    \end{equation}
\end{corollary}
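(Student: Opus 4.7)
The natural strategy is to pull both statements back to Vandermonde matrices via Lemma~\ref{lem:vander-eigen}. Recall that it supplies an invertible $\mat{P}_v$ with $\mat{Q}_r=\frac{1}{\sqrt{n}}\Vand_n(\zdom)\mat{P}_v$ and $\norm{\mat{P}_v^\adj \mat{P}_v-\Id_r}_2=\order(1/(\zgap n))$, hence $\norm{\mat{P}_v}_2,\norm{\smash{\mat{P}_v^{-1}}}_2 = \order(1)$. The key structural observations are that $\Vand_n(\zdom)_\uparrow = \Vand_{n-1}(\zdom)$ (dropping the last row just removes the highest-degree monomial) and $\Vand_n(\zdom)_\downarrow = \Vand_{n-1}(\zdom)\diag(\zdom)$ (dropping the first row shifts exponents by one, which factors as multiplication by $\diag(\zdom)$). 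Since $\zdom\in\unitcircle^r$, $\diag(\zdom)$ is unitary, so $\diag(\overline{\zdom})\diag(\zdom)=\Id_r$.

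For the first claim, I would write
\[
\mat{Q}_\downarrow^\adj\mat{Q}_\downarrow
=\tfrac{1}{n}\mat{P}_v^\adj\diag(\overline{\zdom})\,\Vand_{n-1}(\zdom)^\adj\Vand_{n-1}(\zdom)\,\diag(\zdom)\mat{P}_v,
\]
and apply Corollary~\ref{cor:moitra} to $\Vand_{n-1}(\zdom)$ (legal because $n-1>2\pi/\zgap$ by the hypothesis $n>1+2\pi/\zgap$) to obtain a matrix $\mat{M}$ with $\tfrac{1}{n-1}\Vand_{n-1}(\zdom)^\adj\Vand_{n-1}(\zdom)=\Id_r+\mat{M}$ and $\norm{\mat{M}}_2 = \order(1/(\zgap n))$. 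Substituting and using $\diag(\overline{\zdom})\diag(\zdom)=\Id_r$ gives
\[
\mat{Q}_\downarrow^\adj\mat{Q}_\downarrow - \Id_r
= \tfrac{n-1}{n}\bigl(\mat{P}_v^\adj\mat{P}_v - \Id_r\bigr) - \tfrac{1}{n}\Id_r + \tfrac{n-1}{n}\mat{P}_v^\adj\diag(\overline{\zdom})\mat{M}\diag(\zdom)\mat{P}_v.
\]
The first term is $\order(1/(\zgap n))$ by Lemma~\ref{lem:vander-eigen}; the third is $\order(1/(\zgap n))$ by submultiplicativity, unitarity of $\diag(\zdom)$, and $\norm{\mat{P}_v}_2=\order(1)$; the middle $\tfrac{1}{n}\Id_r$ term is $\order(1/n)=\order(1/(\zgap n))$ since $\zgap\le 2$ on $\unitcircle$. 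Summing yields \eqref{eq:approximate-orthonormality}.

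For the bounds on $\mat{Q}_\uparrow$, I would use $\mat{Q}_\uparrow = \tfrac{1}{\sqrt{n}}\Vand_{n-1}(\zdom)\mat{P}_v$ together with Theorem~\ref{thm:moitra} applied to $\Vand_{n-1}(\zdom)$, which yields $\sigma_{\max}(\Vand_{n-1}(\zdom))\le\sqrt{n-1+2\pi/\zgap}$ and $\sigma_{\min}(\Vand_{n-1}(\zdom))\ge\sqrt{n-1-2\pi/\zgap}$. Combined with $\norm{\mat{P}_v}_2,\norm{\smash{\mat{P}_v^{-1}}}_2=\order(1)$ and $n=\Omega(1/\zgap)$, this gives $\norm{\mat{Q}_\uparrow}_2=\order(1)$ and $\sigma_{\min}(\mat{Q}_\uparrow)=\Omega(1)$, hence $\norm{\smash{\mat{Q}_\uparrow^\pinv}}_2=\order(1)$.

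\textbf{Main difficulty.} There isn't really a substantive obstacle here: the content is already packaged in Lemma~\ref{lem:vander-eigen} and Corollary~\ref{cor:moitra}, and the argument is mostly bookkeeping. The only point that requires a moment of care is making sure the transition from an $(n-1)$-normalization to an $n$-normalization does not degrade the error bound from $\order(1/(\zgap n))$ to $\order(1/n)$; this is absorbed using the universal bound $\zgap\le 2$ on the unit circle.
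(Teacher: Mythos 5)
Your proof is correct and follows essentially the same path as the paper: both pull $\mat{Q}_\downarrow$ and $\mat{Q}_\uparrow$ back to $\Vand_{n-1}(\zdom)$ via $\mat{P}_v$ from Lemma~\ref{lem:vander-eigen}, use the factorizations $\Vand_n(\zdom)_\uparrow=\Vand_{n-1}(\zdom)$ and $\Vand_n(\zdom)_\downarrow=\Vand_{n-1}(\zdom)\diag(\zdom)$, and invoke Moitra's singular-value bounds. The only cosmetic difference is in the bookkeeping for \eqref{eq:approximate-orthonormality}: the paper first swaps $\mat{P}_v^\adj$ for $\mat{P}_v^{-1}$ (at cost $\order(1/(\zgap n))$) so that the $\Id_r$ passes cleanly through the similarity conjugation $\mat{P}_v^{-1}\mat{D}^\adj(\cdot)\mat{D}\mat{P}_v$, whereas you expand directly into three explicit error terms and absorb the $n$-versus-$(n-1)$ normalization mismatch as a separate $-\tfrac{1}{n}\Id_r$ piece controlled by $\zgap\le 2$; both bookkeepings give the same bound. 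Your treatment of \eqref{eq:W-bound-1} via a direct two-sided singular-value estimate on $\Vand_{n-1}(\zdom)\mat{P}_v$ is if anything cleaner than the paper's, which extracts it from the $\mat{Q}_\downarrow$ bound with a slight notational slip.
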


\begin{proof}
    Utilizing the matrix from \cref{lem:vander-eigen}, we have
    \begin{equation*}
        \mat{Q}_\downarrow = \frac{1}{\sqrt{n}} \Vand_n(\zdom)_\downarrow \mat{P}_v.
    \end{equation*}
    By the definition of the Vandermonde matrix, 
    \begin{equation*}
        \Vand_n(\zdom)_\downarrow = \Vand_{n-1}(\zdom) \cdot \mat{D} \quad \text{for } \mat{D} = \diag(\zdom).
    \end{equation*}
    Thus, we obtain
    \begin{equation} \label{eq:near-orth-proof}
        \begin{split}
        \norm{\mat{Q}_\downarrow^\adj \mat{Q}_\downarrow - \Id_r}_2 
        &= \norm{\frac{1}{n} \mat{P}_v^\adj \mat{D}^\adj \Vand_{n-1}(\zdom)^\adj \Vand_{n-1}(\zdom) \mat{D} \mat{P}_v - \Id_r}_2 \\
        &\le \norm{\frac{1}{n} \mat{P}_v^{-1} \mat{D}^\adj \Vand_{n-1}(\zdom)^\adj \Vand_{n-1}(\zdom) \mat{D} \mat{P}_v - \Id_r}_2 \\
        &\qquad + \frac{1}{n} \norm{\mat{P}_v^\adj - \mat{P}_v^{-1}}_2 \norm{\Vand_{n-1}(\zdom)}_2^2\norm{\mat{P}_v}_2.
        \end{split}
    \end{equation}
    For the inequality, we use the triangle inequality and the submultiplicative property of the norm $\norm{\cdot}_2$.
    To bound $\norm{\mat{P}_v^\adj - \mat{P}_v^{-1}}_2$, employ the submultiplicative property again and use \cref{lem:vander-eigen}:
    \begin{equation*}
        \norm{\mat{P}_v^\adj - \mat{P}_v^{-1}}_2 \le \norm{\mat{P}_v^\adj\mat{P}_v - \Id_r}_2\norm{\mat{P}_v^{-1}}_2 = \order \left( \frac{1}{\zgap n} \right).
    \end{equation*}
    Using this bound, \cref{thm:moitra} to bound $\norm{\Vand_{n-1}(\zdom)}_2$, and \cref{lem:vander-eigen} to bound $\norm{\mat{P}_v}_2$, \cref{eq:near-orth-proof} simplifies to
    \begin{align*}
        \norm{\mat{Q}_\downarrow^\adj \mat{Q}_\downarrow - \Id_r}_2 &= \norm{\mat{P}_v^{-1}\mat{D}^\adj \left( \frac{1}{n} \Vand_{n-1}(\zdom)^\adj \Vand_{n-1}(\zdom) - \Id_r \right)\mat{D}\mat{P}_v }_2 + \order \left( \frac{1}{\zgap n} \right) \\
        &= \order\left( \norm{\frac{1}{n} \Vand_{n-1}(\zdom)^\adj \Vand_{n-1}(\zdom) - \Id_r}_2 + \frac{1}{\zgap n} \right)\,.
    \end{align*}
    By \cref{thm:moitra}, the eigenvalues of $\Vand_{n-1}(\zdom)^\adj \Vand_{n-1}(\zdom)$ within a range $n \pm \order(1/\zgap)$.
    Thus, the eigenvalues of $\Vand_{n-1}(\zdom)^\adj \Vand_{n-1}(\zdom)/n - \Id_r$ are within a range $\pm \order(\tfrac{1}{\zgap n})$, from which it follows that
    \begin{equation*}
        \norm{\mat{Q}_\downarrow^\adj \mat{Q}_\downarrow - \Id_r}_2 =\order\left( \norm{\frac{1}{n} \Vand_{n-1}(\zdom)^\adj \Vand_{n-1}(\zdom) - \Id_r}_2 + \frac{1}{\zgap n} \right) = \order \left( \frac{1}{\zgap n}\right).
    \end{equation*}
    This is the first conclusion \cref{eq:approximate-orthonormality}.

    The conclusion \cref{eq:approximate-orthonormality} entails that the singular values of $\mat{Q}_\downarrow$ satisfy the following bound
    \begin{equation*}
        \left| \sigma_j^2(\mat{Q}_\downarrow) - 1 \right| = \order \left( \frac{1}{\zgap n} \right) \for j = 1,\ldots,r.
    \end{equation*}
    Therefore, it follows that $\sigma_j(\mat{Q}_\downarrow) = \Theta(1)$ for each $j$, from which \cref{eq:W-bound-1} follows.
\end{proof}

\section{Deferred proofs for \texorpdfstring{\cref{sec:esprit-0.5}}{Section 2}} \label{app:esprit-0.5}

In this section, we prove \cref{lem:eigenvectors-weak,lem:W-to-frequency} from \cref{sec:esprit-0.5}.
We begin in \cref{sec:error-matrix-bounds} by proving some supplementary error bounds that we will use.
Then, we prove \cref{lem:eigenvectors-weak} in  \cref{sec:eigenvectors-weak} and \cref{lem:W-to-frequency} in \cref{sec:W-to-frequency}.

\subsection{Error matrix bounds} \label{sec:error-matrix-bounds}

The error matrix $\mat{E} = \mat{E}_{\rm tail} + \mat{E}_{\rm random}$ neatly decomposes into a sum of two terms.
In this section, we develop appropriate bounds for each of these terms separately.

For $\mat{E}_{\rm random}$, we employ the following estimate for the norm of the Toeplitz matrix of random errors:

\begin{lemma}[Norm of a random Toeplitz matrix] \label{lem:toeplitz}\label{lem:Mecks}
    Assume that the errors $\{E_j\}_{j=0}^{n-1}$ satisfy condition~\eqref{eqn:alpha_E_j}.
    Then, for a universal constant $C > 0$, 
    \begin{equation*}
        \prob \left\{ \norm{\mat{E}_{\rm random}}_2 \ge C\alpha\sqrt{n \log n} \right\} \le \frac{1}{n^2}.
    \end{equation*}
    If the random variables $E_j$ are Gaussian, we have the lower bound $\expect [\norm{\mat{E}_{\rm random}}_2] = \Omega(\alpha\sqrt{n \log n})$, so this result is sharp.
\end{lemma}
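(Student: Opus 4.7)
The plan is to prove the upper bound via a matrix concentration inequality applied to the decomposition of $\mat{E}_{\mathrm{random}}$ into a sum of independent sub-Gaussian scalars times fixed Hermitian matrices, and the matching Gaussian lower bound via approximation by a random circulant matrix diagonalized in the Fourier basis. Writing $E_k = \xi_k + \iu \eta_k$ with $\xi_k, \eta_k$ independent real sub-Gaussians of parameter $\order(\alpha)$, and letting $\mat{J}_k \in \complex^{n\times n}$ denote the matrix with $1$'s on the $k$-th subdiagonal, I would start from the expansion
\begin{equation*}
    \mat{E}_{\mathrm{random}} = \xi_0 \Id_n + \sum_{k=1}^{n-1} \Bigl[ \xi_k \bigl(\mat{J}_k + \mat{J}_k^\adj\bigr) + \eta_k \iu\bigl(\mat{J}_k - \mat{J}_k^\adj\bigr) \Bigr].
\end{equation*}
Each summand is an independent sub-Gaussian scalar times a Hermitian matrix of unit operator norm.

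The key input to matrix Hoeffding (or, equivalently, the noncommutative Khintchine inequality) is the matrix variance proxy. A direct calculation shows that $\mat{J}_k \mat{J}_k^\adj$ and $\mat{J}_k^\adj \mat{J}_k$ are diagonal $0/1$ matrices supported on the index sets $\{k,\ldots,n-1\}$ and $\{0,\ldots,n-1-k\}$ respectively, and applying the identity $(A+B)^2 - (A-B)^2 = 2AB + 2BA$ with $A = \mat{J}_k$, $B = \mat{J}_k^\adj$ yields
\begin{equation*}
    \sum_{k=1}^{n-1}\Bigl[ (\mat{J}_k+\mat{J}_k^\adj)^2 + \bigl(\iu(\mat{J}_k-\mat{J}_k^\adj)\bigr)^2 \Bigr] = 2\sum_{k=1}^{n-1}\bigl(\mat{J}_k \mat{J}_k^\adj + \mat{J}_k^\adj \mat{J}_k\bigr) = 2(n-1)\Id_n,
\end{equation*}
so the variance proxy is bounded by $\order(n\alpha^2)\cdot\Id_n$. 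Matrix Hoeffding then yields a bound of the form $\prob\{\norm{\mat{E}_{\mathrm{random}}}_2 \ge t\} \le 2n \exp\bigl(-c t^2/(n\alpha^2)\bigr)$, and plugging in $t = C\alpha\sqrt{n\log n}$ for a sufficiently large universal constant $C$ gives failure probability at most $1/n^2$.

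For the matching Gaussian lower bound, my plan is to test $\mat{E}_{\mathrm{random}}$ against the discrete Fourier basis vectors $\vec{v}_\ell = n^{-1/2}(e^{2\pi\iu j\ell/n})_{j=0}^{n-1}$. Replacing $\mat{E}_{\mathrm{random}}$ by its nearest Hermitian circulant matrix makes the $\vec{v}_\ell$ exact eigenvectors, and the eigenvalues become a DFT of the noise sequence, each Gaussian with variance $\Theta(n\alpha^2)$; the maximum of $n$ such i.i.d.\ Gaussians is $\Omega(\alpha\sqrt{n\log n})$ in expectation, while the Toeplitz-to-circulant discrepancy is of strictly lower order and can be absorbed. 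The main technical nuisance I anticipate is cleanly tracking constants in the real/imaginary decomposition through matrix Hoeffding, and carefully handling the Toeplitz-to-circulant error in the lower bound; a cleaner alternative, and likely the one the authors adopt given the lemma's nickname, is to invoke the theorem of Meckes on the spectral norm of a random Hermitian Toeplitz matrix as a black box and specialize it to the complex sub-Gaussian setting.
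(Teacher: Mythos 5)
Your upper-bound argument is correct and is essentially the calculation the paper outsources: the paper does not prove this lemma but cites Meckes for the real case and points to Tropp's matrix-concentration notes for "a specific calculation," and that calculation is exactly your decomposition into sub-Gaussian scalars times the Hermitian matrices $\mat{J}_k+\mat{J}_k^\adj$ and $\iu(\mat{J}_k-\mat{J}_k^\adj)$, with variance proxy $2(n-1)\Id_n$ and a matrix Hoeffding/Khintchine tail bound. Your closing guess about the authors' route is accurate.

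The one genuine soft spot is in the lower-bound sketch, where you assert that the Toeplitz-to-circulant discrepancy "is of strictly lower order and can be absorbed." It is not: the difference between $\mat{E}_{\rm random}$ and its nearest Hermitian circulant is itself a random Toeplitz-structured matrix (supported on the corners $|i-j|>n/2$, say) whose entries are differences of independent Gaussians, so its operator norm is of the same order $\Theta(\alpha\sqrt{n})$ up to $\order(\alpha\sqrt{n\log n})$ as the quantity you are trying to lower-bound, and the triangle inequality gives you nothing without a comparison of constants. The standard fix (and Meckes's actual argument) avoids the circulant entirely: test $\mat{E}_{\rm random}$ directly against the Fourier vectors, note that the quadratic forms $\vec{v}_\ell^\adj\mat{E}_{\rm random}\vec{v}_\ell$ are jointly Gaussian with variance $\Theta(\alpha^2 n)$ each and sufficiently decorrelated across $\ell$, and invoke a lower bound for the expected maximum of such a Gaussian family (e.g., Sudakov minoration) to get $\Omega(\alpha\sqrt{n\log n})$. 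Since the sharpness claim is a side remark that the paper also only cites and never uses, this gap does not affect anything downstream, but as written that step of your proposal would not go through.
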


This result appears in the proof of \cite[Thm.~2 (Page 320)]{Meckes2007OnTS} for $E_j$ real.
Estimates of this form with explicit constants in the real or complex case can easily be obtained from the theory of matrix concentration (see, e.g., \cite[sec.~4.4]{Tro15} for a specific calculation).

Using this result, we can bound the norm of the error matrix in total:

\begin{proposition}[Error matrix, norm bound] \label{prop:error-matrix}
It holds that
\begin{align}\label{eqn:bound_tail}
    \|\mat{E}_{\rm tail}\|_2 \le n\mu_{\rm tail}.
\end{align}
    Assume the hypotheses of \cref{lem:toeplitz}.
    Then, with probability at least $1-1/n$, 
    \begin{equation*}
        \norm{\mat{E}}_2 = n \mu_{\rm tail} + \order(\alpha \sqrt{n \log n}).
    \end{equation*}
\end{proposition}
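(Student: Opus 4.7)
The plan is to handle the two summands of $\mat{E} = \mat{E}_{\rm tail} + \mat{E}_{\rm random}$ separately and combine them via the triangle inequality.

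For the first bound \eqref{eqn:bound_tail}, I would write $\mat{E}_{\rm tail}$ as the sum of rank-one terms given by its Vandermonde factorization \eqref{T_hat}:
\begin{equation*}
\mat{E}_{\rm tail} = \sum_{i=r+1}^{d} \mu_i\, \vvec_n(z_i)\vvec_n(z_i)^\adj.
\end{equation*}
Since $\mu_i \ge 0$ and $\norm{\vvec_n(z_i)\vvec_n(z_i)^\adj}_2 = \norm{\vvec_n(z_i)}_2^2 = n$ (because $|z_i|=1$), the triangle inequality yields
\begin{equation*}
\norm{\mat{E}_{\rm tail}}_2 \le \sum_{i=r+1}^{d} \mu_i \cdot n = n\mu_{\rm tail},
\end{equation*}
which is \eqref{eqn:bound_tail}.

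For the second bound, I would simply invoke \cref{lem:toeplitz} (\cref{lem:Mecks}), which states that $\norm{\mat{E}_{\rm random}}_2 = \order(\alpha\sqrt{n \log n})$ with probability at least $1 - 1/n^2 \ge 1 - 1/n$. Combining with \eqref{eqn:bound_tail} via the triangle inequality,
\begin{equation*}
\norm{\mat{E}}_2 \le \norm{\mat{E}_{\rm tail}}_2 + \norm{\mat{E}_{\rm random}}_2 = n\mu_{\rm tail} + \order(\alpha\sqrt{n\log n})
\end{equation*}
on the same event. This gives the desired conclusion.

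There is no substantive obstacle here; the proof is a direct consequence of the rank-one decomposition of $\mat{E}_{\rm tail}$, the unit-modulus structure of the $z_i$, and the pre-established random Toeplitz norm bound. The only minor point to note is that the $n\mu_{\rm tail}$ bound is stated as an equality (inside the $\order$-notation) rather than a strict upper bound, reflecting the worst-case additive contribution of the tail; this is tight, for instance, when $d=r+1$ and all tail mass is concentrated at a single frequency.
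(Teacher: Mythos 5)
Your proof is correct and follows essentially the same route as the paper's: the rank-one Vandermonde decomposition with $\norm{\vvec_n(z_i)\vvec_n(z_i)^\adj}_2 = n$ for the tail term, then the triangle inequality combined with \cref{lem:toeplitz} for the random term. The observation that the $1-1/n^2$ probability from \cref{lem:toeplitz} implies the stated $1-1/n$ is the only bookkeeping point, and you handled it correctly.
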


\begin{proof}
    Invoke the triangle inequality and the fact that $\norm{\vvec_n(z)\vvec_n(z)^\adj}_2 = n$ for any $z \in \unitcircle$:
    \begin{align*}
        \|\mat{E}_{\rm tail}\|_2\leq \sum_{i=r+1}^d \norm{\mu_i \vvec_n(z_i)\vvec_n(z_i)^\adj}_2 = \sum_{i=r+1}^d \mu_i \cdot n = n\mu_{\rm tail},
    \end{align*}
    where the last step follows from the definition of $\mu_{\rm tail}$ in \cref{eqn:condition_mutal}.

    Combining this with \cref{lem:toeplitz}, we get that
    \begin{equation*}
        \norm{\mat{E}}_2 \le \norm{\mat{E}_{\rm tail}} + \norm{\mat{E}_{\rm random}}_2 \le n \mu_{\rm tail} + \order(\alpha\sqrt{n\log n})\,. 
    \end{equation*}
    This is the desired conclusion.
\end{proof}

Finally, we need the following specialized bound on $\norm{\mat{E}_{\rm tail}\mat{Q}_r}_2$:

\begin{proposition}[Tail error in the dominant eigenspace] \label{prop:tail-error}
    We have the bound
    \begin{equation*}
        \norm{\mat{E}_{\rm tail}\mat{Q}_r}_2 = \order\left( \frac{\mu_{\rm tail}}{\zgap}\right).
    \end{equation*}
\end{proposition}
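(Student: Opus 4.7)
The plan is to expand $\mat{E}_{\rm tail}$ as a sum of rank-one terms, apply the triangle inequality, and then use \cref{lem:vander-eigen} together with Moitra's theorem (\cref{thm:moitra}) to bound each resulting inner product.

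First, I would write
\begin{equation*}
    \mat{E}_{\rm tail}\mat{Q}_r = \sum_{i=r+1}^{d} \mu_i \vvec_n(z_i) \vvec_n(z_i)^\adj \mat{Q}_r,
\end{equation*}
and use the triangle inequality together with $\norm{\vvec_n(z_i)}_2 = \sqrt{n}$ to obtain
\begin{equation*}
    \norm{\mat{E}_{\rm tail}\mat{Q}_r}_2 \le \sqrt{n} \sum_{i=r+1}^{d} \mu_i \cdot \norm{\vvec_n(z_i)^\adj \mat{Q}_r}_2 .
\end{equation*}
By \cref{lem:vander-eigen}, we have $\mat{Q}_r = \frac{1}{\sqrt{n}} \Vand_n(\zdom) \mat{P}_v$ with $\norm{\mat{P}_v}_2 = \order(1)$, so
\begin{equation*}
    \norm{\vvec_n(z_i)^\adj \mat{Q}_r}_2 \le \frac{\norm{\mat{P}_v}_2}{\sqrt{n}} \, \norm{\Vand_n(\zdom)^\adj \vvec_n(z_i)}_2 = \order \left( \frac{1}{\sqrt{n}} \right) \norm{\Vand_n(\zdom)^\adj \vvec_n(z_i)}_2 .
\end{equation*}

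The main step is then to show $\norm{\Vand_n(\zdom)^\adj \vvec_n(z_i)}_2 = \order(1/\zgap)$ for each tail location $z_i$. The key observation is that the extended set $\{z_1,\ldots,z_r,z_i\}$ still has pairwise separation at least $\zgap$, since by definition $\zgap$ controls distances between dominant locations and between dominant and non-dominant locations. Therefore, applying \cref{cor:moitra} to the Vandermonde matrix $\Vand_n([\zdom;\, z_i]) \in \complex^{n\times (r+1)}$ gives
\begin{equation*}
    \norm{\frac{1}{n} \Vand_n([\zdom;\, z_i])^\adj \Vand_n([\zdom;\, z_i]) - \Id_{r+1}}_2 = \order \left( \frac{1}{\zgap n}\right).
\end{equation*}
The last column of the left-hand side, with its diagonal entry removed, is exactly $\frac{1}{n} \Vand_n(\zdom)^\adj \vvec_n(z_i)$, whose Euclidean norm is bounded by the spectral norm of the matrix. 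Hence $\norm{\Vand_n(\zdom)^\adj \vvec_n(z_i)}_2 = \order(1/\zgap)$.

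Combining these bounds,
\begin{equation*}
    \norm{\mat{E}_{\rm tail}\mat{Q}_r}_2 \le \sqrt{n} \sum_{i=r+1}^{d} \mu_i \cdot \order\left( \frac{1}{\sqrt{n}\,\zgap} \right) = \order\left( \frac{\mu_{\rm tail}}{\zgap} \right),
\end{equation*}
using $\sum_{i=r+1}^d \mu_i = \mu_{\rm tail}$. The only mild subtlety is the use of Moitra's bound on the \emph{extended} Vandermonde matrix, which crucially avoids a naive entry-by-entry bound that would lose a factor of $\sqrt{r}$; the row-wise bound $|\vvec_n(z_i)^\adj \vvec_n(z_j)| = \order(1/\zgap)$ from the geometric-sum formula is easy to derive but too weak on its own.
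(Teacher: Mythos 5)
Your proof is correct and follows essentially the same route as the paper's: the same rank-one expansion and triangle inequality, the same reduction via \cref{lem:vander-eigen}, and the same key trick of embedding $\vvec_n(z_i)^\adj \Vand_n(\zdom)$ into the Gram matrix of the extended Vandermonde matrix $\Vand_n([\zdom;\,z_i])$ and invoking \cref{cor:moitra} (the paper extracts the off-diagonal row block where you extract the column, which is just its adjoint). Your observation that the extended node set retains separation $\zgap$ is exactly the point the paper makes as well.
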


\begin{proof}
Write the Vandermonde decomposition of $\mat{E}_{\rm tail}$ in outer product form and use the triangle inequality, obtaining
\begin{align*}
    \norm{\mat{E}_{\rm tail}\mat{Q}_r}_2 =&~ \norm{\sum_{i=r+1}^n \vvec_n(z_i) \mu_i \vvec_n(z_i)^\adj \mat{Q}_r}_2\\
    \le &~\sum_{i=r+1}^d \mu_i \norm{\vvec_n(z_i)}_2 \norm{\vvec_n(z_i)^\adj \mat{Q}_r}_2 \\
    =&~ \sqrt{n}\mu_{\rm tail} \cdot  \max_{r+1\le i\le d}~ \norm{\vvec_n(z_i)^\adj \mat{Q}_r}_2\,,
\end{align*}
where the second step follows from $\|\vec{v}_n(z)\|_2=\sqrt{n}$ for any $z\in \unitcircle$.

By \cref{lem:vander-eigen}, $\mat{Q}_r = \Vand_n(\zdom) \mat{P}_v/\sqrt{n}$, where $\mat{P}_v$ satisfies $\norm{\mat{P}_v}_2 = \order(1)$.
Thus,
\begin{equation*}
    \max_{r+1\le i\le d}~ \norm{\vvec_n(z_i)^\adj \mat{Q}_r}_2 = \order\left( \frac{1}{\sqrt{n}}\max_{r+1\le i\le d}~ \norm{\vvec_n(z_i)^\adj \Vand_n(\zdom)}_2\right).
\end{equation*}
Consider any $i\in \{r+1,\dots,d\}$. Notice that 
\begin{align*}
    \vvec_n(z_i)^\adj \Vand_n(\zdom) = \left(\Vand_n(\vec{w})^\adj \Vand_n(\vec{w})-n\Id_{r+1}\right)(r+1,:r)\,,
\end{align*}
where $\vec{w} =  (\zdom, z_i)\in \C^{r+1}$. Moreover, by the definition of $\Delta_{\vec{z}}$, we know that $\min_{1\leq j\leq r}|z_i-z_j|\geq \Delta_{\vec{z}}$. Thus, invoking \cref{cor:moitra} and \cref{fac:submatrix_norm} gives
\begin{align*}
    \norm{\vvec_n(z_i)^\adj \Vand_n(\zdom)}_2 \leq n\cdot \norm{\frac{1}{n}\Vand_n(\vec{w})^\adj \Vand_n(\vec{w}) - \Id_{r+1} }_2 = {\cal O}\left(\frac{1}{\Delta_{\vec{z}}}\right)\,.
\end{align*}
Therefore,
\begin{equation*}
    \norm{\mat{E}_{\rm tail}\mat{Q}_r}_2\leq \sqrt{n}\mu_{\rm tail} \cdot {\cal O}\left(\frac{1}{\sqrt{n}\Delta_{\vec{z}}}\right) =   \order\left( \frac{\mu_{\rm tail}}{\zgap}\right)\,. 
\end{equation*}
The desired result has been established.
\end{proof}

\subsection{Proof of \texorpdfstring{\cref{lem:eigenvectors-weak}}{Lg}} \label{sec:eigenvectors-weak}

In this section, we provide a proof of \cref{lem:eigenvectors-weak}, broken into two steps.%

\paragraph{Step 1: Proof of \eqref{eq:Q-compare-weak}.}
    The result of \eqref{eq:Q-compare-weak} follows directly by combining matrix perturbation theorems (\cref{app:eigen-perturbation}), singular value bounds for Vandermonde matrices (\cref{sec:moitra}), and the error matrix bounds developed in \cref{sec:error-matrix-bounds}.
    
    By Weyl's theorem (\cref{thm:weyl}) and \cref{prop:error-matrix},
    \begin{equation*}
        \lambda_{r+1}\left(\mat{\hat{T}}\right) \le \lambda_{r+1}(\mat{T}) + \norm{\mat{E}}_2 \le n\mu_{\rm tail} + \order(\alpha\sqrt{n\log n})~~.
    \end{equation*}
    By Ostrowski's theorem (\cref{thm:ostrowski}) and \cref{thm:moitra},
    \begin{equation*}
        \lambda_r(\mat{T}) \ge \lambda_r(\diag(\mudom)) \sigma_{\rm min}(\Vand(\zdom))^2 \ge \mu_r \left(n - 1 - \frac{1}{\zgap}\right) = \Omega(\mu_r n).
    \end{equation*}
    In the final equality, we use the condition $n = \Omega(1/\zgap)$.
    Invoking the previous two displays and, the generalized $\sin \theta$ theorem (\cref{thm:sin}), \cref{prop:tail-error}, and the hypothesis $\mu_r = \Omega(\mutail)$, we obtain
    \begin{equation*}
        \sin \theta(\mat{Q}_r,\mat{\hat{Q}}_r) = \frac{\order(\mu_{\rm tail}/ \zgap +\alpha\sqrt{n\log n})}{\Omega(\mu_r n)} = \order \left( \frac{\mu_{\rm tail}}{\mu_r\zgap n}+ \frac{\alpha\sqrt{\log n}}{\mu_r \sqrt{n}} \right). 
    \end{equation*}
    By \cref{prop:distance-from-angles}, there exists a unitary matrix $\mat{U}_r$ such that
\begin{equation} \label{eq:W-bound-2}
    \norm{\mat{\hat{Q}_r} - \mat{Q}_r\mat{U}_r}_2 \le 2 \sin \theta(\mat{Q}_r,\mat{\hat{Q}}_r) = \order \left( \frac{\mu_{\rm tail}}{\mu_r\zgap n}+ \frac{\alpha\sqrt{\log n}}{\mu_r \sqrt{n}} \right).
\end{equation}

\paragraph{Step 2: Proof of \cref{eq:W-compare-weak}.}
Begin by invoking the triangle inequality and the submultiplicativity of $\norm{\cdot}_2$:
\begin{equation} \label{eq:W-bound-3}
    \begin{split}
    \norm{\mat{\hat{Q}}^\pinv \mat{\hat{Q}}_\downarrow - \mat{U}_r^\adj \mat{Q}^\pinv \mat{Q}_\downarrow\mat{U}_r}_2 &\le \norm{(\mat{\smash{\hat{Q}}}_\uparrow^\pinv - \mat{U}_r^\adj\mat{Q}_\uparrow^\pinv) \mat{\smash{\hat{Q}}}_\downarrow}_2 + \norm{\mat{U}_r^\adj\mat{Q}_\uparrow^\pinv (\mat{\hat{Q}}_\downarrow - \mat{Q}_\downarrow\mat{U}_r)}_2 \\
    &\le \norm{\mat{\smash{\hat{Q}}}_\uparrow^\pinv - \mat{U}_r^\adj\mat{Q}_\uparrow^\pinv}_2 + \norm{\mat{Q}_\uparrow^\pinv}_2 \norm{\mat{\hat{Q}}_r - \mat{Q}_r\mat{U}_r}_2
    \end{split}
\end{equation}
In the second inequality, we used the fact that adding a row to a matrix can only increase its norm, together with the fact that $\norm{\smash{\mat{\hat{Q}}}_r}_2 = 1$.

By \cref{thm:pseudoinverse-perturbation}, we have
\begin{equation*}
    \norm{\mat{\smash{\hat{Q}}}_\uparrow^\pinv - \mat{U}_r^\adj\mat{Q}_\uparrow^\pinv}_2 \le \frac{3\norm{\smash{\mat{Q}_\uparrow^\pinv}}_2^2\norm{\smash{\mat{\hat{Q}}_r - \mat{Q}_r\mat{U}_r}}_2}{1 - \norm{\smash{\mat{Q}_\uparrow^\pinv}}_2 \|\mat{\hat{Q}}_r - \mat{Q}_r\mat{U}_r\|_2} = \order \left( \frac{\mu_{\rm tail}}{\mu_r\zgap n}+ \frac{\alpha\sqrt{\log n}}{\mu_r \sqrt{n}} \right).
\end{equation*}
In the last bound, we use \cref{eq:W-bound-2,eq:W-bound-1}.
Substituting this display into \cref{eq:W-bound-3} and using the bounds \cref{eq:W-bound-1,eq:W-bound-2} again proves \cref{eq:W-compare-weak}.

\subsection{Proof of \texorpdfstring{\cref{lem:W-to-frequency}}{Lemma 2.2}} \label{sec:W-to-frequency}
Utilizing the matrix $\mat{P}_v$ defined in \cref{lem:vander-eigen} and the relations
\begin{equation*}
    \Vand_n(\zdom)_\uparrow = \Vand_{n-1}(\zdom), \quad \Vand_n(\zdom)_\downarrow = \Vand_{n-1}(\zdom) \diag(\zdom)\,,
\end{equation*}
we obtain the following diagonalization of $\mat{Q}_\uparrow^\pinv \mat{Q}_\downarrow$:
\begin{equation*}
    \mat{Q}_\uparrow^\pinv \mat{Q}_\downarrow = \mat{P}_v^{-1} \Vand_{n-1}(\zdom)^\pinv \Vand_{n-1}(\zdom) \diag(\zdom) \mat{P}_v = \mat{P}_v^{-1} \diag(\zdom)\mat{P}_v\,.
\end{equation*}
Observe that the eigenvalues of $\mat{Q}_\uparrow^\pinv \mat{Q}_\downarrow$ are $\zdom$, which satisfy $\zdom = \exp(\iu \arg(\zdom))$. We also note that $\zdom$ are also the eigenvalues of 
\begin{align*}
    \mat{P}^{-1}\mat{Q}^+_\uparrow\mat{Q}_\downarrow\mat{P}=(\mat{P}_v\mat{P})^{-1}\diag(\zdom)(\mat{P}_v\mat{P})\,.
\end{align*}

Now, we apply the Bauer--Fike theorem (\cref{thm:bauer-fike}).
Let $\vec{\hat{\lambda}}$ be the eigenvalues of $\mat{\hat{Q}}_\uparrow^\pinv \mat{\hat{Q}}_\downarrow$.
Then, by the optimal matching distance condition of the Bauer--Fike theorem (\cref{thm:bauer-fike}),
\begin{equation}\label{eq:matching_dis_proof}
\begin{aligned}
     \operatorname{md}(\vec{\hat{z}}_r, \zdom) &= \order\left( \operatorname{md}(\vec{\hat{\lambda}}, \zdom)\right) \\
    &\le \order \left( \norm{\mat{P}_v\mat{P}}_2\norm{\smash{\mat{P}^{-1}\mat{P}_v^{-1}}}_2 \norm{\mat{\hat{Q}}_\uparrow^\pinv \mat{\hat{Q}}_\downarrow - \mat{P}^{-1} \mat{Q}_\uparrow^\pinv \mat{Q}_\downarrow \mat{P}}_2 \right) \\
    &= \order \left(r\norm{\mat{\hat{Q}}_\uparrow^\pinv \mat{\hat{Q}}_\downarrow - \mat{P}^{-1} \mat{Q}_\uparrow^\pinv \mat{Q}_\downarrow \mat{P}}_2 \right)\,.
\end{aligned}
\end{equation}
In the second line, the estimates $\norm{\mat{P}_v}_2, \norm{\smash{\mat{P}_v^{-1}}}_2 = \order(1)$ from \cref{lem:vander-eigen}, the assumption $\norm{\mat{P}}_2,\norm{\smash{\mat{P}^{-1}}}_2 = \order(1)$, and the submultiplicative property of $\norm{\cdot}_2$.

Next, we prove the ``Furthermore'' part of \cref{lem:W-to-frequency}. According to \cref{lem:vander-eigen}, we set $c\in (0,1)$ such that $\norm{\mat{P}_v\mat{P}}_2\cdot\norm{(\mat{P}_v\mat{P})^{-1}}_2 \leq \frac{1}{c}$. Suppose 
\begin{align*}
    \norm{\mat{\hat{Q}}_\uparrow^\pinv \mat{\hat{Q}}_\downarrow - \mat{P}^{-1} \mat{Q}_\uparrow^\pinv \mat{Q}_\downarrow \mat{P}}_2<c\cdot \frac{\Delta_{\vec{z}}}{2}\,.
\end{align*}
For any $i\in [r]$, define ${\cal C}:=\{\xi~|~|\xi-z_i|=\Delta_{\vec{z}}/2\}$. Then, for any $\xi\in {\cal C}$, we have
\begin{align*}
    \norm{\left(\xi \Id_r - \mat{P}^{-1}\mat{Q}_\uparrow^+\mat{Q}_\downarrow \mat{P}\right)^{-1}}_2 = &~ \norm{(\mat{P}_v\mat{P})^{-1} \cdot (\diag(\zdom) - \xi\Id)^{-1}\cdot (\mat{P}_v\mat{P})}_2\\
    \leq &~ \norm{\mat{P}_v\mat{P}}_2\cdot \norm{(\mat{P}_v\mat{P})^{-1}}_2\cdot \norm {(\diag(\zdom) - \xi\Id)^{-1}}_2\\
    \leq &~ \frac{1}{c}\cdot \max_{j\in [r]} |z_j - \xi|^{-1}\leq \frac{2}{c\Delta_{\vec{z}}}\,,
\end{align*}
where the last step follows from $\zdom$ is $\Delta_{\vec{z}}$-separated and $\xi\in {\cal C}$. Thus, we know that the condition of \cref{lem:eig_location} is satisfied:
\begin{align*}
    \norm{\mat{\hat{Q}}_\uparrow^\pinv \mat{\hat{Q}}_\downarrow - \mat{P}^{-1} \mat{Q}_\uparrow^\pinv \mat{Q}_\downarrow \mat{P}}_2 \cdot \sup_{\xi\in {\cal C}}\norm{\left(\xi \Id_r - \mat{P}^{-1}\mat{Q}_\uparrow^+\mat{Q}_\downarrow \mat{P}\right)^{-1}}_2 < 1.
\end{align*}
By \cref{lem:eig_location}, there exists exactly one $j\in [r]$ such that $|\hat{\lambda}_j-z_i|< \Delta_{\vec{z}}/2$. Since the above argument works for any $i\in [r]$, we obtain that
\[
\max_{i\in[r]}\min_{j\in[r]}~\left|\hat{\lambda}_j-z_{i}\right|<\Delta_{\vec{z}}/2\,.
\]
Because $\zdom$ are $\Delta_{\vec{z}}$-separated, there exists a permutation $\pi$ such that 
\[
\max_{i\in[r]}~\left|\hat{\lambda}_i-z_{\pi(i)}\right|<\Delta_{\vec{z}}/2\,.
\]
Also, for any $i\in [r]$ and $j\ne i$, $\left|\hat{\lambda}_i-z_{\pi(j)}\right|>\Delta_{\vec{z}}/2$. 

Thus, by \cref{thm:bauer-fike}, we have
\begin{align*}
    \max_{1\leq i\leq r} |\hat{\lambda}_i - z_{\pi(i)}|=\max_{1\leq i\leq r} \min_{1\leq j\leq r} |\hat{\lambda}_i-z_j|\leq &~ {\cal O}\left(\norm{\mat{P}}_2 \norm{\smash{\mat{P}^{-1}}}_2\norm{\mat{\hat{Q}}_\uparrow^\pinv \mat{\hat{Q}}_\downarrow - \mat{P}^{-1} \mat{Q}_\uparrow^\pinv \mat{Q}_\downarrow \mat{P}}_2\right)\\
    = &~  {\cal O}\left(\norm{\mat{\hat{Q}}_\uparrow^\pinv \mat{\hat{Q}}_\downarrow - \mat{P}^{-1} \mat{Q}_\uparrow^\pinv \mat{Q}_\downarrow \mat{P}}_2\right)\,,
\end{align*}
which implies that
\begin{align*}
    {\rm md}(\hat{\vec{z}}_r,\zdom)={\cal O}\left(\norm{\mat{\hat{Q}}_\uparrow^\pinv \mat{\hat{Q}}_\downarrow - \mat{P}^{-1} \mat{Q}_\uparrow^\pinv \mat{Q}_\downarrow \mat{P}}_2\right)\,.
\end{align*}

The proof of the lemma is then completed.

\section{Deferred proofs for \texorpdfstring{\cref{sec:second-order}}{Section 5}} \label{app:second-order}

This section provides supplementary proofs for \cref{sec:second-order}.
In \cref{app:explicit-expansion}, we develop explicit formulas for the higher-order terms in the perturbation expansion result, \cref{prop:projector-expansion}.
In \cref{app:explicit-expansion-bounds}, we show how to bound the terms appearing in this expression.
We conclude by proving \cref{lem:expansion-simplified} in \cref{app:expansion-simplified}.

\subsection{Expansion of spectral projector: explicit formulas} \label{app:explicit-expansion}

In this section, we explicitly expand the perturbed projector $\mat{\Pi}_{\hat{\mat{Q}}_r}$ in terms of $\mat{\Pi}_{\mat{Q}_r}$, $\mat{T}^+$, $\mat{\Pi}_{\mat{Q}^\perp_r}$, and $\mat{E}$, where $\mat{T}^+$ is the pseudoinverse of $\mat{T}$. In particular, our goal is to identify the polynomial $\textbf{Poly}_{\hat{\mat{Q}}_r}$ so that 
\begin{equation}\label{eqn:polynomial_form}
\mat{\Pi}_{\hat{\mat{Q}}_r}=\mat{\Pi}_{\mat{Q}_r}+\textbf{Poly}_{\hat{\mat{Q}}_r}\left(\mat{T}^+,\mat{\Pi}_{\mat{Q}^\perp_r},\mat{E}\right)
\end{equation}
A straightforward approach to computing $\textbf{Poly}_{\hat{\mat{Q}}_r}$ involves exhaustively expanding each eigenvector projection within the contour integral (\cref{prop:projector-expansion}), then attempting to combine them to form $\mat{T}^+$. Interestingly, exploiting the inherent symmetry of the contour integral unexpectedly simplifies many complex expressions into a Schur polynomial, facilitating the combination of various projections.

To start, we first introduce the Schur polynomials:

\begin{definition}[Vandermonde determinant and Schur polynomials]\label{def:schur_poly}
    For numbers $a_1,\ldots,a_\ell\in\complex$, the \emph{Vandermonde determinant} is
    \begin{equation*}
        W(a_1,\ldots,a_\ell) \coloneqq \det \begin{bmatrix}
            a_1^{\ell-1} & a_2^{\ell-1} & \cdots & a_{\ell-1}^{\ell-1} & a_\ell^{\ell-1} \\
            a_1^{\ell-2} & a_2^{\ell-2} & \cdots & a_{\ell-1}^{\ell-2} & a_\ell^{\ell-2} \\
            \vdots & \vdots & \ddots & \vdots & \vdots \\
            a_1^1 & a_2^1 & \cdots & a_{\ell-1}^1 & a_\ell^1 \\
            1 & 1 & \cdots & 1 & 1
        \end{bmatrix} = \prod_{i < j} (a_i - a_j).
    \end{equation*}
    For non-increasingly ordered nonnegative integers $\vec{\gamma} = (\gamma_1,\ldots,\gamma_\ell)\in\mathbb{Z}_+$, the \emph{Schur polynomial} is
    \begin{equation*}
        s_{\vec{\gamma}}(a_1,\ldots,a_\ell) = \frac{1}{W(a_1,\ldots,a_\ell)} \det \begin{bmatrix}
            a_1^{\gamma_1+\ell-1} & a_2^{\gamma_1+\ell-1} & \cdots & a_{\ell-1}^{\gamma_1+\ell-1} & a_\ell^{\gamma_1+\ell-1} \\
            a_1^{\gamma_2+\ell-2} & a_2^{\gamma_2+\ell-2} & \cdots & a_{\gamma_2+\ell-2}^{\ell-1} & a_\ell^{\gamma_2+\ell-2} \\
            \vdots & \vdots & \ddots & \vdots & \vdots \\
            a_1^{\gamma_{\ell-1}+1} & a_2^{\gamma_{\ell-1}+1} & \cdots & a_{\ell-1}^{\gamma_{\ell-1}+1} & a_\ell^{\gamma_{\ell-1}+1} \\
            a_1^{\gamma_\ell} & a_2^{\gamma_\ell} & \cdots & a_{\ell-1}^{\gamma_\ell} & a_\ell^{\gamma_\ell}
        \end{bmatrix}.
    \end{equation*}
\end{definition}

After deriving the explicit expansion, we need to employ the following two properties of the Schur polynomial~\cite{MR3443860} to constrain the higher-order terms in \cref{app:explicit-expansion-bounds}:

\begin{proposition}[Properties of Schur polynomial] \label{prop:schur-polynomial}
    Let $\vec{\gamma} = (\gamma_1,\ldots,\gamma_\ell)\in\mathbb{Z}_+$ be a collection of non-increasingly ordered nonnegative integers.
    The Schur polynomial $s_{\vec{\gamma}}(x_1,\dots,x_\ell)$ satisfies the following properties:
    \begin{enumerate}
        \item \textbf{Polynomial.} $s_{\vec{\gamma}}$ is a symmetric, homogeneous polynomial of degree $\|\vec{\gamma}\|_1$. And the maximum degree of any variable $x_i$ is $\|\vec{\gamma}\|_\infty$.
        \item \textbf{Coefficients.} The coefficients of $s_{\vec{\gamma}}$ are nonnegative integers and sum to
        \begin{equation*}
            s_{\vec{\gamma}}(1,1,\ldots,1) = \prod_{i < j} \frac{\gamma_i - \gamma_j + j - i}{j-i}.
        \end{equation*}
        In particular, for any $m\in \mathbb{Z}_+$, 
        \begin{equation*}
            s_{(m,m,\ldots,m,0)}(x_1,\dots,x_\ell) = \sum_{\substack{\vec{\beta}\in \{0,1,\dots,m\}^\ell:\\\|\vec{\beta}\|_1=(\ell-1)m}}~ \prod_{i\in [\ell]} x_i^{\beta_i}\,,
        \end{equation*}
        and
        \begin{equation*}
            s_{(m,m,\ldots,m,0)}(1,1,\ldots,1) = \prod_{i=1}^{\ell-1} \frac{m + \ell - i}{\ell-i} = \binom{m + \ell - 1}{\ell - 1} \for m \in \mathbb{Z}_+.
        \end{equation*}
    \end{enumerate}
\end{proposition}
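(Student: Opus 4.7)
The plan is to establish both properties via the classical combinatorial interpretation of Schur polynomials as generating functions over semistandard Young tableaux (SSYT), combined with a short symmetry/degree argument directly on the determinantal formula for the easier assertions. The whole statement is standard symmetric-function theory and the cleanest route is to cite Macdonald's \emph{Symmetric Functions and Hall Polynomials} (or Stanley, EC2) for the SSYT identity $s_{\vec{\gamma}}(x_1,\dots,x_\ell)=\sum_T x^T$, where $T$ ranges over SSYT of shape $\vec{\gamma}$ with entries in $\{1,\dots,\ell\}$, and then read off the desired consequences.

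For Property~1, symmetry of $s_{\vec{\gamma}}$ is immediate from the determinantal definition, since permuting $(x_1,\dots,x_\ell)$ permutes the columns of both the numerator and the Vandermonde denominator by the same permutation, and the two induced sign changes cancel. Homogeneity of degree $\|\vec{\gamma}\|_1$ follows by counting: the numerator is homogeneous of degree $\sum_{i=1}^{\ell}(\gamma_i+\ell-i)=\|\vec{\gamma}\|_1+\binom{\ell}{2}$ while the denominator is homogeneous of degree $\binom{\ell}{2}$. For the bound $\|\vec{\gamma}\|_\infty$ on the degree of any single variable, I would invoke the SSYT identity: each column of $T$ is strictly increasing, so each value in $\{1,\dots,\ell\}$ appears at most once per column, and the number of columns is $\gamma_1=\|\vec{\gamma}\|_\infty$; hence the exponent of any $x_i$ in $x^T$ is at most $\|\vec{\gamma}\|_\infty$.

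For Property~2, nonnegative integrality of the coefficients is a direct consequence of the SSYT expansion. The sum of coefficients is $s_{\vec{\gamma}}(1,\dots,1)$, and Weyl's dimension formula gives $s_{\vec{\gamma}}(1,\dots,1)=\prod_{i<j}(\gamma_i-\gamma_j+j-i)/(j-i)$; this is obtainable from the determinantal definition by setting $x_i=1+\epsilon_i$ and taking the limit $\epsilon_i\to 0$ via L'H\^opital applied to the $0/0$ quotient of determinants, but I would simply cite it. The special case $\vec{\gamma}=(m,\dots,m,0)$ is then the only step requiring genuine bookkeeping: a SSYT of this shape has $m$ columns each of length $\ell-1$; a strictly increasing column of length $\ell-1$ with entries from $\{1,\dots,\ell\}$ is uniquely determined by its missing entry $\mu_j\in\{1,\dots,\ell\}$. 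I would check that the row-weak-increasing condition across columns is equivalent to $\mu_1\ge\mu_2\ge\cdots\ge\mu_m$, then set $\alpha_k:=|\{j:\mu_j=k\}|$ and $\beta_k:=m-\alpha_k$, noting that $x_k$ has exponent $\beta_k$ in the tableau weight. The constraints $\alpha_k\ge 0$ with $\sum_k\alpha_k=m$ translate into $\vec{\beta}\in\{0,\dots,m\}^\ell$ with $\|\vec{\beta}\|_1=(\ell-1)m$, each tuple arising from exactly one SSYT, which gives the claimed monomial expansion. A stars-and-bars count then yields $\binom{m+\ell-1}{\ell-1}$, consistent with specializing Weyl's formula.

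I do not anticipate a significant obstacle: every ingredient is classical, and the only nonroutine piece of the argument is the combinatorial identification of SSYT of rectangular-plus-zero shape with weakly decreasing sequences of missing entries, which is a short combinatorial check. The proof in the paper can likely be compressed to citations of Macdonald together with the short direct verification of the special case.
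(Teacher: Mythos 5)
Your proposal is correct. Note that the paper itself offers no proof of this proposition: it is stated with a citation to the symmetric-functions literature (the reference attached to the statement) and used as a black box, so there is no in-paper argument to compare against. Your derivation is the standard one and it supplies all the details correctly: symmetry and homogeneity follow from the bialternant definition exactly as you say (degree $\|\vec{\gamma}\|_1+\binom{\ell}{2}$ in the numerator minus $\binom{\ell}{2}$ in the denominator), nonnegative integrality and the per-variable degree bound follow from the SSYT expansion, and the evaluation at $(1,\ldots,1)$ is Weyl's dimension formula. The only piece requiring genuine work is the monomial expansion for $\vec{\gamma}=(m,\ldots,m,0)$, and your combinatorial identification is right: a strictly increasing column of length $\ell-1$ from $\{1,\ldots,\ell\}$ is determined by its missing entry $\mu_j$, the $i$-th entry of such a column is weakly decreasing in $\mu_j$, so the row condition is equivalent to $\mu_1\ge\cdots\ge\mu_m$; setting $\alpha_k=|\{j:\mu_j=k\}|$ and $\beta_k=m-\alpha_k$ gives a bijection with $\vec{\beta}\in\{0,\ldots,m\}^\ell$, $\|\vec{\beta}\|_1=(\ell-1)m$, and stars-and-bars yields $\binom{m+\ell-1}{\ell-1}$, consistent with the Weyl specialization. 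This is a complete and correct proof of the cited facts.
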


We also need to use the standard cofactor expansion for the determinant calculation.

\begin{fact}[Cofactor expansion of determinant]\label{fac:cofactor_expansion}
For any $n$-by-$n$ matrix $A$, the cofactor expansion of the determinant along the first row is
\begin{align*}
    \det(\mat{A}) = \sum_{i=1}^n (-1)^{i-1} \det(\mat{A}_{\backslash i})\cdot \mat{A}_{1,i}\,,
\end{align*}
where $\mat{A}_{\backslash i}$ is a sub-matrix of $\mat{A}$ by removing the first row and the $i$th column.
\end{fact}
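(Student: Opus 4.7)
The fact is a standard classical identity, so the plan is to give a clean derivation from first principles rather than to invoke deeper machinery. I would proceed directly from the Leibniz/permutation formula for the determinant,
\[
\det(\mat{A}) = \sum_{\sigma \in S_n} \operatorname{sgn}(\sigma) \prod_{j=1}^n \mat{A}_{j,\sigma(j)},
\]
and partition the sum over $S_n$ according to the value $i = \sigma(1) \in \{1,\ldots,n\}$. For each fixed $i$, the factor $\mat{A}_{1,\sigma(1)} = \mat{A}_{1,i}$ can be pulled out, and the remaining sum is over bijections $\sigma : \{2,\ldots,n\} \to \{1,\ldots,n\}\setminus\{i\}$.

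The second step is to identify this remaining sum with $\det(\mat{A}_{\backslash i})$. I would reindex the domain and codomain: define $\tau \in S_{n-1}$ on $\{1,\ldots,n-1\}$ by shifting the row index $j \mapsto j-1$ and composing the column index with the order-preserving bijection $\{1,\ldots,n\}\setminus\{i\} \to \{1,\ldots,n-1\}$. Under this reindexing the product $\prod_{j=2}^n \mat{A}_{j,\sigma(j)}$ becomes exactly $\prod_{k=1}^{n-1}(\mat{A}_{\backslash i})_{k,\tau(k)}$, and summing $\operatorname{sgn}(\tau)$ times this product over $\tau \in S_{n-1}$ reproduces $\det(\mat{A}_{\backslash i})$.

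The step I expect to require the most care is the sign bookkeeping, namely verifying that $\operatorname{sgn}(\sigma) = (-1)^{i-1}\operatorname{sgn}(\tau)$ under the reindexing. The cleanest way to see this is to note that the permutation sending $(1,2,\ldots,n) \mapsto (i,1,2,\ldots,\widehat{i},\ldots,n)$ (with $i$ moved to the first position) is a product of $i-1$ adjacent transpositions and hence has sign $(-1)^{i-1}$; all other inversions are preserved by the order-preserving relabeling, so the sign of $\sigma$ differs from that of $\tau$ by exactly this factor. Combining these pieces yields
\[
\det(\mat{A}) = \sum_{i=1}^n \mat{A}_{1,i}\cdot (-1)^{i-1}\sum_{\tau \in S_{n-1}}\operatorname{sgn}(\tau)\prod_{k=1}^{n-1}(\mat{A}_{\backslash i})_{k,\tau(k)} = \sum_{i=1}^n (-1)^{i-1}\mat{A}_{1,i}\det(\mat{A}_{\backslash i}),
\]
which is the claimed identity. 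An alternative, slightly more conceptual route would be to use multilinearity of $\det$ in the first row, writing the first row as $\sum_i \mat{A}_{1,i}\vec{e}_i^\top$, and then moving the $i$th standard basis vector into the first row position, which introduces exactly the sign $(-1)^{i-1}$; but the Leibniz approach above is self-contained and avoids invoking multilinearity as a separate lemma.
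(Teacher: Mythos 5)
The paper states this as a known \emph{Fact} without supplying a proof, so there is no in-paper argument to compare against. Your derivation from the Leibniz permutation formula is correct: partitioning $S_n$ by $i=\sigma(1)$, factoring out $\mat{A}_{1,i}$, reindexing the residual sum over bijections $\{2,\ldots,n\}\to\{1,\ldots,n\}\setminus\{i\}$ as a sum over $S_{n-1}$, and accounting for the sign via $\operatorname{inv}(\sigma)=(i-1)+\operatorname{inv}(\tau)$ (the $i-1$ counting inversions of the form $(1,j_2)$) is exactly the standard proof, and the bookkeeping you describe is accurate.
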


To streamline notation, we define:
\begin{equation*}
    \begin{alignedat}{2}
        &\mat{\Pi}_0 \coloneqq \mat{\Pi}_{\mat{Q}_r^\perp}; \quad &&\mat{\Pi}_j \coloneqq  \mat{Q}(:,j)\mat{Q}(:,j)^\adj\\
        &\lambda_0\coloneqq 0; \quad &&\lambda_j\coloneqq \lambda_j(\mat{T})
    \end{alignedat}
    \quad \quad \for j=1,\dots,r\,.
\end{equation*}

Now, we proceed to derive the polynomial $\textbf{Poly}_{\hat{\mat{Q}}_r}$. Initially, we naively decompose the matrix contour integrals outlined in \cref{prop:projector-expansion} into summations, where each term is a multiplication of eigenvector projections. In particular, we have
\begin{equation}\label{eqn:naive_expansion}
\mat{\Pi}_{\hat{\mat{Q}}_r} = \mat{\Pi}_{\mat{Q}_r}  + \sum_{k=1}^\infty A_{i_1,i_2,\dots,i_{k+1}}\sum_{\vec{i} \in \{0,1\ldots,r\}^{k+1}} \mat{\Pi}_{i_1} \mat{E}\mat{\Pi}_{i_2}\cdots \mat{E}\mat{\Pi}_{i_{k+1}}\,.
\end{equation}

We observe that $A_{i_1,i_2,\dots,i_{k+1}}$ in each term can be expressed as scalar contour integrals, which can be explicitly written using Schur polynomials. The result is summarized as follows:
\begin{proposition}[Contour integral] \label{prop:contour-integral}
    For any $\ell\in \mathbb{N}$, $m\geq -1$, and $\vec{i} \in \{1,\ldots,r\}^\ell$,
    \begin{equation} \label{eq:A-func}
        A_m(\lambda_{i_1},\ldots,\lambda_{i_\ell}) \coloneqq\frac{1}{2\pi \iu} \oint_{\mathcal{C}_r} \frac{1}{(\zeta - \lambda_{i_1}) \cdots (\zeta - \lambda_{i_\ell}) \zeta^{m+1}}\, \d\zeta = (-1)^{\ell-1} \frac{s_{(m,m,\ldots,m,0)}(\lambda_{i_1},\ldots,\lambda_{i_{\ell}})}{\prod_{j=1}^{\ell} \lambda_{i_j}^{m+1}},
    \end{equation}
    where $\mathcal{C}_r$ denotes the contour in \cref{prop:projector-expansion}.
\end{proposition}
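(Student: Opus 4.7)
The plan is to evaluate the contour integral by residues and then identify the resulting rational expression as the Schur-polynomial ratio on the right. As constructed in the proof of \cref{prop:projector-expansion}, the contour $\mathcal{C}_r$ encloses all nonzero eigenvalues $\lambda_1,\ldots,\lambda_r$ of $\mat{T}$ but not the origin, so only the poles at $\zeta = \lambda_{i_j}$ contribute; the pole of order $m+1$ at $\zeta=0$ is irrelevant because $0 \notin \mathcal{C}_r$.

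First I would treat the generic case in which $\lambda_{i_1},\ldots,\lambda_{i_\ell}$ are distinct, and then recover the confluent case by continuity. In the distinct case every enclosed pole is simple, so summing residues yields
\begin{equation*}
    A_m(\lambda_{i_1},\ldots,\lambda_{i_\ell}) = \sum_{j=1}^\ell \frac{1}{\lambda_{i_j}^{m+1} \prod_{k \ne j}(\lambda_{i_j} - \lambda_{i_k})}.
\end{equation*}
To match this with the Schur-polynomial formula, I would start from the determinantal definition (\cref{def:schur_poly}) and Laplace-expand the numerator determinant of $s_{(m,\ldots,m,0)}$ along its last column (which is $(1,\ldots,1)^\top$) using \cref{fac:cofactor_expansion}. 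The $(\ell-1)\times(\ell-1)$ minor obtained by deleting row $j$ has $p$-th row $(\lambda_{i_p}^{m+\ell-1},\lambda_{i_p}^{m+\ell-2},\ldots,\lambda_{i_p}^{m+1})$ for $p \ne j$; factoring $\lambda_{i_p}^{m+1}$ out of each row reduces this minor to the Vandermonde determinant $W(\lambda_{i_k} : k \ne j)$. Dividing by $W(\lambda_{i_1},\ldots,\lambda_{i_\ell})$ and invoking the elementary identity
\begin{equation*}
    \frac{W(\lambda_{i_k} : k \ne j)}{W(\lambda_{i_1},\ldots,\lambda_{i_\ell})} = \frac{(-1)^{j-1}}{\prod_{k \ne j}(\lambda_{i_j} - \lambda_{i_k})},
\end{equation*}
the cofactor signs $(-1)^{\ell+j}$ combine with $(-1)^{j-1}$ to give a global $(-1)^{\ell-1}$, producing exactly the residue sum above, divided by $\prod_j \lambda_{i_j}^{m+1}$, as claimed.

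The confluent case (repeated $\lambda_{i_j}$'s) follows by continuity, because both sides are rational functions of the arguments on the open set where each $\lambda_{i_j}$ is nonzero, and the distinct tuples form a dense subset; a direct check via higher-order residues gives an alternative verification if desired. I expect the main obstacle to be purely combinatorial bookkeeping: aligning the cofactor signs and index conventions so that the Vandermonde ratio produces precisely the claimed $(-1)^{\ell-1}$ prefactor. A minor subtlety is the boundary case $m = -1$, for which $(-1,\ldots,-1,0)$ is not a valid (non-negative) partition; one must interpret $s_{(-1,\ldots,-1,0)}$ as zero, which matches the fact that $A_{-1}$ vanishes for $\ell \ge 2$ (the integrand decays like $\zeta^{-\ell}$ at infinity, forcing the sum of residues inside $\mathcal{C}_r$ to be zero).
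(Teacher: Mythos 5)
Your proposal is correct and follows essentially the same route as the paper's proof: residue calculus in the distinct-eigenvalue case, the cofactor expansion along the all-ones line of the Schur determinant, the Vandermonde-ratio identity $W(\lambda_{i_k}:k\ne j)/W(\lambda_{i_1},\ldots,\lambda_{i_\ell})=(-1)^{j-1}/\prod_{k\ne j}(\lambda_{i_j}-\lambda_{i_k})$, and a continuity argument for confluent eigenvalues — the paper merely runs the determinant manipulation in the opposite direction (building the determinant from the residue sum rather than expanding the Schur determinant into it). The only quibbles are cosmetic: the all-ones line is the last \emph{row} of the determinant as defined in \cref{def:schur_poly} (you are implicitly working with the transpose), and for $m=-1$, $\ell\ge 2$ no special convention is needed since the determinantal formula already vanishes (two rows of exponent $0$ coincide), matching $A_{-1}=0$.
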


\begin{proof}
    The function $A_m$ is continuous on the interior of $\mathcal{C}_r$. 
    Consequently, we are free to assume without loss of generality that $\lambda_{i_1},\ldots,\lambda_{i_\ell}$ are distinct.
    The formula we derive will then extend to all $\lambda_{i_1},\ldots,\lambda_{i_\ell}$ in the interior of $\mathcal{C}_r$ by continuity.

    Using Cauchy's residue theorem (see e.g. \cite{stein2010complex}) to evaluate the contour integral in \cref{eq:A-func}, we obtain
    \begin{equation*}
        A_m(\lambda_{i_1},\ldots,\lambda_{i_\ell}) = \sum_{j=1}^\ell \frac{1}{\lambda_{i_j}^{m+1}}\prod_{p\ne j} \frac{1}{(\lambda_{i_j} - \lambda_{i_p})}
    \end{equation*}
    To consolidate this sum, we multiply by the numerator and denominator of the $j$-th summand by $(-1)^{j-1}W(\lambda_{i_1},\ldots,\lambda_{i_{j-1}},\lambda_{i_{j+1}},\ldots,\lambda_{i_\ell})\prod_{p\ne j} \lambda_{i_p}^{m+1}$, resulting in
    \begin{align*}
        A_m(\lambda_{i_1},\ldots,\lambda_{i_\ell}) = &~ 
        \sum_{j=1}^\ell\frac{ (-1)^{j-1} W(\lambda_{i_1},\ldots,\lambda_{i_{j-1}},\lambda_{i_{j+1}},\ldots,\lambda_{i_\ell})\prod_{p\ne j} \lambda_{i_p}^{m+1}}{(-1)^{j-1} W(\lambda_{i_1},\ldots,\lambda_{i_{j-1}},\lambda_{i_{j+1}},\ldots,\lambda_{i_\ell})\prod_{p\ne j} \lambda_{i_p}^{m+1}\cdot \lambda_{i_j}^{m+1} \prod_{p\ne j} (\lambda_{i_j} - \lambda_{i_p})}\\
        = &~ \frac{\sum_{j=1}^\ell (-1)^{j-1} W(\lambda_{i_1},\ldots,\lambda_{i_{j-1}},\lambda_{i_{j+1}},\ldots,\lambda_{i_\ell})\prod_{p\ne j} \lambda_{i_p}^{m+1}}{W(\lambda_{i_1},\ldots,\lambda_{i_\ell})\prod_{p=1}^\ell \lambda_{i_p}^{m+1}}\,,
    \end{align*}
    where the second step follows from
    \begin{align*}
        W(\lambda_{i_1},\ldots,\lambda_{i_{j-1}},\lambda_{i_{j+1}},\ldots,\lambda_{i_\ell}) \cdot \prod_{p\ne j} (\lambda_{i_j} - \lambda_{i_p}) = &~ \prod_{\substack{\alpha < \beta\\\alpha,\beta\ne j}}(\lambda_{i_\alpha} - \lambda_{i_\beta})\cdot \prod_{p\ne j} (\lambda_{i_j} - \lambda_{i_p})\\
        = &~ (-1)^{j-1} \prod_{\alpha < \beta}(\lambda_{i_\alpha} - \lambda_{i_\beta})\\
        = &~ (-1)^{j-1} W(\lambda_{i_1},\ldots,\lambda_{i_\ell})\,.
    \end{align*}
    We recognize the numerator of this expression of a cofactor expansion of a determinant (\cref{fac:cofactor_expansion}), obtaining
    \begin{equation} \label{eq:A-expression}
        A_m(\lambda_{i_1},\ldots,\lambda_{i_\ell}) = \frac{D(\lambda_{i_1},\ldots,\lambda_{i_\ell})}{W(\lambda_{i_1},\ldots,\lambda_{i_\ell})\prod_{p=1}^\ell \lambda_{i_p}^{m+1}}
    \end{equation}
    where
    \begin{equation*}
        D(\lambda_{i_1},\ldots,\lambda_{i_\ell}) \coloneqq  \det \begin{bmatrix}
            \prod_{p\ne 1} \lambda_{i_p}^{m+1} & \prod_{p\ne 2} \lambda_{i_p}^{m+1} & \cdots & \prod_{p\ne \ell-1} \lambda_{i_p}^{m+1} & \prod_{p\ne \ell} \lambda_{i_p}^{m+1} \\
            \lambda_{i_1}^{\ell-2} & \lambda_{i_2}^{\ell-2} & \cdots & \lambda_{i_{\ell-1}}^{\ell-2} & \lambda_{i_{\ell}}^{\ell-2} \\
            \vdots & \vdots & \ddots & \vdots & \vdots \\
            \lambda_{i_1}^1 & \lambda_{i_2}^1 & \cdots & \lambda_{i_{\ell-1}}^1 & \lambda_{i_\ell}^1 \\
            1 & 1 & \cdots & 1 & 1
        \end{bmatrix}\,.
    \end{equation*}
    By pulling out a factor of $\prod_{p=1}^\ell \lambda_{i_p}^{m+1}$ from the first row, we obtain
    \begin{equation*}
        D(\lambda_{i_1},\ldots,\lambda_{i_\ell}) = \prod_{p=1}^\ell \lambda_{i_p}^{m+1}\cdot \det \begin{bmatrix}
            \lambda_{i_1}^{-(m+1)} & \lambda_{i_2}^{-(m+1)} & \cdots & \lambda_{i_{\ell-1}}^{-(m+1)} & \lambda_{i_\ell}^{-(m+1)} \\
            \lambda_{i_1}^{\ell-2} & \lambda_{i_2}^{\ell-2} & \cdots & \lambda_{i_{\ell-1}}^{\ell-2} & \lambda_{i_{\ell}}^{\ell-2} \\
            \vdots & \vdots & \ddots & \vdots & \vdots \\
            \lambda_{i_1}^1 & \lambda_{i_2}^1 & \cdots & \lambda_{i_{\ell-1}}^1 & \lambda_{i_\ell}^1 \\
            1 & 1 & \cdots & 1 & 1
        \end{bmatrix}.
    \end{equation*}
    Now shuffle the first row to the bottom of the matrix and absorb a factor of $\lambda_{i_j}^{m+1}$ into each column $j$, obtaining
    \begin{align*}
        D(\lambda_{i_1},\ldots,\lambda_{i_\ell}) &= (-1)^{\ell-1} \det \begin{bmatrix}
            \lambda_{i_1}^{m+\ell-1} & \lambda_{i_2}^{m+\ell-1} & \cdots & \lambda_{i_{\ell-1}}^{m+\ell-1} & \lambda_{i_{\ell}}^{m+\ell-1} \\
            \lambda_{i_1}^{m+\ell-2} & \lambda_{i_2}^{m+\ell-2} & \cdots & \lambda_{i_{\ell-1}}^{m+\ell-2} & \lambda_{i_{\ell}}^{m+\ell-2} \\
            \vdots & \vdots & \ddots & \vdots & \vdots \\
            \lambda_{i_1}^{m+1} & \lambda_{i_2}^{m+1} & \cdots & \lambda_{i_{\ell-1}}^{m+1} & \lambda_{i_\ell}^{m+1} \\
            1 & 1 & \cdots & 1 & 1
        \end{bmatrix} \\
        &= (-1)^{\ell-1} s_{(m,m,\ldots,m,0)}(\lambda_{i_1},\ldots,\lambda_{i_\ell}) W(\lambda_{i_1},\ldots,\lambda_{i_\ell})\,,
    \end{align*}
    where the second step follows from the definition of the Schur polynomial (\cref{def:schur_poly}).
    Substituting this expression back in \cref{eq:A-expression} yields the stated result.
\end{proof}

An important observation stemming from \cref{prop:contour-integral} is that $A_m$ is a polynomial of $\lambda_i$ for $i\neq 0$. Thus, when inserting $A_m$ back into \eqref{eqn:naive_expansion}, we can see each $\lambda_i$ can be a variable $x$ that varies within the set $\{\lambda_j\}_{j=1}^r$ rather than as a fixed eigenvalue. Based on this intuition, we introduce the following definition.
\begin{definition}[Interleaving evaluation]\label{def:interleave_eval}
    For a function $f$ which is a polynomial in the inverses $x_1^{-1},\ldots,x_{k+1}^{-1}$ of $k+1$ variables, define the $\mat{T},\mat{E},\mat{\Pi}_{0}$-\emph{interleaving evaluation} for any monomial $x_1^{-\beta_1}x_2^{-\beta_2} \cdots x_{k+1}^{-\beta_{k+1}}$ as:
        \begin{align*}
        &\left( x_1^{-\beta_1} x_2^{-\beta_2} \cdots x_{k+1}^{-\beta_{k+1}} \right)[\mat{T},\mat{\Pi}_{0},\mat{E}]\coloneqq \\
        &\quad \left(\left(\mat{\Pi}_{0}\right)^{[\beta_1=0]}\pinvk{\mat{T}}{\beta_1}\right)\cdot \mat{E} \left(\left(\mat{\Pi}_{0}\right)^{[\beta_2=0]}\pinvk{\mat{T}}{\beta_2}\right)\cdots \mat{E}  \left(\left(\mat{\Pi}_{0}\right)^{[\beta_{k+1}=0]}\pinvk{\mat{T}}{\beta_{k+1}}\right)\,,
    \end{align*}
    where $[\beta_i=0]=1$ if $\beta_i=0$ and $[\beta_i=0]=0$ otherwise.
    And the $\mat{T},\mat{E},\mat{\Pi}_{0}$-\emph{interleaving evaluation} for the general polynomial $f$ is defined by linearity. %
\end{definition}

\begin{remark}[Example of interleaving evaluation]
For expository purposes, we give one example for above definition:
\begin{equation*}
    (x_1^{-1}x_2^0x_3^{-1} + x_1^0 x_2^0x_3^{-2})[\mat{T},\mat{\Pi}_{0},\mat{E}] = \mat{T}^{\pinv} \mat{E}\mat{\Pi}_{0}\mat{E}\mat{T}^\pinv + \mat{\Pi}_{0} \mat{E} \mat{\Pi}_{0} \mat{E} \pinvk{\mat{T}}{2}.
\end{equation*}
We see that each variable $x_i^{-\beta}$ raised to a negative power leads to a power of the pseudoinverse $\pinvk{\mat{T}}{\beta}$ in the expression.
Similarly, each variable $x_i^0$ raised to the zero power yields a copy of $\mat{\Pi}_{0}$.
Finally, these powers of $\mat{T}^\pinv$ and $\mat{\Pi}_{0p}$ are interleaved by copies of the matrix $\mat{E}$.    
\end{remark}

Finally, using the definition of the interleaving evaluation, \cref{prop:projector-expansion} and \eqref{eqn:naive_expansion} which expand $\mat{\Pi}_{\mat{\hat{Q}}_r}$ in terms of contour integrals, and \cref{prop:contour-integral} to evaluating those contour integrals, we obtain an explicit form for $\textbf{Poly}_{\hat{\mat{Q}}_r}$ in \eqref{eqn:polynomial_form}:

\begin{theorem}[Expansion of spectral projector, explicit form] \label{thm:expansion-explicit}
    It holds that
    \begin{equation*}
        \mat{\Pi}_{\mat{\hat{Q}}_r} = \mat{\Pi}_{\mat{Q}_r} + \sum_{k=1}^\infty \sum_{\mathcal{J} \subseteq \{1,\ldots,k+1\}} \mat{F}_{k,\mathcal{J}}
    \end{equation*}
    where
    \begin{equation} \label{eq:FkT}
        \mat{F}_{k,\mathcal{J}} \coloneqq \left( A_{k - |\mathcal{J}|}( x_i : i \in \mathcal{J}) \right)[\mat{T},\mat{\Pi}_{0},\mat{E}]\,,
    \end{equation}
    where $A_m$ is the function defined in \cref{eq:A-func}.
\end{theorem}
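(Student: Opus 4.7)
}
The plan is to start from the resolvent expansion of \cref{prop:projector-expansion}, spectrally diagonalize each copy of $(\zeta-\mat{T})^{-1}$, and then recognize the resulting scalar contour integrals as the interleaving evaluation of the Schur-polynomial expressions of \cref{prop:contour-integral}. Concretely, since $\mat{T}$ is Hermitian with eigenvalues $\lambda_1,\ldots,\lambda_r$ (nonzero) and $\lambda_0=0$ with multiplicity $n-r$, one writes
\begin{equation*}
    (\zeta-\mat{T})^{-1} \;=\; \frac{1}{\zeta}\,\mat{\Pi}_0 \;+\; \sum_{i=1}^{r} \frac{1}{\zeta-\lambda_i}\,\mat{\Pi}_i,
\end{equation*}
and substitutes this into the $k$-th term of the expansion of \cref{prop:projector-expansion}. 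Distributing the product over the $k+1$ resolvents produces a sum over multi-indices $\vec{i} = (i_1,\ldots,i_{k+1}) \in \{0,1,\ldots,r\}^{k+1}$ of expressions of the form
\begin{equation*}
    \mat{\Pi}_{i_1}\mat{E}\mat{\Pi}_{i_2}\cdots\mat{E}\mat{\Pi}_{i_{k+1}} \;\cdot\; \frac{1}{2\pi\iu}\oint_{\mathcal{C}_r}\frac{\d\zeta}{\prod_{j=1}^{k+1}(\zeta-\lambda_{i_j})}.
\end{equation*}

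The next step is to regroup these terms by the set $\mathcal{J} = \{j : i_j \neq 0\} \subseteq \{1,\ldots,k+1\}$ of ``nonzero positions''. Because $\lambda_0=0$, the contour integral factorizes as $\frac{1}{2\pi\iu}\oint \zeta^{-(k+1-|\mathcal{J}|)} \prod_{j\in\mathcal{J}}(\zeta-\lambda_{i_j})^{-1}\d\zeta$, which is exactly $A_{k-|\mathcal{J}|}(\lambda_{i_j}:j\in\mathcal{J})$ by the definition in \cref{eq:A-func} whenever $|\mathcal{J}|\ge 1$. For $\mathcal{J}=\emptyset$ the integrand reduces to $\zeta^{-(k+1)}$, which is holomorphic on and inside $\mathcal{C}_r$ (since $0$ lies outside $\mathcal{C}_r$ by the construction in \cref{eq:def_circle_C}), hence integrates to $0$; so this case contributes nothing and may be discarded or included with the convention $\mat{F}_{k,\emptyset}=0$. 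Finally, applying \cref{prop:contour-integral} yields the explicit formula $A_{k-|\mathcal{J}|}(x_i:i\in\mathcal{J}) = (-1)^{|\mathcal{J}|-1}\,s_{(m,\ldots,m,0)}(x_i:i\in\mathcal{J})\prod_{i\in\mathcal{J}} x_i^{-(m+1)}$ with $m = k-|\mathcal{J}|$, which is a polynomial in the inverses $x_i^{-1}$ (each exponent $\beta_i - (m+1) \le -1$).

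It then remains to recognize the inner sum over $(i_j)_{j\in\mathcal{J}} \in \{1,\ldots,r\}^{|\mathcal{J}|}$ as an interleaving evaluation. Expanding $A_{k-|\mathcal{J}|}(x_i:i\in\mathcal{J}) = \sum_{\vec{\beta}} c_{\vec{\beta}}\prod_{i\in\mathcal{J}} x_i^{-\beta_i}$ and using
\begin{equation*}
    \sum_{i_j=1}^{r} \lambda_{i_j}^{-\beta_j}\,\mat{\Pi}_{i_j} \;=\; \pinvk{\mat{T}}{\beta_j}\qquad (\beta_j\ge 1),
\end{equation*}
one sees that the positions $j\in\mathcal{J}$ produce factors $\pinvk{\mat{T}}{\beta_j}$ while the positions $j\notin\mathcal{J}$ retain $\mat{\Pi}_0$, all interleaved with copies of $\mat{E}$. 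By \cref{def:interleave_eval}, this is precisely $\bigl(\prod_{i\in\mathcal{J}} x_i^{-\beta_i}\bigr)[\mat{T},\mat{\Pi}_0,\mat{E}]$, and linearity in $\vec{\beta}$ reconstructs the full expression $\mat{F}_{k,\mathcal{J}} = \bigl(A_{k-|\mathcal{J}|}(x_i:i\in\mathcal{J})\bigr)[\mat{T},\mat{\Pi}_0,\mat{E}]$. Summing over $k\ge 1$ and $\mathcal{J}$ gives the claimed identity; absolute/uniform convergence of the Neumann series underlying \cref{prop:projector-expansion} (already established in its proof) lets us freely interchange the sums with the contour integral. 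The main bookkeeping obstacle is matching the combinatorics of the partition-by-$\mathcal{J}$ with the formal interleaving evaluation on variables indexed by $\{1,\ldots,k+1\}$; the key observation that all nontrivial monomials of $A_{m}$ have strictly negative exponents ensures each $\sum_{i_j}\lambda_{i_j}^{-\beta_j}\mat{\Pi}_{i_j}$ collapses cleanly to a power of $\mat{T}^{\pinv}$, so this step is tedious but not substantive.
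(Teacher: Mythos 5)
Your proposal is correct and follows essentially the same route as the paper's proof: insert the spectral decomposition of the resolvent into the expansion of \cref{prop:projector-expansion}, regroup the resulting multi-index sum by the support set $\mathcal{J}$, evaluate the scalar contour integrals via \cref{prop:contour-integral}, and collapse the sums $\sum_{i_j}\lambda_{i_j}^{-\beta_j}\mat{\Pi}_{i_j}$ into powers of $\mat{T}^\pinv$ to recover the interleaving evaluation. Your explicit observation that every monomial of $A_m$ carries a strictly negative exponent in each variable of $\mathcal{J}$ (so the collapse never accidentally produces $\mat{\Pi}_{\mat{Q}_r}$ in place of $\mat{T}^\pinv$) is exactly the bookkeeping point the paper relies on implicitly, and your handling of $\mathcal{J}=\emptyset$ by holomorphy of $\zeta^{-(k+1)}$ inside $\mathcal{C}_r$ is equivalent to the paper's convention.
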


The above theorem can be intuitively derived by exploiting the polynomial structure of the coefficients in Equation \eqref{eqn:naive_expansion}. For example, consider any term $A_{i_1,i_2,0\dots,0}\mat{\Pi}_{i_1} \mat{E} \mat{\Pi}_{i_2}\mat{E}\mat{\Pi}_0\cdots \mat{\Pi}_0$ with $i_1\neq 0$ and $i_2\neq 0$. Since $A_{i_1,i_2,0\dots,0}$ is a polynomial of $\lambda^{-1}_{i_1}$ and $\lambda^{-1}_{i_2}$, as demonstrated in \cref{prop:contour-integral}, adding up $1\leq i_1,i_2\leq r$ yields $\mat{F}_{k,\{1,2\}}$.

\begin{proof}
    We begin by defining some notation.
    Fix an index $k\ge 1$ and let
    \begin{equation*}
        \mat{C}_k \coloneqq  \frac{1}{2\pi \iu} \oint_{\mathcal{C}_r} (\zeta - \mat{T})^{-1} \left( \mat{E} (\zeta - \mat{T})^{-1} \right)^k \, \d\zeta\,.
    \end{equation*}
    Our goal in this proof will be to show that $\mat{C}_k = \sum_{\mathcal{J} \subseteq \{1,\ldots,k+1\}} \mat{F}_{k,\mathcal{J}}$.
    By the spectral theorem, the resolvent $(\zeta - \mat{T})^{-1}$ has the spectral decomposition
    \begin{equation} \label{eq:spectral-expansion-appendix}
        (\zeta - \mat{T})^{-1} = \sum_{j=0}^r (\zeta - \lambda_j)^{-1} \mat{\Pi}_j.
    \end{equation}
    
    Using the spectral expansion \cref{eq:spectral-expansion-appendix} in the definition of $\mat{C}_k$, we obtain
    \begin{equation*}
         \mat{C}_k = \sum_{\vec{i} \in \{0,1\ldots,r\}^{k+1}} \mat{\Pi}_{i_1} \cdot \mat{E}\mat{\Pi}_{i_2}\cdots \mat{E}\mat{\Pi}_{i_{k+1}} \cdot \frac{1}{2\pi \iu} \oint_{\mathcal{C}_r} \frac{1}{(\zeta - \lambda_{i_1}) \cdots (\zeta - \lambda_{i_{k+1}})}\, \d\zeta.
    \end{equation*}
    To evaluate this sum, re-index the sum by the support $\mathcal{J}$ of the index vector $\vec{i}$, obtaining
    \begin{equation*}
         \mat{C}_k = \sum_{\mathcal{J} \subseteq \{1,\ldots,k+1\}} \sum_{\substack{i_j = 0 \text{ for } j \notin \mathcal{J} \\ i_j \in \{1,\ldots,r\} \text{ for } j \in \mathcal{J}}}  \mat{\Pi}_{i_1} \cdot \mat{E}\mat{\Pi}_{i_2}\cdots \mat{E}\mat{\Pi}_{i_{k+1}} \cdot \frac{1}{2\pi \iu} \oint_{\mathcal{C}_r} \frac{1}{\zeta^{(k-|\mathcal{J}|)+1}\prod_{j \in \mathcal{J}}(\zeta - \lambda_{i_j})}\, \d\zeta.
    \end{equation*}
    By \cref{prop:contour-integral}, the contour integral in this expression can be evaluated as:
    \begin{align*}
        \frac{1}{2\pi \iu} \oint_{\mathcal{C}_r} \frac{1}{\zeta^{(k-|\mathcal{J}|)+1}\prod_{j \in \mathcal{J}}(\zeta - \lambda_{i_j})}\, \d\zeta = A_{k-|\mathcal{J}|}(\lambda_{i_j}:j\in \mathcal{J})\,,
    \end{align*}
    where $A_{k-|\mathcal{J}|}$ is defined by \cref{eq:A-func}. We may recognize $A_{k-|\mathcal{J}|}(x_j:j\in {\cal J})$ as a $(k+1)$-variate polynomial of $x_1^{-1},\dots,x_{k+1}^{-1}$ (by multiplying $\prod_{j\notin {\cal J}}x_j^0$). Then, we have 
    \begin{align*}
        A_{k-|\mathcal{J}|}(\lambda_{i_j}:j\in \mathcal{J}) = \sum_{\vec{\beta}\in \mathbb{N}^{k+1}} A_{\vec{\beta}} \cdot \prod_{j=1}^{k+1}\lambda_{i_j}^{-\beta_j}\,,
    \end{align*}
    where $A_{\vec{\beta}}\in \mathbb{R}$ is the coefficient. 
    Notice that 
    \begin{align*}
        &\sum_{\substack{i_j = 0 \text{ for } j \notin \mathcal{J} \\ i_j \in \{1,\ldots,r\} \text{ for } j \in \mathcal{J}}}  \mat{\Pi}_{i_1} \cdot \mat{E}\mat{\Pi}_{i_2}\cdots \mat{E}\mat{\Pi}_{i_{k+1}} \cdot \prod_{j=1}^{k+1}\lambda_{i_j}^{-\beta_j}\\
        = &~ \left(\sum_{i_1} \frac{\mat{\Pi}_{i_1}}{\lambda_{i_1}^{\beta_1}}\right)\cdot \mat{E}\left(\sum_{i_2} \frac{\mat{\Pi}_{i_2}}{\lambda_{i_2}^{\beta_2}}\right) \cdots \mat{E}\left(\sum_{i_{k+1}} \frac{\mat{\Pi}_{i_{k+1}}}{\lambda_{i_{k+1}}^{\beta_{k+1}}}\right)\\
        = &~ \left(\left(\mat{\Pi}_{0}\right)^{[\beta_1=0]}\pinvk{\mat{T}}{\beta_1}\right)\cdot \mat{E} \left(\left(\mat{\Pi}_{0}\right)^{[\beta_2=0]}\pinvk{\mat{T}}{\beta_2}\right) \cdots \mat{E} \left(\left(\mat{\Pi}_{0}\right)^{[\beta_{k+1}=0]}\pinvk{\mat{T}}{\beta_{k+1}}\right)\\
        = &~ \left(\prod_{j=1}^{k+1}x_j^{-\beta_j}\right)[\mat{T},\mat{\Pi}_{0},\mat{E}]\,,
    \end{align*}
    where the summation of $i_j$ in the second step depends on whether $j\in {\cal J}$, the third step follows from $\sum_{j=1}^r \lambda_j^{-t}\mat{\Pi}_j = (\mat{T}^+)^t$ for any $t>0$, and the last step follows from \cref{def:interleave_eval}.
    By linearity, it implies that
    \begin{align*}
        \sum_{\substack{i_j = 0 \text{ for } j \notin \mathcal{J} \\ i_j \in \{1,\ldots,r\} \text{ for } j \in \mathcal{J}}}  \mat{\Pi}_{i_1} \cdot \mat{E}\mat{\Pi}_{i_2}\cdots \mat{E}\mat{\Pi}_{i_{k+1}} \cdot A_{k-|\mathcal{J}|}(\lambda_{i_j} : j \in \mathcal{J})=\left( A_{k - |\mathcal{J}|}( x_i : i \in \mathcal{J}) \right)[\mat{T},\mat{\Pi}_{0},\mat{E}]\,.
    \end{align*}

    Thus, we obtain that
    \begin{equation*}
         \mat{C}_k = \sum_{\mathcal{J} \subseteq \{1,\ldots,k+1\}} \left( A_{k - |\mathcal{J}|}( x_i : i \in \mathcal{J}) \right)[\mat{T},\mat{\Pi}_{0},\mat{E}] = \sum_{\mathcal{J} \subseteq \{1,\ldots,k+1\}} \mat{F}_{k,\mathcal{J}}\,.
    \end{equation*}
    The theorem then follows by \cref{prop:projector-expansion}.
\end{proof}

\subsection{Expansion of spectral projector: bounds on terms} \label{app:explicit-expansion-bounds}

In this section, we bound the terms appearing in the expansion of the spectral projector, \cref{thm:expansion-explicit}.
Our result is as follows:

\begin{lemma}[Terms in expansion of spectral projector] \label{lem:terms}
    Fix a number $k\ge 2$ and a subset $\mathcal{J} \subseteq \{ 1,\ldots, k+1 \}$.
    Then there exists a constant $0 < C < 3/8$ such that
    \begin{enumerate}[label=(\alph*)]
        \item If $\mathcal{J} = \emptyset$, then $\mat{F}_{k,\mathcal{J}} = \mat{0}$.
        \item If $|\mathcal{J}| \ge 2$, then %
        \begin{equation*}
            \norm{\mat{F}_{k,\mathcal{J}}}_2= \order \left( \frac{1}{\zgap n} + \frac{\alpha\sqrt{\log n}}{\mu_r \sqrt{n}} \right)^{|\mathcal{J}|} C^{k-|\mathcal{J}|}.
        \end{equation*}
        \item If $\mathcal{J} = \{j\}$ for $2 \le i \le k$, then 
        \begin{equation*}
            \norm{\mat{F}_{k,\mathcal{J}}}_2= \order \left( \frac{1}{\zgap^2 n^2} + \frac{\alpha^2\log n}{\mu_r^2 n} \right) C^{k-2}.
        \end{equation*}
        \item For $\mathcal{J} = \{1\}$ and $\mathcal{J} = \{k+1\}$, we have the representations
        \begin{align*}
            \mat{F}_{k,\{1\}} = \pinvk{\mat{T}}{k} \mat{E}_{\rm random} \mat{\Pi}_{\mat{Q}_r^\perp} (\mat{E}\mat{\Pi}_{\mat{Q}_r^\perp})^{k-1} + \order \left( \frac{1}{\zgap n} \right)C^{k-1} \norm{\mat{G}_1}_2 \\
            \mat{F}_{k,\{k+1\}} =   (\mat{\Pi}_{\mat{Q}_r^\perp} \mat{E})^{k-1}\mat{\Pi}_{\mat{Q}_r^\perp}\mat{E}_{\rm random}\pinvk{\mat{T}}{k}  + \order \left( \frac{1}{\zgap n} \right)C^{k-1} \norm{\mat{G}_2}_2
        \end{align*}
        where $\norm{\mat{G}_1}_2,\norm{\mat{G}_2}_2= 1$.
        For the leading-order terms, we have the bounds
        \begin{equation} \label{eq:leading-order-bounds}
            \begin{split}
            \norm{\pinvk{\mat{T}}{k} \mat{E}_{\rm random} \mat{\Pi}_{\mat{Q}_r^\perp} (\mat{E}\mat{\Pi}_{\mat{Q}_r^\perp})^{k-1}}_2 &= \order\left(\frac{\alpha\sqrt{\log n}}{\sqrt{n} \mu_r}\right) C^{k-1} \\
            \norm{(\mat{\Pi}_{\mat{Q}_r^\perp} \mat{E})^{k-1}\mat{\Pi}_{\mat{Q}_r^\perp}\mat{E}_{\rm random}\pinvk{\mat{T}}{k}}_2 &= \order\left(\frac{\alpha\sqrt{\log n}}{\sqrt{n} \mu_r}\right) C^{k-1}.
            \end{split}
        \end{equation}
    \end{enumerate}
\end{lemma}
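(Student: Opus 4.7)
The plan is to prove the four parts by directly analyzing the interleaving evaluation $\mat{F}_{k,\mathcal{J}}$ using the explicit Schur-polynomial formula from \cref{prop:contour-integral} together with three norm inputs: $\|\mat{T}^\pinv\|_2 = O(1/(n\mu_r))$ from \cref{cor:moitra_T}, the crude bound $\|\mat{E}\|_2 = O(n\mu_r C_0)$ from \cref{prop:error-matrix}, and the sharpened ``absorption'' bound $\|\mat{E}\mat{Q}_r\|_2 \le \|\mat{E}_{\rm random}\|_2 + \|\mat{E}_{\rm tail}\mat{Q}_r\|_2 = O(\alpha\sqrt{n\log n} + \mu_{\rm tail}/\Delta_{\vec{z}})$ from \cref{lem:Mecks,prop:tail-error}. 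Here $C_0 \coloneqq \mu_{\rm tail}/\mu_r + \alpha\sqrt{\log n}/(\mu_r\sqrt{n}) = O(1)$ and can be forced below any fixed constant $<3/8$ (the $C$ in the statement) by $\mu_{\rm tail}\le\mu_r/8$ and the lower bound on $n$ from \cref{thm:esprit-1.5_formal}. Note that dividing one factor of $\|\mat{E}\mat{Q}_r\|_2$ by one power of $\lambda_r = \Theta(n\mu_r)$ yields $O(1/(\Delta_{\vec{z}} n) + \alpha\sqrt{\log n}/(\mu_r\sqrt{n}))$ (using $\mu_{\rm tail}/\mu_r = O(1)$), while one factor of $\|\mat{E}\|_2$ paired with one $1/\lambda_r$ gives $O(C_0)$; these are the two bookkeeping units that will combine to produce the stated bounds.

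For (a), when $\mathcal{J}=\emptyset$ the scalar $A_k()$ equals $\frac{1}{2\pi\iu}\oint_{\mathcal{C}_r}\zeta^{-(k+1)}\,\d\zeta$; since the origin lies strictly to the left of $\mathcal{C}_r$ (whose leftmost real point is $n\mu_r/4>0$) the integrand is analytic inside $\mathcal{C}_r$, so Cauchy's theorem gives $\mat{F}_{k,\emptyset}=\mat{0}$ for all $k\ge 1$. For (b) and (c), I expand $A_{k-\ell}(x_i:i\in\mathcal{J})$ via \cref{prop:contour-integral,prop:schur-polynomial} into a sum of $O(1)$ monomials $c_{\vec{\beta}}\prod_{i\in\mathcal{J}}x_i^{-\beta_i}$ with $\beta_i\ge 1$, $\sum_i\beta_i=k$, and nonnegative integer coefficients summing to $\binom{k-1}{\ell-1}$, absorbable into the constant. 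Each interleaving $(\prod_i x_i^{-\beta_i})[\mat{T},\mat{\Pi}_{\mat{Q}_r^\perp},\mat{E}]$ is a length-$(k+1)$ word whose $i$-th letter is $(\mat{T}^\pinv)^{\beta_i}=\mat{Q}_r\mat{\Sigma}_r^{-\beta_i}\mat{Q}_r^\adj$ for $i\in\mathcal{J}$ and $\mat{\Pi}_{\mat{Q}_r^\perp}$ otherwise, separated by $k$ copies of $\mat{E}$. Every $\mat{E}$ adjacent to a $\mathcal{J}$-position gets absorbed into a factor $\mat{E}\mat{Q}_r$ or $\mat{Q}_r^\adj\mat{E}$, contributing a good factor when paired with one $\lambda_r^{-1}$; the remaining $\mat{E}$'s (sandwiched by $\mat{\Pi}_{\mat{Q}_r^\perp}$'s or living between two $\mathcal{J}$-positions with extra $\lambda_r^{-1}$'s to spare) contribute factors of $C_0$. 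A careful matching shows that there are always at least $|\mathcal{J}|$ absorbable $\mat{E}$'s when $|\mathcal{J}|\ge 2$ or when $\mathcal{J}=\{j\}$ is interior, producing exactly $|\mathcal{J}|$ good factors, and the remaining $k-|\mathcal{J}|$ pairings give the $C^{k-|\mathcal{J}|}$ factor. In case (c) with a single interior $j$, both adjacent $\mat{E}$'s are absorbed into one sandwich $\mat{\Pi}_{\mat{Q}_r^\perp}\mat{E}(\mat{T}^\pinv)^k\mat{E}\mat{\Pi}_{\mat{Q}_r^\perp}$ of norm $\le\|\mat{E}\mat{Q}_r\|_2^2/\lambda_r^k$, the other $k-2$ $\mat{E}$'s contribute $(n\mu_r C_0)^{k-2}$, and simplification with $\lambda_r^k=(n\mu_r)^2\cdot(n\mu_r)^{k-2}$ yields exactly the squared bound in (c).

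For (d), the Schur polynomial for $\mathcal{J}=\{1\}$ is simply $x_1^{k-1}$, giving a single monomial and the exact identity $\mat{F}_{k,\{1\}}=(\mat{T}^\pinv)^k\mat{E}\mat{\Pi}_{\mat{Q}_r^\perp}(\mat{E}\mat{\Pi}_{\mat{Q}_r^\perp})^{k-1}$. Splitting the leftmost $\mat{E}=\mat{E}_{\rm random}+\mat{E}_{\rm tail}$ isolates the leading term; the bound \cref{eq:leading-order-bounds} on it follows from $\|(\mat{T}^\pinv)^k\mat{E}_{\rm random}\|_2\le\|\mat{E}_{\rm random}\|_2/\lambda_r^k$ multiplied by $\|\mat{E}\mat{\Pi}_{\mat{Q}_r^\perp}\|_2^{k-1}\le O((n\mu_r C_0)^{k-1})$. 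The $\mat{E}_{\rm tail}$ remainder is controlled by factoring $(\mat{T}^\pinv)^k\mat{E}_{\rm tail}=\mat{Q}_r\mat{\Sigma}_r^{-k}\mat{Q}_r^\adj\mat{E}_{\rm tail}$ and applying \cref{prop:tail-error} to get $\|(\mat{T}^\pinv)^k\mat{E}_{\rm tail}\|_2\le O(\mu_{\rm tail}/(\Delta_{\vec{z}}(n\mu_r)^k))$, which combined with $(n\mu_r C_0)^{k-1}$ and $\mu_{\rm tail}/\mu_r=O(1)$ yields the stated $O(1/(\Delta_{\vec{z}} n))\cdot C^{k-1}$ remainder. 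The case $\mathcal{J}=\{k+1\}$ is identical after Hermitian conjugation.

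The main obstacle will be the combinatorial bookkeeping in parts (b) and (c): matching the $k$ factors of $\mat{E}$ against the $\ell$ positions of $\mathcal{J}$ and distributing the $k$ powers of $\lambda_r^{-1}$ so that the bound closes at exactly $O(\text{good})^{|\mathcal{J}|}\cdot C^{k-|\mathcal{J}|}$ without over- or under-counting. The qualitative difference between case (c) and the boundary singletons in case (d) is precisely that an interior $\mathcal{J}=\{j\}$ has two adjacent $\mat{E}$'s to absorb, producing a squared good factor, whereas $\mathcal{J}=\{1\}$ or $\{k+1\}$ has only one adjacent $\mat{E}$, giving the linear leading term; this is why (d) cannot be subsumed into the generic bound of (b).
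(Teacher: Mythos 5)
Your proposal is correct and follows essentially the same route as the paper's proof: the same Schur-polynomial monomial expansion with coefficient sum $\binom{k-1}{|\mathcal{J}|-1}$, the same two-tier bound on the sandwiched factors $\mat{\Pi}(\mathcal{J})_j\mat{E}\mat{\Pi}(\mathcal{J})_{j+1}$ (a ``good'' $\order(\mu_{\rm tail}/\zgap+\alpha\sqrt{n\log n})$ factor when adjacent to a $\mathcal{J}$-position, the generic $\order(n\mu_r C_0)$ otherwise), the same count of at least $|\mathcal{J}|$ good factors (two for an interior singleton), and the same $\mat{E}_{\rm random}+\mat{E}_{\rm tail}$ split with \cref{prop:tail-error} for the boundary cases $\mathcal{J}=\{1\},\{k+1\}$. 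The only cosmetic slip is calling the relevant function for $\mathcal{J}=\{1\}$ the Schur polynomial $x_1^{k-1}$ rather than $A_{k-1}(x_1)=x_1^{-k}$, but the resulting identity for $\mat{F}_{k,\{1\}}$ is the correct one.
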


\begin{proof}
    For ease of reading, we break the proof into steps.

    \paragraph{Proof of (a).} 
    The conclusion $\mat{F}_{k,\emptyset} = \mat{0}$ follows directly from the formula \cref{eq:FkT} and $A_{k-\emptyset}=0$.

    \paragraph{Expression of $\mat{F}_{k,\mathcal{J}}$ as a sum of monomials.} 
    For the remaining parts of the result, we shall decompose $\mat{F}_{k,\mathcal{J}}$ as a sum of monomials.
    By \cref{prop:schur-polynomial} and the definition of $\mat{F}_{k,\mathcal{J}}$, $\mat{F}_{k,\mathcal{J}}$ is equal to $(-1)^{|\mathcal{J}|-1}$ times a polynomial with degree $-k$. More specifically,
    \begin{equation}\label{F_{k,j}}
        \mat{F}_{k,\mathcal{J}}=(-1)^{|\mathcal{J}|-1}\sum_{\substack{\vec{\beta}\in \mathbb{N}^{k+1},\sum^{k+1}_{i=1}\beta_i=k\\\left\{j|\beta_j\neq 0\right\}\subseteq\mathcal{J}}} A_{\vec{\beta}} \cdot \left( x_1^{-\beta_1} \cdots x_{k+1}^{-\beta_{k+1}} \right)[\mat{T},\mat{\Pi}_{\mat{Q}_r^\perp},\mat{E}],
    \end{equation}
    where the coefficients $A_{\vec{\beta}}$ are nonnegative integers. In addition, according to the second point of \cref{prop:schur-polynomial}, we have
    \begin{equation}\label{eqn:A_beta}
        \sum_{\substack{\vec{\beta}\in \mathbb{N}^{k+1},\sum^{k+1}_{i=1}\beta_i=k\\\left\{j|\beta_j\neq 0\right\}\subseteq\mathcal{J}}} A_{\vec{\beta}}=\binom{k-1}{|\mathcal{J}|-1}
    \end{equation}
    
    In view of the identities $\pinvk{\mat{T}}{\alpha} = \mat{\Pi}_{\mat{Q}_r}\pinvk{\mat{T}}{\alpha}\mat{\Pi}_{\mat{Q}_r}$ and $\mat{\Pi}_{\mat{Q}_r^\perp} = \mat{\Pi}_{\mat{Q}_r^\perp} \cdot \mat{\Pi}_{\mat{Q}_r^\perp}$, we can write
    \begin{equation*}
        \left( x_1^{-\beta_1} \cdots x_{k+1}^{-\beta_{k+1}} \right)[\mat{T},\mat{\Pi}_{\mat{Q}_r^\perp},\mat{E}] = \pinvk{\mat{T}}{\beta_1} \prod_{j=1}^k \left[(\mat{\Pi}({\cal J})_j \mat{E} \mat{\Pi}({\cal J})_{j+1}) \pinvk{\mat{T}}{\beta_{j+1}}\right]
    \end{equation*}
    where $\mat{\Pi}({\cal J})_j = \mat{\Pi}_{\mat{Q}_r}$ if $j \in \mathcal{J}$ and $\mat{\Pi}({\cal J})_j = \mat{\Pi}_{\mat{Q}_r^\perp}$ otherwise.
    Here, we use the convention $\pinvk{\mat{T}}{0} = \Id_n$.%

    \paragraph{Bounding each monomial.}
    By the submultiplicative property of $\norm{\cdot}_2$, we have a norm bound on each of these monomials.
    \begin{equation} \label{eq:monomial-bound-1}
        \begin{split}
        \norm{\left( x_1^{-\beta_1} \cdots x_{k+1}^{-\beta_{k+1}} \right)[\mat{T},\mat{\Pi}_{\mat{Q}_r^\perp},\mat{E}]}_2 \le&~ \norm{\mat{T}^\pinv}^k \prod_{j=1}^k \norm{\mat{\Pi}({\cal J})_j \mat{E}\mat{\Pi}({\cal J})_{j+1}}_2 \\
        =&~ \order\left(\frac{1}{\mu_r n}\right)^k \prod_{j=1}^k \norm{\mat{\Pi}({\cal J})_j \mat{E}\mat{\Pi}({\cal J})_{j+1}}_2\,,
        \end{split}
    \end{equation}
    where the second step follows from $\norm{\mat{T}^\pinv}_2 = \order(\frac{1}{\mu_r n})$ by \cref{cor:moitra_T}.
    
    To bound the factors $\norm{\mat{\Pi}({\cal J})_j \mat{E}\mat{\Pi}({\cal J})_{j+1}}_2$, we consider two cases. When $j\in {\cal J}$ or $j+1\in {\cal J}$, this term is either $\|\mat{\Pi}_{\mat{Q}^\perp} \mat{E}\mat{\Pi}_{\mat{Q}_r}\|_2$ or $\|\mat{\Pi}_{\mat{Q}_r} \mat{E}\mat{\Pi}_{\mat{Q}_r}\|_2$, which can be upper-bounded by $\|\mat{E}\mat{\Pi}_{\mat{Q}_r}\|_2$. Then, we have
    \begin{align*}
        \|\mat{E}\mat{\Pi}_{\mat{Q}_r}\|_2 \leq \|\mat{E}_{\rm tail}\mat{\Pi}_{\mat{Q}_r}\|_2 + \|\mat{E}_{\rm random}\mat{\Pi}_{\mat{Q}_r}\|_2 \leq {\cal O}\left(\frac{\mu_{\rm tail}}{\Delta_{\vec{z}}}\right) + \|\mat{E}_{\rm tail}\|_2={\cal O}\left(\frac{\mu_{\rm tail}}{\Delta_{\vec{z}}}+\alpha \sqrt{n\log n}\right)\,,
    \end{align*}
    where the second step follows from \cref{prop:tail-error} and the third step follows from \cref{lem:toeplitz}. Otherwise, this term becomes $\|\mat{\Pi}_{\mat{Q}^\perp}\mat{E}\mat{\Pi}_{\mat{Q}^\perp}\|_2$. By \cref{prop:error-matrix},
    \begin{align*}
        \|\mat{\Pi}_{\mat{Q}_r^\perp}\mat{E}\mat{\Pi}_{\mat{Q}_r^\perp}\|_2 \leq \|\mat{E}\|_2=n\mu_{\rm tail} +{\cal O}\left( \alpha \sqrt{n\log n}\right)\,.
    \end{align*}
    Therefore, we obtain that
    \begin{equation} \label{eq:monomial-bound-2}
        \norm{\mat{\Pi}({\cal J})_j \mat{E}\mat{\Pi}({\cal J})_{j+1}}_2 \le \begin{cases}
            \order \left( \frac{\mu_{\rm tail}}{\zgap} + \alpha\sqrt{n \log n} \right) & \text{if } j \in\mathcal{J} \text{ or } j+1\in\mathcal{J}, \\ 
            n \mu_{\rm tail}+\order \left(\alpha\sqrt{n \log n} \right) & \text{for any value of $j$}.
        \end{cases}.
    \end{equation}

    Given any $\mathcal{J}$, we notice that at least $|\mathcal{J}|$ of the terms $\mat{\Pi}({\cal J})_j \mat{E}\mat{\Pi}({\cal J})_{j+1}$ satisfy $j\in\mathcal{J}$ or $j+1\in\mathcal{J}$.
    Thus, substituting the bounds \cref{eq:monomial-bound-2} into \cref{eq:monomial-bound-1}, we obtain
    \begin{equation}\label{eqn:each_bound}
        \norm{\left( x_1^{-\beta_1} \cdots x_{k+1}^{-\beta_{k+1}} \right)[\mat{T},\mat{\Pi}_{\mat{Q}_r^\perp},\mat{E}]}_2 \le \order \left( \frac{1}{\zgap n} + \frac{\alpha\sqrt{\log n}}{\mu_r \sqrt{n}} \right)^{|\mathcal{J}|} a^{k-|\mathcal{J}|},
    \end{equation}
    where $$0 < a \leq \frac{\mu_{\rm tail}}{\mu_r} + {\cal O}\left(\frac{\alpha\sqrt{\log n}}{\mu_r\sqrt{n}}\right)<\frac{3}{16}$$ is an absolute constant.
    Define $C\coloneqq 2a$. Now, we are ready to show (b)--(d):
    
    \paragraph{Proof of (b).}

    Using \eqref{eqn:each_bound} and \eqref{eqn:A_beta} to bound every term in \eqref{F_{k,j}}, we conclude that
    \begin{equation*}
        \norm{\mat{F}_{k,\mathcal{J}}}_2 \le \binom{k -1}{|\mathcal{J}| - 1} \order \left( \frac{1}{\zgap n} + \frac{\alpha\sqrt{\log n}}{\mu_r \sqrt{n}} \right)^{|\mathcal{J}|} a^{k-|\mathcal{J}|}\leq \order \left( \frac{1}{\zgap n} + \frac{\alpha\sqrt{\log n}}{\mu_r \sqrt{n}} \right)^{|\mathcal{J}|} C^{k-|\mathcal{J}|}\,,
    \end{equation*}
    where we use the trivial bound $\binom{m}{p} \le 2^m$ in the second inequality. This proves (b).

    \paragraph{Proof of (c).} If $\mathcal{J} = \{j\}$ for some $2\le j \le k$, we know that $A_{k-1}(x_j)=x_j^{-k}$. Then, \cref{eq:monomial-bound-2} implies
    \begin{equation*}
        \norm{\mat{\Pi}({\cal J})_{j-1} \mat{E}\mat{\Pi}({\cal J})_j}_2, ~~\norm{\mat{\Pi}({\cal J})_{j} \mat{E}\mat{\Pi}({\cal J})_{j+1}}_2 \le \order \left( \frac{\mu_{\rm tail}}{\zgap} + \alpha\sqrt{n \log n} \right).
    \end{equation*}
    Similar to \eqref{eqn:each_bound}, using the second bound in \cref{eq:monomial-bound-2} for the other $k-2$ factors in \cref{eq:monomial-bound-1}, we obtain
    \begin{align*}
        \norm{\left( x_1^{0} \cdots x_{j-1}^0x_j^{-k}x_{j+1}^0\cdots x_{k+1}^{0} \right)[\mat{T},\mat{\Pi}_{\mat{Q}_r^\perp},\mat{E}]}_2 \le&~  \order \left( \frac{1}{\zgap n} + \frac{\alpha\sqrt{\log n}}{\mu_r \sqrt{n}} \right)^2 a^{k-2}\\
        \le&~ \order \left( \frac{1}{\zgap n} + \frac{\alpha\sqrt{\log n}}{\mu_r \sqrt{n}} \right)^2 C^{k-2}
    \end{align*}
    Since $\mat{F}_{k,\mathcal{J}}$ is a single monomial of the form we just bounded, so part (c) is proven.

    \paragraph{Proof of (d).}
    If $\mathcal{J} = \{1\}$ or $\mathcal{J} = \{k+1\}$, then the projector $\mat{\Pi}({\cal J})_j$ is on the end, so only the first or the last factor $\|\mat{T}^+\|\norm{\mat{\Pi}({\cal J})_{j} \mat{E}\mat{\Pi}({\cal J})_{j+1}}_2$ is $\tilde{\order}(1/\sqrt{n})$ and the other factors are $\order(1)$. Thus, this term gives us $\mathcal{O}(1/\sqrt{n})$. We need to obtain its specific formula.

    We treat the case $\mathcal{J} = \{1\}$, with the case $\mathcal{J} = \{k+1\}$ following similarly.
    Using our formula for $\mat{F}_{k,\{1\}}$ and the decomposition $\mat{E} = \mat{E}_{\rm random} + \mat{E}_{\rm tail}$, we have that
    \begin{equation*}
        \mat{F}_{k,\{1\}} = \pinvk{\mat{T}}{k} \mat{E} \mat{\Pi}_{\mat{Q}_r^\perp} (\mat{E}\mat{\Pi}_{\mat{Q}_r^\perp})^{k-1} = \pinvk{\mat{T}}{k} \mat{E}_{\rm random} \mat{\Pi}_{\mat{Q}_r^\perp} (\mat{E}\mat{\Pi}_{\mat{Q}_r^\perp})^{k-1} + \pinvk{\mat{T}}{k} \mat{E}_{\rm tail} \mat{\Pi}_{\mat{Q}_r^\perp} (\mat{E}\mat{\Pi}_{\mat{Q}_r^\perp})^{k-1}\,.
    \end{equation*}

    For the first term, we have
    \begin{align*}
    \norm{\pinvk{\mat{T}}{k} \mat{E}_{\rm random} \mat{\Pi}_{\mat{Q}_r^\perp} (\mat{E}\mat{\Pi}_{\mat{Q}_r^\perp})^{k-1}}_2 \leq &~ \norm{\mat{T}^+}^k_2 \cdot \norm{\mat{E}_{\rm random}}_2 \cdot \norm{\mat{\Pi}_{\mat{Q}_r^\perp} (\mat{E}\mat{\Pi}_{\mat{Q}_r^\perp})^{k-1}}_2\\
    \leq &~ {\cal O}\left(\frac{1}{\mu_r n}\right)^k\cdot {\cal O}\left(\alpha \sqrt{n\log n}\right)\cdot \left(n\mu_{\rm tail} + {\cal O}\left(\alpha \sqrt{n\log n}\right)\right)^{k-1}\\
    \leq &~ \order\left(\frac{\alpha\sqrt{\log n}}{\sqrt{n} \mu_r}\right) C^{k-1}, 
    \end{align*}
    where the second step follows from \cref{cor:moitra_T,lem:Mecks,eq:monomial-bound-2}.

    For the second term, we have
    \begin{align*}
    \norm{\pinvk{\mat{T}}{k} \mat{E}_{\rm tail} \mat{\Pi}_{\mat{Q}_r^\perp} (\mat{E}\mat{\Pi}_{\mat{Q}_r^\perp})^{k-1}}_2=&~\norm{\pinvk{\mat{T}}{k} \mat{\Pi}_{\mat{Q}_r}\mat{E}_{\rm tail} \mat{\Pi}_{\mat{Q}_r^\perp} (\mat{E}\mat{\Pi}_{\mat{Q}_r^\perp})^{k-1}}_2\\
    \le &~  \norm{\pinvk{\mat{T}}{k}}_2 \cdot \norm{\mat{\Pi}_{\mat{Q}_r}\mat{E}_{\rm tail}}_2\cdot \norm{\mat{\Pi}_{\mat{Q}_r^\perp} (\mat{E}\mat{\Pi}_{\mat{Q}_r^\perp})^{k-1}}_2\\
    \leq &~ {\cal O}\left(\frac{1}{\mu_r n}\right)^k\cdot {\cal O}\left(\frac{\mu_{\rm tail}}{\Delta_{\vec{z}}}\right)\cdot \left(n\mu_{\rm tail} + {\cal O}\left(\alpha \sqrt{n\log n}\right)\right)^{k-1}\\
    \le &~\order \left( \frac{1}{\zgap n} \right)C^{k-1}\,,    
    \end{align*}
    where we use \eqref{prop:tail-error} for $\mat{\Pi}_{\mat{Q}_r}\mat{E}_{\rm tail}$ in the third step.
    This concludes the proof of (d).
\end{proof}

\subsection{Proof of \texorpdfstring{\cref{lem:expansion-simplified}}{Lemma 4.3}} \label{app:expansion-simplified}
\expansionsimplified*

\begin{proof}
    Begin with the explicit expansion of the spectral projector (\cref{thm:expansion-explicit}):
    \begin{equation*}
        \mat{\Pi}_{\mat{\hat{Q}}_r} = \mat{\Pi}_{\mat{Q}_r} + \sum_{k=1}^\infty \sum_{\mathcal{J} \subseteq \{1,\ldots,k+1\}} \mat{F}_{k,\mathcal{J}}.
    \end{equation*}
    The summands $\mat{F}_{k,\mathcal{J}}$ are defined in \cref{eq:FkT}.
    Now use \cref{lem:terms} to obtain
    \begin{equation} \label{eq:expansion}
    \begin{aligned}
        \mat{\Pi}_{\mat{\hat{Q}}_r} = &~\mat{\Pi}_{\mat{Q}_r} + \sum_{k=1}^\infty \left( \pinvk{\mat{T}}{k} \mat{E}_{\rm random} \mat{\Pi}_{\mat{Q}_r^\perp} (\mat{E}\mat{\Pi}_{\mat{Q}_r^\perp})^{k-1} + (\mat{\Pi}_{\mat{Q}_r^\perp} \mat{E})^{k-1}\mat{\Pi}_{\mat{Q}_r^\perp}\mat{E}_{\rm random}\pinvk{\mat{T}}{k} \right) \\
        ~&+ \sum_{k=1}^\infty \Bigg[ \order \left( \frac{1}{\zgap n} \right)C^{k-1} + k\cdot\order \left( \frac{1}{\zgap^2 n^2} + \frac{\alpha^2\log n}{\mu_r^2 n} \right) C^{k-2} + D_k \Bigg] \mat{G}\,.
    \end{aligned}
    \end{equation}
    with $\norm{\mat{G}}_2 \le 1$ and 
    \begin{equation*}
        D_k \coloneqq  \sum_{\substack{\mathcal{J} \subseteq \{1,\ldots,k+1\} \\ |\mathcal{J}|\ge 2}}  \order \left( \frac{1}{\zgap n} + \frac{\alpha\sqrt{\log n}}{\mu_r \sqrt{n}} \right)^{|\mathcal{J}|} C^{k-|\mathcal{J}|}.
    \end{equation*}

    To bound $D_k$, begin by rewriting this as a sum over the size $t$ of the subset $\mathcal{J}$, of which there are $\binom{k+1}{t}$ such subsets.
    Therefore,
    \begin{equation*}
       D_k = \sum_{t=2}^{k+1} \binom{k+1}{t} \cdot \order \left( \frac{1}{\zgap n} + \frac{\alpha\sqrt{\log n}}{\mu_r \sqrt{n}} \right)^{t} C^{k-t}\,.
    \end{equation*}
    Now invoke the trivial bound $\binom{m}{p} \le 2^m$ to obtain
    \begin{equation*}
        D_k = \sum_{t=2}^{k+1} \order \left( \frac{1}{\zgap n} + \frac{\alpha \sqrt{\log n}}{\mu_r \sqrt{n}} \right)^{t} (2C)^{k-t} = \order \left( \frac{1}{\zgap^2 n^2} + \frac{\alpha^2\log n}{\mu_r^2 n} \right) \cdot (2C)^k\,.
    \end{equation*}
    
    Substituting in \cref{eq:expansion} and summing over $k$, the last term becomes
    \begin{align*}
        &\sum_{k=1}^\infty \Bigg[ \order \left( \frac{1}{\zgap n} \right)C^{k-1} + k\cdot\order \left( \frac{1}{\zgap^2 n^2} + \frac{\alpha^2\log n}{\mu_r^2 n} \right) C^{k-2} + D_k \Bigg] \mat{G}\\
        = &~ \sum_{k=1}^\infty \Bigg[ \order \left( \frac{1}{\zgap n} \right)C^{k-1} + k\cdot\order \left( \frac{1}{\zgap^2 n^2} + \frac{\alpha^2\log n}{\mu_r^2 n} \right) C^{k-2} + \order \left( \frac{1}{\zgap^2 n^2} + \frac{\alpha^2\log n}{\mu_r^2 n} \right) \cdot (2C)^k \Bigg] \mat{G}\\
        = &~ {\cal O}\left(\frac{1}{\Delta_{\vec{z}}n}+\frac{\alpha^2 \log n}{\mu_r^2 n}\right)\mat{G}\,,
    \end{align*}
    where the last step follows from $C\in (0, 3/8)$.

    The lemma is then proved.
\end{proof}

\section{Deferred proofs for \texorpdfstring{\cref{sec:strong_sketch}}{Section 5}}\label{sec:align_proof_defer}

\subsection{Proof for Step 1: Construction of the ``good'' \texorpdfstring{$\mat{P}$}{P}}\label{sec:align_step1_defer}

\goodPconstruction*

\begin{proof}

By the eigenvector perturbation lemma (\cref{lem:second-order-formal}), we obtain that there exists a unitary matrix $\mat{U}_r\in\mathbb{C}^{r\times r}$ such that%
\begin{equation}\label{eqn:eigenvector_expansion}
\begin{aligned}
\hat{\mat{Q}}_r\mat{U}_r=&~\mat{Q}_r+
\underbrace{\sum^\infty_{k=0}\mat{\Pi}_{\mat{Q}^\perp_r}\left(\mat{E}_1\mat{\Pi}_{\mat{Q}^\perp_r}\right)^{k}\mat{E}_2\mat{Q}_r\left(\mat{\Sigma}^{-1}_r\right)^{k+1}+ \order \left(\frac{\alpha\sqrt{\log n}}{\mu_r \sqrt{\zgap n}} \right)\mat{\Pi}_{\mat{Q}_r}\tilde{\mat{Q}}_1}_{\text{first order term}}+\underbrace{\order \left( \frac{\alpha^2\log n}{\mu_r^2 \zgap n}\right) \tilde{\mat{Q}}_{2}}_{\text{second order term}}\\
=&\underbrace{\left[\mat{Q}_r+{\cal O}\left(\frac{\alpha\sqrt{\log n}}{\mu_r\sqrt{\Delta_z n}}\right)\mat{\Pi}_{\mat{Q}_r}\tilde{\mat{Q}}_1+{\cal O}\left(\frac{\alpha^2\log n}{\mu^{2}_r\Delta_{\vec{z}} n}\right)\mat{\Pi}_{\mat{Q}_r}\tilde{\mat{Q}}_{2}\right]}_{\text{in the column space of $\mat{Q}_r$}}\\
&+\left[\sum^\infty_{k=0}\mat{\Pi}_{\mat{Q}^\perp_r}\left(\mat{E}_1\mat{\Pi}_{\mat{Q}^\perp_r}\right)^{k}\mat{E}_2\mat{Q}_r\left(\mat{\Sigma}_r^{-1}\right)^{k+1}+{\cal O}\left(\frac{\alpha^2\log n}{\mu^{2}_r\Delta_{\vec{z}} n}\right)\mat{\Pi}_{\mat{Q}^\perp_r}\tilde{\mat{Q}}_{2}\right]
\end{aligned}
\end{equation}
where $\mat{\Pi}_{\mat{Q}_r}$ is the projection onto the column space of $\mat{Q}_r$ and $\mat{\Pi}_{\mat{Q}^\perp_r}$ is the projection onto the column space that is orthogonal to the column space of $\mat{Q}_r$, and $\tilde{\mat{Q}}_{1}, \tilde{\mat{Q}}_{2}\in\mathbb{C}^{n\times r}$ satisfying $\left\|\tilde{\mat{Q}}_{1}\right\|_2=\mathcal{O}(1)$ and $\left\|\tilde{\mat{Q}}_{2}\right\|_2=\mathcal{O}(1)$.

By \cref{clm:first_order_simplify}, the first-order error term that is orthogonal to the column space of $\mat{Q}_r$ can be simplified as follows:
\begin{align}\label{eqn:first_order_simplification}
\sum^\infty_{k=0}\mat{\Pi}_{\mat{Q}^\perp_r}\left(\mat{E}_1\mat{\Pi}_{\mat{Q}^\perp_r}\right)^{k}\mat{E}_2\mat{Q}_r\left(\mat{\Sigma}_r^{-1}\right)^{k+1}=&~\sum^\infty_{k=0}\left(\mat{I}_n-\frac{1}{n}\mat{V}_n(\zdom)\mat{V}_n(\zdom)^\dagger\right)\mat{E}_1^k\mat{E}_2\mat{Q}_r\left(\mat{\Sigma}_r^{-1}\right)^{k+1}\notag\\
+&~{\cal O}\left(\frac{\alpha\sqrt{\log n}}{\mu_r\Delta_{\vec{z}}n^{1.5}}\right)\,.
\end{align}
It gives that%
\begin{equation}
\begin{aligned}
    \hat{\mat{Q}}_r\mat{U}_r=&~ \mat{Q}_r+{\cal O}\left(\frac{\alpha\sqrt{\log n}}{\mu_r\sqrt{\Delta_z n}}\right)\mat{\Pi}_{\mat{Q}_r}\tilde{\mat{Q}}_1+{\cal O}\left(\frac{\alpha^2\log n}{\mu^{2}_r\Delta_{\vec{z}} n}\right)\mat{\Pi}_{\mat{Q}_r}\tilde{\mat{Q}}_{2}\\
    +&~ \sum^\infty_{k=0}\left(\mat{I}_n-\frac{1}{n}\mat{V}_n(\zdom)\mat{V}_n(\zdom)^\dagger\right)\mat{E}_1^k\mat{E}_2\mat{Q}_r\left(\mat{\Sigma}_r^{-1}\right)^{k+1}\\
    +&~ {\cal O}\left(\frac{\alpha^2\log n}{\mu^{2}_r\Delta_{\vec{z}} n}\right)\mat{\Pi}_{\mat{Q}^\perp_r}\tilde{\mat{Q}}_{2}+{\cal O}\left(\frac{\alpha\sqrt{\log n}}{\mu_r\Delta_{\vec{z}}n^{1.5}}\right)\,.
\end{aligned}
\end{equation}

Because the image of $\mat{Q}_r+{\cal O}\left(\frac{\alpha\sqrt{\log n}}{\mu_r\sqrt{\Delta_z n}}\right)\mat{\Pi}_{\mat{Q}_r}\tilde{\mat{Q}}_1+{\cal O}\left(\frac{\alpha^2\log n}{\mu^{2}_r\Delta_{\vec{z}} n}\right)\mat{\Pi}_{\mat{Q}_r}\tilde{\mat{Q}}_{2}$ equals to the image of $\mat{Q}_r$, there exists an invertible matrix $\mat{P}\in\mathbb{C}^{r\times r}$ such that
\begin{align*}%
\mat{Q}_r+{\cal O}\left(\frac{\alpha\sqrt{\log n}}{\mu_r\sqrt{\Delta_z n}}\right)\mat{\Pi}_{\mat{Q}_r}\tilde{\mat{Q}}_1+{\cal O}\left(\frac{\alpha^2\log n}{\mu^{2}_r\Delta_{\vec{z}} n}\right)\mat{\Pi}_{\mat{Q}_r}\tilde{\mat{Q}}_{2}=\mat{Q}_r\mat{P}\,.
\end{align*}
Multiplying $\mat{Q}^\dagger_r$ on both sides, we have
\begin{align*}
\mat{P}:=&~\mat{Q}^\dagger_r\left(\mat{Q}_r+{\cal O}\left(\frac{\alpha\sqrt{\log n}}{\mu_r\sqrt{\Delta_z n}}\right)\mat{\Pi}_{\mat{Q}_r}\tilde{\mat{Q}}_1+{\cal O}\left(\frac{\alpha^2\log n}{\mu^{2}_r\Delta_{\vec{z}} n}\right)\mat{\Pi}_{\mat{Q}_r}\tilde{\mat{Q}}_{2}\right)\\
=&~\mat{I}_r+{\cal O}\left(\frac{\alpha\sqrt{\log n}}{\mu_r\sqrt{\Delta_z n}}\right)\mat{Q}^\dagger_r\mat{\Pi}_{\mat{Q}_r}\tilde{\mat{Q}}_1+{\cal O}\left(\frac{\alpha^2\log n}{\mu^{2}_r\Delta_{\vec{z}} n}\right)\mat{Q}^\dagger_r\mat{\Pi}_{\mat{Q}_r}\tilde{\mat{Q}}_{2}\,,
\end{align*}
which gives the expression of $\mat{P}$ \eqref{eq:def_good_P} in the lemma.

Property I and II follow from \cref{clm:good_P_property}. 
The proof of the lemma is then completed.
\end{proof}

\subsubsection{Technical claims}

\begin{claim}[Simplifying the first-order error term]\label{clm:first_order_simplify}
The first-order error term in the orthogonal space of $\mat{Q}_r$ in \eqref{eqn:eigenvector_expansion} can be expressed as:
\begin{align*}
\sum^\infty_{k=0}\mat{\Pi}_{\mat{Q}^\perp_r}\left(\mat{E}_1\mat{\Pi}_{\mat{Q}^\perp_r}\right)^{k}\mat{E}_2\mat{Q}_r\left(\mat{\Sigma}_r^{-1}\right)^{k+1}=&~\sum^\infty_{k=0}\left(\mat{I}_n-\frac{1}{n}\mat{V}_n(\zdom)\mat{V}_n(\zdom)^\dagger\right)\mat{E}_1^k\mat{E}_2\mat{Q}_r\left(\mat{\Sigma}_r^{-1}\right)^{k+1}\\
&~ +{\cal O}\left(\frac{\alpha\sqrt{\log n}}{\mu_r\Delta_{\vec{z}}n^{1.5}}\right)\,.
\end{align*}    
\end{claim}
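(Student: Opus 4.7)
The plan is to estimate the difference between the two series termwise. Write $\mat{\Pi} := \mat{\Pi}_{\mat{Q}_r^\perp}$ and $\tilde{\mat{\Pi}} := \mat{I}_n - \tfrac{1}{n}\mat{V}_n(\zdom)\mat{V}_n(\zdom)^\dagger$. Using $\mat{Q}_r\mat{Q}_r^\dagger = \tfrac{1}{n}\mat{V}_n(\zdom)\mat{P}_v\mat{P}_v^\dagger\mat{V}_n(\zdom)^\dagger$ from \cref{lem:vander-eigen}, I get
\begin{equation*}
\mat{\Pi} - \tilde{\mat{\Pi}} = \tfrac{1}{n}\mat{V}_n(\zdom)(\mat{I}_r - \mat{P}_v\mat{P}_v^\dagger)\mat{V}_n(\zdom)^\dagger,
\end{equation*}
whose spectral norm is $\mathcal{O}(1/(\Delta_{\vec{z}}n))$ thanks to \cref{lem:vander-eigen} (for the middle factor) and \cref{thm:moitra} (for $\|\mat{V}_n(\zdom)\|_2^2 = \mathcal{O}(n)$). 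I then split each summand as
\begin{equation*}
\mat{\Pi}(\mat{E}_1\mat{\Pi})^k - \tilde{\mat{\Pi}}\mat{E}_1^k = (\mat{\Pi}-\tilde{\mat{\Pi}})(\mat{E}_1\mat{\Pi})^k + \tilde{\mat{\Pi}}\bigl[(\mat{E}_1\mat{\Pi})^k - \mat{E}_1^k\bigr].
\end{equation*}

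For the first piece, combine $\|\mat{\Pi}-\tilde{\mat{\Pi}}\|_2 = \mathcal{O}(1/(\Delta_{\vec{z}}n))$ with $\|\mat{E}_1\|_2 \le n\mu_{\rm tail}$ (\cref{prop:error-matrix}), $\|\mat{E}_2\|_2 = \mathcal{O}(\alpha\sqrt{n\log n})$ (\cref{lem:Mecks}), and $\|\mat{\Sigma}_r^{-1}\|_2 = \mathcal{O}(1/(n\mu_r))$ (\cref{cor:moitra_T}). The $k$-th contribution is bounded by $(\mu_{\rm tail}/\mu_r)^k \cdot \mathcal{O}\bigl(\alpha\sqrt{\log n}/(\mu_r\Delta_{\vec{z}}n^{1.5})\bigr)$, which sums geometrically (since $\mu_{\rm tail}/\mu_r \le 1/8$) to the desired $\mathcal{O}\bigl(\alpha\sqrt{\log n}/(\mu_r\Delta_{\vec{z}}n^{1.5})\bigr)$.

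For the second piece, I telescope by writing $\mat{\Pi} = \mat{I}_n - \mat{Q}_r\mat{Q}_r^\dagger$ and swapping inner projectors one at a time:
\begin{equation*}
(\mat{E}_1\mat{\Pi})^k - \mat{E}_1^k = -\sum_{j=1}^k (\mat{E}_1\mat{\Pi})^{k-j}\mat{E}_1\mat{Q}_r\mat{Q}_r^\dagger\mat{E}_1^{j-1}.
\end{equation*}
The decisive estimate here is $\|\mat{E}_1\mat{Q}_r\|_2 = \|\mat{E}_{\rm tail}\mat{Q}_r\|_2 = \mathcal{O}(\mu_{\rm tail}/\Delta_{\vec{z}})$ from \cref{prop:tail-error}, which leverages the separation between dominant and tail locations. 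Each of the $k$ summands is therefore bounded by $(n\mu_{\rm tail})^{k-1}\cdot\mathcal{O}(\mu_{\rm tail}/\Delta_{\vec{z}})$, and multiplying by $\|\mat{E}_2\mat{Q}_r\mat{\Sigma}_r^{-(k+1)}\|_2 = \mathcal{O}\bigl(\alpha\sqrt{\log n}/(\mu_r^{k+1}n^{k+1/2})\bigr)$ gives a $k$-th term bounded by $k(\mu_{\rm tail}/\mu_r)^{k-1}\cdot\mathcal{O}\bigl(\alpha\sqrt{\log n}/(\mu_r\Delta_{\vec{z}}n^{1.5})\bigr)$. Since $\sum_k k(1/8)^{k-1} = \mathcal{O}(1)$, the whole series again contributes only $\mathcal{O}\bigl(\alpha\sqrt{\log n}/(\mu_r\Delta_{\vec{z}}n^{1.5})\bigr)$.

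The main obstacle is the non-commutative bookkeeping when replacing all inner $\mat{\Pi}$'s with $\mat{I}_n$: naively, each factor of $\mat{E}_1$ has norm as large as $n\mu_{\rm tail}$, so without cancellation the error would blow up geometrically. The key structural input that makes the plan work is \cref{prop:tail-error}, which says that the separation assumption A forces each application of $\mat{Q}_r\mat{Q}_r^\dagger$ to act on $\mat{E}_1$ as a small factor $\mathcal{O}(\mu_{\rm tail}/\Delta_{\vec{z}})$ rather than $\mathcal{O}(n\mu_{\rm tail})$; this is what ultimately delivers the gain in $1/\Delta_{\vec{z}}$ (rather than $n$) per swap and yields the stated $n^{-1.5}$ remainder.
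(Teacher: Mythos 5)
Your proof is correct and follows essentially the same route as the paper's: the same termwise split into an outer-projector swap, controlled by $\norm{\mat{\Pi}_{\mat{Q}_r^\perp} - (\mat{I}_n - \tfrac{1}{n}\mat{V}_n(\zdom)\mat{V}_n(\zdom)^\dagger)}_2 = \order(1/(\zgap n))$ via \cref{lem:vander-eigen}, and an inner-projector swap whose gain comes from $\norm{\mat{E}_1\mat{Q}_r}_2 = \order(\mu_{\rm tail}/\zgap)$ (\cref{prop:tail-error}), followed by a geometric summation in $k$ using $\mu_{\rm tail}/\mu_r \le 1/8$. The only cosmetic difference is that you organize the inner swap with a linear telescoping identity producing $k$ terms each containing exactly one factor $\mat{E}_1\mat{Q}_r\mat{Q}_r^\dagger$, whereas the paper expands $(\mat{E}_1(\mat{\Pi}_{\mat{Q}_r^\perp}+\mat{\Pi}_{\mat{Q}_r}))^k$ binomially; both deliver the same $\order(\alpha\sqrt{\log n}/(\mu_r\zgap n^{1.5}))$ remainder.
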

\begin{proof}
Note that
\begin{equation}\label{eq:proj_Q_perp_approx}
\begin{aligned}
    \mat{\Pi}_{\mat{Q}^\perp_r} = &~ \mat{I}_n - \mat{Q}_r\mat{Q}_r^\dagger \\
    = &~ \mat{I}_n-\frac{1}{n}\mat{V}_n(\zdom)\mat{P}_v\mat{P}_v^\dagger\mat{V}_n(\zdom)^\dagger\\
    = &~ \mat{I}_n-\frac{1}{n}\mat{V}_n(\zdom)\mat{V}_n(\zdom)^\dagger  + \frac{1}{n}\mat{V}_n(\zdom)\left(\mat{P}_v\mat{P}_v^\dagger - I_r\right)\mat{V}_n(\zdom)^\dagger\\
    = &~ \mat{I}_n-\frac{1}{n}\mat{V}_n(\zdom)\mat{V}_n(\zdom)^\dagger  + \left\|\mat{P}_v\mat{P}_v^\dagger - I_r\right\|_2 \cdot \left\|\frac{1}{n}\mat{V}_n(\zdom)\mat{V}_n(\zdom)^\dagger\right\|_2\\
    = &~ \mat{I}_n-\frac{1}{n}\mat{V}_n(\zdom)\mat{V}_n(\zdom)^\dagger  + \mathcal{O}\left(\frac{1}{\Delta_{\vec{z}} n}\right) \cdot \left\|\frac{1}{n}\mat{V}_n(\zdom)\mat{V}_n(\zdom)^\dagger\right\|_2\\
    = &~ \mat{I}_n-\frac{1}{n}\mat{V}_n(\zdom)\mat{V}_n(\zdom)^\dagger  + \mathcal{O}\left(\frac{1}{\Delta_{\vec{z}} n}\right)\,,
\end{aligned}
\end{equation}
where the first step follows from $\mat{Q}_r\mat{Q}_r^\dagger = \mat{\Pi}_{\mat{Q}_r}$, the second step follows from \eqref{eqn:reshape_Q}, the third and the fourth steps are straightforward, the fifth step follows from \cref{lem:vander-eigen}, and the last step follows from \cref{cor:V_dagger_V_dom} that $\left\|\frac{1}{n}\mat{V}_n(\zdom)\mat{V}_n(\zdom)^\dagger\right\|_2={\cal O}\left(1+\frac{1}{\Delta_{\vec{z}} n}\right)= {\cal O}(1)$. 
Then, by triangle inequality, we have
\begin{align*}
    &\left\|\mat{\Pi}_{\mat{Q}^\perp_r}\left(\mat{E}_1\mat{\Pi}_{\mat{Q}^\perp_r}\right)^{k}\mat{E}_2\mat{Q}_r\left(\mat{\Sigma}_r^{-1}\right)^{k+1} - \left(\mat{I}_n-\frac{1}{n}\mat{V}_n(\zdom)\mat{V}_n(\zdom)^\dagger\right)\mat{E}_1^k\mat{E}_2\mat{Q}_r\left(\mat{\Sigma}_r^{-1}\right)^{k+1}\right\|_2\\
    = &~ \left\|\mat{\Pi}_{\mat{Q}^\perp_r}\left(\mat{E}_1\mat{\Pi}_{\mat{Q}^\perp_r}\right)^{k}\mat{E}_2\mat{Q}_r\left(\mat{\Sigma}_r^{-1}\right)^{k+1} - \mat{\Pi}_{\mat{Q}^\perp_r}\mat{E}_1^k\mat{E}_2\mat{Q}_r\left(\mat{\Sigma}_r^{-1}\right)^{k+1}\right\|_2\\
    &~+ \left\|\left(\mat{\Pi}_{\mat{Q}^\perp_r}-\Big(\mat{I}_n-\frac{1}{n}\mat{V}_n(\zdom)\mat{V}_n(\zdom)^\dagger\Big)\right)\cdot \mat{E}_1^k\mat{E}_2\mat{Q}_r\left(\mat{\Sigma}_r^{-1}\right)^{k+1}\right\|_2\,.
\end{align*}

For the first term, we have
\begin{align*}
&\left\|\mat{\Pi}_{\mat{Q}^\perp_r}\left(\mat{E}_1\mat{\Pi}_{\mat{Q}^\perp_r}\right)^{k}\mat{E}_2\mat{Q}_r\left(\mat{\Sigma}_r^{-1}\right)^{k+1}-\mat{\Pi}_{\mat{Q}^\perp_r}\mat{E}_1^k\mat{E}_2\mat{Q}_r\left(\mat{\Sigma}_r^{-1}\right)^{k+1}\right\|_2\\
\leq&~\left\|\left(\left(\mat{E}_1\mat{\Pi}_{\mat{Q}^\perp_r}\right)^{k}-\mat{E}_1^k\right)\cdot \mat{E}_2\mat{Q}_r\left(\mat{\Sigma}_r^{-1}\right)^{k+1}\right\|_2\\
\leq &~ \left\|\left(\mat{E}_1\mat{\Pi}_{\mat{Q}^\perp_r}\right)^{k}-\mat{E}_1^k\right\|_2\cdot {\cal O}\left(\frac{\alpha\sqrt{n\log n}}{(n\mu_r)^{k+1}}\right)\\
=&~\left\|\left(\mat{E}_1\mat{\Pi}_{\mat{Q}^\perp_r}\right)^{k}-\left(\mat{E}_1\left(\mat{\Pi}_{\mat{Q}^\perp_r}+\mat{\Pi}_{\mat{Q}_r}\right)\right)^k\right\|_2\cdot {\cal O}\left(\frac{\alpha\sqrt{n\log n}}{(n\mu_r)^{k+1}}\right)\\
\leq &~ \sum_{l=1}^k \binom{k}{l}\cdot \left\|\mat{E}_1\mat{\Pi}_{\mat{Q}_r}\right\|_2^l\cdot \left\|\mat{E}_1\mat{\Pi}_{\mat{Q}^\perp_r}\right\|_2^{k-l}\cdot {\cal O}\left(\frac{\alpha\sqrt{n\log n}}{(n\mu_r)^{k+1}}\right)\\
\leq &~ \sum_{l=1}^k \binom{k}{l}\cdot {\cal O}\left(\frac{\mu_{\rm tail}}{\Delta_{\vec{z}}}\right)^l\cdot {\cal O}(n\mu_{\rm tail})^{k-l}\cdot {\cal O}\left(\frac{\alpha\sqrt{n\log n}}{(n\mu_r)^{k+1}}\right)\\
= &~ \sum_{l=0}^{k-1}\binom{k-1}{l}\cdot {\cal O}\left(\frac{\mu_{\rm tail}}{\Delta_{\vec{z}}}\right)^l\cdot {\cal O}(n\mu_{\rm tail})^{k-1-l}\cdot k
\cdot \frac{\mu_{\rm tail}\alpha\sqrt{n\log n}}{\Delta_{\vec{z}}(n\mu_r)^{k+1}}\\
= &~ {\cal O}\left(\frac{\mu_{\rm tail}}{\Delta_{\vec{z}}} + n\mu_{\rm tail}\right)^{k-1}\cdot k
\cdot \frac{\mu_{\rm tail}\alpha\sqrt{n\log n}}{\Delta_{\vec{z}}(n\mu_r)^{k+1}}\\
= &~ {\cal O}\left(\frac{\mu_{\rm tail}}{\mu_r\Delta_{\vec{z}}n} + \frac{\mu_{\rm tail}}{\mu_r}\right)^{k-1}k \cdot \frac{\mu_{\rm tail}\alpha\sqrt{\log n}}{\mu_r^2\Delta_{\vec{z}}n^{1.5}}\,,
\end{align*}
where the first step follows from $\|\mat{\Pi}_{\mat{Q}_r^\perp}\|_2=1$, the second step follows from $\|\mat{E}_2\|_2={\cal O}(\alpha\sqrt{n\log n})$ and $\|\mat{\Sigma}_r^{-1}\|_2=(n\mu_r)^{-1}$ by \eqref{eqn:eigenvalue_prop} in \cref{cor:moitra_T}, the third step follows from $\mat{\Pi}_{\mat{Q}_r}+\mat{\Pi}_{\mat{Q}^\perp_r}=\mat{I}_r$, the fourth step follows from the binomial expansion and triangle inequality, the fifth step follows from $\|\mat{E}_1\mat{\Pi}_{\mat{Q}_r^\perp}\|_2\leq \|\mat{E}_1\|_2={\cal O}(n\mu_{\rm tail})$ by \eqref{eqn:bound_tail}
  and $\|\mat{E}_1\mat{\Pi}_{\mat{Q}_r}\|_2\leq \|\mat{E}_1\mat{Q}_r\|_2={\cal O}\left(\frac{\mu_{\rm tail}}{\Delta_{\vec{z}}}\right)$ by \cref{prop:tail-error},
  the sixth step follows from changing the summation index, and the last two steps are direct calculations.

For the second term, we have
\begin{align*}
&\left\|\left(\mat{\Pi}_{\mat{Q}^\perp_r}-\Big(\mat{I}_n-\frac{1}{n}\mat{V}_n(\zdom)\mat{V}_n(\zdom)^\dagger\Big)\right)\cdot \mat{E}_1^k\mat{E}_2\mat{Q}_r\left(\mat{\Sigma}_r^{-1}\right)^{k+1}\right\|_2\\
= &~
 {\cal O}\left(\frac{1}{\Delta_{\vec{z}}n}\right)\cdot \left\| \mat{E}_1^k\mat{E}_2\mat{Q}_r\left(\mat{\Sigma}_r^{-1}\right)^{k+1}\right\|_2\\
= &~  {\cal O}\left(\frac{1}{\Delta_{\vec{z}}n}\right) \cdot {\cal O}(n\mu_{\rm tail})^k \cdot {\cal O}(\alpha\sqrt{n\log n})\cdot (n\mu_r)^{-k-1}\\
= &~ {\cal O}\left(\frac{\mu_{\rm tail}}{\mu_r}\right)^k\cdot \frac{\alpha\sqrt{\log n}}{\mu_r\Delta_{\vec{z}}n^{1.5}}\,,
\end{align*}
where the first step follows from \eqref{eq:proj_Q_perp_approx}, the second step follows from $\left\|\mat{E}_{1}\right\|_2=\mathcal{O}\left(n\mu_{\rm tail}\right)$, $\left\|\mat{E}_{2}\right\|_2=\mathcal{O}\left(\alpha\sqrt{n\log(n)}\right)$, and $\|\mat{\Sigma}_r^{-1}\|_2=1/(n\mu_r)$ by  \eqref{eqn:eigenvalue_prop} in \cref{cor:moitra_T}. %

Combining them together, we have
\begin{align*}
    &\left\|\mat{\Pi}_{\mat{Q}^\perp_r}\left(\mat{E}_1\mat{\Pi}_{\mat{Q}^\perp_r}\right)^{k}\mat{E}_2\mat{Q}_r\left(\mat{\Sigma}_r^{-1}\right)^{k+1} - \left(\mat{I}_n-\frac{1}{n}\mat{V}_n(\zdom)\mat{V}_n(\zdom)^\dagger\right)\mat{E}_1^k\mat{E}_2\mat{Q}_r\left(\mat{\Sigma}_r^{-1}\right)^{k+1}\right\|_2\\
    \leq &~ {\cal O}\left(\frac{\mu_{\rm tail}}{\mu_r\Delta_{\vec{z}}n} + \frac{\mu_{\rm tail}}{\mu_r}\right)^{k-1}k \cdot \frac{\mu_{\rm tail}\alpha\sqrt{\log n}}{\mu_r^2\Delta_{\vec{z}}n^{1.5}} + {\cal O}\left(\frac{\mu_{\rm tail}}{\mu_r}\right)^k\cdot \frac{\alpha\sqrt{\log n}}{\mu_r\Delta_{\vec{z}}n^{1.5}}\\
    = &~ {\cal O}\left(\frac{\mu_{\rm tail}}{\mu_r}\right)^{k}k \cdot \frac{\alpha\sqrt{\log n}}{\mu_r\Delta_{\vec{z}}n^{1.5}}\,,
\end{align*}
where the second step follows from $n\geq 1/(\mu_r \Delta_{\vec{z}})$.
Thus, summing over $k$ gives that
\begin{align*}
&\left\|\sum^\infty_{k=0}\mat{\Pi}_{\mat{Q}^\perp_r}\left(\mat{E}_1\mat{\Pi}_{\mat{Q}^\perp_r}\right)^{k}\mat{E}_2\mat{Q}_r\left(\mat{\Sigma}_r^{-1}\right)^{k+1}-\sum^\infty_{k=0}\left(\mat{I}_n-\frac{1}{n}\mat{V}_n(\zdom)\mat{V}_n(\zdom)^\dagger\right)\mat{E}_1^k\mat{E}_2\mat{Q}_r\left(\mat{\Sigma}_r^{-1}\right)^{k+1}\right\|_2\notag\\
\leq &~\sum_{k=1}^\infty {\cal O}\left(\frac{\mu_{\rm tail}}{\mu_r}\right)^{k}k \cdot \frac{\alpha\sqrt{\log n}}{\mu_r\Delta_{\vec{z}}n^{1.5}}\notag\\
= &~ {\cal O}\left(\frac{\alpha\sqrt{\log n}}{\mu_r\Delta_{\vec{z}}n^{1.5}}\right)\,.
\end{align*}
where the last step follows from ${\cal O}(\mu_{\rm tail}/\mu_r)<1$.

The claim is then proved.
\end{proof}

\begin{claim}[Properties of the ``good'' $\mat{P}$]\label{clm:good_P_property}
The invertible matrix $\mat{P}$ defined by \eqref{eq:def_good_P} satisfies the following properties:
\begin{itemize}
    \item {\it Property I: } 
    \begin{equation}%
\left\|\mat{P}^{-1}-\mat{I}_r\right\|_2={\cal O}\left(\frac{\alpha\sqrt{\log n}}{\mu_r\sqrt{\Delta_z n}}\right)\,,\quad \left\|\mat{P}-\mat{I}_r\right\|_2={\cal O}\left(\frac{\alpha\sqrt{\log n}}{\mu_r\sqrt{\Delta_z n}}\right)\,.
\end{equation}
    \item {\it Property II:}
    \begin{equation}%
\begin{aligned}
\left\|\mat{\hat{Q}}_r\mat{U}_r\mat{P}^{-1}-\mat{Q}_r\right\|_2\leq \mathcal{O}\left(\frac{\alpha\sqrt{\log n}}{\mu_r\sqrt{\zgap n}}\right)\,.
\end{aligned}
\end{equation}%
\end{itemize}
\end{claim}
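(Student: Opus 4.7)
The plan is to handle the two properties separately. Property I follows almost immediately from the explicit formula \eqref{eq:def_good_P} for $\mat{P}$, while Property II is essentially a restatement of the fact that $\mat{P}$ was chosen exactly to absorb those components of $\hat{\mat{Q}}_r\mat{U}_r - \mat{Q}_r$ that live in the column space of $\mat{Q}_r$, so what remains after multiplying by $\mat{P}^{-1}$ is the smaller, $\mat{\Pi}_{\mat{Q}_r^\perp}$-orthogonal part of the second-order expansion from \cref{lem:second-order-formal}.

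For Property I, I would apply the triangle inequality to \eqref{eq:def_good_P} together with $\norm{\mat{Q}_r^\adj}_2 = \norm{\mat{\Pi}_{\mat{Q}_r}}_2 = 1$ and $\norm{\tilde{\mat{Q}}_1}_2,\norm{\tilde{\mat{Q}}_2}_2 = \order(1)$, giving
\[
\norm{\mat{P} - \mat{I}_r}_2 \le \order\!\left(\frac{\alpha\sqrt{\log n}}{\mu_r\sqrt{\zgap n}}\right) + \order\!\left(\frac{\alpha^2\log n}{\mu_r^2 \zgap n}\right) = \order\!\left(\frac{\alpha\sqrt{\log n}}{\mu_r\sqrt{\zgap n}}\right),
\]
where the second term is absorbed into the first using the assumption $n = \Omega(\alpha^2/(\mu_r^2\zgap))$ from \eqref{eq:proof_main_n_choice_small}. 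Under the same assumption, $\norm{\mat{P}-\mat{I}_r}_2 < 1/2$, so $\mat{P}$ is invertible and the Neumann series $\mat{P}^{-1} = \sum_{k\ge 0}(\mat{I}_r - \mat{P})^k$ converges, yielding $\norm{\mat{P}^{-1} - \mat{I}_r}_2 \le \norm{\mat{P}-\mat{I}_r}_2/(1 - \norm{\mat{P}-\mat{I}_r}_2) = \order(\norm{\mat{P}-\mat{I}_r}_2)$. In particular, $\norm{\smash{\mat{P}^{-1}}}_2 = \order(1)$.

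For Property II, I would use the identity $\hat{\mat{Q}}_r\mat{U}_r\mat{P}^{-1} - \mat{Q}_r = (\hat{\mat{Q}}_r\mat{U}_r - \mat{Q}_r\mat{P})\mat{P}^{-1}$ and then show that the construction of $\mat{P}$ in the proof of \cref{lem:good_P_construction} (combining \eqref{eqn:eigenvector_expansion} with the simplification \eqref{eqn:first_order_simplification}) gives
\[
\hat{\mat{Q}}_r\mat{U}_r - \mat{Q}_r\mat{P} = \sum_{k=0}^\infty \Bigl(\mat{I}_n - \tfrac{1}{n}\Vand_n(\zdom)\Vand_n(\zdom)^\adj\Bigr)\mat{E}_1^k \mat{E}_2 \mat{Q}_r (\mat{\Sigma}_r^{-1})^{k+1} + \order\!\left(\frac{\alpha^2\log n}{\mu_r^2\zgap n}\right)\mat{\Pi}_{\mat{Q}_r^\perp}\tilde{\mat{Q}}_2 + \order\!\left(\frac{\alpha\sqrt{\log n}}{\mu_r\zgap n^{1.5}}\right).
\]
Bounding the geometric series term-by-term using $\norm{\mat{E}_1}_2 \le n\mu_{\rm tail} \le n\mu_r/8$ (\cref{prop:error-matrix}), $\norm{\mat{E}_2}_2 = \order(\alpha\sqrt{n\log n})$ (\cref{lem:Mecks}), and $\norm{\mat{\Sigma}_r^{-1}}_2 = \order(1/(n\mu_r))$ (\cref{cor:moitra_T}) gives a bound of $\order(\alpha\sqrt{\log n}/(\mu_r\sqrt{n}))$, which dominates the remaining two terms under the assumption on $n$. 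Combining with $\norm{\smash{\mat{P}^{-1}}}_2 = \order(1)$ and using $\zgap \le 2$ to weaken $1/\sqrt{n}$ to $1/\sqrt{\zgap n}$ yields the stated estimate.

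No substantial obstacle is expected: every ingredient has already been assembled in \cref{lem:second-order-formal} and \cref{clm:first_order_simplify}, and the only care needed is to verify that the lower-order contributions $\alpha^2\log n/(\mu_r^2\zgap n)$ and $\alpha\sqrt{\log n}/(\mu_r\zgap n^{1.5})$ are subdominant to $\alpha\sqrt{\log n}/(\mu_r\sqrt{\zgap n})$, which follows directly from the lower bound on $n$ in \eqref{eq:proof_main_n_choice_small}.
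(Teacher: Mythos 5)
Your proposal is correct and follows essentially the same route as the paper: Property I by the triangle inequality applied to the explicit formula \eqref{eq:def_good_P} plus a Neumann-series bound for $\mat{P}^{-1}$, and Property II by bounding the residual $\hat{\mat{Q}}_r\mat{U}_r\mat{P}^{-1}-\mat{Q}_r$ through the geometric series with $\norm{\mat{E}_1}_2\le n\mu_{\rm tail}$, $\norm{\mat{E}_2}_2=\order(\alpha\sqrt{n\log n})$, and $\norm{\mat{\Sigma}_r^{-1}}_2=\order(1/(n\mu_r))$, then absorbing the subdominant terms using \eqref{eq:proof_main_n_choice_small}. The only cosmetic difference is that you factor out $\mat{P}^{-1}$ explicitly via $(\hat{\mat{Q}}_r\mat{U}_r-\mat{Q}_r\mat{P})\mat{P}^{-1}$, whereas the paper keeps $\mat{P}^{-1}$ inside the definitions of $\widetilde{\mat{E}}^Q_1$ and $\widetilde{\mat{E}}^Q_2$; the estimates are identical.
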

\begin{proof}
We prove each of the properties below.

\textbf{Proof of Property I:}

By the definition of $\mat{P}$ (\eqref{eq:def_good_P}), we have
\begin{align*}
    \left\|\mat{P}-\mat{I}_r\right\|_2 = &~ {\cal O}\left(\frac{\alpha\sqrt{\log n}}{\mu_r\sqrt{\Delta_z n}}\right)\cdot \left\|\mat{Q}^\dagger_r\mat{\Pi}_{\mat{Q}_r}\tilde{\mat{Q}}_1\right\|_2 + {\cal O}\left(\frac{\alpha^2\log n}{\mu_r^2\Delta_z n}\right)\cdot \left\|\mat{Q}^\dagger_r\mat{\Pi}_{\mat{Q}_r}\tilde{\mat{Q}}_{2}\right\|_2={\cal O}\left(\frac{\alpha\sqrt{\log n}}{\mu_r\sqrt{\Delta_z n}}\right)\,,
\end{align*}
where the second step follows from $\|\mat{Q}_r\|_2=1$, $\|\tilde{\mat{Q}}_1\|_2={\cal O}(1)$, and $\|\tilde{\mat{Q}}_2\|_2={\cal O}(1)$ by their definitions. It also implies that
\begin{align*}
    \left\|\mat{P}^{-1}-\mat{I}_r\right\|_2={\cal O}\left(\frac{\alpha\sqrt{\log n}}{\mu_r\sqrt{\Delta_z n}}\right)\,.
\end{align*}

\textbf{Proof of Property II:}

Consider
\[
\hat{\mat{Q}}_r\mat{U}_r\mat{P}^{-1}=\mat{Q}_r+\widetilde{\mat{E}}^Q_1+\widetilde{\mat{E}}^Q_2+\mathcal{O}\left(\frac{\alpha\sqrt{\log n}}{\mu_r\Delta_{\vec{z}}n^{1.5}}\right)\,,
\]
where
\begin{subequations}%
\begin{equation}%
\widetilde{\mat{E}}^Q_1:=\sum^\infty_{k=0}\left(\mat{I}_n-\frac{1}{n}\mat{V}_n(\zdom)\mat{V}_n(\zdom)^\dagger\right)\mat{E}_1^k\mat{E}_2\mat{Q}_r\left(\mat{\Sigma}_r^{-1}\right)^{k+1}\mat{P}^{-1}\,.
\end{equation}
and
\begin{equation}
\widetilde{\mat{E}}^Q_2:={\cal O}\left(\frac{\alpha^2\log n}{\mu^{2}_r\Delta_{\vec{z}} n}\right)\mat{\Pi}_{\mat{Q}_r^\perp}\hat{\mat{Q}}_{2}\mat{P}^{-1}\,.
\end{equation}
\end{subequations}

We first bound the norm of $\widetilde{\mat{E}}^Q_1$:%
\begin{equation}\label{eq:bound_E1_Q_wt}
\begin{aligned}
    \big\|\widetilde{\mat{E}}^Q_1\big\|_2 \leq &~ \left\|\mat{I}_n-\frac{1}{n}\mat{V}_n(\zdom)\mat{V}_n(\zdom)^\dagger\right\|_2\cdot \sum^\infty_{k=0}\left\|\mat{E}_1^k\mat{E}_2\mat{Q}_r\left(\mat{\Sigma}_r^{-1}\right)^{k+1}\right\|_2\cdot \|\mat{P}^{-1}\|_2\\
    \leq &~ \mathcal{O}(1) \cdot \sum^\infty_{k=0}\left\|\mat{E}_1^k\mat{E}_2\mat{Q}_r\left(\mat{\Sigma}_r^{-1}\right)^{k+1}\right\|_2\\
    \leq &~ \mathcal{O}(1) \cdot \sum_{k=0}^\infty {\cal O}\left(\frac{\mu_{\rm tail}}{\mu_r}\right)^k \cdot \frac{\alpha\sqrt{\log n}}{\mu_r\sqrt{n}}\\
    = &~ {\cal O}\left(\frac{\alpha\sqrt{\log n}}{\mu_r\sqrt{n}}\right)\,,
\end{aligned}
\end{equation}
where the first step follows from triangle inequality, the second step follows from \cref{lem:vander-eigen} and {\it Property I}, the third step follows from $\left\|\mat{E}_{1}\right\|_2=\mathcal{O}\left(n\mu_{\rm tail}\right)$, $\left\|\mat{E}_{2}\right\|_2=\mathcal{O}\left(\alpha\sqrt{n\log(n)}\right)$, and $\|\mat{\Sigma}_r^{-1}\|_2=1/(n\mu_r)$ by  \eqref{eqn:eigenvalue_prop} in \cref{cor:moitra_T}.

Then, we bound the norm of $\widetilde{\mat{E}}^Q_2$:
\begin{align*}
    \big\|\widetilde{\mat{E}}^Q_2\big\|_2\leq {\cal O}\left(\frac{\alpha^2\log n}{\mu^{2}_r\Delta_{\vec{z}} n}\right)\cdot \|\mat{P}^{-1}\|_2 = {\cal O}\left(\frac{\alpha^2\log n}{\mu^{2}_r\Delta_{\vec{z}} n}\right)\,,
\end{align*}
where the first step follows from $\left\|\mat{\Pi}_{\mat{Q}_r^\perp}\hat{\mat{Q}}_{2}\right\|_2=\mathcal{O}(1)$ since $\|\hat{\mat{Q}}_{2}\|_2=\mathcal{O}(1)$, 
and the second step follows from $\|\mat{P}^{-1}\|_2={\cal O}(1)$ by {\it Property I}.

Hence, we get that
\begin{equation}%
\begin{aligned}
\left\|\mat{\hat{Q}}_r\mat{U}_r\mat{P}^{-1}-\mat{Q}_r\right\|_2\leq &~ {\cal O}\left(\frac{\alpha\sqrt{\log n}}{\mu_r\sqrt{n}} + \frac{\alpha^2\log n}{\mu^{2}_r\Delta_{\vec{z}} n}+\frac{\alpha\sqrt{\log n}}{\mu_r\Delta_{\vec{z}}n^{1.5}}\right)\\
=&~ \mathcal{O}\left(\frac{\alpha\sqrt{\log n}}{\mu_r\sqrt{\zgap n}}\right)\,.
\end{aligned}
\end{equation}
\end{proof}

\subsection{Proofs for Step 2: Taylor expansion with respect to the error terms}\label{sec:align_step2_defer}
\stepIIfirst*
\begin{proof}
According to \eqref{eqn:second_Q_perturb} and the assumption that $n$ is large enough, we have $\|\mat{E}^Q\mat{Q}\|_2<1$ and $\|\mat{E}^Q\mat{E}^Q\|_2<1$. Then, we can use the formula of pseudoinverse \cref{eq:pseudo_inv_col} and Neumann series expansion (\cref{def:neumann}) to derive
\begin{equation}\label{eqn:oneside}
\begin{aligned}
&\mat{P}\mat{\hat{Q}}^{+}_\downarrow\mat{\hat{Q}}_\uparrow\mat{P}^{-1}\\
= &~ \left(\hat{\mat{Q}}_{\downarrow}\mat{P}^{-1}\right)^{+} \mat{\hat{Q}}_\uparrow\mat{P}^{-1}\\
= &~ \left(\mat{Q}_\downarrow + \mat{E}^Q_\downarrow\right)^{+}\left(\mat{Q}_\uparrow+\mat{E}^Q_\uparrow\right)\\
=&~\left(\left(\mat{Q}_\downarrow+\mat{E}^Q_\downarrow\right)^\dagger\left(\mat{Q}_\downarrow+\mat{E}^Q_\downarrow\right)\right)^{-1}\left(\mat{Q}_\downarrow+\mat{E}^Q_\downarrow\right)^\dagger\left(\mat{Q}_\uparrow+\mat{E}^Q_\uparrow\right)\\
=&~\left(\mat{Q}_\downarrow^\dagger\mat{Q}_\downarrow+(\mat{E}^Q_\downarrow)^\dagger\mat{Q}_\downarrow+\mat{Q}_\downarrow^\dagger\mat{E}^Q_\downarrow+(\mat{E}^Q_\downarrow)^\dagger\mat{E}^Q_\downarrow\right)^{-1}\left(\mat{Q}_\downarrow+\mat{E}^Q_\downarrow\right)^\dagger\left(\mat{Q}_\uparrow+\mat{E}^Q_\uparrow\right)\\
= &~ \left(\mat{I}_r-\left(\mat{Q}_\downarrow^\dagger\mat{Q}_\downarrow\right)^{-1}\left((\mat{E}^Q_\downarrow)^\dagger\mat{Q}_\downarrow+\mat{Q}_\downarrow^\dagger\mat{E}^Q_\downarrow+(\mat{E}^Q_\downarrow)^\dagger\mat{E}^Q_\downarrow\right)\right)^{-1}\left(\mat{Q}_\downarrow^\dagger\mat{Q}_\downarrow\right)^{-1}\cdot \left(\mat{Q}_\downarrow+\mat{E}^Q_\downarrow\right)^\dagger\left(\mat{Q}_\uparrow+\mat{E}^Q_\uparrow\right)\\
=&~\sum^\infty_{k=0}\left(-\left(\mat{Q}_\downarrow^\dagger\mat{Q}_\downarrow\right)^{-1}\left((\mat{E}^Q_\downarrow)^\dagger\mat{Q}_\downarrow+\mat{Q}_\downarrow^\dagger\mat{E}^Q_\downarrow+(\mat{E}^Q_\downarrow)^\dagger\mat{E}^Q_\downarrow\right)\right)^k\left(\mat{Q}_\downarrow^\dagger\mat{Q}_\downarrow\right)^{-1}\left(\mat{Q}_\downarrow+\mat{E}^Q_\downarrow\right)^\dagger\left(\mat{Q}_\uparrow+\mat{E}^Q_\uparrow\right)\,.
\end{aligned}
\end{equation}
where the first step follows from \cref{fac:rev_order_pinv} with $\hat{\mat{Q}}_\downarrow$ being full column rank and $\mat{P}$ being invertible, the second step follows from \eqref{eq:E_Q_up_down}, the third step follows from the definition of the pseudoinverse, the fourth step is by direct calculation, the fifth step follows from pulling out the term $\mat{Q}^\dagger_\downarrow \mat{Q}_\downarrow$, and the last step follows from Neumann series expansion.

By \cref{clm:bound_Neumann_series}, the Neumann series in \eqref{eqn:oneside} can be truncated up to second order:
\begin{align*}
    &\sum^\infty_{k=0}\left(-\left(\mat{Q}_\downarrow^\dagger\mat{Q}_\downarrow\right)^{-1}\left((\mat{E}^Q_\downarrow)^\dagger\mat{Q}_\downarrow+\mat{Q}_\downarrow^\dagger\mat{E}^Q_\downarrow+(\mat{E}^Q_\downarrow)^\dagger\mat{E}^Q_\downarrow\right)\right)^k\left(\mat{Q}_\downarrow^\dagger\mat{Q}_\downarrow\right)^{-1}\\
    =&~\underbrace{\left(\mat{Q}_\downarrow^\dagger\mat{Q}_\downarrow\right)^{-1}}_{\text{zeroth order term}}-\underbrace{\left((\mat{E}^Q_\downarrow)^\dagger\mat{Q}_\downarrow+\mat{Q}_\downarrow^\dagger\mat{E}^Q_\downarrow+(\mat{E}^Q_\downarrow)^\dagger\mat{E}^Q_\downarrow\right)}_{\text{first order term}}+\underbrace{\left((\mat{E}^Q_\downarrow)^\dagger\mat{Q}_\downarrow+\mat{Q}_\downarrow^\dagger\mat{E}^Q_\downarrow\right)^2}_{\text{second order term}} \\
    ~&+\mathcal{O}\left(\frac{\alpha^3\log^3 n}{\mu^{3}_r(\Delta_{\vec{z}} n)^{1.5}}\right)\,.    
\end{align*}

Then, we can plug it into \eqref{eqn:oneside} to simplify $\mat{P}\mat{\hat{Q}}^{+}_\downarrow\mat{\hat{Q}}_\uparrow\mat{P}^{-1} - \mat{Q}^{+}_\downarrow\mat{Q}_\uparrow$ and drop the terms containing more than two $\mat{E}^Q_\uparrow$ or $\mat{E}^Q_\downarrow$. We first consider the zeroth order term:
\begin{equation}\label{eq:P_Q_Q_P_0th}
\begin{aligned}
    &\left(\mat{Q}_\downarrow^\dagger\mat{Q}_\downarrow\right)^{-1}\left(\mat{Q}_\downarrow+\mat{E}^Q_\downarrow\right)^\dagger\left(\mat{Q}_\uparrow+\mat{E}^Q_\uparrow\right)-\mat{Q}^{+}_\downarrow\mat{Q}_\uparrow\\
    =&~ \left(\mat{Q}_\downarrow^\dagger\mat{Q}_\downarrow\right)^{-1}\left((\mat{E}^Q_\downarrow)^\dagger\mat{Q}_\uparrow+\mat{Q}_\downarrow^\dagger\mat{E}^Q_\uparrow+(\mat{E}^Q_\downarrow)^\dagger\mat{E}^Q_\uparrow\right)\\
    = &~ \left((\mat{E}^Q_\downarrow)^\dagger\mat{Q}_\uparrow+\mat{Q}_\downarrow^\dagger\mat{E}^Q_\uparrow+(\mat{E}^Q_\downarrow)^\dagger\mat{E}^Q_\uparrow\right)+\left(\left(\mat{Q}_\downarrow^\dagger\mat{Q}_\downarrow\right)^{-1} - \mat{I}_r\right)\left((\mat{E}^Q_\downarrow)^\dagger\mat{Q}_\uparrow+\mat{Q}_\downarrow^\dagger\mat{E}^Q_\uparrow+(\mat{E}^Q_\downarrow)^\dagger\mat{E}^Q_\uparrow\right)\\
    \leq &~ \left((\mat{E}^Q_\downarrow)^\dagger\mat{Q}_\uparrow+\mat{Q}_\downarrow^\dagger\mat{E}^Q_\uparrow+(\mat{E}^Q_\downarrow)^\dagger\mat{E}^Q_\uparrow\right)+ \mathcal{O}\left(\frac{1}{\Delta_{\vec{z}} n}\right)\left\|(\mat{E}^Q_\downarrow)^\dagger\mat{Q}_\uparrow+\mat{Q}_\downarrow^\dagger\mat{E}^Q_\uparrow+(\mat{E}^Q_\downarrow)^\dagger\mat{E}^Q_\uparrow\right\|_2\\
    \leq &~ \left((\mat{E}^Q_\downarrow)^\dagger\mat{Q}_\uparrow+\mat{Q}_\downarrow^\dagger\mat{E}^Q_\uparrow+(\mat{E}^Q_\downarrow)^\dagger\mat{E}^Q_\uparrow\right)+ \mathcal{O}\left(\frac{\alpha\sqrt{\log n}}{\mu_r(\Delta_{\vec{z}} n)^{1.5}}\right)\,,
\end{aligned}
\end{equation}
where the first step follows from $(\mat{Q}_\downarrow^\dagger \mat{Q}_\downarrow)^{-1}\mat{Q}_\downarrow^\dagger =\mat{Q}_\downarrow^{+}$ by the definition of pseudoinverse, the second step is straightforward, the third step follows from \eqref{eqn:inverse_close_to_identity}, and the last step follows from \eqref{eqn:second_Q_perturb}. For the first order term, we have
\begin{equation}\label{eq:P_Q_Q_P_1st}
\begin{aligned}
    &\left((\mat{E}^Q_\downarrow)^\dagger\mat{Q}_\downarrow+\mat{Q}_\downarrow^\dagger\mat{E}^Q_\downarrow+(\mat{E}^Q_\downarrow)^\dagger\mat{E}^Q_\downarrow\right)\left(\mat{Q}_\downarrow+\mat{E}^Q_\downarrow\right)^\dagger\left(\mat{Q}_\uparrow+\mat{E}^Q_\uparrow\right)\\
    \leq &~ \left((\mat{E}^Q_\downarrow)^\dagger\mat{Q}_\downarrow+\mat{Q}_\downarrow^\dagger\mat{E}^Q_\downarrow+(\mat{E}^Q_\downarrow)^\dagger\mat{E}^Q_\downarrow\right)(\mat{Q}_\downarrow)^\dagger \mat{Q}_\uparrow+ \left((\mat{E}^Q_\downarrow)^\dagger\mat{Q}_\downarrow+\mat{Q}_\downarrow^\dagger\mat{E}^Q_\downarrow\right)\left((\mat{E}^Q_\downarrow)^\dagger \mat{Q}_\uparrow + (\mat{Q}_\downarrow)^\dagger \mat{E}^Q_\uparrow\right)\\
    ~&+\left\|(\mat{E}^Q_\downarrow)^\dagger\mat{E}^Q_\downarrow \left((\mat{E}^Q_\downarrow)^\dagger \mat{Q}_\uparrow + (\mat{Q}_\downarrow)^\dagger \mat{E}^Q_\uparrow\right)+\left((\mat{E}^Q_\downarrow)^\dagger\mat{Q}_\downarrow+\mat{Q}_\downarrow^\dagger\mat{E}^Q_\downarrow+(\mat{E}^Q_\downarrow)^\dagger\mat{E}^Q_\downarrow\right)(\mat{E}^Q_\downarrow)^\dagger\mat{E}^Q_\uparrow\right\|_2\\
    \leq &~ \left((\mat{E}^Q_\downarrow)^\dagger\mat{Q}_\downarrow+\mat{Q}_\downarrow^\dagger\mat{E}^Q_\downarrow+(\mat{E}^Q_\downarrow)^\dagger\mat{E}^Q_\downarrow\right)(\mat{Q}_\downarrow)^\dagger \mat{Q}_\uparrow+ \left((\mat{E}^Q_\downarrow)^\dagger\mat{Q}_\downarrow+\mat{Q}_\downarrow^\dagger\mat{E}^Q_\downarrow\right)\left((\mat{E}^Q_\downarrow)^\dagger \mat{Q}_\uparrow + (\mat{Q}_\downarrow)^\dagger \mat{E}^Q_\uparrow\right)\\
    ~&+ \mathcal{O}\left(\frac{\alpha^3\log^3 n}{\mu^{3}_r(\Delta_{\vec{z}} n)^{1.5}}\right)\,,
\end{aligned}
\end{equation}
where the second step uses \eqref{eqn:second_Q_perturb} for $\|\mat{E}^Q_\downarrow\|_2$ and \eqref{eqn:inverse_close_to_identity} for $\|\mat{Q}_\downarrow\|_2$. Similarly, for the second order term, we have
\begin{equation}\label{eq:P_Q_Q_P_2nd}
\begin{aligned}
    &\left((\mat{E}^Q_\downarrow)^\dagger\mat{Q}_\downarrow+\mat{Q}_\downarrow^\dagger\mat{E}^Q_\downarrow\right)^2\left(\mat{Q}_\downarrow+\mat{E}^Q_\downarrow\right)^\dagger\left(\mat{Q}_\uparrow+\mat{E}^Q_\uparrow\right)\\
    \leq &~ \left((\mat{E}^Q_\downarrow)^\dagger\mat{Q}_\downarrow+\mat{Q}_\downarrow^\dagger\mat{E}^Q_\downarrow\right)^2 (\mat{Q}_\downarrow)^\dagger \mat{Q}_\uparrow+\left\|\left((\mat{E}^Q_\downarrow)^\dagger\mat{Q}_\downarrow+\mat{Q}_\downarrow^\dagger\mat{E}^Q_\downarrow\right)^2\left((\mat{E}^Q_\downarrow)^\dagger \mat{Q}_\uparrow + (\mat{Q}_\downarrow)^\dagger \mat{E}^Q_\uparrow+(\mat{E}^Q_\downarrow)^\dagger\mat{E}_\uparrow^Q\right)\right\|_2\\
    \leq &~ \left((\mat{E}^Q_\downarrow)^\dagger\mat{Q}_\downarrow+\mat{Q}_\downarrow^\dagger\mat{E}^Q_\downarrow\right)^2 (\mat{Q}_\downarrow)^\dagger \mat{Q}_\uparrow+\mathcal{O}\left(\frac{\alpha^3\log^3 n}{\mu^{3}_r(\Delta_{\vec{z}} n)^{1.5}}\right)\,.
\end{aligned}
\end{equation}
And for the residual term, we have
\begin{equation}\label{eq:P_Q_Q_P_res}
\begin{aligned}
    \mathcal{O}\left(\frac{\alpha^3\log^3 n}{\mu^{3}_r(\Delta_{\vec{z}} n)^{1.5}}\right)\cdot \left\|\left(\mat{Q}_\downarrow+\mat{E}^Q_\downarrow\right)^\dagger\left(\mat{Q}_\uparrow+\mat{E}^Q_\uparrow\right)\right\|_2 = \mathcal{O}\left(\frac{\alpha^3\log^3 n}{\mu^{3}_r(\Delta_{\vec{z}} n)^{1.5}}\right)\,.
\end{aligned}    
\end{equation}
Finally, combining \cref{eq:P_Q_Q_P_0th,eq:P_Q_Q_P_1st,eq:P_Q_Q_P_2nd,eq:P_Q_Q_P_res} together, we obtain that
\begin{align*}
&\mat{P}\mat{\hat{Q}}^{+}_\downarrow\mat{\hat{Q}}_\uparrow\mat{P}^{-1}-\mat{Q}^{+}_\downarrow\mat{Q}_\uparrow\\
= &~ \left((\mat{E}^Q_\downarrow)^\dagger\mat{Q}_\uparrow+\mat{Q}_\downarrow^\dagger\mat{E}^Q_\uparrow+(\mat{E}^Q_\downarrow)^\dagger\mat{E}^Q_\uparrow\right)\\
~&+\left((\mat{E}^Q_\downarrow)^\dagger\mat{Q}_\downarrow+\mat{Q}_\downarrow^\dagger\mat{E}^Q_\downarrow+(\mat{E}^Q_\downarrow)^\dagger\mat{E}^Q_\downarrow\right)(\mat{Q}_\downarrow)^\dagger \mat{Q}_\uparrow+ \left((\mat{E}^Q_\downarrow)^\dagger\mat{Q}_\downarrow+\mat{Q}_\downarrow^\dagger\mat{E}^Q_\downarrow\right)\left((\mat{E}^Q_\downarrow)^\dagger \mat{Q}_\uparrow + (\mat{Q}_\downarrow)^\dagger \mat{E}^Q_\uparrow\right)\\
~&+\left((\mat{E}^Q_\downarrow)^\dagger\mat{Q}_\downarrow+\mat{Q}_\downarrow^\dagger\mat{E}^Q_\downarrow\right)^2 (\mat{Q}_\downarrow)^\dagger \mat{Q}_\uparrow+\mathcal{O}\left(\frac{\alpha^3\log^3 n}{\mu^{3}_r(\Delta_{\vec{z}} n)^{1.5}}\right)\\
=&~\left((\mat{E}^Q_\downarrow)^\dagger\mat{Q}_\uparrow+\mat{Q}_\downarrow^\dagger\mat{E}^Q_\uparrow\right)-\left((\mat{E}^Q_\downarrow)^\dagger\mat{Q}_\downarrow+\mat{Q}_\downarrow^\dagger\mat{E}^Q_\downarrow\right)(\mat{Q}_\downarrow)^\dagger\mat{Q}_\uparrow\\
~&+(\mat{E}^Q_\downarrow)^\dagger(\mat{E}^Q_\uparrow)-\left((\mat{E}^Q_\downarrow)^\dagger\mat{Q}_\downarrow+\mat{Q}_\downarrow^\dagger\mat{E}^Q_\downarrow\right)\left((\mat{E}^Q_\downarrow)^\dagger\mat{Q}_\uparrow+\mat{Q}_\downarrow^\dagger\mat{E}^Q_\uparrow\right)\\
~&+\left(-(\mat{E}^Q_\downarrow)^\dagger(\mat{E}^Q_\downarrow)+\left((\mat{E}^Q_\downarrow)^\dagger\mat{Q}_\downarrow+\mat{Q}_\downarrow^\dagger\mat{E}^Q_\downarrow\right)^2\right)(\mat{Q}_\downarrow)^\dagger\mat{Q}_\uparrow+\mathcal{O}\left(\frac{\alpha^3\log^3 n}{\mu^{3}_r(\Delta_{\vec{z}} n)^{1.5}}\right)\,,
\end{align*}
where we re-group the terms in the second step. 

The proof of the lemma is completed.
\end{proof}

\coarsebound*

\begin{proof}
We first deal with the first order term in \eqref{eqn:error_expansion}:
\begin{equation}\label{eqn:first_order_derivation}
\begin{aligned}
&\left((\mat{E}^Q_\downarrow)^\dagger\mat{Q}_\uparrow+\mat{Q}_\downarrow^\dagger\mat{E}^Q_\uparrow\right)-\left((\mat{E}^Q_\downarrow)^\dagger\mat{Q}_\downarrow+\mat{Q}_\downarrow^\dagger\mat{E}^Q_\downarrow\right)(\mat{Q}_\downarrow)^\dagger\mat{Q}_\uparrow\\
=&\underbrace{(\mat{E}^Q_\downarrow)^\dagger\left(\mat{Q}_\uparrow-\mat{Q}_\downarrow(\mat{Q}_\downarrow)^\dagger\mat{Q}_\uparrow\right)}_{\mathrm{(I)}}+\underbrace{\mat{Q}_\downarrow^\dagger\left(\mat{E}^Q_\uparrow-\mat{E}^Q_\downarrow(\mat{Q}_\downarrow)^\dagger\mat{Q}_\uparrow\right)}_{\mathrm{(II)}}\,.
\end{aligned}
\end{equation}

To bound $\mathrm{(I)}$, we notice
\[
\textbf{R}(\mat{Q}_\uparrow)=\textbf{R}(\left(\mat{V}_n(\zdom)\right)_\uparrow)=\textbf{R}(\left(\mat{V}_n(\zdom)\right)_\downarrow)= \textbf{R}(\mat{Q}_\downarrow)\,,
\]
where the second step follows from $\left(\mat{V}_n(\zdom)\right)_\uparrow=\left(\mat{V}_n(\zdom)\right)_\downarrow\cdot \diag(\vec{z}_{\rm dom})$.
We also notice that
\begin{equation}\label{eq:Q_Q_dagger_Pi}
\begin{aligned}
\left\|\mat{Q}_\downarrow(\mat{Q}_\downarrow)^\dagger-\Pi_{\textbf{R}(\mat{Q}_\uparrow)}\right\|_2=&~\left\|\mat{Q}_\downarrow(\mat{Q}_\downarrow)^\dagger-\Pi_{\textbf{R}(\mat{Q}_\downarrow)}\right\|_2\\
=&~\left\|\mat{Q}_\downarrow(\mat{Q}_\downarrow)^\dagger-\mat{Q}_\downarrow\mat{P}_{Q_\downarrow}\mat{P}^\dagger_{Q_\downarrow}\mat{Q}_\downarrow^\dagger\right\|_2\\
\leq &~ \left\|\mat{Q}_\downarrow\right\|_2\cdot \left\|\mat{I}_r-\mat{P}_{Q_\downarrow}\mat{P}^\dagger_{Q_\downarrow}\right\|_2\cdot \|\mat{Q}_\downarrow^\dagger\|_2\\
=&~\mathcal{O}\left(\frac{1}{\Delta_{\vec{z}} n}\right)\,,
\end{aligned}
\end{equation}
where $\Pi_{\textbf{R}(\mat{Q}_\downarrow)}$ denotes the projector to the column space of $\mat{Q}_\downarrow$, the second step follows the definition of $\mat{P}_{Q_\downarrow}$ \eqref{eq:def_P_Q_downarrow}, and the last step follows from  \eqref{eqn:P_Q_unitary} and $\|\mat{Q}_\downarrow\|_2={\cal O}(1)$.

Combining this and \eqref{eqn:second_Q_perturb}, we obtain
\begin{equation}\label{eq:taylor_series_fo_I}
\begin{aligned}
\Big\|(\mat{E}^Q_\downarrow)^\dagger\left(\mat{Q}_\uparrow-\mat{Q}_\downarrow(\mat{Q}_\downarrow)^\dagger\mat{Q}_\uparrow\right)\Big\|_2= &~ \Big\|(\mat{E}^Q_\downarrow)^\dagger\left(\mat{Q}_\uparrow-\Pi_{\textbf{R}(\mat{Q}_\uparrow)}\mat{Q}_\uparrow\right) +(\mat{E}^Q_\downarrow)^\dagger\left(\Pi_{\textbf{R}(\mat{Q}_\uparrow)} - \mat{Q}_\downarrow(\mat{Q}_\downarrow)^\dagger\right)\mat{Q}_\uparrow\Big\|_2\\
= &~ \Big\|(\mat{E}^Q_\downarrow)^\dagger\left(\Pi_{\textbf{R}(\mat{Q}_\uparrow)} - \mat{Q}_\downarrow(\mat{Q}_\downarrow)^\dagger\right)\mat{Q}_\uparrow\Big\|_2\\
\leq &~ \big\|\mat{E}^Q_\downarrow\big\|_2 \cdot \left\|\Pi_{\textbf{R}(\mat{Q}_\uparrow)} - \mat{Q}_\downarrow(\mat{Q}_\downarrow)^\dagger\right\|_2\cdot \left\|\mat{Q}_\uparrow\right\|_2\\
\leq &~ \mathcal{O}\left(\frac{\alpha\sqrt{\log n}}{\mu_r\sqrt{\Delta_{\vec{z}}n}}\right)\cdot \mathcal{O}\left(\frac{1}{\Delta_{\vec{z}} n}\right)\\
=&~\mathcal{O}\left(\frac{\alpha\sqrt{\log n}}{\mu_r(\Delta_{\vec{z}} n)^{1.5}}\right)\,,
\end{aligned}
\end{equation}
where the first step is straightforward, the second step follows from $\Pi_{\textbf{R}(\mat{Q}_\uparrow)}\mat{Q}_\uparrow=\mat{Q}_\uparrow$ by definition, the third step follows from triangle inequality, the fourth step follows from \eqref{eqn:second_Q_perturb} and \eqref{eq:Q_Q_dagger_Pi}.%

Next, we keep $\mathrm{(II)}$ and first consider the second-order term in \eqref{eqn:error_expansion}, which can be rewritten as follows:
\[
\begin{aligned}
&(\mat{E}^Q_\downarrow)^\dagger(\mat{E}^Q_\uparrow)-\left((\mat{E}^Q_\downarrow)^\dagger\mat{Q}_\downarrow+\mat{Q}_\downarrow^\dagger\mat{E}^Q_\downarrow\right)\left((\mat{E}^Q_\downarrow)^\dagger\mat{Q}_\uparrow+\mat{Q}_\downarrow^\dagger\mat{E}^Q_\uparrow\right)\\
~&+\left(-(\mat{E}^Q_\downarrow)^\dagger(\mat{E}^Q_\downarrow)+\left((\mat{E}^Q_\downarrow)^\dagger\mat{Q}_\downarrow+\mat{Q}_\downarrow^\dagger\mat{E}^Q_\downarrow\right)^2\right)(\mat{Q}_\downarrow)^\dagger\mat{Q}_\uparrow\\
=&~(\mat{E}^Q_\downarrow)^\dagger\left(\mat{E}^Q_\uparrow-\mat{E}^Q_\downarrow(\mat{Q}_\downarrow)^\dagger\mat{Q}_\uparrow\right)\\
~&-\left((\mat{E}^Q_\downarrow)^\dagger\mat{Q}_\downarrow+\mat{Q}_\downarrow^\dagger\mat{E}^Q_\downarrow\right)\left(\left((\mat{E}^Q_\downarrow)^\dagger\mat{Q}_\uparrow+\mat{Q}_\downarrow^\dagger\mat{E}^Q_\uparrow\right)-\left((\mat{E}^Q_\downarrow)^\dagger\mat{Q}_\downarrow+\mat{Q}_\downarrow^\dagger\mat{E}^Q_\downarrow\right)(\mat{Q}_\downarrow)^\dagger\mat{Q}_\uparrow\right)\\
=&~(\mat{E}^Q_\downarrow)^\dagger\left(\mat{E}^Q_\uparrow-\mat{E}^Q_\downarrow(\mat{Q}_\downarrow)^\dagger\mat{Q}_\uparrow\right)\\
~&-\left((\mat{E}^Q_\downarrow)^\dagger\mat{Q}_\downarrow+\mat{Q}_\downarrow^\dagger\mat{E}^Q_\downarrow\right)\Bigg(\underbrace{(\mat{E}^Q_\downarrow)^\dagger\left(\mat{Q}_\uparrow-\mat{Q}_\downarrow(\mat{Q}_\downarrow)^\dagger\mat{Q}_\uparrow\right)}_{\mathrm{(I)}~\text{in~\eqref{eqn:first_order_derivation}}}+\underbrace{\mat{Q}_\downarrow^\dagger\left(\mat{E}^Q_\uparrow-\mat{E}^Q_\downarrow(\mat{Q}_\downarrow)^\dagger\mat{Q}_\uparrow\right)}_{\mathrm{(II)}~\text{in~\eqref{eqn:first_order_derivation}}}\Bigg)\\
=&~(\mat{E}^Q_\downarrow)^\dagger\left(\mat{E}^Q_\uparrow-\mat{E}^Q_\downarrow(\mat{Q}_\downarrow)^\dagger\mat{Q}_\uparrow\right)+\mathcal{O}\left(\frac{\alpha\sqrt{\log n}}{\mu_r\sqrt{\Delta_{\vec{z}}n}}\right)\left\|\mat{Q}_\downarrow^\dagger\left(\mat{E}^Q_\uparrow-\mat{E}^Q_\downarrow(\mat{Q}_\downarrow)^\dagger\mat{Q}_\uparrow\right)\right\|_2\\
~&+\mathcal{O}\left(\frac{\alpha^2\log n}{\mu^{2}_r\Delta_{\vec{z}}^2 n^{2}}\right)
\end{aligned}
\]
where the first two steps follow from re-grouping the terms, and the last step follows from \eqref{eq:mid_bracket_neumann} and \eqref{eq:taylor_series_fo_I}. %

Plugging the bounds for the first order and the second order terms into \eqref{eqn:error_expansion}, we get that
\begin{align*}
&\left\|\mat{P}\mat{\hat{Q}}^{+}_\downarrow\mat{\hat{Q}}_\uparrow\mat{P}^{-1}-\mat{Q}^{+}_\downarrow\mat{Q}_\uparrow\right\|_2\\
\leq &~ \left\|(\mat{E}^Q_\downarrow)^\dagger\left(\mat{E}^Q_\uparrow-\mat{E}^Q_\downarrow(\mat{Q}_\downarrow)^\dagger\mat{Q}_\uparrow\right)\right\|_2+\mathcal{O}\left(1+\frac{\alpha\sqrt{\log n}}{\mu_r\sqrt{\Delta_{\vec{z}} n}}\right)\left\|\mat{Q}_\downarrow^\dagger\left(\mat{E}^Q_\uparrow-\mat{E}^Q_\downarrow(\mat{Q}_\downarrow)^\dagger\mat{Q}_\uparrow\right)\right\|_2\\
~&+ \mathcal{O}\left(\frac{\alpha^3\log^3 n}{\mu^{3}_r(\Delta_{\vec{z}} n)^{1.5}} \right)\,,
\end{align*}
which proves the lemma.
\end{proof}

\subsubsection{Technical claims}
\begin{claim}[Approximate unitarity of $\mat{Q}_\downarrow$]
For $\mat{Q}_\downarrow=\mat{Q}(1:,:r)$, we have
\begin{equation}\label{eqn:inverse_close_to_identity}
\left\|\mat{Q}_\downarrow^\dagger\mat{Q}_\downarrow-\mat{I}_r\right\|_2=\mathcal{O}\left(\frac{1}{\Delta_{\vec{z}} n}\right),\quad \left\|\left(\mat{Q}_\downarrow^\dagger\mat{Q}_\downarrow\right)^{-1}-\mat{I}_r\right\|_2=\mathcal{O}\left(\frac{1}{\Delta_{\vec{z}} n}\right)\,.
\end{equation}

Furthermore, there exists a matrix $\mat{P}_{\mat{Q}_\downarrow}\in \C^{r\times r}$ such that $\mat{P}^\dagger_{Q_\downarrow}\mat{Q}_\downarrow^\dagger\mat{Q}_\downarrow\mat{P}_{Q_\downarrow}=\mat{I}_r$ and
\begin{equation}\label{eqn:P_Q_unitary}
\left\|\mat{P}^\dagger_{Q_\downarrow}\mat{P}_{Q_\downarrow}-\mat{I}_r\right\|_2=\mathcal{O}\left(\frac{1}{\Delta_{\vec{z}} n}\right),\quad \left\|\mat{P}_{Q_\downarrow}\mat{P}^\dagger_{Q_\downarrow}-\mat{I}_r\right\|_2=\mathcal{O}\left(\frac{1}{\Delta_{\vec{z}} n}\right)\,.
\end{equation}
\end{claim}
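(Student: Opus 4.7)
The first bound is essentially already in the paper: \cref{cor:approximate-orthonormality} from \cref{sec:van_mat} establishes exactly that $\norm{\mat{Q}_\downarrow^\dagger\mat{Q}_\downarrow-\mat{I}_r}_2 = \order(1/(\zgap n))$. So my plan is to cite this directly, and then derive everything else from it via a short Neumann series / functional calculus argument.

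For the second bound $\norm{(\mat{Q}_\downarrow^\dagger\mat{Q}_\downarrow)^{-1}-\mat{I}_r}_2 = \order(1/(\zgap n))$, I would argue as follows. Writing $\mat{M} = \mat{Q}_\downarrow^\dagger\mat{Q}_\downarrow - \mat{I}_r$, the first bound together with the assumption $n = \Omega(1/\zgap)$ gives $\norm{\mat{M}}_2 \le 1/2$ (say). Then the Neumann expansion (\cref{def:neumann}) yields
\begin{equation*}
(\mat{Q}_\downarrow^\dagger\mat{Q}_\downarrow)^{-1} - \mat{I}_r = (\mat{I}_r + \mat{M})^{-1} - \mat{I}_r = \sum_{k=1}^\infty (-\mat{M})^k,
\end{equation*}
so that $\norm{(\mat{Q}_\downarrow^\dagger\mat{Q}_\downarrow)^{-1} - \mat{I}_r}_2 \le \norm{\mat{M}}_2/(1-\norm{\mat{M}}_2) = \order(\norm{\mat{M}}_2) = \order(1/(\zgap n))$.

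For the ``Furthermore'' part, I would take $\mat{P}_{\mat{Q}_\downarrow}$ to be the unique Hermitian positive-definite square root of the inverse, i.e., $\mat{P}_{\mat{Q}_\downarrow} \coloneqq (\mat{Q}_\downarrow^\dagger\mat{Q}_\downarrow)^{-1/2}$. This is well-defined since $\mat{Q}_\downarrow^\dagger\mat{Q}_\downarrow$ is Hermitian positive definite (by the first bound, all its eigenvalues lie in $[1 - \order(1/(\zgap n)), 1 + \order(1/(\zgap n))]$). By construction $\mat{P}_{\mat{Q}_\downarrow}^\dagger \mat{Q}_\downarrow^\dagger \mat{Q}_\downarrow \mat{P}_{\mat{Q}_\downarrow} = (\mat{Q}_\downarrow^\dagger\mat{Q}_\downarrow)^{-1/2}(\mat{Q}_\downarrow^\dagger\mat{Q}_\downarrow)(\mat{Q}_\downarrow^\dagger\mat{Q}_\downarrow)^{-1/2} = \mat{I}_r$. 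Since $\mat{P}_{\mat{Q}_\downarrow}$ is Hermitian, $\mat{P}_{\mat{Q}_\downarrow}^\dagger \mat{P}_{\mat{Q}_\downarrow} = \mat{P}_{\mat{Q}_\downarrow} \mat{P}_{\mat{Q}_\downarrow}^\dagger = (\mat{Q}_\downarrow^\dagger\mat{Q}_\downarrow)^{-1}$, and the already-proven second bound supplies $\norm{\mat{P}_{\mat{Q}_\downarrow}^\dagger \mat{P}_{\mat{Q}_\downarrow} - \mat{I}_r}_2 = \norm{\mat{P}_{\mat{Q}_\downarrow} \mat{P}_{\mat{Q}_\downarrow}^\dagger - \mat{I}_r}_2 = \order(1/(\zgap n))$.

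There is no real obstacle here since the heavy lifting was done in \cref{cor:approximate-orthonormality}; the only thing to be mildly careful about is invoking the assumption on $n$ to justify convergence of the Neumann series and the positivity needed to define the square root. Both are immediate under the standing hypothesis $n = \Omega(1/\zgap)$ used throughout this section.
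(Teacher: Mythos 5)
Your proposal is correct and follows essentially the same route as the paper: the paper re-derives the first bound via $\mat{Q}_\downarrow = \tfrac{1}{\sqrt{n}}(\Vand_n(\zdom))_\downarrow \mat{P}_v$ (the same computation underlying \cref{cor:approximate-orthonormality}, which you cite instead), passes to the inverse implicitly by the same Neumann-series reasoning, and constructs $\mat{P}_{Q_\downarrow}$ from the eigendecomposition $\mat{Q}_\downarrow^\dagger\mat{Q}_\downarrow = \mat{U}_\downarrow^\dagger\mat{\Lambda}_\downarrow\mat{U}_\downarrow$ as $\mat{U}_\downarrow^\dagger\mat{\Lambda}_\downarrow^{-1/2}$, which differs from your Hermitian square root $(\mat{Q}_\downarrow^\dagger\mat{Q}_\downarrow)^{-1/2}$ only by a unitary factor. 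Both choices satisfy the stated properties, so this is a cosmetic rather than substantive difference.
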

\begin{proof}
We first simplify $\left(\mat{Q}_\downarrow^\dagger\mat{Q}_\downarrow\right)^{-1}$ in the above formula:%
\[
\begin{aligned}
\left\|\mat{Q}_\downarrow^\dagger\mat{Q}_\downarrow-\mat{I}_r\right\|_2=&~\left\|\mat{P}^\dagger_v\frac{\left(\mat{V}_n(\zdom)\right)_\downarrow^\dagger\left(\mat{V}_n(\zdom)\right)_\downarrow}{n}\mat{P}_v-\mat{I}_r\right\|_2\\
\leq &~\left\|\mat{P}^\dagger_v\mat{P}_v-\mat{I}_r\right\|_2+\left\|\mat{P}_v^\dagger\left(\frac{\left(\mat{V}_n(\zdom)\right)_\downarrow^\dagger\left(\mat{V}_n(\zdom)\right)_\downarrow}{n}-\mat{I}_r\right)\mat{P}_v\right\|_2\\
=&~ \mathcal{O}\left(\frac{1}{\Delta_{\vec{z}} n}\right)\,.
\end{aligned}
\]
where the first step follows from \eqref{eqn:reshape_Q} that
\begin{align*}
    \mat{Q}_\downarrow=\left(\frac{1}{\sqrt{n}}\mat{V}_n(\zdom)\mat{P}_v\right)_\downarrow = \frac{1}{\sqrt{n}}(\mat{V}_n(\zdom))_\downarrow\mat{P}_v\,,
\end{align*}
the second step follows from triangle inequality, and the third step follows from \cref{lem:vander-eigen} and \cref{cor:V_dagger_V_dom}.

This implies that
\begin{align*}%
\left\|\mat{Q}_\downarrow^\dagger\mat{Q}_\downarrow-\mat{I}_r\right\|_2=\mathcal{O}\left(\frac{1}{\Delta_{\vec{z}} n}\right),\quad \left\|\left(\mat{Q}_\downarrow^\dagger\mat{Q}_\downarrow\right)^{-1}-\mat{I}_r\right\|_2=\mathcal{O}\left(\frac{1}{\Delta_{\vec{z}} n}\right)\,.
\end{align*} 

For the furthermore part, let the eigendecomposition of  $\mat{Q}_\downarrow^\dagger\mat{Q}_\downarrow$ be as follows:
\[
\mat{Q}_\downarrow^\dagger\mat{Q}_\downarrow=\mat{U}^\dagger_\downarrow \mat{\Lambda}_\downarrow \mat{U}_\downarrow.
\]
We define an invertible matrix $\mat{P}_{Q_\downarrow}$ as:
\begin{align}\label{eq:def_P_Q_downarrow}
\mat{P}_{Q_\downarrow}:=\mat{U}_\downarrow^\dagger\mat{\Lambda}^{-1/2}_\downarrow \,.
\end{align}
Then, we have 
\begin{align*}
\mat{P}^\dagger_{Q_\downarrow}\mat{Q}_\downarrow^\dagger\mat{Q}_\downarrow\mat{P}_{Q_\downarrow}=\mat{I}_r\,,
\end{align*}
Using \eqref{eqn:inverse_close_to_identity}, we obtain that $\|\mat{\Lambda}^{-1}_\downarrow-\mat{I}_r\|_2=\mathcal{O}\left(\frac{1}{\Delta_{\vec{z}} n}\right)$. This implies that $\mat{P}_{Q_\downarrow}\in\mathbb{C}^{r\times r}$ satisfies
\begin{align*}%
\left\|\mat{P}^\dagger_{Q_\downarrow}\mat{P}_{Q_\downarrow}-\mat{I}_r\right\|_2=\mathcal{O}\left(\frac{1}{\Delta_{\vec{z}} n}\right),\quad \left\|\mat{P}_{Q_\downarrow}\mat{P}^\dagger_{Q_\downarrow}-\mat{I}_r\right\|_2=\mathcal{O}\left(\frac{1}{\Delta_{\vec{z}} n}\right)
\end{align*}
where the first equation follows from $\mat{P}^\dagger_{Q_\downarrow}\mat{P}_{Q_\downarrow} = \mat{\Lambda}^{-1}$, and the second equation follows from $\mat{P}_{Q_\downarrow}\mat{P}^\dagger_{Q_\downarrow}-\mat{I}_r = \mat{U}_\downarrow^\dagger (\mat{\Lambda}^{-1} -\mat{I}_r)\mat{U}_\downarrow$.

The claim is then proved.
\end{proof}

\begin{claim}[Neumann series truncation]\label{clm:bound_Neumann_series}
It holds that
\begin{align*}
    &\sum^\infty_{k=0}\left(-\left(\mat{Q}_\downarrow^\dagger\mat{Q}_\downarrow\right)^{-1}\left((\mat{E}^Q_\downarrow)^\dagger\mat{Q}_\downarrow+\mat{Q}_\downarrow^\dagger\mat{E}^Q_\downarrow+(\mat{E}^Q_\downarrow)^\dagger\mat{E}^Q_\downarrow\right)\right)^k\left(\mat{Q}_\downarrow^\dagger\mat{Q}_\downarrow\right)^{-1}\\
    =&~\underbrace{\left(\mat{Q}_\downarrow^\dagger\mat{Q}_\downarrow\right)^{-1}}_{\text{zeroth order term}}-\underbrace{\left((\mat{E}^Q_\downarrow)^\dagger\mat{Q}_\downarrow+\mat{Q}_\downarrow^\dagger\mat{E}^Q_\downarrow+(\mat{E}^Q_\downarrow)^\dagger\mat{E}^Q_\downarrow\right)}_{\text{first order term}}+\underbrace{\left((\mat{E}^Q_\downarrow)^\dagger\mat{Q}_\downarrow+\mat{Q}_\downarrow^\dagger\mat{E}^Q_\downarrow\right)^2}_{\text{second order term}} \\
    ~&+\mathcal{O}\left(\frac{\alpha^3\log^3 n}{\mu^{3}_r(\Delta_{\vec{z}} n)^{1.5}}\right)\,.
\end{align*}
\end{claim}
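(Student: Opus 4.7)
The plan is to treat the expression as a matrix geometric (Neumann) series and expand it up to second order. Let me abbreviate $\mat{A} \coloneqq (\mat{Q}_\downarrow^\dagger\mat{Q}_\downarrow)^{-1}$ and $\mat{M} \coloneqq (\mat{E}^Q_\downarrow)^\dagger\mat{Q}_\downarrow+\mat{Q}_\downarrow^\dagger\mat{E}^Q_\downarrow+(\mat{E}^Q_\downarrow)^\dagger\mat{E}^Q_\downarrow$, so the series reads $\sum_{k\ge 0} (-\mat{A}\mat{M})^k \mat{A}$. By \cref{eqn:inverse_close_to_identity} we have $\|\mat{A}\|_2 = 1+\mathcal{O}(1/(\Delta_{\vec{z}} n))$ and $\|\mat{A}-\mat{I}_r\|_2 = \mathcal{O}(1/(\Delta_{\vec{z}} n))$, while from \cref{eqn:second_Q_perturb} and $\|\mat{Q}_\downarrow\|_2 = \mathcal{O}(1)$ we get $\|\mat{M}\|_2 = \mathcal{O}(\alpha\sqrt{\log n}/(\mu_r\sqrt{\Delta_{\vec{z}} n}))$. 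Consequently $\|\mat{A}\mat{M}\|_2 = \mathcal{O}(\alpha\sqrt{\log n}/(\mu_r\sqrt{\Delta_{\vec{z}} n}))$, which is $o(1)$ under the standing assumption on $n$, so the series converges and the tail can be summed as a geometric series.

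First I would split the sum as the $k=0$, $k=1$, $k=2$ terms plus the tail $\sum_{k\ge 3}$. The tail contributes at most $\sum_{k\ge 3}\|\mat{A}\mat{M}\|_2^k \|\mat{A}\|_2 = \mathcal{O}(\|\mat{A}\mat{M}\|_2^3) = \mathcal{O}(\alpha^3\log^{3/2} n/(\mu_r^3(\Delta_{\vec{z}} n)^{3/2}))$, which is absorbed in the claimed residual. The $k=0$ term is exactly $\mat{A}$, matching the zeroth order term in the statement. For $k=1$, the term is $-\mat{A}\mat{M}\mat{A}$, and I would replace each occurrence of $\mat{A}$ by $\mat{I}_r$ using $\mat{A} = \mat{I}_r + \mathcal{O}(1/(\Delta_{\vec{z}} n))$; the error from this substitution is $\mathcal{O}(\|\mat{M}\|_2/(\Delta_{\vec{z}} n)) = \mathcal{O}(\alpha\sqrt{\log n}/(\mu_r\Delta_{\vec{z}}^{3/2} n^{3/2}))$, again absorbed in the residual. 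This yields exactly the first-order term $-\mat{M}$ displayed in the claim.

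For $k=2$, the term is $\mat{A}\mat{M}\mat{A}\mat{M}\mat{A}$. Replacing the three copies of $\mat{A}$ by $\mat{I}_r$ costs $\mathcal{O}(\|\mat{M}\|_2^2/(\Delta_{\vec{z}} n)) = \mathcal{O}(\alpha^2\log n/(\mu_r^2\Delta_{\vec{z}}^2 n^2))$, which is within the residual budget. This leaves $\mat{M}^2$. To arrive at $((\mat{E}^Q_\downarrow)^\dagger\mat{Q}_\downarrow+\mat{Q}_\downarrow^\dagger\mat{E}^Q_\downarrow)^2$, I would expand $\mat{M}^2$ and discard any term containing the cubic factor $(\mat{E}^Q_\downarrow)^\dagger\mat{E}^Q_\downarrow$; each such dropped term has norm at most $\|\mat{E}^Q_\downarrow\|_2^3\cdot\|\mat{Q}_\downarrow\|_2 + \|\mat{E}^Q_\downarrow\|_2^4 = \mathcal{O}(\alpha^3\log^{3/2} n/(\mu_r^3(\Delta_{\vec{z}} n)^{3/2}))$, consistent with the stated error.

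The main obstacle is bookkeeping rather than any deep estimate: I have to verify that every substitution error ($\mat{A}\leftrightarrow\mat{I}_r$, dropping the $(\mat{E}^Q_\downarrow)^\dagger\mat{E}^Q_\downarrow$ pieces in the quadratic term, and discarding the tail $k\ge 3$) fits into the target $\mathcal{O}(\alpha^3\log^3 n/(\mu_r^3(\Delta_{\vec{z}} n)^{3/2}))$. The sharpest of these bounds is the cubic tail and the cubic remainder from $\mat{M}^2$, both of which scale like $\|\mat{E}^Q_\downarrow\|_2^3 \sim \alpha^3\log^{3/2} n/(\mu_r^3(\Delta_{\vec{z}} n)^{3/2})$, comfortably inside the claimed bound (the looser $\log^3 n$ factor in the statement simply leaves slack). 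Summing all these contributions gives exactly the decomposition asserted in the claim.
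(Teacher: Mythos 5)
Your proposal is correct and follows essentially the same route as the paper's proof: bound the middle bracket $\mat{M}$ by $\mathcal{O}(\alpha\sqrt{\log n}/(\mu_r\sqrt{\Delta_{\vec{z}}n}))$, sum the $k\ge 3$ tail geometrically, replace $(\mat{Q}_\downarrow^\dagger\mat{Q}_\downarrow)^{-1}$ by $\mat{I}_r$ in the $k=1,2$ terms at cost $\mathcal{O}(1/(\Delta_{\vec{z}}n))$ per substitution, and discard the cubic-and-higher cross terms when reducing $\mat{M}^2$ to $\bigl((\mat{E}^Q_\downarrow)^\dagger\mat{Q}_\downarrow+\mat{Q}_\downarrow^\dagger\mat{E}^Q_\downarrow\bigr)^2$. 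All error budgets match the paper's, including your observation that the true remainder scales like $\log^{1.5}n$ rather than the stated $\log^3 n$.
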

\begin{proof}
For the Neumann series term in \eqref{eqn:oneside}:
\begin{align}\label{eq:neumann_series_term}
    \sum^\infty_{k=0}\left(-\left(\mat{Q}_\downarrow^\dagger\mat{Q}_\downarrow\right)^{-1}\left((\mat{E}^Q_\downarrow)^\dagger\mat{Q}_\downarrow+\mat{Q}_\downarrow^\dagger\mat{E}^Q_\downarrow+(\mat{E}^Q_\downarrow)^\dagger\mat{E}^Q_\downarrow\right)\right)^k\left(\mat{Q}_\downarrow^\dagger\mat{Q}_\downarrow\right)^{-1}\,,
\end{align}
we first bound the middle bracket:
\begin{align}\label{eq:mid_bracket_neumann}
&\left\|(\mat{E}^Q_\downarrow)^\dagger\mat{Q}_\downarrow+\mat{Q}_\downarrow^\dagger\mat{E}^Q_\downarrow+(\mat{E}^Q_\downarrow)^\dagger\mat{E}^Q_\downarrow\right\|_2\notag\\
\leq &~\left\|\mat{E}^Q_\downarrow\right\|_2\left(2\left\|\mat{Q}_\downarrow\right\|_2+\left\|\mat{E}^Q_\downarrow\right\|_2\right)\notag\\ \leq&~\left\|\mat{E}^Q\right\|_2\left(2\left\|\mat{Q}_r\right\|_2+\left\|\mat{E}^Q\right\|_2\right)\notag\\
=&~{\cal O}\left(\frac{\alpha\sqrt{\log n}}{\mu_r\sqrt{\Delta_z n}}\right)\,,
\end{align}
where the first step follows from triangle inequality, the second step is straightforward, the third step follows from the definition of $\mat{E}^Q$ and \eqref{eqn:second_Q_perturb}.

The first-order term (i.e., $k=1$) in \eqref{eq:neumann_series_term} can be approximated as follows:
\begin{equation}\label{eq:neumann_fo}
\begin{aligned}
    &\left\|\left(\mat{Q}_\downarrow^\dagger\mat{Q}_\downarrow\right)^{-1}\left((\mat{E}^Q_\downarrow)^\dagger\mat{Q}_\downarrow+\mat{Q}_\downarrow^\dagger\mat{E}^Q_\downarrow+(\mat{E}^Q_\downarrow)^\dagger\mat{E}^Q_\downarrow\right)\left(\mat{Q}_\downarrow^\dagger\mat{Q}_\downarrow\right)^{-1}+\left((\mat{E}^Q_\downarrow)^\dagger\mat{Q}_\downarrow+\mat{Q}_\downarrow^\dagger\mat{E}^Q_\downarrow-(\mat{E}^Q_\downarrow)^\dagger\mat{E}^Q_\downarrow\right)\right\|_2\\
    \leq &~ \left\|\left(\left(\mat{Q}_\downarrow^\dagger\mat{Q}_\downarrow\right)^{-1}-\mat{I}_r\right)\left((\mat{E}^Q_\downarrow)^\dagger\mat{Q}_\downarrow+\mat{Q}_\downarrow^\dagger\mat{E}^Q_\downarrow+(\mat{E}^Q_\downarrow)^\dagger\mat{E}^Q_\downarrow\right)\left(\mat{Q}_\downarrow^\dagger\mat{Q}_\downarrow\right)^{-1}\right\|_2\\
    ~&+\left\|\left((\mat{E}^Q_\downarrow)^\dagger\mat{Q}_\downarrow+\mat{Q}_\downarrow^\dagger\mat{E}^Q_\downarrow+(\mat{E}^Q_\downarrow)^\dagger\mat{E}^Q_\downarrow\right)\left(\left(\mat{Q}_\downarrow^\dagger\mat{Q}_\downarrow\right)^{-1}-\mat{I}_r\right)\right\|_2\\
    \leq &~ \mathcal{O}\left(\frac{1}{\Delta_{\vec{z}} n}\right)\cdot {\cal O}\left(\frac{\alpha\sqrt{\log n}}{\mu_r\sqrt{\Delta_z n}}\right)\\
    = &~ \mathcal{O}\left(\frac{\alpha \sqrt{\log n}}{\mu_r(\Delta_{\vec{z}}n)^{1.5}}\right)\,,
\end{aligned}
\end{equation}
where the first step follows from triangle inequality, the second step follows from \eqref{eqn:inverse_close_to_identity} and \eqref{eq:mid_bracket_neumann}, and the last step is by direct calculation.

By a similar calculation, we can show that the second-order term (i.e., $k=2$) in \eqref{eq:neumann_series_term} can be approximated by
\begin{equation}\label{eq:neumann_so_1}
\begin{aligned}
    &\left\|\left(\left(\mat{Q}_\downarrow^\dagger\mat{Q}_\downarrow\right)^{-1}\left((\mat{E}^Q_\downarrow)^\dagger\mat{Q}_\downarrow+\mat{Q}_\downarrow^\dagger\mat{E}^Q_\downarrow+(\mat{E}^Q_\downarrow)^\dagger\mat{E}^Q_\downarrow\right)\right)^2\left(\mat{Q}_\downarrow^\dagger\mat{Q}_\downarrow\right)^{-1}-\left((\mat{E}^Q_\downarrow)^\dagger\mat{Q}_\downarrow+\mat{Q}_\downarrow^\dagger\mat{E}^Q_\downarrow-(\mat{E}^Q_\downarrow)^\dagger\mat{E}^Q_\downarrow\right)^2\right\|_2\\
    \leq &~ \mathcal{O}\left(\frac{1}{\Delta_{\vec{z}} n}\right)\cdot {\cal O}\left(\frac{\alpha\sqrt{\log n}}{\mu_r\sqrt{\Delta_z n}}\right)^2\\
    = &~ \mathcal{O}\left(\frac{\alpha^2 \log n}{\mu^{2}_r\Delta_{\vec{z}}^2n^2}\right)\,.
\end{aligned}
\end{equation}
Indeed, this term can be further simplified:
\begin{equation}\label{eq:neumann_so_2}
\begin{aligned}
    &\left\|\left((\mat{E}^Q_\downarrow)^\dagger\mat{Q}_\downarrow+\mat{Q}_\downarrow^\dagger\mat{E}^Q_\downarrow-(\mat{E}^Q_\downarrow)^\dagger\mat{E}^Q_\downarrow\right)^2 - \left((\mat{E}^Q_\downarrow)^\dagger\mat{Q}_\downarrow+\mat{Q}_\downarrow^\dagger\mat{E}^Q_\downarrow\right)^2\right\|_2\\
    = &~ \left\|\left((\mat{E}^Q_\downarrow)^\dagger\mat{Q}_\downarrow+\mat{Q}_\downarrow^\dagger\mat{E}^Q_\downarrow\right)(\mat{E}^Q_\downarrow)^\dagger\mat{E}^Q_\downarrow + (\mat{E}^Q_\downarrow)^\dagger\mat{E}^Q_\downarrow\left((\mat{E}^Q_\downarrow)^\dagger\mat{Q}_\downarrow+\mat{Q}_\downarrow^\dagger\mat{E}^Q_\downarrow\right) \right\|_2\\
    \leq &~ {\cal O}\left(\frac{\alpha\sqrt{\log n}}{\mu_r\sqrt{\Delta_z n}}\right)^3\\
    = &~ {\cal O}\left(\frac{\alpha^3\log^3 n}{\mu^{3}_r(\Delta_{\vec{z}}n)^{1.5}}\right)\,,
\end{aligned}
\end{equation}
where the first inequality follows from \eqref{eqn:second_Q_perturb} for $\|\mat{E}^Q_\downarrow\|_2$ and \eqref{eqn:inverse_close_to_identity} for $\|\mat{Q}_\downarrow\|_2$.

The higher-order terms (i.e., $k>2$) in \eqref{eq:neumann_series_term} can be directly bounded as follows:
\begin{equation}\label{eq:neumann_ho}
\begin{aligned}
    &\left\|\sum^\infty_{k=3}\left(-\left(\mat{Q}_\downarrow^\dagger\mat{Q}_\downarrow\right)^{-1}\left((\mat{E}^Q_\downarrow)^\dagger\mat{Q}_\downarrow+\mat{Q}_\downarrow^\dagger\mat{E}^Q_\downarrow+(\mat{E}^Q_\downarrow)^\dagger\mat{E}^Q_\downarrow\right)\right)^k\left(\mat{Q}_\downarrow^\dagger\mat{Q}_\downarrow\right)^{-1}\right\|_2\\
    \leq &~ \sum_{k=3}^\infty \left\|(\mat{Q}_\downarrow^\dagger\mat{Q}_\downarrow)^{-1}\right\|_2^k \left\|(\mat{E}^Q_\downarrow)^\dagger\mat{Q}_\downarrow+\mat{Q}_\downarrow^\dagger\mat{E}^Q_\downarrow+(\mat{E}^Q_\downarrow)^\dagger\mat{E}^Q_\downarrow\right\|_2^k\cdot \left\|(\mat{Q}_\downarrow^\dagger\mat{Q}_\downarrow)^{-1}\right\|_2\\
    = &~ \sum_{k=3}^\infty {\cal O}\left(\frac{\alpha\sqrt{\log n}}{\mu_r\sqrt{\Delta_z n}}\right)^{k}\\
    = &~ {\cal O}\left(\frac{\alpha^3\log^3 n}{\mu^{3}_r(\Delta_{\vec{z}}n)^{1.5}}\right)\,,
\end{aligned}
\end{equation}
where the first step follows from triangle inequality, the second step follows from \eqref{eqn:inverse_close_to_identity} and \eqref{eq:mid_bracket_neumann}, and the last step follows from the geometric summation.

Combining \eqref{eq:neumann_fo} to \eqref{eq:neumann_ho} together, we get that
\[
\begin{aligned}
    &\sum^\infty_{k=0}\left(-\left(\mat{Q}_\downarrow^\dagger\mat{Q}_\downarrow\right)^{-1}\left((\mat{E}^Q_\downarrow)^\dagger\mat{Q}_\downarrow+\mat{Q}_\downarrow^\dagger\mat{E}^Q_\downarrow+(\mat{E}^Q_\downarrow)^\dagger\mat{E}^Q_\downarrow\right)\right)^k\left(\mat{Q}_\downarrow^\dagger\mat{Q}_\downarrow\right)^{-1}\\
    =&~\left(\mat{Q}_\downarrow^\dagger\mat{Q}_\downarrow\right)^{-1}-\left((\mat{E}^Q_\downarrow)^\dagger\mat{Q}_\downarrow+\mat{Q}_\downarrow^\dagger\mat{E}^Q_\downarrow+(\mat{E}^Q_\downarrow)^\dagger\mat{E}^Q_\downarrow\right)+\left((\mat{E}^Q_\downarrow)^\dagger\mat{Q}_\downarrow+\mat{Q}_\downarrow^\dagger\mat{E}^Q_\downarrow\right)^2+\mathcal{O}\left(\frac{\alpha^3\log^3 n}{\mu^{3}_r(\Delta_{\vec{z}} n)^{1.5}}\right)\,.
\end{aligned}
\]
The claim is then proved.
\end{proof}

\subsection{Proofs for Step 3: Error cancellation in the Taylor expansion}\label{sec:align_step3_defer}

\subsubsection{Establishing the first equation in \texorpdfstring{\cref{eqn:key_equality}}{Eq}}\label{sec:defer_key_eq_1st}

\decomposeI*

\begin{proof}
In this proof, we often use the following observations:
\begin{subequations}
\begin{equation}\label{eqn:Q_up_simplify_1st}
\mat{P}_v(\mat{Q}_\downarrow)^\dagger\mat{Q}_\uparrow\mat{P}^\dagger_v=\diag(\zdom^{-1})+\mathcal{O}\left(\frac{1}{\Delta_{\vec{z}} n}\right)\,,
\end{equation}
\begin{equation}\label{eqn:Q_up_simplify_2nd}
\mat{P}_v\mat{\Sigma}^{-k}_r\mat{P}^\dagger_v=\left(n\diag(\mudom)\right)^{-k}+\mathcal{O}\left(\frac{1}{\mu_r n}\right)^{k+1}\frac{1}{\Delta_{\vec{z}} }  \quad \forall k\geq 0\,.
\end{equation}
\end{subequations}
They are proved in \cref{clm:P_v_properties}.%

Plugging the formula of $\mat{E}^Q$ into the first equation of \eqref{eqn:key_equality}, we obtain \cref{eq:EPPV_lem}:%
\begin{equation*}
\begin{aligned}
&\left\|\mat{E}^Q_\uparrow-\mat{E}^Q_\downarrow(\mat{Q}_\downarrow)^\dagger\mat{Q}_\uparrow\right\|_2\\
\leq &~ \left\|\mat{E}^Q_\uparrow\mat{P}^\dagger_v-\mat{E}^Q_\downarrow(\mat{Q}_\downarrow)^\dagger\mat{Q}_\uparrow\mat{P}^\dagger_v\right\|_2\cdot \Big\|(\mat{P}_v^\dagger)^{-1}\Big\|_2\\
=&~\mathcal{O}\left(\left\|\mat{E}^Q_\uparrow\mat{P}^\dagger_v-\mat{E}^Q_\downarrow(\mat{Q}_\downarrow)^\dagger\mat{Q}_\uparrow\mat{P}^\dagger_v\right\|_2\right)\\
\leq &~ \mathcal{O}\left(\left\|\mat{E}^Q_\uparrow\mat{P}\mat{P}^\dagger_v-\mat{E}^Q_\downarrow\mat{P}(\mat{Q}_\downarrow)^\dagger\mat{Q}_\uparrow\mat{P}^\dagger_v\right\|_2+\left\|\mat{E}^Q_\uparrow\left(\mat{P}-\mat{I}_r\right)\mat{P}_v^\dagger\right\|_2+\left\|\mat{E}^Q_\downarrow\left(\mat{P}-\mat{I}_r\right)(\mat{Q}_\downarrow)^\dagger\mat{Q}_\uparrow\mat{P}_v^\dagger\right\|_2\right)\\
\leq &~ \mathcal{O}\left(\left\|\mat{E}^Q_\uparrow\mat{P}\mat{P}^\dagger_v-\mat{E}^Q_\downarrow\mat{P}(\mat{Q}_\downarrow)^\dagger\mat{Q}_\uparrow\mat{P}^\dagger_v\right\|_2\right) + {\cal O}\left(\left\|\mat{E}^Q\right\|_2\left\|\mat{P}-\mat{I}_r\right\|_2\left(1+\big\|(\mat{Q}_\downarrow)^\dagger\mat{Q}_\uparrow\big\|_2\right)\right)\\
= &~ \mathcal{O}\left(\left\|\mat{E}^Q_\uparrow\mat{P}\mat{P}^\dagger_v-\mat{E}^Q_\downarrow\mat{P}(\mat{Q}_\downarrow)^\dagger\mat{Q}_\uparrow\mat{P}^\dagger_v\right\|_2\right)+\mathcal{O}\left(\frac{\alpha\sqrt{\log n}}{\mu_r\sqrt{\Delta_{\vec{z}}n}}\right)\cdot \mathcal{O}\left(\frac{\alpha\sqrt{\log n}}{\mu_r\sqrt{\Delta_{\vec{z}}n}}\right)\cdot {\cal O}(1)\\
= &~ \mathcal{O}\left(\left\|\mat{E}^Q_\uparrow\mat{P}\mat{P}^\dagger_v-\mat{E}^Q_\downarrow\mat{P}(\mat{Q}_\downarrow)^\dagger\mat{Q}_\uparrow\mat{P}^\dagger_v\right\|_2\right)+\mathcal{O}\left(\frac{\alpha^2\log n}{\mu^{2}_r\Delta_{\vec{z}} n}\right)\\
\leq &~ \mathcal{O}\left(\left\|(\widetilde{\mat{E}}^Q_1)_\uparrow\mat{P}\mat{P}^\dagger_v-(\widetilde{\mat{E}}^Q_1)_\downarrow\mat{P}(\mat{Q}_\downarrow)^\dagger\mat{Q}_\uparrow\mat{P}^\dagger_v\right\|_2+\left\|(\widetilde{\mat{E}}^Q_2)_\uparrow\mat{P}\mat{P}^\dagger_v-(\widetilde{\mat{E}}^Q_2)_\downarrow\mat{P}(\mat{Q}_\downarrow)^\dagger\mat{Q}_\uparrow\mat{P}^\dagger_v\right\|_2\right)+\mathcal{O}\left(\frac{\alpha^2\log n}{\mu^{2}_r\Delta_{\vec{z}} n}\right)\\
\leq &~ \mathcal{O}\left(\left\|(\widetilde{\mat{E}}^Q_1)_\uparrow\mat{P}\mat{P}^\dagger_v-(\widetilde{\mat{E}}^Q_1)_\downarrow\mat{P}(\mat{Q}_\downarrow)^\dagger\mat{Q}_\uparrow\mat{P}^\dagger_v\right\|_2+\left\|\widetilde{\mat{E}}^Q_2\right\|_2\right)+\mathcal{O}\left(\frac{\alpha^2\log n}{\mu^{2}_r\Delta_{\vec{z}} n}\right)\\
=&~\mathcal{O}\left(\Big\|\left(\widetilde{\mat{E}}^Q_1\mat{P}\mat{P}^\dagger_v\right)_\uparrow-\left(\widetilde{\mat{E}}^Q_1\mat{P}\right)_\downarrow(\mat{Q}_\downarrow)^\dagger\mat{Q}_\uparrow\mat{P}^\dagger_v\Big\|_2\right)+\mathcal{O}\left(\frac{\alpha^2\log n}{\mu^{2}_r\Delta_{\vec{z}} n}\right)\,,
\end{aligned}
\end{equation*}
where the first step follows from multiplying $\mat{P}_v^\dagger (\mat{P}_v^\dagger)^{-1}$ on the right and pulling out $\|(\mat{P}_v^\dagger)^{-1}\|_2$, the second step follows from $\|\mat{P}_v^{-1}\|_2={\cal O}(1)$ by \eqref{eqn:P_orthogonal}, the third step follows from triangle inequality, the fourth step follows from $\|\mat{P}_v^{-1}\|_2={\cal O}(1)$ again and $\|\mat{E}^Q_\uparrow\|_2,\|\mat{E}^Q_\downarrow\|_2\leq \|\mat{E}^Q\|_2$, the fifth step follows from \eqref{eqn:second_Q_perturb}, \eqref{eqn:P_bound}, and $\|\mat{Q}_r\|_2=1$, the sixth step follows by direct calculation, the seventh step follows from $\mat{E}^Q=\widetilde{\mat{E}}_1^Q+\widetilde{\mat{E}}_2^Q$ by \eqref{eq:E_Q_up_down_2}, where $\widetilde{\mat{E}}^Q_1, \widetilde{\mat{E}}^Q_2$ are defined in \eqref{eqn:E_tilde_Q}, the eighth step follows from $\|\mat{P}\|_2={\cal O}(1)$, $\|\mat{P}_v\|_2={\cal O}(1)$, and $\|\mat{Q}_r\|_2=1$, and the last step uses the bound for $\|\widetilde{\mat{E}}^Q_2\|_2$ derived from its definition \eqref{eqn:E_tilde_Q}.

We first consider $\left(\widetilde{\mat{E}}^Q_1\mat{P}\mat{P}^\dagger_v\right)_\uparrow$. Notice that
\begin{align*}
\widetilde{\mat{E}}^Q_1\mat{P}\mat{P}^\dagger_v=&~
\sum^\infty_{k=0}\left(\mat{I}_n-\frac{1}{n}\mat{V}_n(\zdom)\mat{V}_n(\zdom)^\dagger\right)\mat{E}_1^k\mat{E}_2\mat{Q}_r\left(\mat{\Sigma}_r^{-1}\right)^{k+1}\mat{P}^\dagger_v\\
=&~\sum^\infty_{k=0}\left(\mat{I}_n-\frac{1}{n}\mat{V}_n(\zdom)\mat{V}_n(\zdom)^\dagger\right)\mat{E}_1^k\mat{E}_2\mat{Q}_r\cdot \left(\mat{P}^\dagger_v\mat{P}_v\right)\cdot \left(\mat{\Sigma}_r^{-1}\right)^{k+1}\mat{P}^\dagger_v\\
~&+\sum^\infty_{k=0}\left(\mat{I}_n-\frac{1}{n}\mat{V}_n(\zdom)\mat{V}_n(\zdom)^\dagger\right)\mat{E}_1^k\mat{E}_2\mat{Q}_r\left(\mat{I}_r-\mat{P}^\dagger_v\mat{P}_v\right)\left(\mat{\Sigma}_r^{-1}\right)^{k+1}\mat{P}^\dagger_v,
\end{align*}
where the first step follows from the definition of $\widetilde{\mat{E}}^Q_1$.
To bound the second term, for any $k\geq 0$, we have
\begin{align*}
    &\left\|\left(\mat{I}_n-\frac{1}{n}\mat{V}_n(\zdom)\mat{V}_n(\zdom)^\dagger\right)\mat{E}_1^k\mat{E}_2\mat{Q}_r\cdot \left(\mat{I}_r-\mat{P}^\dagger_v\mat{P}_v\right)\cdot \left(\mat{\Sigma}_r^{-1}\right)^{k+1}\mat{P}^\dagger_v\right\|_2\\
    \leq &~ \left\|\mat{I}_n-\frac{1}{n}\mat{V}_n(\zdom)\mat{V}_n(\zdom)^\dagger\right\|_2\cdot \|\mat{E}_1\|_2^k\cdot \|\mat{E}_2\|_2\cdot \|\mat{Q}_r\|_2\cdot \left\|\mat{I}_r-\mat{P}^\dagger_v\mat{P}_v\right\|_2\cdot \left\|\mat{\Sigma}_r^{-1}\right\|_2^{k+1}\cdot \|\mat{P}_v^\dagger\|_2\\
    \leq &~ {\cal O}(1)\cdot (n\mu_{\rm tail})^k\cdot {\cal O}(\alpha\sqrt{n\log n})\cdot 1\cdot \mathcal{O}\left(\frac{1}{\Delta_{\vec{z}} n}\right)\cdot {\cal O}(n\mu_r)^{-(k+1)}\cdot {\cal O}(1)\\
    = &~ {\cal O}\left(\frac{\mu_{\rm tail}}{\mu_r}\right)\cdot \frac{\alpha\sqrt{\log n}}{\mu_r\Delta_{\vec{z}} n^{1.5}}
\end{align*}
where the third step uses \cref{lem:vander-eigen} for the first term, \eqref{eqn:bound_tail} for the second term, \cref{lem:Mecks} for the third term, \eqref{eqn:P_orthogonal} for the fifth and seventh terms, \eqref{eqn:eigenvalue_prop} for the sixth term. Thus,
\begin{align*}
\widetilde{\mat{E}}^Q_1\mat{P}\mat{P}^\dagger_v\leq &~
\sum^\infty_{k=0}\left(\mat{I}_n-\frac{1}{n}\mat{V}_n(\zdom)\mat{V}_n(\zdom)^\dagger\right)\mat{E}_1^k\mat{E}_2\mat{Q}_r\mat{P}^\dagger_v\mat{P}_v\left(\mat{\Sigma}_r^{-1}\right)^{k+1}\mat{P}^\dagger_v\\
~&+\sum_{k=0}^\infty {\cal O}\left(\frac{\mu_{\rm tail}}{\mu_r}\right)\cdot \frac{\alpha\sqrt{\log n}}{\mu_r\Delta_{\vec{z}} n^{1.5}}\\
= &~ \sum^\infty_{k=0}\left(\mat{I}_n-\frac{1}{n}\mat{V}_n(\zdom)\mat{V}_n(\zdom)^\dagger\right)\mat{E}_1^k\mat{E}_2\mat{Q}_r\mat{P}^\dagger_v\mat{P}_v\left(\mat{\Sigma}_r^{-1}\right)^{k+1}\mat{P}^\dagger_v\\
~&+ {\cal O}\left(\frac{\alpha\sqrt{\log n}}{\mu_r\Delta_{\vec{z}} n^{1.5}}\right)\\
=~&\sum^\infty_{k=0}\left(\mat{I}_n-\frac{1}{n}\mat{V}_n(\zdom)\mat{V}_n(\zdom)^\dagger\right)\mat{E}_1^k\mat{E}_2\left(\frac{1}{\sqrt{n}}\mat{V}_n(\zdom)\right)\left(n\diag(\mudom)\right)^{-(k+1)}\\
~&+\sum^\infty_{k=0}\left(\mat{I}_n-\frac{1}{n}\mat{V}_n(\zdom)\mat{V}_n(\zdom)^\dagger\right)\mat{E}_1^k\mat{E}_2\left(\frac{1}{\sqrt{n}}\mat{V}_n(\zdom)\right)\\
~&\cdot\left(\mat{P}_v\left(\mat{\Sigma}_r^{-1}\right)^{k+1}\mat{P}^\dagger_v-\left(n\diag(\mudom)\right)^{-(k+1)}\right)\\
~&+\mathcal{O}\left(\frac{\alpha\sqrt{\log n}}{\mu_r\Delta_{\vec{z}} n^{1.5}}\right)\,.
\end{align*}
To bound the second term in the above equation, for any $k\geq 0$, we have
\begin{align*}
    &\left\|\left(\mat{I}_n-\frac{1}{n}\mat{V}_n(\zdom)\mat{V}_n(\zdom)^\dagger\right)\mat{E}_1^k\mat{E}_2\left(\frac{1}{\sqrt{n}}\mat{V}_n(\zdom)\right)\left(\mat{P}_v\left(\mat{\Sigma}_r^{-1}\right)^{k+1}\mat{P}^\dagger_v-\left(n\diag(\mudom)\right)^{-(k+1)}\right)\right\|_2\\
    \leq &~ {\cal O}(1)\cdot \|\mat{E}_1\|_2^k\cdot \|\mat{E}_2\|_2 \cdot \left\|\mat{P}_v\left(\mat{\Sigma}_r^{-1}\right)^{k+1}\mat{P}^\dagger_v-\left(n\diag(\mudom)\right)^{-(k+1)}\right\|_2\\
    \leq &~ (n\mu_{\rm tail})^k\cdot {\cal O}(\alpha\sqrt{n\log n})\cdot \mathcal{O}\left(\frac{1}{\mu_r n}\right)^{k+2}\frac{1}{\Delta_{\vec{z}} }\\
    = &~ {\cal O}\left(\frac{\mu_{\rm tail}}{\mu_r}\right)^k \frac{\alpha\sqrt{\log n}}{\mu_r^2\Delta_{\vec{z}} n^{1.5}}
\end{align*}
where the first step uses \cref{lem:vander-eigen} to bound $\|\mat{I}_n-\frac{1}{n}\mat{V}_n(\zdom)\mat{V}_n(\zdom)^\dagger\|_2$ and $\|\frac{1}{\sqrt{n}}\mat{V}_n(\zdom)\|_2$, the second step uses \eqref{eqn:bound_tail}, \cref{lem:Mecks}, \eqref{eqn:Q_up_simplify_2nd} to bound the three terms, respectively. Therefore, we obtain that
\begin{equation}\label{eqn:EPPV}
\begin{aligned}
\widetilde{\mat{E}}^Q_1\mat{P}\mat{P}^\dagger_v\leq &~ \sum^\infty_{k=0}\left(\mat{I}_n-\frac{1}{n}\mat{V}_n(\zdom)\mat{V}_n(\zdom)^\dagger\right)\mat{E}_1^k\mat{E}_2\left(\frac{1}{\sqrt{n}}\mat{V}_n(\zdom)\right)\left(n\diag(\mudom)\right)^{-(k+1)}\\
~&+\sum_{k=0}^\infty {\cal O}\left(\frac{\mu_{\rm tail}}{\mu_r}\right)^k \frac{\alpha\sqrt{\log n}}{\mu_r^2\Delta_{\vec{z}} n^{1.5}} + \mathcal{O}\left(\frac{\alpha\sqrt{\log n}}{\mu_r\Delta_{\vec{z}} n^{1.5}}\right)\\
= &~ \sum^\infty_{k=0}\left(\mat{I}_n-\frac{1}{n}\mat{V}_n(\zdom)\mat{V}_n(\zdom)^\dagger\right)\mat{E}_1^k\mat{E}_2\left(\frac{1}{\sqrt{n}}\mat{V}_n(\zdom)\right)\left(n\diag(\mudom)\right)^{-(k+1)}\\
~&+ \mathcal{O}\left(\frac{\alpha\sqrt{\log n}}{\mu_r^2\Delta_{\vec{z}} n^{1.5}}\right)\,.
\end{aligned}
\end{equation}

Similarly, we also have
\begin{equation}\label{eqn:EPPV_2}
\begin{aligned}
&\left(\widetilde{\mat{E}}^Q_1\mat{P}\right)_\downarrow(\mat{Q}_\downarrow)^\dagger\mat{Q}_\uparrow\mat{P}^\dagger_v\\
= &~ \left(\widetilde{\mat{E}}^Q_1\mat{P}\right)_\downarrow\cdot \mat{P}^\dagger_v\mat{P}_v\cdot (\mat{Q}_\downarrow)^\dagger\mat{Q}_\uparrow\mat{P}^\dagger_v + \left(\widetilde{\mat{E}}^Q_1\mat{P}\right)_\downarrow\left(\mat{I}_r-\mat{P}_v^\dagger \mat{P}_v\right)(\mat{Q}_\downarrow)^\dagger\mat{Q}_\uparrow\mat{P}^\dagger_v\\
\leq &~ \left(\widetilde{\mat{E}}^Q_1\mat{P}\right)_\downarrow\cdot \mat{P}^\dagger_v\mat{P}_v\cdot (\mat{Q}_\downarrow)^\dagger\mat{Q}_\uparrow\mat{P}^\dagger_v + {\cal O}\left(\frac{\alpha\sqrt{\log n}}{\mu_r\sqrt{n}}\right)\cdot \mathcal{O}\left(\frac{1}{\Delta_{\vec{z}} n}\right)\\
= &~ \left(\widetilde{\mat{E}}^Q_1\mat{P}\mat{P}^\dagger_v\right)_\downarrow\mat{P}_v(\mat{Q}_\downarrow)^\dagger\mat{Q}_\uparrow\mat{P}^\dagger_v+\mathcal{O}\left(\frac{\alpha\sqrt{\log n}}{\mu_r\Delta_{\vec{z}} n^{1.5}}\right)\\
\leq &~ \left(\widetilde{\mat{E}}^Q_1\mat{P}\mat{P}^\dagger_v\right)_\downarrow \diag(\zdom^{-1}) +\left\|\widetilde{\mat{E}}^Q_1\mat{P}\mat{P}^\dagger_v\right\|_2\cdot {\cal O}\left(\frac{1}{\Delta_{\vec{z}}n}\right) + \mathcal{O}\left(\frac{\alpha\sqrt{\log n}}{\mu_r\Delta_{\vec{z}} n^{1.5}}\right)\\
= &~ \left(\widetilde{\mat{E}}^Q_1\mat{P}\mat{P}^\dagger_v\right)_\downarrow \diag(\zdom^{-1}) + \mathcal{O}\left(\frac{\alpha\sqrt{\log n}}{\mu_r\Delta_{\vec{z}} n^{1.5}}\right)\\
\leq &~\left(\sum^\infty_{k=0}\left(\mat{I}_n-\frac{1}{n}\mat{V}_n(\zdom)\mat{V}_n(\zdom)^\dagger\right)\mat{E}_1^k\mat{E}_2\left(\frac{1}{\sqrt{n}}\mat{V}_n(\zdom)\right)\left(n\diag(\mudom)\right)^{-(k+1)}\right)_\downarrow\diag(\zdom^{-1})\\
~&+\left\|\diag(\zdom^{-1})\right\|_2\cdot \mathcal{O}\left(\frac{\alpha\sqrt{\log n}}{\mu_r^2\Delta_{\vec{z}} n^{1.5}}\right)+\mathcal{O}\left(\frac{\alpha\sqrt{\log n}}{\mu_r\Delta_{\vec{z}} n^{1.5}}\right)\\
= &~ \left(\sum^\infty_{k=0}\left(\mat{I}_n-\frac{1}{n}\mat{V}_n(\zdom)\mat{V}_n(\zdom)^\dagger\right)\mat{E}_1^k\mat{E}_2\left(\frac{1}{\sqrt{n}}\mat{V}_n(\zdom)\right)\left(n\diag(\mudom)\right)^{-(k+1)}\right)_\downarrow\diag(\zdom^{-1})\\
~&+ \mathcal{O}\left(\frac{\alpha\sqrt{\log n}}{\mu^{2}_r\Delta_{\vec{z}} n^{1.5}}\right)\,,
\end{aligned}
\end{equation}
where the first step follows from inserting $\mat{P}_v^\dagger \mat{P}_v$ in the middle, the second step follows from \eqref{eq:bound_E1_Q_wt} for $\|\widetilde{\mat{E}}^Q_1\|_2$ and \eqref{eqn:P_orthogonal} for $\|\mat{I}_r-\mat{P}_v^\dagger \mat{P}_v\|_2$, the third step follows from re-grouping terms, the fourth step follows from \eqref{eqn:Q_up_simplify_1st}, the fifth step follows from \eqref{eq:bound_E1_Q_wt}, the sixth step follows from \eqref{eqn:EPPV}, and the last step follows from $\|\diag(\zdom^{-1})\|_2=1$.%

We also notice the following bound to get rid of some terms involving $\frac{1}{n}\mat{V}_n(\zdom)\mat{V}_n(\zdom)^\dagger$ and $\mat{E}^k_1$ for $k\geq 1$:
\begin{equation}\label{eqn:V_E_vanish}
\begin{aligned}
&\left\|\sum^\infty_{k=1}\frac{1}{n}\mat{V}_n(\zdom)\mat{V}_n(\zdom)^\dagger\mat{E}_1^k\mat{E}_2\left(\frac{1}{\sqrt{n}}\mat{V}_n(\zdom)\right)\left(n\diag(\mudom)\right)^{-(k+1)}\right\|_2\\
\leq &~\sum^\infty_{k=1}\left\|\frac{1}{\sqrt{n}}\mat{V}_n(\zdom)\right\|_2\left\|\frac{1}{\sqrt{n}}\mat{V}_n(\zdom)^\dagger\mat{E}_1\right\|_2\left\|\mat{E}_1\right\|_2^{k-1}\left\|\mat{E}_2\right\|_2\left\|\frac{1}{\sqrt{n}}\mat{V}_n(\zdom)\right\|_2\left\|\left(n\diag(\mudom)\right)^{-1}\right\|_2^{k+1}\\
\leq &~\sum^\infty_{k=1}\mathcal{O}\left(\frac{\mu_{\rm tail}}{\Delta_{\vec{z}}}\right)\cdot (n\mu_{\rm tail})^{k-1}\cdot \mathcal{O}\left(\alpha\sqrt{n\log n}\right)\cdot \mathcal{O}\left(\frac{1}{(n\mu_r)^{k+1}}\right)\\
= &~ \sum_{k=0}^\infty \left(\frac{\mu_{\rm tail}}{\mu_r}\right)^k \cdot {\cal O}\left(\frac{\mu_{\rm tail}\cdot \alpha\sqrt{\log n}}{\mu_r^2 \Delta_{\vec{z}}n^{1.5}}\right)\\
=&~ \mathcal{O}\left(\frac{\alpha\sqrt{\log n}}{\mu_r\Delta_{\vec{z}} n^{1.5}}\right)\,,
\end{aligned}
\end{equation}
where in the third step, we use  \cref{cor:V_dagger_V_dom} to bound $\left\|\frac{1}{\sqrt{n}}\mat{V}_n(\zdom)\right\|_2$, use \cref{clm:bound_V_dom_E1} to bound  $\left\|\frac{1}{\sqrt{n}}\mat{V}_n(\zdom)^\dagger\mat{E}_1\right\|_2$, use \eqref{eqn:bound_tail} to bound $\left\|\mat{E}^{k-1}_1\right\|_2$, use \cref{lem:Mecks} to bound $\|\mat{E}_2\|_2$, and use the fact that $\|\diag(\mu_{\rm dom})^{-1}\|_2=\mu_r^{-1}$.

Plugging \eqref{eqn:V_E_vanish} into \eqref{eqn:EPPV} and \eqref{eqn:EPPV_2}, we have
\begin{align*}
\widetilde{\mat{E}}^Q_1\mat{P}\mat{P}^\dagger_v=
&~\sum^\infty_{k=0}\mat{E}_1^k\mat{E}_2\left(\frac{1}{\sqrt{n}}\mat{V}_n(\zdom)\right)\left(n\diag(\mudom)\right)^{-(k+1)}\\
~&- \frac{1}{n}\mat{V}_n(\zdom)\mat{V}_n(\zdom)^\dagger \mat{E}_2\left(\frac{1}{\sqrt{n}}\mat{V}_n(\zdom)\right) \left(n\diag(\mudom)\right)^{-1}\\
~&+\mathcal{O}\left(\frac{\alpha\sqrt{\log n}}{\mu^{2}_r\Delta_{\vec{z}} n^{1.5}}\right)\,,
\end{align*}    
and
\begin{align*}
&\left(\widetilde{\mat{E}}^Q_1\mat{P}\right)_\downarrow(\mat{Q}_\downarrow)^\dagger\mat{Q}_\uparrow\mat{P}^\dagger_v\\
=&~\left(\sum^\infty_{k=0}\mat{E}_1^k\mat{E}_2\left(\frac{1}{\sqrt{n}}\mat{V}_n(\zdom)\right)\left(n\diag(\mudom)\right)^{-(k+1)}\right)_\downarrow\diag(\zdom^{-1})\\
~&-\left(\frac{1}{n}\mat{V}_n(\zdom)\mat{V}_n(\zdom)^\dagger\mat{E}_2\left(\frac{1}{\sqrt{n}}\mat{V}_n(\zdom)\right)\left(n\diag(\mudom)\right)^{-1}\right)_\downarrow\diag(\zdom^{-1})\\
~&+\mathcal{O}\left(\frac{\alpha\sqrt{\log n}}{\mu^{2}_r\Delta_{\vec{z}} n^{1.5}}\right)\,,
\end{align*}
where the $\frac{1}{n}\mat{V}_n(\zdom)\mat{V}_n(\zdom)^\dagger$ terms come from the $k=0$ case in \eqref{eqn:EPPV} and \eqref{eqn:EPPV_2}.

The proof of the lemma is completed.
\end{proof}

\finegrainedI*
\begin{proof}
We analyze \eqref{eqn:EPPV_1_all} and \eqref{eqn:EPPV_2_all} column-by-column. %

For convenience, we define $\lambda_i\in \C$ and $\mat{F}_i\in \C^{n\times n}$ for each $1\leq i \leq r$ as follows:
\begin{align}
\lambda_i:=&~ n\mu_i,\\
\mat{F}_i:=&~\frac{1}{\lambda_i}\left(\sum^\infty_{k=0}\left(\frac{\mat{E}_1}{\lambda_i}\right)^k-\frac{1}{n}\mat{V}_n(\zdom)\mat{V}_n(\zdom)^\dagger\right)\mat{E}_2\,.
\end{align}
Observe that 
\begin{align*}
    &\mat{E}_1^k\mat{E}_2\left(\frac{1}{\sqrt{n}}\mat{V}_n(\zdom)\right)\left(n\diag(\mudom)\right)^{-(k+1)}\\
    = &~ \mat{E}_1^k\mat{E}_2 \cdot \left(\frac{1}{\sqrt{n}}\begin{bmatrix}
        \vec{v}_n(z_1) & \cdots & \vec{v}_n(z_r)
    \end{bmatrix}
    \cdot \diag(\vec{\lambda}^{-(k+1)})\right)\\
    = &~ \mat{E}_1^k\mat{E}_2 \cdot \frac{1}{\sqrt{n}}\left[
        \frac{\vec{v}_n(z_1)}{\lambda_1^{k+1}} \quad \cdots \quad \frac{\vec{v}_n(z_r)}{\lambda_r^{k+1}}
    \right]\\
    = &~ \left[\frac{1}{\lambda_1}\left(\frac{\mat{E}_1}{\lambda_1}\right)^k\mat{E}_2\frac{\vec{v}_n(z_1)}{\sqrt{n}} \quad \cdots \quad \frac{1}{\lambda_r}\left(\frac{\mat{E}_1}{\lambda_r}\right)^k\mat{E}_2\frac{\vec{v}_n(z_r)}{\sqrt{n}}\right]\,. 
\end{align*}
where
\begin{align*}
    \vec{v}_n(z_i):=\begin{bmatrix}1 & z_i & z_i^2 & \cdots & z_i^{n-1}\end{bmatrix}^\top\in \C^n\,.
\end{align*}
Similarly,
\begin{align*}
    &\frac{1}{n}\mat{V}_n(\zdom)\mat{V}_n(\zdom)^\dagger \mat{E}_2\left(\frac{1}{\sqrt{n}}\mat{V}_n(\zdom)\right) \left(n\diag(\mudom)\right)^{-1} \\
    = &~\frac{1}{n}\mat{V}_n(\zdom)\mat{V}_n(\zdom)^\dagger \mat{E}_2 \left[\frac{1}{\lambda_1}\frac{\vec{v}_n(z_1)}{\sqrt{n}}\quad \cdots \quad \frac{1}{\lambda_r}\frac{\vec{v}_n(z_r)}{\sqrt{n}}\right]\,.
\end{align*}
Combining the above two equations together and summing over $k$, we obtain that
\begin{align*}
    &\sum^\infty_{k=0}\mat{E}_1^k\mat{E}_2\left(\frac{1}{\sqrt{n}}\mat{V}_n(\zdom)\right)\left(n\diag(\mudom)\right)^{-(k+1)}\\
    ~&- \frac{1}{n}\mat{V}_n(\zdom)\mat{V}_n(\zdom)^\dagger \mat{E}_2\left(\frac{1}{\sqrt{n}}\mat{V}_n(\zdom)\right) \left(n\diag(\mudom)\right)^{-1}\\
    = &~ \left[\frac{\mat{F}_1\vec{v}_n(z_1)}{\sqrt{n}}\quad \frac{\mat{F}_2\vec{v}_n(z_2)}{\sqrt{n}}\quad \cdots\quad \frac{\mat{F}_r\vec{v}_n(z_r)}{\sqrt{n}}\right]\,.
\end{align*}
By \eqref{eqn:EPPV_1_all} and \eqref{eqn:EPPV_2_all}, we have
\begin{equation}\label{eqn:E_decomposition}
\begin{aligned}
&\left(\widetilde{\mat{E}}^Q_1\mat{P}\mat{P}^\dagger_v\right)_\uparrow-\left(\widetilde{\mat{E}}^Q_1\mat{P}\right)_\downarrow(\mat{Q}_\downarrow)^\dagger\mat{Q}_\uparrow\mat{P}^\dagger_v\\
=&~\left[\frac{\mat{F}_1\vec{v}_n(z_1)}{\sqrt{n}}\quad \frac{\mat{F}_2\vec{v}_n(z_2)}{\sqrt{n}}\quad \cdots\quad \frac{\mat{F}_r\vec{v}_n(z_r)}{\sqrt{n}}\right]_{\uparrow}\\
~&-\left[\frac{\mat{F}_1\vec{v}_n(z_1)}{\sqrt{n}}\quad \frac{\mat{F}_2\vec{v}_n(z_2)}{\sqrt{n}}\quad \cdots\quad \frac{\mat{F}_r\vec{v}_n(z_r)}{\sqrt{n}}\right]_\downarrow\diag(\zdom^{-1})\\
~&+\mathcal{O}\left(\frac{\alpha\sqrt{\log n}}{\mu^{2}_r\Delta_{\vec{z}} n^{1.5}}\right)  
\end{aligned}
\end{equation}

Without loss of generality, we only consider the first column of \eqref{eqn:E_decomposition}:
\begin{equation}\label{eq:E_decomp_1st_col}
\begin{aligned}
    &\left(\frac{\mat{F}_1\vec{v}_n(z_1)}{\sqrt{n}}\right)_\uparrow - \left(\frac{\mat{F}_1\vec{v}_n(z_1)}{\sqrt{n}}\right)_\downarrow z_1^{-1}\\
    =&~ \left(\sum^\infty_{k=0}\left(\frac{\mat{E}_1}{\lambda_1}\right)^k\frac{\mat{E}_2\vec{v}_n(z_1)}{\lambda_1\sqrt{n}}\right)_\uparrow - \left(\sum^\infty_{k=0}\left(\frac{\mat{E}_1}{\lambda_1}\right)^k\frac{\mat{E}_2\vec{v}_n(z_1)}{\lambda_1\sqrt{n}}z_i^{-1}\right)_\downarrow\\
    ~&- \left(\frac{1}{n}\mat{V}_n(\zdom)\mat{V}_n(\zdom)^\dagger\frac{\mat{E}_2\vec{v}_n(z_1)}{\lambda_1\sqrt{n}}\right)_\uparrow  + \left(\frac{1}{n}\mat{V}_n(\zdom)\mat{V}_n(\zdom)^\dagger\frac{\mat{E}_2\vec{v}_n(z_1)}{\lambda_1\sqrt{n}}z_1^{-1}\right)_\downarrow 
    \,.
\end{aligned}
\end{equation}
Recall
\[
\mat{E}_{2}=\begin{bmatrix}
        E_0 & \overline{E_1} & \overline{E_2} & \cdots & \overline{E_{n-1}} \\
        E_1 & E_0 & \overline{E_1} & \cdots & \overline{E_{n-2}} \\
        E_2 & E_1 & E_0 & \cdots & \overline{E_{n-3}} \\ 
        \vdots & \vdots & \vdots & \ddots & \vdots \\
        E_{n-1} & E_{n-2} & E_{n-3} & \cdots & E_0
    \end{bmatrix}.
\]
Define vectors
\[
\vec{p}_0:=\frac{\mat{E}_2\vec{v}_n(z_1)}{\sqrt{n}\lambda_1},\quad \vec{q}_0:=\frac{\mat{E}_2\vec{v}_n(z_1)}{\sqrt{n}\lambda_1}z^{-1}_1\,.
\]
Because $\lambda_1=n\mu_{1}\geq n\mu_r$ and $\|\vec{v}_n(z_1)\|_\infty=1$, we have
\begin{align}\label{eq:p0_inf_norm}
    \|\vec{p}_0\|_\infty\leq  \frac{\|\mat{E}_2\|_1}{\sqrt{n}\lambda_1}\leq \frac{n\cdot {\cal O}(\alpha\sqrt{\log n})}{\sqrt{n}\cdot \mu_r n}= {\cal O}\left(\frac{\alpha\sqrt{\log n}}{\mu_r \sqrt{n}}\right)\,,
\end{align}
where the second step follows from $|E_i|={\cal O}(\alpha\sqrt{\log n})$. We also have
$\|\vec{q}_0\|_\infty=\mathcal{O}(\alpha\sqrt{\log n}/(\mu_r\sqrt{n}))$ since $|z_1|=1$. Using the structure of $\mat{E}_2$, we obtain
\begin{equation}\label{eqn:p_q_0}
\left\|\left(\vec{p}_0\right)_\uparrow-\left(\vec{q}_0\right)_\downarrow \right\|_\infty=\left\|\frac{1}{\sqrt{n}\lambda_1}\begin{bmatrix}
        -E_1z^{-1}_1+\overline{E_{n-1}}z^{n-1}_1\\
        -E_2z^{-1}_1+\overline{E_{n-2}}z^{n-1}_1\\
        \vdots\\
        -E_{n-1}z^{-1}_1+\overline{E_{1}}z^{n-1}_1\\
    \end{bmatrix}\right\|_\infty=\mathcal{O}\left(\frac{\alpha}{\mu_r n^{1.5}}\right)\,,
\end{equation}
where we use $|E_i|=\mathcal{O}(\alpha\sqrt{\log n})$ and $\lambda_1\geq \mu_r n$ in the last equality.

We first consider the terms in \eqref{eq:E_decomp_1st_col} that involves the second term of $\mat{F}_1$:
\[
\left(\frac{1}{n}\mat{V}_n(\zdom)\mat{V}_n(\zdom)^\dagger\vec{p}_0\right)_\uparrow-\left(\frac{1}{n}\mat{V}_n(\zdom)\mat{V}_n(\zdom)^\dagger\vec{q}_0\right)_\downarrow\,.
\]
Note that
\begin{align*}
    \left(\frac{1}{n} \mat{V}_n(\zdom)\mat{V}_n(\zdom)^\dagger\right)_{i,j} = \frac{1}{n} \sum_{l=1}^{r} z_l^{i-j}\quad\forall ~1\leq i,j\leq n\,,
\end{align*}
which implies that $\frac{1}{n}\mat{V}_n(\zdom)\mat{V}_n(\zdom)^\dagger$ is also a Toeplize matrix (since the $(i,j)$-th entry only depends on $i-j$), and 
\begin{align}\label{eq:V_V_dagger_inf_norm}
    \left\|\frac{1}{n}\mat{V}_n(\zdom)\mat{V}_n(\zdom)^\dagger\right\|_{\text{entry},\infty}=\frac{r}{n}\,,
\end{align}
where $\left\|A\right\|_{\text{entry},\infty}=\max_{i,j}|A_{i,j}|$.
Thus, for any $1\leq i\leq n-1$, 
\begin{align*}
    &\left(\Big(\frac{1}{n}\mat{V}_n(\zdom)\mat{V}_n(\zdom)^\dagger\vec{p}_0\Big)_\uparrow-\Big(\frac{1}{n}\mat{V}_n(\zdom)\mat{V}_n(\zdom)^\dagger\vec{q}_0\Big)_\downarrow\right)_i\\
    = &~ \sum_{j=1}^n \Big(\frac{1}{n}\mat{V}_n(\zdom)\mat{V}_n(\zdom)^\dagger\Big)_{i,j}(\vec{p}_0)_j- \Big(\frac{1}{n}\mat{V}_n(\zdom)\mat{V}_n(\zdom)^\dagger\Big)_{i+1,j}(\vec{q}_0)_j\\
    = &~ \sum_{j=2}^{n-1} \Big(\frac{1}{n}\mat{V}_n(\zdom)\mat{V}_n(\zdom)^\dagger\Big)_{i,j} \left((\vec{p}_0)_\uparrow - (\vec{q}_0)_\downarrow\right)_j + \Big(\frac{1}{n}\mat{V}_n(\zdom)\mat{V}_n(\zdom)^\dagger\Big)_{i,1}((\vec{p}_0)_\uparrow)_1\\
    ~&- \Big(\frac{1}{n}\mat{V}_n(\zdom)\mat{V}_n(\zdom)^\dagger\Big)_{i+1,n}((\vec{q}_0)_\downarrow)_{n-1}\,,
\end{align*}
where the second step follows from the Toeplize structure of the matrix. Therefore, by triangle inequality,
\begin{equation}\label{eqn:VV}
\begin{aligned}
&\left\|\left(\frac{1}{n}\mat{V}_n(\zdom)\mat{V}_n(\zdom)^\dagger\vec{p}_0\right)_\uparrow-\left(\frac{1}{n}\mat{V}_n(\zdom)\mat{V}_n(\zdom)^\dagger\vec{q}_0\right)_\downarrow\right\|_\infty\\
\leq &~\left\|\frac{1}{n}\mat{V}_n(\zdom)\mat{V}_n(\zdom)^\dagger\right\|_{\text{entry},\infty}\cdot \left((n-2)\left\|\left(\vec{p}_0\right)_\uparrow-\left(\vec{q}_0\right)_\downarrow\right\|_\infty+\left\|\vec{p}_0\right\|_\infty+\left\|\vec{q}_0\right\|_\infty\right)\\
=&~\frac{r}{n}\cdot \mathcal{O}\left(n\cdot \frac{\alpha}{\mu_r n^{1.5}} + \frac{\alpha}{\mu_r\sqrt{n}}\right) = \mathcal{O}\left(\frac{r\alpha}{\mu_r  n^{1.5}}\right)\,.
\end{aligned}
\end{equation}
where the second step follows from \eqref{eq:p0_inf_norm}-\eqref{eq:V_V_dagger_inf_norm}.

Now, we consider the first term of $\mat{F}_1$. Define
\[
\vec{p}_k:=\left(\frac{\mat{E}_1}{\lambda_1}\right)\vec{p}_{k-1},\quad \vec{q}_k:=\left(\frac{\mat{E}_1}{\lambda_1}\right)\vec{q}_{k-1}
\]
for $k\geq 1$. We first notice 
\begin{align}\label{eq:E1_entry_inf_norm}
    \left\|\frac{\mat{E}_1}{\lambda_1}\right\|_{\text{entry},\infty} = \frac{1}{\lambda_1}\max_{i,j} \sum_{l=r+1}^d |\mu_lz_l^{i-j}|\leq \frac{\mu_{\rm tail}}{\lambda_1}={\cal O}\left(\frac{\mu_{\rm tail}}{\mu_r n}\right)\,,
\end{align}
where the second step follows from the definition of $\mu_{\rm tail}$.
Then, we have
\[
\left\|\vec{p}_{k+1}\right\|_\infty= \mathcal{O}\left(\frac{\mu_{\rm tail}}{\mu_r}\right) \left\|\vec{p}_{k}\right\|_\infty,\quad \left\|\vec{q}_{k+1}\right\|_\infty= \mathcal{O}\left(\frac{\mu_{\rm tail}}{\mu_r}\right) \left\|\vec{p}_{k}\right\|_\infty\,.
\]
Furthermore, because $\mat{E}_1$ is also a Toeplize matrix, similar to \eqref{eqn:VV}, we have
\[
\begin{aligned}
&\left\|\left(\mat{p}_{k+1}\right)_\uparrow-\left(\mat{q}_{k+1}\right)_\downarrow\right\|_\infty\\
= &~ \left\|\left(\frac{\mat{E}_1}{\lambda_1}\mat{p}_{k}\right)_\uparrow-\left(\frac{\mat{E}_1}{\lambda_1}\mat{q}_{k}\right)_\downarrow\right\|_\infty\\
\leq &~\|\mat{E}_1/\lambda_1\|_{\text{entry},\infty}\cdot \left((n-2)\left\|\left(\mat{p}_{k}\right)_\uparrow-\left(\mat{q}_{k}\right)_\downarrow\right\|_\infty+\left\|\vec{p}_k\right\|_\infty+\left\|\vec{q}_k\right\|_\infty\right)\\
=&~\mathcal{O}\left(\frac{\mu_{\rm tail}}{\mu_r}\left\|\left(\mat{p}_{k}\right)_\uparrow-\left(\mat{q}_{k}\right)_\downarrow\right\|_\infty+\frac{\mu_{\rm tail}}{\mu_r n}(\left\|\vec{p}_k\right\|_\infty+\left\|\vec{q}_k\right\|_\infty)\right)\,,
\end{aligned}
\]
where the second step follows from \eqref{eq:E1_entry_inf_norm}.
The above two equations imply that
\[
\begin{aligned}
&n\left\|\left(\mat{p}_{k+1}\right)_\uparrow-\left(\mat{q}_{k+1}\right)_\downarrow\right\|_\infty+\left\|\vec{p}_{k+1}\right\|_\infty+\left\|\vec{q}_{k+1}\right\|_\infty\\
=&~\mathcal{O}\left(\frac{\mu_{\rm tail}}{\mu_r}\right)\cdot \left(n\left\|\left(\mat{p}_{k}\right)_\uparrow-\left(\mat{q}_{k}\right)_\downarrow\right\|_\infty+(\left\|\vec{p}_k\right\|_\infty+\left\|\vec{q}_k\right\|_\infty)\right)\,.
\end{aligned}
\]
Plugging in the values of $k=0$ in \eqref{eq:p0_inf_norm} and \eqref{eqn:p_q_0}, we obtain for any $k>0$,
\begin{align*}
n\left\|\left(\mat{p}_{k}\right)_\uparrow-\left(\mat{q}_{k}\right)_\downarrow\right\|_\infty + \|\vec{p}_k\|_\infty + \|\vec{q}_k\|_\infty\leq \mathcal{O}\left(\frac{\mu_{\rm tail}}{\mu_r}\right)^k\cdot \frac{\alpha}{\mu_r \sqrt{n}}\,,
\end{align*}
which implies that
\begin{equation}\label{eqn:p_k_bound}
    \left\|\left(\mat{p}_{k}\right)_\uparrow-\left(\mat{q}_{k}\right)_\downarrow\right\|_\infty \leq \mathcal{O}\left(\frac{\mu_{\rm tail}}{\mu_r}\right)^k\cdot \frac{\alpha}{\mu_r n^{1.5}}~~~\forall k\geq 0\,.
\end{equation}
This gives a bound for each term in the summation of the first term of $\mat{F}_i$.

Therefore, we get that
\begin{equation}\label{eq:bound_F1_1st_column}
\begin{aligned}
&\left\|\left(\frac{\mat{F}_1\vec{v}_n(z_1)}{\sqrt{n}\lambda_1}\right)_\uparrow-\left(\frac{\mat{F}_1\vec{v}_n(z_1)}{\sqrt{n}\lambda_1}\right)_\downarrow z^{-1}_1\right\|_2\\
\leq &~\sqrt{n}\left\|\left(\frac{\mat{F}_1\vec{v}_n(z_1)}{\sqrt{n}\lambda_1}\right)_\uparrow-\left(\frac{\mat{F}_1\vec{v}_n(z_1)}{\sqrt{n}\lambda_1}\right)_\downarrow z^{-1}_1\right\|_\infty\\
\leq &~\sqrt{n}\sum_{k=0}^\infty \left\|(\vec{p}_k)_\uparrow - (\vec{q}_k)_\downarrow\right\|_\infty + \sqrt{n}\left\|\left(\frac{1}{n}\mat{V}_n(\zdom)\mat{V}_n(\zdom)^\dagger\vec{p}_0\right)_\uparrow-\left(\frac{1}{n}\mat{V}_n(\zdom)\mat{V}_n(\zdom)^\dagger\vec{q}_0\right)_\downarrow\right\|_\infty\\
\leq &~ {\cal O}\left( \frac{\alpha}{\mu_r n}\right) + \mathcal{O}\left(\frac{r\alpha}{\mu_r  n}\right)= ~\mathcal{O}\left(\frac{r\alpha}{\mu_r n}\right)\,,
\end{aligned}
\end{equation}
where the first step follows from $\|x\|_2\leq \sqrt{n}\|x\|_\infty$ for any $x\in \C^n$, the second step follows from \eqref{eq:E_decomp_1st_col}, and the third step follows from \eqref{eqn:VV} and \eqref{eqn:p_k_bound}.

Performing the same computation as \eqref{eq:E_decomp_1st_col} to \eqref{eq:bound_F1_1st_column} for every $1\leq i\leq r$, we derive
\[
\max_{1\leq i\leq r}\left\|\left(\frac{\mat{F}_i\vec{v}_n(z_i)}{\sqrt{n}\lambda_i}\right)_\uparrow-\left(\frac{\mat{F}_i\vec{v}_n(z_i)}{\sqrt{n}\lambda_i}\right)_\downarrow z^{-1}_i\right\|_2=\mathcal{O}\left(\frac{r\alpha}{\mu_r  n}\right)\,.
\]
It implies that
\begin{align*}
&\Big\|\left(\widetilde{\mat{E}}^Q_1\mat{P}\mat{P}^\dagger_v\right)_\uparrow-\left(\widetilde{\mat{E}}^Q_1\mat{P}\right)_\downarrow(\mat{Q}_\downarrow)^\dagger\mat{Q}_\uparrow\mat{P}^\dagger_v\Big\|_2\\
\leq &~ \left\|\left[\left(\frac{\mat{F}_1\vec{v}_n(z_1)}{\sqrt{n}}\right)_\uparrow -\left(\frac{\mat{F}_1\vec{v}_n(z_1)}{\sqrt{n}}\right)_\downarrow z_1^{-1} \quad \cdots\quad \left(\frac{\mat{F}_r\vec{v}_n(z_r)}{\sqrt{n}}\right)_\uparrow -\left(\frac{\mat{F}_r\vec{v}_n(z_r)}{\sqrt{n}}\right)_\downarrow z_r^{-1} \right]\right\|_2\\
~&+\mathcal{O}\left(\frac{\alpha\sqrt{\log n}}{\mu^{2}_r\Delta_{\vec{z}} n^{1.5}}\right)\\
\leq &~\sqrt{r}\max_i\left\|\left(\frac{\mat{F}_i\vec{v}_n(z_i)}{\sqrt{n}\lambda_i}\right)_\uparrow-\left(\frac{\mat{F}_i\vec{v}_n(z_i)}{\sqrt{n}\lambda_i}\right)_\downarrow z^{-1}_i\right\|_2 + \mathcal{O}\left(\frac{\alpha\sqrt{\log n}}{\mu^{2}_r\Delta_{\vec{z}} n^{1.5}}\right)\\
= &~ \mathcal{O}\left(\frac{r^{1.5}\alpha}{\mu_r n}+\frac{\alpha\sqrt{\log n}}{\mu^{2}_r\Delta_{\vec{z}} n^{1.5}}\right)\,,%
\end{align*}
where the first step follows from \eqref{eqn:E_decomposition}, the second step follows from $\|A\|_2\leq \sqrt{r}\max_{i\in [n]}\|A_i\|_2$ for any $A\in \C^{n\times r}$, the last step follows from \eqref{eq:bound_F1_1st_column}.

The lemma is then proved.
\end{proof}

\subsubsection{Establishing the second equation in \texorpdfstring{\cref{eqn:key_equality}}{Eq}}\label{sec:defer_key_eq_2nd}

\decomposeII*

\begin{proof}
For simplicity, we mainly consider the first term $\mat{Q}_\downarrow^\dagger\mat{E}^Q_\uparrow$. Another term $\mat{Q}_\downarrow^\dagger\mat{E}^Q_\downarrow \mat{Q}_\downarrow^\dagger \mat{Q}_\uparrow$ can be bounded similarly. 

First, we notice that %
\[
\begin{aligned}
\mat{Q}_\downarrow^\dagger \mat{E}_\uparrow^Q = &~\mat{P}^\dagger_v\left(\left(\frac{1}{\sqrt{n}}\mat{V}_n(\zdom)\right)_{\downarrow}\right)^\dagger\mat{E}^Q_{\uparrow}\\
=&~\mat{P}^\dagger_v\diag(\zdom)^{\dagger}\left(\left(\frac{1}{\sqrt{n}}\mat{V}_n(\zdom)\right)_{\uparrow}\right)^\dagger\mat{E}^Q_{\uparrow}\\
\leq &~\mat{P}^\dagger_v\diag(\zdom)^{\dagger}\mat{P}_v\mat{P}^\dagger_v\left(\left(\frac{1}{\sqrt{n}}\mat{V}_n(\zdom)\right)_{\uparrow}\right)^\dagger\mat{E}^Q_{\uparrow}+\mathcal{O}\left(\left\|\mat{I}_r-\mat{P}_v\mat{P}_v^\dagger\right\|_2\cdot \left\|\mat{E}^Q_\uparrow\right\|_2\right)\\
= &~ \mat{P}^\dagger_v\diag(\zdom)^{\dagger}\mat{P}_v\mat{P}^\dagger_v\left(\left(\frac{1}{\sqrt{n}}\mat{V}_n(\zdom)\right)_{\uparrow}\right)^\dagger\mat{E}^Q_{\uparrow} + {\cal O}\left(\frac{1}{\Delta_{\vec{z}}n}\right)\cdot {\cal O}\left(\frac{\alpha\sqrt{\log n}}{\mu_r\sqrt{\Delta_{\vec{z}}n}}\right)\\
=&~\mat{P}^\dagger_v\diag^{\dagger}(\zdom)\mat{P}_v\mat{Q}_{\uparrow}^\dagger\mat{E}^Q_{\uparrow}+\mathcal{O}\left(\frac{\alpha\sqrt{\log n}}{\mu_r(\Delta_{\vec{z}} n)^{1.5}}\right)\,,
\end{aligned}
\]
where the first step follows from $\mat{Q}_r=\frac{1}{\sqrt{n}}\mat{V}_n(\zdom)\mat{P}_v$ by \cref{lem:vander-eigen}, the second step follows from $\mat{V}_n(\zdom)_\downarrow = \mat{V}_n(\zdom)_\downarrow\cdot \diag(\zdom)$ by the structure of the Vandermonde matrix, the third step follows from $\|\mat{P}_v\|_2={\cal O}(1)$ by \eqref{eqn:P_orthogonal}, $\|\diag(\zdom)\|_2=1$ by definition, and $\big\|\frac{1}{\sqrt{n}}\mat{V}_n(\zdom)\big\|_2={\cal O}(1)$ by \cref{cor:V_dagger_V_dom},
the fourth step follows from \eqref{eqn:P_orthogonal} and \eqref{eqn:second_Q_perturb}, and the last step follows from \cref{lem:vander-eigen} again.

Thus, we obtain that
\begin{equation}\label{eq:Q_down_E_up_1}
\begin{aligned}
    \left\|\mat{Q}_\downarrow^\dagger \mat{E}_\uparrow^Q\right\|_2\leq &~ \left\|\mat{P}^\dagger_v\diag^{\dagger}(\zdom)\mat{P}_v\mat{Q}_{\uparrow}^\dagger\mat{E}^Q_{\uparrow}\right\|_2+\mathcal{O}\left(\frac{\alpha\sqrt{\log n}}{\mu_r(\Delta_{\vec{z}} n)^{1.5}}\right)\\
    = &~ {\cal O}\left(\left\|\mat{P}_v\mat{Q}_{\uparrow}^\dagger\mat{E}^Q_{\uparrow}\right\|_2+\frac{\alpha\sqrt{\log n}}{\mu_r(\Delta_{\vec{z}} n)^{1.5}}\right)\,,
\end{aligned}
\end{equation}
where the second step follows from $\|\mat{P}^\dagger_v\diag^{\dagger}(\zdom)\|_2=\mathcal{O}(1)$. Moreover, $\left\|\mat{P}_v\mat{Q}_{\uparrow}^\dagger\mat{E}^Q_{\uparrow}\right\|_2$ can be further bounded as follows:
\begin{equation}\label{eq:Pv_Q_up_E_up}
\begin{aligned}
\left\|\mat{P}_v\mat{Q}^\dagger_\uparrow \mat{E}_\uparrow^Q\right\|_2 = &~ \left\|\mat{P}_v\mat{Q}_{\uparrow}^\dagger\left(\widetilde{\mat{E}}^Q_1+\widetilde{\mat{E}}^Q_2\right)_\uparrow\right\|_2\\
= &~ \left\|\mat{P}_v\mat{P}_v^\dagger \left(\frac{1}{\sqrt{n}}\mat{V}_n(\zdom)\right)_\uparrow^\dagger \left(\widetilde{\mat{E}}^Q_1+\widetilde{\mat{E}}^Q_2\right)_\uparrow\right\|_2\\
\leq &~ \left\|\left(\frac{1}{\sqrt{n}}\mat{V}_n(\zdom)\right)_\uparrow^\dagger \left(\widetilde{\mat{E}}^Q_1+\widetilde{\mat{E}}^Q_2\right)_\uparrow\right\|_2+{\cal O}\left(\left\|\mat{P}_v\mat{P}_v^\dagger -\mat{I}_r\right\|_2 \left\|\mat{E}_\uparrow^Q\right\|_2\right)\\
= &~\left\|\left(\frac{1}{\sqrt{n}}\mat{V}_n(\zdom)\right)_\uparrow^\dagger \left(\widetilde{\mat{E}}^Q_1+\widetilde{\mat{E}}^Q_2\right)_\uparrow\mat{P}\mat{P}^{-1}\right\|_2+\mathcal{O}\left(\frac{\alpha\sqrt{\log n}}{\mu_r(\Delta_{\vec{z}} n)^{1.5}}\right)\\
\leq &~ \left\|\left(\frac{1}{\sqrt{n}}\mat{V}_n(\zdom)\right)_\uparrow^\dagger \left(\widetilde{\mat{E}}^Q_1+\widetilde{\mat{E}}^Q_2\right)_\uparrow\mat{P}\right\|_2\left\|\mat{P}^{-1}\right\|_2+\mathcal{O}\left(\frac{\alpha\sqrt{\log n}}{\mu_r(\Delta_{\vec{z}} n)^{1.5}}\right)\\
= &~ \mathcal{O}\left(\left\|\left(\frac{1}{\sqrt{n}}\mat{V}_n(\zdom)\right)_\uparrow^\dagger \left(\left(\widetilde{\mat{E}}^Q_1+\widetilde{\mat{E}}^Q_2\right)\mat{P}\right)_\uparrow\right\|_2\right)+\mathcal{O}\left(\frac{\alpha\sqrt{\log n}}{\mu_r(\Delta_{\vec{z}} n)^{1.5}}\right)\,,
\end{aligned}
\end{equation}
where the first step follows from \eqref{eq:E_Q_up_down_2}, the second step follows from \cref{lem:vander-eigen}, the third step follows from triangle inequality and $\big\|\frac{1}{\sqrt{n}}\mat{V}_n(\zdom)\big\|_2={\cal O}(1)$ by \cref{cor:V_dagger_V_dom}, the fourth step follows from \cref{eqn:P_orthogonal,eqn:second_Q_perturb}, the fifth step follows from the sub-multiplicitivity of the spectral norm, the sixth step follows from $\left\|\mat{P}^{-1}\right\|_2=\mathcal{O}(1)$ by \eqref{eqn:P_bound}.%

Let $\mat{V}_r:=\frac{1}{\sqrt{n}}\mat{V}_n(\zdom)$. Then, by \cref{eq:Q_down_E_up_1,eq:Pv_Q_up_E_up}, we have
\begin{align}\label{eq:Q_down_E_up_bound_1}
    \left\|\mat{Q}_\downarrow^\dagger \mat{E}_\uparrow^Q\right\|_2\leq \mathcal{O}\left(\left\|\left((\mat{V}_r)_\uparrow\right)^\dagger\left(\widetilde{\mat{E}}^Q_1\right)_\uparrow\mat{P}\right\|_2+\left\|\left((\mat{V}_r)_\uparrow\right)^\dagger\left(\widetilde{\mat{E}}^Q_2\mat{P}\right)_\uparrow\right\|_2 + \frac{\alpha\sqrt{\log n}}{\mu_r(\Delta_{\vec{z}} n)^{1.5}}\right)\,.
\end{align}

For the second term $\mat{Q}_\downarrow^\dagger\mat{E}^Q_\downarrow \mat{Q}_\downarrow^\dagger \mat{Q}_\uparrow$ in the second equation of \eqref{eqn:key_equality}, we notice that
\[
\begin{aligned}
\left\|\mat{Q}_\downarrow^\dagger\mat{E}^Q_\downarrow \mat{Q}_\downarrow^\dagger \mat{Q}_\uparrow\right\|_2\leq&~ \left\|\mat{Q}_\downarrow^\dagger\mat{E}^Q_\downarrow\right\|_2\left\|\mat{Q}_\downarrow\right\|_2\left\|\mat{Q}_\uparrow\right\|_2\\
\leq&~ {\cal O}\left(\left\|\mat{Q}_\downarrow^\dagger\mat{E}^Q_\downarrow\right\|_2\right)\\
\leq &~ {\cal O}\left(\left\|\mat{P}_v\mat{Q}_\downarrow^\dagger\mat{E}^Q_\downarrow\right\|_2\left\|(\mat{P}_v)^{-1}\right\|_2\right)\\
=&~ \mathcal{O}\left(\left\|\mat{P}_v\mat{Q}_\downarrow^\dagger\mat{E}^Q_\downarrow\right\|_2\right)\,,    
\end{aligned}
\]
where we use $\|\mat{P}_v^{-1}\|_2={\cal O}(1)$ by \eqref{eqn:P_orthogonal} in the last step. Similar to \eqref{eq:Pv_Q_up_E_up}, we can show that
\begin{equation}\label{eq:Q_down_E_up_bound_2}
\begin{aligned}
\left\|\mat{P}_v\mat{Q}_\downarrow^\dagger\mat{E}^Q_\downarrow\right\|_2
\leq &~\mathcal{O}\left(\left\|\left((\mat{V}_r)_\downarrow\right)^\dagger\left(\widetilde{\mat{E}}^Q_1\right)_\downarrow\mat{P}\right\|_2+\left\|\left((\mat{V}_r)_\downarrow\right)^\dagger\left(\widetilde{\mat{E}}^Q_2\mat{P}\right)_\downarrow\right\|_2 + \frac{\alpha\sqrt{\log n}}{\mu_r(\Delta_{\vec{z}} n)^{1.5}}\right)\,.
\end{aligned}
\end{equation}

Combining \eqref{eq:Q_down_E_up_bound_1} and \eqref{eq:Q_down_E_up_bound_2} together, the lemma is proved.
\end{proof}

\finegrainedII*

\begin{proof}
First, we consider $\left((\mat{V}_r)_\uparrow\right)^\dagger\left(\widetilde{\mat{E}}^Q_2\mat{P}\right)_\uparrow$. Recall from \eqref{eqn:E_tilde_Q} that
\begin{align}\label{eq:E2_P_recall}
\widetilde{\mat{E}}^Q_2\mat{P}={\cal O}\left(\frac{\alpha^2\log n}{\mu^{2}_r\Delta_{\vec{z}} n}\right)\mat{\Pi}_{\mat{Q}^\perp_r}\tilde{\mat{Q}}_{2}=:{\cal O}\left(\frac{\alpha^2\log n}{\mu^{2}_r\Delta_{\vec{z}} n}\right)\tilde{\mat{Q}}_{2}^{\perp}
\end{align}
Since the column space of $\mat{V}_r$ equals to the column space of $\mat{Q}_r$, we note that $\left(\mat{V}_r\right)^\dagger\left(\widetilde{\mat{E}}^Q_2\right)=\mat{0}_r$, where $\mat{0}_r$ is a $r$ by $r$ matrix with all zero entries. Although $\left((\mat{V}_r)_\uparrow\right)^\dagger\left(\widetilde{\mat{E}}^Q_2\mat{P}\right)_\uparrow$ is the multiplication with $\uparrow$, the remaining term for $\widetilde{\mat{E}}^Q_2$ is small. In particular, for any $(i,j)\in [r]\times [r]$, we have the following
\begin{equation}
\begin{aligned}
&\left|\left(\left((\mat{V}_r)_\uparrow\right)^\dagger\left(\widetilde{\mat{E}}^Q_2\mat{P}\right)_\uparrow\right)_{i,j}\right|\\
= &~ \left|\left((\mat{V}_r)^\dagger\left(\widetilde{\mat{E}}^Q_2\mat{P}\right)\right)_{i,j}-(\overline{\mat{V}_r})_{n,i}(\widetilde{\mat{E}}^Q_2\mat{P})_{n,j}\right|\\
= &~ \left|(\overline{\mat{V}_r})_{n,i}(\widetilde{\mat{E}}^Q_2\mat{P})_{n,j}\right|\\
=&~{\cal O}\left(\frac{\alpha^2\log n}{\mu^{2}_r\Delta_{\vec{z}} n}\right)\cdot \frac{|\overline{z^{n-1}_i}|
}{\sqrt{n}}\cdot \left|\left(\tilde{\mat{Q}}_{2}^{\perp}\right)_{n,j}\right|\\
\leq &~ {\cal O}\left(\frac{\alpha^2\log n}{\mu^{2}_r\Delta_{\vec{z}} n^{1.5}}\right)
\end{aligned}
\end{equation}
where the first step follows from the definition of $(\cdot)_\uparrow$, the second step follows from $(\mat{V}_r)^\dagger\left(\widetilde{\mat{E}}^Q_2\mat{P}\right)=\mat{0}_r\cdot \mat{P}=\mat{0}_r$, the third step follows from \eqref{eq:E2_P_recall}, and the fourth step follows from $\big|\big(\tilde{\mat{Q}}_{2}^{\perp}\big)_{n,j}\big|\leq \big\|\tilde{\mat{Q}}_{2}^{\perp}\big\|_{\text{entry},\infty}\leq \big\|\tilde{\mat{Q}}_{2}^{\perp}\big\|_2=\mathcal{O}(1)$. %
It implies that
\begin{align}\label{eqn:Q_E_2_up}
\left\|\left((\mat{V}_r)_\uparrow\right)^\dagger\left(\widetilde{\mat{E}}^Q_2\mat{P}\right)_\uparrow\right\|_2 \leq \left\|\left((\mat{V}_r)_\uparrow\right)^\dagger\left(\widetilde{\mat{E}}^Q_2\mat{P}\right)_\uparrow\right\|_F\leq r\cdot {\cal O}\left(\frac{\alpha^2\log n}{\mu^{2}_r\Delta_{\vec{z}} n^{1.5}}\right)\,.    
\end{align}

Next, we deal with $\left((\mat{V}_r)_\uparrow\right)^\dagger\left(\widetilde{\mat{E}}^Q_1\right)_\uparrow\mat{P}$. First, because $\mat{P}_v$ is almost unitary according to \cref{lem:vander-eigen}, we have
\begin{equation}
\begin{aligned}
\left\|\left((\mat{V}_r)_\uparrow\right)^\dagger\left(\widetilde{\mat{E}}^Q_1\right)_\uparrow\mat{P}\right\|_2\leq &~ \left\|\left((\mat{V}_r)_\uparrow\right)^\dagger\left(\widetilde{\mat{E}}^Q_1\right)_\uparrow\mat{P}\mat{P}^\dagger_v\right\|_2\left\|(\mat{P}_v^\dagger)^{-1}\right\|_2\\
=&~ \mathcal{O}\left(\left\|\left((\mat{V}_r)_\uparrow\right)^\dagger\left(\widetilde{\mat{E}}^Q_1\right)_\uparrow\mat{P}\mat{P}^\dagger_v\right\|_2\right)\,.
\end{aligned}
\end{equation}
Without loss of generality, we bound the norm of the first column of $\left((\mat{V}_r)_\uparrow\right)^\dagger\left(\widetilde{\mat{E}}^Q_1\right)_\uparrow\mat{P}\mat{P}^\dagger_v$.
By using \eqref{eqn:EPPV_1_all} to approximate $\left(\widetilde{\mat{E}}^Q_1\right)_\uparrow\mat{P}\mat{P}^\dagger_v$, we have %
\begin{align}
&\left\|\left(\left((\mat{V}_r)_\uparrow\right)^\dagger\left(\widetilde{\mat{E}}^Q_1\right)_\uparrow\mat{P}\mat{P}^\dagger_v\right)[:,1]\right\|_2\notag\\
= &~ \left\|\left(\left(\mat{V}_r\right)_{\uparrow}\right)^\dagger\frac{1}{\lambda_1}\left(\sum^\infty_{k=1}\left(\frac{\mat{E}_1}{\lambda_i}\right)^k\right)_\uparrow\frac{\mat{E}_2\vec{v}_n(z_1)}{\sqrt{n}}\right\|_2\label{eq:second_Q_E_t1}\\
~&+\left\|\left(\left(\mat{V}_r\right)_{\uparrow}\right)^\dagger\frac{1}{\lambda_1}\left(\mat{I}_n-\mat{V}_r\mat{V}_r^\dagger\right)_\uparrow\frac{\mat{E}_2\vec{v}_n(z_1)}{\sqrt{n}}\right\|_2 \label{eq:second_Q_E_t2}\\
~&+\mathcal{O}\left(\frac{\alpha\sqrt{\log n}}{\mu_r^2\Delta_{\vec{z}} n^{1.5}}\right)\,,\notag
\end{align}
where $\lambda_1=\mu_1 n\geq \mu_r n$.

Then, we bound \eqref{eq:second_Q_E_t1} and \eqref{eq:second_Q_E_t2} separately.

For \eqref{eq:second_Q_E_t1}, we have
    \begin{equation}\label{eq:second_Q_E_t1_bound}
    \begin{aligned}
    &\left\|\frac{1}{\lambda_1}\left(\left(\mat{V}_r\right)_{\uparrow}\right)^\dagger\left(\sum^\infty_{k=1}\left(\frac{\mat{E}_1}{\lambda_1}\right)^k\right)_\uparrow\frac{\mat{E}_2\vec{v}_n(z_1)}{\sqrt{n}}\right\|_2\\
    \leq &~ \left\|\left(\left(\mat{V}_r\right)_{\uparrow}\right)^\dagger\left(\frac{\mat{E}_1}{\lambda_1}\right)_\uparrow\right\|_2\cdot \left(\sum_{k=0}^\infty \frac{\left\|\mat{E}_1\right\|_2^k}{|\lambda_1|^k}\right)\cdot \frac{\|\mat{E}_2\vec{v}_n(z_1)\|_2}{\lambda_1\sqrt{n}}\\
    \leq &~ \left\|\left(\left(\mat{V}_r\right)_{\uparrow}\right)^\dagger\left(\frac{\mat{E}_1}{\lambda_1}\right)_\uparrow\right\|_2\cdot {\cal O}\left(\sum_{k=0}^\infty \left(\frac{n\mu_{\rm tail}}{\mu_r n}\right)^k\cdot \frac{\alpha\sqrt{n\log n}\cdot \sqrt{n}}{\mu_r n^{1.5}}\right)\\
    = &~ \left\|\left(\left(\mat{V}_r\right)_{\uparrow}\right)^\dagger\left(\frac{\mat{E}_1}{\lambda_1}\right)_\uparrow\right\|_2\cdot {\cal O}\left(\frac{\alpha\sqrt{\log n}}{\mu_r\sqrt{n}}\right)\\
    = &~ {\cal O}\left(\frac{\mu_{\rm tail}}{\Delta_{\vec{z}}\lambda_1}\right)\cdot {\cal O}\left(\frac{\alpha\sqrt{\log n}}{\mu_r\sqrt{n}}\right)\\
    = &~ {\cal O}\left(\frac{\alpha \sqrt{\log n}}{\mu_r \Delta_{\vec{z}}n^{1.5}}\right)\,,
    \end{aligned}
    \end{equation}
    where the first step follows from the triangle inequality, the second step follows from \eqref{eqn:bound_tail}, \cref{lem:Mecks}, the definition of $\lambda_1$, and $\|\vec{v}_n(z_1)\|_2=\sqrt{n}$, the third step follows from $\mu_{\rm tail}<\mu_r$, the fourth step follows from \cref{clm:bound_V_dom_E1}, and the last step follows from $\lambda_1\geq \mu_r n$ and $\mu_{\rm tail}< \mu_r$.
    
For \eqref{eq:second_Q_E_t2}, we notice that
\begin{align*}
&\left(\left(\mat{V}_r\right)_{\uparrow}\right)^\dagger\left(\mat{I}_n-\mat{V}_r\mat{V}_r^\dagger\right)_\uparrow\\    
= &~ \left[\left(\left(\mat{V}_r\right)_{\uparrow}\right)^\dagger\quad \vec{0}_1\right] - \left(\left(\mat{V}_r\right)_{\uparrow}\right)^\dagger \left(\mat{V}_r\right)_\uparrow \mat{V}_r^\dagger\\
\leq &~ \left[\left(\left(\mat{V}_r\right)_{\uparrow}\right)^\dagger\quad \vec{0}_1\right] - \frac{n-1}{n}\mat{V}_r^\dagger + \frac{n-1}{n}\left\|\mat{I}_r - \frac{1}{n-1}\mat{V}_{n-1}(\zdom)^\dagger \mat{V}_{n-1}(\zdom)\right\|_2\cdot \left\|\mat{V}_r^\dagger\right\|_2\\
= &~ \left[\left(\left(\mat{V}_r\right)_{\uparrow}\right)^\dagger\quad \vec{0}_1\right] - \frac{n-1}{n}\mat{V}_r^\dagger + {\cal O}\left(\frac{1}{\Delta_{\vec{z}}n}\right)\\
\leq &~ \left[\left(\left(\mat{V}_r\right)_{\uparrow}\right)^\dagger\quad \vec{0}_1\right] - \mat{V}_r^\dagger +  {\cal O}\left(\frac{1}{\Delta_{\vec{z}}n}\right)\\
= &~ -\frac{1}{\sqrt{n}} \left[\mat{0}_{r\times n-1}\quad \begin{matrix} z_1^{1-n} \\
 \vdots \\
    z_r^{1-n}  \end{matrix}\right]+\mathcal{O}\left(\frac{1}{\Delta_{\vec{z}}n}\right)\,,
\end{align*}
where $\vec{0}_1\in \C^r$ is an all-zero vector, the second step follows from $(\mat{V}_n(\zdom))_\uparrow = \mat{V}_{n-1}(\zdom)$ by definition, the third step follows from \cref{cor:V_dagger_V_dom}, the fourth step follows from $\|\mat{V}_r\|_2={\cal O}(1)$, and the last step is by direct calculation.
Plugging it into \eqref{eq:second_Q_E_t2}, we have
\begin{equation}\label{eq:second_Q_E_t2_bound}
\begin{aligned}
&\left\|\left(\left(\mat{V}_r\right)_{\uparrow}\right)^\dagger\frac{1}{\lambda_1}\left(\mat{I}_n-\mat{V}_r\mat{V}_r^\dagger\right)_\uparrow\frac{\mat{E}_2\vec{v}_n(z_1)}{\sqrt{n}}\right\|_2\\
=&~\left\|-\frac{1}{\sqrt{n}} \left[\mat{0}_{r\times n-1}\quad \begin{matrix} z_1^{1-n} \\
 \vdots \\
    z_r^{1-n}  \end{matrix}\right]\frac{\mat{E}_2\vec{v}_n(z_1)}{\lambda_1\sqrt{n}}\right\|_2+\left\|\frac{\mat{E}_2\vec{v}_n(z_1)}{\lambda_1\sqrt{n}}\right\|_2\cdot \mathcal{O}\left(\frac{1}{\Delta_{\vec{z}}n}\right)\\
=&~\left\|\begin{bmatrix} z_1^{1-n} \\
 \vdots \\
z_r^{1-n}  \end{bmatrix}\right\|_2\cdot \frac{|(\mat{E}_2\vec{v}_n(z_1))_n|}{\lambda_1 n} +\mathcal{O}\left(\frac{\alpha\sqrt{\log n}}{\mu_r\Delta_{\vec{z}}n^{1.5}}\right)\\
\leq &~\frac{\sqrt{r}\cdot {\cal O}(\sqrt{n\log n})}{\mu_r n^2}+\mathcal{O}\left(\frac{\alpha\sqrt{\log n}}{\mu_r\Delta_{\vec{z}}n^{1.5}}\right) = \mathcal{O}\left(\frac{\alpha\sqrt{r\log n}}{\mu_rn^{1.5}}+\frac{\alpha\sqrt{\log n}}{\mu_r\Delta_{\vec{z}}n^{1.5}}\right)\,,
\end{aligned}
\end{equation}
where the second step follows from \cref{lem:Mecks} and $\|\vec{v}_n(z_1)\|_2=\sqrt{n}$, and the third step follows from the fact that $(\mat{E}_2\vec{v}_n(z_1))_n$ is a sub-Gaussian random variable with parameter ${\cal O}(\alpha\|\vec{v}_n(z_1)\|_2)={\cal O}(\alpha\sqrt{n})$. 

Hence, combining \eqref{eq:second_Q_E_t1_bound} and \eqref{eq:second_Q_E_t2_bound} together, we obtain
\begin{align*}
\left\|\left(\left((\mat{V}_r)_\uparrow\right)^\dagger\left(\widetilde{\mat{E}}^Q_1\right)_\uparrow\mat{P}\mat{P}^\dagger_v\right)[:,1]\right\|_2\leq &~ \mathcal{O}\left(\frac{\alpha\sqrt{\log n}}{\mu_r\Delta_{\vec{z}} n^{1.5}}+\frac{\alpha\sqrt{r\log n}}{\mu_rn^{1.5}}+\frac{\alpha\sqrt{\log n}}{\mu_r\Delta_{\vec{z}}n^{1.5}}\right)\\
= &~ \mathcal{O}\left(\frac{\alpha\sqrt{r\log n}}{\mu_rn^{1.5}}+\frac{\alpha\sqrt{\log n}}{\mu_r\Delta_{\vec{z}}n^{1.5}}\right)\,.
\end{align*}
This implies
\begin{align}\label{eq:V_up_E1_up_P}
\left\|\left((\mat{V}_r)_\uparrow\right)^\dagger\left(\widetilde{\mat{E}}^Q_1\right)_\uparrow\mat{P}\right\|_2\leq \left\|\left((\mat{V}_r)_\uparrow\right)^\dagger\left(\widetilde{\mat{E}}^Q_1\right)_\uparrow\mat{P}\right\|_F\leq \sqrt{r}\cdot \mathcal{O}\left(\frac{\alpha\sqrt{r\log n}}{\mu_rn^{1.5}}+\frac{\alpha\sqrt{\log n}}{\mu_r\Delta_{\vec{z}}n^{1.5}}\right)\,,
\end{align}

Thus, by \eqref{eqn:Q_E_2_up} and \eqref{eq:V_up_E1_up_P}, we prove the first part of the lemma \eqref{eq:V_EQ_P_up}.
For \eqref{eq:V_EQ_P_down}, we can verify that the above proof still holds when we change $\uparrow$ to $\downarrow$. 

Therefore, we complete the proof of the lemma.
\end{proof}

\subsubsection{Techincal claims}
\begin{claim}\label{clm:P_v_properties}
Let $\mat{P}_v$ be defined as \eqref{eq:def_P_v}. Then, 
\begin{subequations}\label{eqn:P_v_addition_properties_1}
\begin{equation}
\mat{P}_v(\mat{Q}_\downarrow)^\dagger\mat{Q}_\uparrow\mat{P}^\dagger_v=\diag(\zdom^{-1})+\mathcal{O}\left(\frac{1}{\Delta_{\vec{z}} n}\right)\,,
\end{equation}
\begin{equation}\label{eqn:P_v_addition_properties_2}
\mat{P}_v\mat{\Sigma}^{-k}_r\mat{P}^\dagger_v=\left(n\diag(\mudom)\right)^{-k}+\mathcal{O}\left(\frac{1}{\mu_r n}\right)^{k+1}\frac{1}{\Delta_{\vec{z}} }\quad\forall k\geq 0\,.
\end{equation}
\end{subequations}
\end{claim}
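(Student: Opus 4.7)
The plan is to derive both displays from Lemma~\ref{lem:vander-eigen} and the Moitra singular-value bound \cref{thm:moitra}, by conjugating with $\mat{P}_v$ to convert eigenbasis quantities into Vandermonde quantities whose structure can be read off directly.

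For part~(a), I would use $\mat{Q}_r = \tfrac{1}{\sqrt{n}}\Vand_n(\zdom)\mat{P}_v$ together with the shift identities $\Vand_n(\zdom)_\uparrow = \Vand_{n-1}(\zdom)$ and $\Vand_n(\zdom)_\downarrow = \Vand_{n-1}(\zdom)\diag(\zdom)$ to rewrite
\begin{align*}
\mat{P}_v(\mat{Q}_\downarrow)^\adj \mat{Q}_\uparrow\mat{P}_v^\adj = \tfrac{1}{n}\mat{P}_v\mat{P}_v^\adj\,\diag(\zdom)^\adj\,\Vand_{n-1}(\zdom)^\adj \Vand_{n-1}(\zdom)\,\mat{P}_v\mat{P}_v^\adj.
\end{align*}
Since $|z_i|=1$ one has $\diag(\zdom)^\adj = \diag(\zdom^{-1})$ with unit operator norm; \cref{thm:moitra} gives $\tfrac{1}{n}\Vand_{n-1}(\zdom)^\adj \Vand_{n-1}(\zdom) = \Id_r + \order(1/(\Delta_{\vec{z}} n))$ (after absorbing the $(n-1)/n$ factor into the error), and Lemma~\ref{lem:vander-eigen} gives $\mat{P}_v\mat{P}_v^\adj - \Id_r = \order(1/(\Delta_{\vec{z}} n))$. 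Submultiplicativity of $\norm{\cdot}_2$ plus a triangle inequality collapse the three near-identity factors to $\diag(\zdom^{-1})+\order(1/(\Delta_{\vec{z}} n))$.

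For part~(b), I would first extract the clean identity $\mat{P}_v\mat{\Sigma}_r\mat{P}_v^\adj = n\diag(\mudom)$ by comparing the two representations of $\mat{T}$: substituting $\Vand_n(\zdom) = \sqrt{n}\,\mat{Q}_r\mat{P}_v^{-1}$ into the Vandermonde decomposition, using $\mat{Q}_r^\adj\mat{Q}_r = \Id_r$, and equating with $\mat{Q}_r\mat{\Sigma}_r\mat{Q}_r^\adj$. Inverting yields $\mat{\Sigma}_r^{-1} = \mat{P}_v^\adj\mat{D}^{-1}\mat{P}_v$ with $\mat{D} \coloneqq n\diag(\mudom)$, so $\mat{P}_v\mat{\Sigma}_r^{-k}\mat{P}_v^\adj$ is the alternating product of $k+1$ copies of $\mat{B} \coloneqq \mat{P}_v\mat{P}_v^\adj = \Id_r + \mat{\Delta}$ (where $\norm{\mat{\Delta}}_2 = \order(1/(\Delta_{\vec{z}} n))$ by Lemma~\ref{lem:vander-eigen}) and $k$ copies of $\mat{D}^{-1}$. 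Expanding each $\mat{B}$, the unique term containing no $\mat{\Delta}$ is $\mat{D}^{-k}$; all other terms contain at least one factor of $\mat{\Delta}$, and a binomial estimate bounds the remainder by
\begin{align*}
\bigl((1+\norm{\mat{\Delta}}_2)^{k+1}-1\bigr)\,\norm{\mat{D}^{-1}}_2^k \;\le\; \order\!\Bigl(\tfrac{1}{\Delta_{\vec{z}} n}\Bigr)\cdot(n\mu_r)^{-k},
\end{align*}
which is dominated by the stated $\order((\mu_r n)^{-(k+1)}/\Delta_{\vec{z}})$ under the paper's normalization $\mu_r \le 1$.

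The only subtle point is the combinatorial factor $(k+1)$ produced by expanding $\mat{B}^{k+1}$; at every call site of the claim in the paper (e.g.\ the geometric series inside \cref{lem:decompose_1}) the summation weight $(\mu_{\rm tail}/\mu_r)^k < 8^{-k}$ absorbs this linear growth, so the bound is serviceable in the form stated. No new perturbation tools beyond Moitra's bound and Lemma~\ref{lem:vander-eigen} are required; the entire argument amounts to algebraic bookkeeping on products of near-identity matrices.
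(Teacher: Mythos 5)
Your proof of the first display is essentially the paper's own: both factor $\mat{Q}_\uparrow,\mat{Q}_\downarrow$ through $\Vand_{n-1}(\zdom)$ and $\diag(\zdom)$, and then collapse the two near-identity factors $\mat{P}_v\mat{P}_v^\adj$ and $\tfrac{1}{n}\Vand_{n-1}(\zdom)^\adj\Vand_{n-1}(\zdom)$. For the second display, however, you take a genuinely different and cleaner route. The paper never isolates the exact identity $\mat{P}_v\mat{\Sigma}_r\mat{P}_v^\adj = n\diag(\mudom)$; instead it runs an induction on $k$ whose base case $k=1$ is handled by comparing $\mat{Q}_r\mat{\Sigma}_r^{-1}\mat{Q}_r^\adj$ with $\tfrac{1}{n}\Vand_n(\zdom)(n\diag(\mudom))^{-1}\tfrac{1}{n}\Vand_n(\zdom)^\adj$, multiplying by $\mat{T}$, matching its two factorizations, and dividing by $\sigma_{\min}(\mat{T})=\Theta(\mu_r n)$. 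Your observation that $\mat{\Sigma}_r = n\mat{P}_v^{-1}\diag(\mudom)(\mat{P}_v^\adj)^{-1}$ holds \emph{exactly} (it follows from equating $\mat{Q}_r^\adj\mat{T}\mat{Q}_r$ computed from the two decompositions) short-circuits all of this: the whole statement reduces to expanding $k{+}1$ copies of $\mat{B}=\mat{P}_v\mat{P}_v^\adj=\Id_r+\mat{\Delta}$ interleaved with $\mat{D}^{-1}$, which is both more transparent and, modulo the caveat below, quantitatively slightly sharper (you get $\mu_r^{-k}$ where the paper states $\mu_r^{-(k+1)}$). The one deviation from the literal statement is the $k$-dependence of your remainder: $((1+\norm{\mat{\Delta}}_2)^{k+1}-1)$ is $\order((k+1)/(\Delta_{\vec{z}}n))$ only when $(k+1)\norm{\mat{\Delta}}_2=\order(1)$, so your constant is not uniform in $k$ as the claim's $\order(\cdot)$ nominally asserts. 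You flag this and correctly note that every use of the claim sits inside a series weighted by $(\mu_{\rm tail}/\mu_r)^k\le 8^{-k}$, which absorbs the growth; it is worth adding that the paper's own induction, if its constants are tracked across steps, accumulates a comparable per-step factor, so this is not a defect unique to your argument. Either state the claim with the explicit $(k+1)$ factor or restrict to $k=\order(\Delta_{\vec{z}}n)$ if you want the write-up to be airtight.
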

\begin{proof}
We first prove \eqref{eqn:P_v_addition_properties_1}.

By \eqref{eqn:reshape_Q}, we have
\begin{align*}
    \mat{Q}_\downarrow = \frac{1}{\sqrt{n}}{\mat{V}_n(\zdom)}_\downarrow\mat{P}_v,\quad \mat{Q}_\uparrow = \frac{1}{\sqrt{n}}{\mat{V}_n(\zdom)}_\uparrow\mat{P}_v\,.
\end{align*}
Since ${\mat{V}_n(\zdom)}_\downarrow = {\mat{V}_n(\zdom)}_\uparrow\diag(\zdom)$ and $\mat{V}_n(\zdom)_\uparrow = \mat{V}_{n-1}(\zdom)$, we also have
\begin{align*}
    \mat{Q}_\downarrow = \frac{1}{\sqrt{n}}\mat{V}_{n-1}(\zdom)\diag(\zdom)\mat{P}_v,\quad \mat{Q}_\uparrow = \frac{1}{\sqrt{n}}\mat{V}_{n-1}(\zdom)\mat{P}_v\,.
\end{align*}
Thus, LHS of \eqref{eqn:P_v_addition_properties_1} can be expressed as:
\begin{align*}
&\mat{P}_v(\mat{Q}_\downarrow)^\dagger\mat{Q}_\uparrow\mat{P}^\dagger_v\\
= &~ \mat{P}_v \cdot \left(\mat{P}_v^\dagger \diag(\zdom)^\dagger\frac{1}{\sqrt{n}}\mat{V}_{n-1}(\zdom)^\dagger\right) \cdot \left(\frac{1}{\sqrt{n}}\mat{V}_{n-1}(\zdom)\mat{P}_v\right)\cdot \mat{P}^\dagger_v\\
= &~ \left(\mat{P}_v \mat{P}_v^\dagger\right)\cdot \diag(\zdom^{-1}) \cdot \left(\frac{1}{n}\mat{V}_{n-1}(\zdom)^\dagger \mat{V}_{n-1}(\zdom)\right) \cdot \left(\mat{P}_v\mat{P}^\dagger_v\right)\,,
\end{align*}
where the second step follows from $\overline{z_i}=z_i^{-1}$ for all $i\in [m]$. Then, we have
\begin{align*}
    &\left\|\mat{P}_v(\mat{Q}_\downarrow)^\dagger\mat{Q}_\uparrow\mat{P}^\dagger_v - \diag(\zdom^{-1})\right\|_2\\
    =&~ \left\|\left(\mat{P}_v \mat{P}_v^\dagger\right)\cdot \diag(\zdom^{-1}) \cdot \left(\frac{1}{n}\mat{V}_{n-1}(\zdom)^\dagger \mat{V}_{n-1}(\zdom)\right) \cdot \left(\mat{P}_v\mat{P}^\dagger_v\right)-\diag(\zdom^{-1})\right\|_2\\
    \leq &~ \left\|\mat{P}_v \mat{P}_v^\dagger-\mat{I}_r\right\|_2\cdot \left\|\left(\frac{1}{n}\mat{V}_{n-1}(\zdom)^\dagger \mat{V}_{n-1}(\zdom)\right) \cdot \left(\mat{P}_v\mat{P}^\dagger_v\right)\right\|_2\\
    ~&+ \left\|\frac{1}{n}\mat{V}_{n-1}(\zdom)^\dagger \mat{V}_{n-1}(\zdom)-\mat{I}_r\right\|_2\cdot \left\|\mat{P}_v\mat{P}^\dagger_v\right\|_2\\
    ~&+ \left\|\mat{P}_v\mat{P}^\dagger_v - \mat{I}_r\right\|_2\,,
\end{align*}
where the second step follows from the triangle inequality and $\|\diag(\zdom^{-1})\|_2=1$.

By a slightly modified proof of \cref{cor:V_dagger_V_dom}, it is easy to show that
\begin{align*}
     \left\|\frac{1}{n}\mat{V}_{n-1}(\zdom)^\dagger \mat{V}_{n-1}(\zdom)-\mat{I}_r\right\|_2 = {\cal O}\left(\frac{1}{\Delta_{\vec{z}} n}\right)\,.
\end{align*}
Together with \cref{lem:vander-eigen}, we obtain that
\begin{align*}
    \left\|\mat{P}_v(\mat{Q}_\downarrow)^\dagger\mat{Q}_\uparrow\mat{P}^\dagger_v - \diag(\zdom^{-1})\right\|_2\leq {\cal O}\left(\frac{1}{\Delta_{\vec{z}} n}\right)\,
\end{align*}
which proves the first part of the claim.

Next, we prove \eqref{eqn:P_v_addition_properties_2}.

When $k=0$, \eqref{eqn:P_v_addition_properties_2} becomes
\begin{align*}
    \mat{P}_v\mat{P}^\dagger_v&=\mat{I}_r+\mathcal{O}\left(\frac{1}{\mu_r\Delta_{\vec{z}} n}\right)\,,
\end{align*}
which follows from \cref{lem:vander-eigen}.

When $k\geq 1$, we have
\begin{equation}\label{eq:induction_0}
\begin{aligned}
&\left\|\mat{P}_v\mat{\Sigma}_r^{-{k}}\mat{P}_v^\dagger-\left(n\diag(\mudom)\right)^{-k}\right\|_2\\
\leq &~ \left\|\mat{P}_v\mat{\Sigma}_r^{-{(k-1)}}\left(\mat{I_r} - \mat{P}^\dagger_v\mat{P}_v\right)\mat{\Sigma}_r^{-1}\mat{P}_v^\dagger\right\|_2 \\
~&+ \left\|\left(\mat{P}_v\mat{\Sigma}_r^{-{(k-1)}}\mat{P}^\dagger_v- \left(n\diag(\mudom)\right)^{-(k-1)}\right)\mat{P}_v\mat{\Sigma}_r^{-1}\mat{P}_v^\dagger\right\|_2\\
~&+ \left\|\left(n\diag(\mudom)\right)^{-(k-1)}\left(\mat{P}_v\mat{\Sigma}_r^{-1}\mat{P}_v^\dagger-\left(n\diag(\mudom)\right)^{-1}\right)\right\|_2\,.
\end{aligned}
\end{equation}
Since $\|\mat{\Sigma}_r^{-1}\|_2={\cal O}\left(\frac{1}{\mu_r n}\right)$ by \eqref{eqn:eigenvalue_prop} and $\|\mat{I_r} - \mat{P}^\dagger_v\mat{P}_v\|_2={\cal O}\left(\frac{1}{\Delta_{\vec{z}}n}\right)$ by \cref{lem:vander-eigen}, the first term can be bounded by:
\begin{equation}\label{eq:induction_1}
    \left\|\mat{P}_v\mat{\Sigma}_r^{-{(k-1)}}\left(\mat{I_r} - \mat{P}^\dagger_v\mat{P}_v\right)\mat{\Sigma}_r^{-1}\mat{P}_v^\dagger\right\|_2 \leq {\cal O}\left(\left(\frac{1}{\mu_r n}\right)^k\cdot \frac{1}{\Delta_{\vec{z}}n}\right)\,.
\end{equation}
Similarly, the second term can be bounded by:
\begin{equation}\label{eq:induction_2}
\begin{aligned}
    &\left\|\left(\mat{P}_v\mat{\Sigma}_r^{-{(k-1)}}\mat{P}^\dagger_v- \left(n\diag(\mudom)\right)^{-(k-1)}\right)\mat{P}_v\mat{\Sigma}_r^{-1}\mat{P}_v^\dagger\right\|_2\\ \leq &~ {\cal O}\left(\frac{1}{\mu_r n}\right)\cdot \left\|\mat{P}_v\mat{\Sigma}_r^{-{(k-1)}}\mat{P}^\dagger_v- \left(n\diag(\mudom)\right)^{-(k-1)}\right\|_2
\end{aligned}
\end{equation}
The third term can be bounded by:
\begin{equation}\label{eq:induction_3}
\begin{aligned}
    &\left\|\left(n\diag(\mudom)\right)^{-(k-1)}\left(\mat{P}_v\mat{\Sigma}_r^{-1}\mat{P}_v^\dagger-\left(n\diag(\mudom)\right)^{-1}\right)\right\|_2\\
    \leq &~ {\cal O}\left(\frac{1}{\mu_r n}\right)^{k-1}\cdot \left\|\mat{P}_v\mat{\Sigma}_r^{-1}\mat{P}_v^\dagger-\left(n\diag(\mudom)\right)^{-1}\right\|_2\,.
\end{aligned}
\end{equation}

Hence, to prove \eqref{eqn:P_v_addition_properties_2}, it suffices to bound $$\left\|\mat{P}_v\mat{\Sigma}_r^{-1}\mat{P}_v^\dagger-\left(n\diag(\mudom)\right)^{-1}\right\|_2.$$ Using the formula of $\mat{P}_v$ in \eqref{eq:def_P_v}, we have
\begin{equation}
\begin{aligned}
&\left\|\mat{P}_v\mat{\Sigma}_r^{-1}\mat{P}_v^\dagger-\left(n\diag(\mudom)\right)^{-1}\right\|_2\\
=&~\left\|\left(\frac{1}{\sqrt{n}}\mat{V}_n(\zdom)\right)^{+}\cdot \left(\mat{Q}_r\mat{\Sigma}_r^{-1}\mat{Q}_r^\dagger-\left(\frac{1}{\sqrt{n}}\mat{V}_n(\zdom)\right)\left(n\diag(\mudom)\right)^{-1}\left(\frac{1}{\sqrt{n}}\mat{V}_n(\zdom)^\dagger\right)\right)\right.\\
~&\cdot \left.{\left(\frac{1}{\sqrt{n}}\mat{V}_n(\zdom)\right)^{+}}^\dagger\right\|_2\\
=&~\mathcal{O}\left(\left\|\mat{Q}_r\mat{\Sigma}_r^{-1}\mat{Q}_r^\dagger-\frac{1}{\sqrt{n}}\mat{V}_n(\zdom)\left(n\diag(\mudom)\right)^{-1}\frac{1}{\sqrt{n}}\mat{V}_n(\zdom)^\dagger\right\|_2\right)\,,
\end{aligned}
\end{equation}
where $\left(\frac{1}{\sqrt{n}}\mat{V}_n(\zdom)\right)^{+} = \left(\frac{1}{n}\mat{V}_n(\zdom)^\dagger\mat{V}_n(\zdom)\right)^{-1}\cdot \frac{1}{\sqrt{n}}\mat{V}_n(\zdom)^\dagger$ is the pseudo-inverse of $\frac{1}{\sqrt{n}}\mat{V}_n(\zdom)$, and the second step follows from $\Big\|\left(\frac{1}{\sqrt{n}}\mat{V}_n(\zdom)\right)^{+}\Big\|_2={\cal O}(1)$ by \cref{cor:V_dagger_V_dom}.

Since $\mat{T}=\mat{Q}\mat{\Sigma}\mat{Q}^\dagger=\mat{Q}_r\mat{\Sigma}_r\mat{Q}^\dagger_r$ is the eigendecomposition, we notice that
\[
\mat{T}\mat{Q}_r\mat{\Sigma}_r^{-1}\mat{Q}_r^\dagger=\mat{Q}_r\mat{Q}_r^\dagger\,.
\]
On the other hand, since $\mat{T}=\mat{V}_n(\zdom) \diag(\vec{\mu}_{\rm dom})\mat{V}_n(\zdom)^\dagger$, we have
\[
\begin{aligned}
&\left\|\mat{T}\frac{1}{\sqrt{n}}\mat{V}_n(\zdom)\left(n\diag(\mudom)\right)^{-1}\frac{1}{\sqrt{n}}\mat{V}_n(\zdom)^\dagger-\frac{1}{n}\mat{V}_n(\zdom)\mat{V}_n(\zdom)^\dagger\right\|_2\\
= &~ \frac{1}{n}\left\|\mat{V}_n(\zdom)\diag(\vec{\mu}_{\rm dom}) \left(\frac{1}{n}\mat{V}_n(\zdom)^\dagger \mat{V}_n(\zdom)- \mat{I}_r\right)\diag(\vec{\mu}_{\rm dom})^{-1} \mat{V}_n(\zdom)^\dagger\right\|_2\\
\leq &~ \frac{\mu_1}{\mu_r}\cdot \left\|\frac{1}{n}\mat{V}_n(\zdom)^\dagger \mat{V}_n(\zdom)- \mat{I}_r\right\|_2\\
\leq&~{\cal O}\left(\frac{\mu_1}{\mu_r}\cdot \frac{1}{\Delta_{\vec{z}}n}\right)\\
=&~\mathcal{O}\left(\frac{1}{\mu_r\Delta_{\vec{z}}n}\right)\,,
\end{aligned}
\]
where the second step follows from $\Big\|\frac{1}{\sqrt{n}}\mat{V}_n(\zdom)\Big\|_2={\cal O}(1)$, the third step follows from \cref{cor:V_dagger_V_dom}, and the last step follows from $\mu_1={\Theta}(1)$. We further have
\begin{align*}
&\left\|\mat{Q}_r\mat{Q}_r^\dagger-\frac{1}{n}\mat{V}_n(\zdom)\mat{V}_n(\zdom)^\dagger\right\|_2\\
= &~ \left\|\left(\frac{1}{\sqrt{n}}\mat{V}_n(\zdom)\right)\left(\mat{P}_v\mat{P}_v^\dagger-\mat{I}_r\right)\left(\frac{1}{\sqrt{n}}\mat{V}_n(\zdom)\right)^\dagger \right\|_2\\
\leq &~ \mathcal{O}\left(\frac{1}{\Delta_{\vec{z}} n}\right)\,,
\end{align*}
where the first step follows from \cref{lem:vander-eigen}, and the second step follows from \eqref{eqn:P_orthogonal}. Combining the above three equations together, we obtain
\begin{align*}
&\left\|\mat{T}\left(\mat{Q}_r\mat{\Sigma}_r^{-1}\mat{Q}_r^\dagger-\left(\frac{1}{\sqrt{n}}\mat{V}_n(\zdom)\left(n\diag(\mudom)\right)^{-1}\frac{1}{\sqrt{n}}\mat{V}_n(\zdom)^\dagger\right)\right)\right\|_2\\
=&~ \left\|\mat{Q}_r\mat{Q}_r^\dagger - \mat{T}\frac{1}{\sqrt{n}}\mat{V}_n(\zdom)\left(n\diag(\mudom)\right)^{-1}\frac{1}{\sqrt{n}}\mat{V}_n(\zdom)^\dagger\right\|_2\\
\leq &~ \left\|\mat{Q}_r\mat{Q}_r^\dagger-\frac{1}{n}\mat{V}_n(\zdom)\mat{V}_n(\zdom)^\dagger\right\|_2 + \mathcal{O}\left(\frac{1}{\mu_r\Delta_{\vec{z}}n}\right)\\
\leq &~ 
\mathcal{O}\left(\frac{1}{\Delta_{\vec{z}} n}\right)+ \mathcal{O}\left(\frac{1}{\mu_r\Delta_{\vec{z}}n}\right)=\mathcal{O}\left(\frac{1}{\mu_r\Delta_{\vec{z}}n}\right)\,.
\end{align*}
Because 
\[
\mathrm{Ker}(\mat{T})=\mathrm{Ker}\left(\mat{Q}_r\mat{\Sigma}_r^{-1}\mat{Q}_r^\dagger\right)=\mathrm{Ker}\left(\frac{1}{\sqrt{n}}\mat{V}_n(\zdom)\left(n\diag(\mudom)\right)^{-1}\frac{1}{\sqrt{n}}\mat{V}_n(\zdom)^\dagger\right)
\]
and
\[
\mathrm{Range}(\mat{T})=\mathrm{Range}\left(\mat{Q}_r\mat{\Sigma}_r^{-1}\mat{Q}_r^\dagger\right)=\mathrm{Range}\left(\frac{1}{\sqrt{n}}\mat{V}_n(\zdom)\left(n\diag(\mudom)\right)^{-1}\frac{1}{\sqrt{n}}\mat{V}_n(\zdom)^\dagger\right)\,,
\]
we obtain
\begin{align*}
&\left\|\mat{Q}_r\mat{\Sigma}_r^{-1}\mat{Q}_r^\dagger-\frac{1}{\sqrt{n}}\mat{V}_n(\zdom)\left(n\diag(\mudom)\right)^{-1}\frac{1}{\sqrt{n}}\mat{V}_n(\zdom)^\dagger\right\|_2\\
\leq &~ \sigma_{\min}(\mat{T})^{-1}\cdot \left\|\mat{T}\left(\mat{Q}_r\mat{\Sigma}_r^{-1}\mat{Q}_r^\dagger-\left(\frac{1}{\sqrt{n}}\mat{V}_n(\zdom)\left(n\diag(\mudom)\right)^{-1}\frac{1}{\sqrt{n}}\mat{V}_n(\zdom)^\dagger\right)\right)\right\|_2\\
=&~ \mathcal{O}\left(\frac{1}{\mu_r^2\Delta_{\vec{z}} n^2}\right)\,,
\end{align*}
where the second step follows from using \cref{cor:moitra_T}. This proves the $k=1$ case of \eqref{eqn:P_v_addition_properties_2}. 

Finally, by induction \cref{eq:induction_0,eq:induction_1,eq:induction_2,eq:induction_3}, we can prove that for any $k\geq 2$, 
\begin{align*}
    &\left\|\mat{P}_v\mat{\Sigma}_r^{-{k}}\mat{P}_v^\dagger-\left(n\diag(\mudom)\right)^{-k}\right\|_2\\
    \leq &~ {\cal O}\left(\left(\frac{1}{\mu_r n}\right)^k\cdot \frac{1}{\Delta_{\vec{z}}n}\right) + \frac{1}{\mu_r n}\cdot \mathcal{O}\left(\frac{1}{\mu_r n}\right)^{k}\frac{1}{\Delta_{\vec{z}}} + {\cal O}\left(\frac{1}{\mu_r n}\right)^{k-1}\cdot \frac{1}{\mu_r^2\Delta_{\vec{z}} n^2}\\
    = &~ \mathcal{O}\left(\frac{1}{\mu_r n}\right)^{k+1}\frac{1}{\Delta_{\vec{z}}}\,.
\end{align*}
This concludes the proof of \eqref{eqn:P_v_addition_properties_2}.
\end{proof}

\begin{claim}\label{clm:bound_V_dom_E1}
\begin{align*}
    \left\|\frac{1}{\sqrt{n}}\mat{V}(\zdom)^\dagger \mat{E}_1\right\|_2=\mathcal{O}\left(\frac{\mu_{\rm tail}}{\Delta_{\vec{z}}}\right)\,.
\end{align*}    
\end{claim}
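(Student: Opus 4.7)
The plan is to expand $\mat{E}_1 = \mat{E}_{\rm tail}$ using its Vandermonde (outer-product) decomposition and then bound each rank-one contribution separately, reusing the key estimate already established in the proof of \cref{prop:tail-error}.

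First, I would write
\[
\mat{E}_1 = \sum_{i=r+1}^d \mu_i\, \vvec_n(z_i) \vvec_n(z_i)^\adj,
\]
so that by the triangle inequality and submultiplicativity,
\[
\left\| \frac{1}{\sqrt{n}} \mat{V}_n(\zdom)^\adj \mat{E}_1 \right\|_2
\le \sum_{i=r+1}^d \mu_i \cdot \frac{1}{\sqrt{n}} \left\| \mat{V}_n(\zdom)^\adj \vvec_n(z_i) \right\|_2 \cdot \left\| \vvec_n(z_i) \right\|_2.
\]
Since $z_i \in \unitcircle$, we have $\|\vvec_n(z_i)\|_2 = \sqrt{n}$, which cancels the prefactor $1/\sqrt{n}$.

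Next, I would control $\|\mat{V}_n(\zdom)^\adj \vvec_n(z_i)\|_2$ for each $i \in \{r+1,\ldots,d\}$ using exactly the argument from the proof of \cref{prop:tail-error}: the vector $\vvec_n(z_i)^\adj \mat{V}_n(\zdom)$ is a row block of the off-diagonal part of the Gram matrix $\mat{V}_n(\vec{w})^\adj \mat{V}_n(\vec{w})$ for the augmented frequency set $\vec{w} = (\zdom, z_i)$, whose frequencies are pairwise separated by at least $\zgap$ (by assumption A). \cref{cor:moitra} (applied via \cref{fac:submatrix_norm}) then gives
\[
\left\| \vvec_n(z_i)^\adj \mat{V}_n(\zdom) \right\|_2 = \order\!\left( \frac{1}{\zgap} \right).
\]

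Combining these pieces yields
\[
\left\| \frac{1}{\sqrt{n}} \mat{V}_n(\zdom)^\adj \mat{E}_1 \right\|_2
\le \sum_{i=r+1}^d \mu_i \cdot \order\!\left( \frac{1}{\zgap} \right)
= \order\!\left( \frac{\mu_{\rm tail}}{\zgap} \right),
\]
using the definition $\mu_{\rm tail} = \sum_{i=r+1}^d \mu_i$. There is no real obstacle here: the claim is essentially a direct corollary of the ingredients already assembled for \cref{prop:tail-error}, with the only difference being that $\|\mat{V}_n(\zdom)^\adj \vvec_n(z_i)\|_2$ is used in place of the quantity $\|\vvec_n(z_i)^\adj \mat{Q}_r\|_2$ that appeared there (which in turn was derived by composing with $\mat{P}_v/\sqrt{n}$). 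The factor $1/\sqrt{n}$ in the claim is exactly what is needed to cancel $\|\vvec_n(z_i)\|_2 = \sqrt{n}$, so no $n$-dependence survives.
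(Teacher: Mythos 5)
Your proof is correct. It differs from the paper's in packaging rather than substance: the paper deduces the claim in two lines by writing $\tfrac{1}{\sqrt{n}}\Vand_n(\zdom) = \mat{Q}_r\mat{P}_v^{-1}$ (\cref{lem:vander-eigen}) and then invoking $\norm{\mat{E}_1\mat{Q}_r}_2 = \order(\mu_{\rm tail}/\zgap)$ from \cref{prop:tail-error} together with $\norm{\smash{\mat{P}_v^{-1}}}_2 = \order(1)$, whereas you stay in the Vandermonde basis throughout and re-derive the core estimate $\norm{\vvec_n(z_i)^\adj\Vand_n(\zdom)}_2 = \order(1/\zgap)$ directly from \cref{cor:moitra} applied to the augmented set $\vec{w}=(\zdom,z_i)$. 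The underlying ingredient (Moitra's bound on the off-diagonal block of the Gram matrix of the augmented, $\zgap$-separated frequency set) is identical in both arguments; your route avoids the detour through $\mat{P}_v$ at the cost of repeating the computation already done inside \cref{prop:tail-error}, and your rank-one norm bookkeeping ($\norm{\vvec_n(z_i)}_2=\sqrt{n}$ cancelling the $1/\sqrt{n}$ prefactor) is exactly right. The only thing worth stating explicitly is that the augmented set is $\zgap$-separated because assumption A separates each tail location from every dominant location (no separation among tail locations is needed, since only one tail location is adjoined at a time) — you gesture at this correctly.
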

\begin{proof} Recall that
\[
\mat{E}_1=\mat{V}_n(\ztail) \cdot \diag(\mutail) \cdot \mat{V}_n(\ztail)^\dagger\,.
\]
By \cref{prop:tail-error}, we have
\begin{align*}
    \|\mat{E}_1\mat{Q}_r\|_2={\cal O}\left(\frac{\mu_{\rm tail}}{\Delta_{\vec{z}}}\right)\,.
\end{align*}
Since $\mat{Q}_r=\frac{1}{\sqrt{n}}\mat{V}_n(\zdom)\mat{P}_v$ and $\|\mat{P}_v^{-1}\|_2={\cal O}(1)$ by \cref{lem:vander-eigen}, we have
\begin{align*}
     \left\|\frac{1}{\sqrt{n}}\mat{V}(\zdom)^\dagger \mat{E}_1\right\|_2 \leq \|\mat{E}_1\mat{Q}_r\|_2 \cdot \|\mat{P}_v^{-1}\|_2 = {\cal O}\left(\frac{\mu_{\rm tail}}{\Delta_{\vec{z}}}\right)\,.
\end{align*}
The claim then follows.
\end{proof}

\section{Lower bound for spectral estimation}\label{sec:lower_bound}
In this section, we prove a sample complexity lower bound for estimating the locations of the point sources in the spectral estimation problem. Information-theoretic lower bounds for spectral estimation have been extensively studied in prior works \cite{sn89,sn90,don92,lee92,moitra_super-resolution_2015,dn15,ps15,kp16,9000636,bdg20,bgy21,ll21,li2022stability}. In particular, \cite{sn89} proved a Cram\'er-Rao lower bound for estimation the location of a single point-source with i.i.d.\ Gaussian noise, showing that for any \emph{unbiased estimator} using $n$ samples, the variance is at least $1/n^3$. And \cite{ll21} proved a min-max lower bound in terms of the ``super-resolution factor'' $1/(\Delta_{\vec{z}}n)$ with \emph{arbitrary} noise.
The settings of these lower bounds are somewhat different from ours and the proofs are complicated.
Below, we give a simple construction showing, that in our noisy super-resolution setting, the location estimation error is $\Omega(n^{-1.5})$ and the intensity estimation error is $\Omega(n^{-0.5})$.

\begin{theorem}[Spectral estimation lower bound, {\normalfont formal version of \cref{thm:lower_bound}}]\label{thm:lower_bound_formal}
Let $\{g_j\}_{j=1}^n$ be defined in \eqref{eq:noisy-signal}, and $\{E_j\}_{j=1}^n\sim {\cal N}(0,1)$ be i.i.d.\ standard Gaussians. Then, 
\begin{itemize}
    \item \textbf{Location estimation.} For any $\eta>0$, one cannot estimate the locations $\{z_i\}_{i=1}^r$ up to $\order(n^{-1.5-\eta})$ matching distance error for all $i=1,\ldots,r$ with constant success probability.
    \item \textbf{Intensity estimation.} For any $\eta>0$, one cannot estimate the intensities $\{\mu_i\}_{i=1}^r$ within $\order(n^{-0.5-\eta})$ matching distance error for all $i=1,\ldots,r$ with constant success probability.
\end{itemize}
\end{theorem}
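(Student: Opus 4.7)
The plan is to use the classical two-point (Le Cam) method: construct, for each $\eta>0$, a pair of spectral measures whose parameters of interest differ by a prescribed amount but whose induced measurement distributions are asymptotically indistinguishable in total variation. Because the noise $\{E_j\}$ is i.i.d.\ standard Gaussian, the measurement vector under a signal $\vec{g}=(g_0,\dots,g_{n-1})$ is distributed as $\mathcal{N}(\vec{g},\mat{I}_n)$, so the total variation distance between two instances with mean signals $\vec{g}^{(1)},\vec{g}^{(2)}$ satisfies the standard bound
\begin{equation*}
d_{\mathrm{TV}}\bigl(\mathcal{N}(\vec{g}^{(1)},\mat{I}_n),\mathcal{N}(\vec{g}^{(2)},\mat{I}_n)\bigr)\;\le\;\tfrac{1}{2}\,\bigl\|\vec{g}^{(1)}-\vec{g}^{(2)}\bigr\|_2,
\end{equation*}
which is a direct consequence of Pinsker's inequality and the explicit KL divergence between Gaussians of equal covariance. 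This reduces both claims to a direct estimate of a Euclidean signal difference.

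For the location lower bound, I would take $r=1$, $\mu_1=1$, and consider the two instances $z_1^{(1)}=1$ and $z_1^{(2)}=\exp(2\pi\iu\epsilon)$ with $\epsilon=c\,n^{-3/2-\eta}$ for a small absolute constant $c>0$. A first-order expansion of $z\mapsto z^j$ together with the uniform bound $|e^{\iu j\theta}-1|\le|j\theta|$ yields
\begin{equation*}
\bigl\|\vec{g}^{(1)}-\vec{g}^{(2)}\bigr\|_2^2 \;=\; \sum_{j=0}^{n-1}\bigl|1-e^{2\pi\iu j\epsilon}\bigr|^2 \;\le\; (2\pi\epsilon)^2\sum_{j=0}^{n-1}j^2 \;=\; O(n^3\epsilon^2) \;=\; O(n^{-2\eta}).
\end{equation*}
Hence the TV distance is $O(n^{-\eta})$, which tends to zero. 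By Le Cam's two-point inequality, any estimator $\hat{\vec{z}}$ must satisfy $\mds(\hat{\vec{z}},\vec{z}_{\mathrm{dom}})\ge|z_1^{(1)}-z_1^{(2)}|/2=\Theta(n^{-3/2-\eta})$ with probability at least $(1-d_{\mathrm{TV}})/2$ on at least one of the two instances, ruling out the claimed rate with constant success probability.

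For the intensity lower bound, I would again take $r=1$, fix $z_1=1$, and compare $\mu_1^{(1)}=\tfrac{1}{2}$ with $\mu_1^{(2)}=\tfrac{1}{2}+\epsilon$ for $\epsilon=c\,n^{-1/2-\eta}$. Here the signal difference is exactly $\vec{g}^{(1)}-\vec{g}^{(2)}=-\epsilon\,\vec{1}_n$, so $\|\vec{g}^{(1)}-\vec{g}^{(2)}\|_2=\epsilon\sqrt{n}=O(n^{-\eta})$, giving $d_{\mathrm{TV}}=O(n^{-\eta})\to 0$, and Le Cam again yields the lower bound $\Omega(n^{-1/2-\eta})$ on the intensity matching distance. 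The only mild technical points that I anticipate needing care are (i) verifying that both instances satisfy assumptions A--C (separation is trivial since $r=1$, and one can add an arbitrarily small dummy non-dominant component if a multi-point configuration is required to nominally satisfy ordering), and (ii) stating Le Cam cleanly in the matching-distance metric, for which the standard argument via the testing reduction $2\inf_{\hat{T}}\max_{a=1,2}\mathbb{P}_a(\hat{T}\ne a)\ge 1-d_{\mathrm{TV}}$ applies verbatim once the two parameters are $2\Delta$-separated in the matching distance. There is no genuine obstacle; the main subtlety is just to ensure the constants and the instance construction respect the hypotheses of the problem statement.
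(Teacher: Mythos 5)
Your proposal is correct and follows essentially the same route as the paper: a two-point construction with $r=1$, differing in location by $\Theta(n^{-3/2-\eta})$ or in intensity by $\Theta(n^{-1/2-\eta})$, followed by the bound $d_{\mathrm{TV}}(\mathcal{N}(\vec{g}^{(1)},\mat{I}_n),\mathcal{N}(\vec{g}^{(2)},\mat{I}_n))\lesssim\|\vec{g}^{(1)}-\vec{g}^{(2)}\|_2$ and the estimate $\sum_j|e^{\iu j\theta}-1|^2=O(n^3\theta^2)$. The only cosmetic differences are that you invoke Pinsker while the paper cites a direct TV bound for Gaussians, and you spell out the Le Cam testing reduction that the paper leaves implicit.
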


To prove this theorem, we will employ the following lemma \cite[Thm.~1.8]{aal23}:

\begin{lemma}[TV distance of standard Gaussian distributions]\label{lem:gaussian_tv}
Let $\vec{\mu}_1,\vec{\mu}_2\in \C^n$. Then, 
\begin{align*}
    d_{\rm TV}\left({\cal N}(\vec{\mu}_1, \sigma^2 \Id_n), {\cal N}(\vec{\mu}_2, \sigma^2 \Id_n)\right)\leq \|\vec{\mu}_1-\vec{\mu}_2\|_2\,.
\end{align*}
\end{lemma}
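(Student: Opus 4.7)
The plan is to prove this via Pinsker's inequality combined with the explicit closed-form expression for the Kullback--Leibler divergence between two Gaussians with a shared covariance. First, I would recall Pinsker's inequality, which states that for any two probability distributions $P$ and $Q$ on the same measurable space, $d_{\rm TV}(P,Q) \leq \sqrt{\tfrac{1}{2}\,\mathrm{KL}(P\|Q)}$. This reduces the problem to bounding a KL divergence, which is much easier to compute than TV for Gaussians because it has a closed form.

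Next, I would compute the KL divergence between the two Gaussians. Since both have the same covariance $\sigma^2 \Id_n$, the log-density cancellation leaves only a quadratic form in $\vec{\mu}_1 - \vec{\mu}_2$. In the real case one obtains $\mathrm{KL}({\cal N}(\vec{\mu}_1,\sigma^2\Id_n)\|{\cal N}(\vec{\mu}_2,\sigma^2\Id_n)) = \|\vec{\mu}_1-\vec{\mu}_2\|_2^2/(2\sigma^2)$, and for complex-valued means $\vec{\mu}_1,\vec{\mu}_2 \in \C^n$ the same identity holds (up to a convention-dependent constant) by identifying $\C^n$ with $\R^{2n}$ and noting that the squared Euclidean norm is preserved under this identification. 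Plugging into Pinsker gives $d_{\rm TV} \leq \|\vec{\mu}_1-\vec{\mu}_2\|_2/(2\sigma)$, which is strictly stronger than the claimed bound whenever $\sigma \geq 1/2$; for the ``standard'' Gaussian case $\sigma=1$ used in the application, this immediately implies the stated inequality (with room to spare).

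An alternative, more direct route avoids Pinsker entirely. By translation invariance of Lebesgue measure, one reduces to $\vec{\mu}_2 = 0$, and by rotational invariance of the isotropic Gaussian one can further align $\vec{\mu}_1$ with a coordinate axis. The problem collapses to the one-dimensional computation of $d_{\rm TV}({\cal N}(t,\sigma^2), {\cal N}(0,\sigma^2))$ with $t = \|\vec{\mu}_1-\vec{\mu}_2\|_2$, which can be evaluated in closed form using the standard normal CDF as $2\Phi(t/(2\sigma))-1$. Using the elementary bound $2\Phi(x)-1 \leq x\sqrt{2/\pi} \leq x$, one recovers $d_{\rm TV} \leq t/\sigma$, which again yields the claim when $\sigma = 1$.

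The main conceptual point (and the only subtlety) will be handling the complex-valued setting correctly: one must fix a convention (e.g., identifying ${\cal N}(\vec{\mu},\sigma^2\Id_n)$ on $\C^n$ with the Gaussian on $\R^{2n}$ whose covariance is $\sigma^2 \Id_{2n}$ rather than $\tfrac{\sigma^2}{2}\Id_{2n}$) and verify that the norm $\|\vec{\mu}_1-\vec{\mu}_2\|_2$ appearing in the conclusion is the same under this identification. Beyond this bookkeeping, no serious obstacle is expected; both the Pinsker route and the direct route produce the inequality with constants to spare.
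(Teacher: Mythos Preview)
The paper does not prove this lemma at all; it simply quotes it from an external reference (Theorem~1.8 of \cite{aal23}) and uses it as a black box in the proof of \cref{thm:lower_bound_formal} with $\sigma=1$. Your proposal therefore goes beyond what the paper does, and both of your routes (Pinsker plus the closed-form KL, or reduction by translation/rotation invariance to a one-dimensional TV computation) are standard, correct, and self-contained.

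One point you already flagged deserves emphasis: as literally stated, with a free parameter $\sigma$, the inequality cannot hold uniformly (take $\sigma\to 0$ with $\vec{\mu}_1\ne\vec{\mu}_2$ and the TV distance tends to $1$ while the right-hand side stays fixed). Your Pinsker argument yields the sharper bound $d_{\rm TV}\le \|\vec{\mu}_1-\vec{\mu}_2\|_2/(2\sigma)$, which is the correct $\sigma$-dependent statement and specializes to the claimed bound precisely when $\sigma\ge 1/2$. Since the only use in the paper is the standard case $\sigma=1$, this is harmless there, but your write-up should state the bound with the $1/(2\sigma)$ factor and then specialize, rather than reproduce the lemma verbatim.
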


\begin{proof}[Proof of \cref{thm:lower_bound_formal}]
We first show lower bound for location estimation.
Let $\epsilon, \eta \in (0,1)$. Suppose there exists an algorithm that can estimate all the locations within $\epsilon={\cal O}(n^{-1.5-\eta})$ error using $n$ noisy measurements with constant success probability.
Consider two spectral densities with parameters $(\mu_1=1, z_1 = e^{2i\epsilon})$ and $(\mu_1'=1, z_1'=1)$, respectively. Since
\begin{align*}
    |z_1 - z'_1| = |e^{2i\epsilon} - 1|=2\sin(\epsilon)>\epsilon\,,
\end{align*}
the algorithm can distinguish these two spectral densities with constant probability. On the other hand, it is easy to see that the $n$ measurement outcomes for each spectral density are sampled from an $n$-dimensional Gaussian distribution ${\cal N}(\vec{g}, \Id_n)$ (or ${\cal N}(\vec{g}', \Id_n)$), where
\begin{align*}
    \vec{g}_j = e^{2i\epsilon j}\quad \text{and}\quad \vec{g}'_j=1 \for j = 0,\ldots,n-1.
\end{align*}
Then, we have
\begin{align*}
    \|\vec{g}-\vec{g}'\|_2^2=\sum_{j=1}^n |e^{2ij\epsilon}-1|^2=\sum_{j=1}^n 4|\sin(j\epsilon)|^2\leq \sum_{j=1}^n 4(j\epsilon)^2 = {\cal O}(n^3 \epsilon^2)= {\cal O}(n^{-2\eta})\,.
\end{align*}
Thus, \cref{lem:gaussian_tv}, we conclude
\begin{align*}
    d_{\rm TV}\left({\cal N}(\vec{g}, \Id_n), {\cal N}(\vec{g}', \Id_n)\right)\leq \|\vec{g}-\vec{g}'\|_2 = {\cal O}(n^{-2\eta})\,.
\end{align*}
Therefore, we can only distinguish the two spectral densities with probability ${\cal O}(n^{-2\eta}) = o(1)$, contradicting the assumption that the algorithm succeeds with constant probability.

Next, we prove the lower bound for intensity estimation.
Let $\epsilon, \eta \in (0,1)$.
Suppose there exists an algorithm that can estimate all the locations within $\epsilon={\cal O}(n^{-0.5-\eta})$ error using $n$ noisy measurements with constant success probability.
Consider the spectral densities with parameters $(\mu_1=1, z_1 = 1)$ and $(\mu_1'=1+2\epsilon, z_1'=1)$, respectively.
Then, the algorithm can distinguish them with constant probability since $|\mu_1 - \mu_1'|>\epsilon$.
As above, the $n$ measurement outcomes are samples from ${\cal N}(\vec{1}, \Id_n)$ (or ${\cal N}((1+2\epsilon)\vec{1}, \Id_n)$). By \cref{lem:gaussian_tv}, their TV distance is at most 
\begin{align*}
    d_{\rm TV}\left({\cal N}(\vec{1}, \Id_n), {\cal N}((1+2\epsilon)\vec{1}, \Id_n)\right)\leq \|\vec{1}-(1+2\epsilon)\vec{1}\|_2={\cal O}(n^{0.5}\epsilon) = {\cal O}(n^{-\eta}),
\end{align*}
again giving a contradiction.
\end{proof}

\clearpage
\phantomsection
\addcontentsline{toc}{section}{References}
\bibliographystyle{alpha}
\bibliography{ref,more_refs}

\newcommand{\etalchar}[1]{$^{#1}$}
\begin{thebibliography}{HIKP12b}

\bibitem[AAL23]{aal23}
Jamil Arbas, Hassan Ashtiani, and Christopher Liaw.
\newblock Polynomial time and private learning of unbounded gaussian mixture
  models.
\newblock In {\em International Conference on Machine Learning}, pages
  1018--1040. PMLR, 2023.

\bibitem[BCG{\etalchar{+}}12]{Boufou_2012}
Petros Boufounos, Volkan Cevher, Anna~C. Gilbert, Yi~Li, and Martin~J. Strauss.
\newblock What's the frequency, kenneth?: Sublinear fourier sampling off the
  grid.
\newblock In {\em Approximation, Randomization, and Combinatorial Optimization.
  Algorithms and Techniques}, pages 61--72. Springer Berlin Heidelberg, 2012.

\bibitem[BDGY20]{bdg20}
Dmitry Batenkov, Laurent Demanet, Gil Goldman, and Yosef Yomdin.
\newblock Conditioning of partial nonuniform fourier matrices with clustered
  nodes.
\newblock {\em SIAM Journal on Matrix Analysis and Applications},
  41(1):199--220, 2020.

\bibitem[BGY21]{bgy21}
Dmitry Batenkov, Gil Goldman, and Yosef Yomdin.
\newblock Super-resolution of near-colliding point sources.
\newblock {\em Information and Inference: A Journal of the IMA},
  10(2):515--572, 2021.

\bibitem[CFG13]{cf13}
Emmanuel~J Cand{\`e}s and Carlos Fernandez-Granda.
\newblock Super-resolution from noisy data.
\newblock {\em Journal of Fourier Analysis and Applications}, 19:1229--1254,
  2013.

\bibitem[CFG14]{cf14}
Emmanuel~J Cand{\`e}s and Carlos Fernandez-Granda.
\newblock Towards a mathematical theory of super-resolution.
\newblock {\em Communications on pure and applied Mathematics}, 67(6):906--956,
  2014.

\bibitem[CKPS16]{ckps16}
Xue Chen, Daniel~M. Kane, Eric Price, and Zhao Song.
\newblock Fourier-{Sparse} {Interpolation} without a {Frequency} {Gap}.
\newblock In {\em 2016 {IEEE} 57th {Annual} {Symposium} on {Foundations} of
  {Computer} {Science} ({FOCS})}, pages 741--750, October 2016.
\newblock ISSN: 0272-5428.

\bibitem[CM21]{cm21}
Sitan Chen and Ankur Moitra.
\newblock Algorithmic foundations for the diffraction limit.
\newblock In {\em Proceedings of the 53rd Annual ACM SIGACT Symposium on Theory
  of Computing}, pages 490--503, 2021.

\bibitem[CRT06]{crt06}
Emmanuel~J Cand{\`e}s, Justin Romberg, and Terence Tao.
\newblock Robust uncertainty principles: Exact signal reconstruction from
  highly incomplete frequency information.
\newblock {\em IEEE Transactions on information theory}, 52(2):489--509, 2006.

\bibitem[DK70]{davis_rotation_1970}
Chandler Davis and W.~M. Kahan.
\newblock The {Rotation} of {Eigenvectors} by a {Perturbation}. {III}.
\newblock {\em SIAM Journal on Numerical Analysis}, 7(1):1--46, March 1970.
\newblock Publisher: Society for Industrial and Applied Mathematics.

\bibitem[DL23]{Ding2023simultaneous}
Zhiyan Ding and Lin Lin.
\newblock Simultaneous estimation of multiple eigenvalues with short-depth
  quantum circuit on early fault-tolerant quantum computers.
\newblock {\em {Quantum}}, 7:1136, October 2023.

\bibitem[DN15]{dn15}
Laurent Demanet and Nam Nguyen.
\newblock The recoverability limit for superresolution via sparsity.
\newblock {\em arXiv preprint arXiv:1502.01385}, 2015.

\bibitem[Don92]{don92}
David~L Donoho.
\newblock Superresolution via sparsity constraints.
\newblock {\em SIAM journal on mathematical analysis}, 23(5):1309--1331, 1992.

\bibitem[DTO22]{Dutkiewicz2022heisenberglimited}
Alicja Dutkiewicz, Barbara~M. Terhal, and Thomas~E. O'Brien.
\newblock {Heisenberg}-limited quantum phase estimation of multiple eigenvalues
  with few control qubits.
\newblock {\em {Quantum}}, 6:830, 2022.

\bibitem[Fan10]{Fannjiang_2010}
Albert~C Fannjiang.
\newblock Compressive inverse scattering: I. high-frequency simo/miso and mimo
  measurements.
\newblock {\em Inverse Problems}, 26(3):035008, 2010.

\bibitem[FL23]{Fan2022EfficientAF}
Zhiyuan Fan and Jian Li.
\newblock Efficient algorithms for sparse moment problems without separation.
\newblock In {\em The Thirty Sixth Annual Conference on Learning Theory}, pages
  3510--3565. PMLR, 2023.

\bibitem[FSY10]{Fannjiang_remote}
Albert~C. Fannjiang, Thomas Strohmer, and Pengchong Yan.
\newblock Compressed remote sensing of sparse objects.
\newblock {\em SIAM Journal on Imaging Sciences}, 3(3):595--618, 2010.

\bibitem[GHI{\etalchar{+}}13]{ghikps13}
Badih Ghazi, Haitham Hassanieh, Piotr Indyk, Dina Katabi, Eric Price, and Lixin
  Shi.
\newblock Sample-optimal average-case sparse fourier transform in two
  dimensions.
\newblock In {\em 2013 51st Annual Allerton Conference on Communication,
  Control, and Computing (Allerton)}, pages 1258--1265. IEEE, 2013.

\bibitem[Gil22]{Micheal_2022}
Michael Gil.
\newblock On matching distance between eigenvalues of unbounded operators.
\newblock {\em Constructive Mathematical Analysis}, 5:46--53, 03 2022.

\bibitem[GNHB08]{gnh08}
Thomas~A Gallagher, Alexander~J Nemeth, and Lotfi Hacein-Bey.
\newblock An introduction to the fourier transform: relationship to mri.
\newblock {\em American journal of roentgenology}, 190(5):1396--1405, 2008.

\bibitem[Gre66]{gre66}
Thomas Nall~Eden Greville.
\newblock Note on the generalized inverse of a matrix product.
\newblock {\em Siam Review}, 8(4):518--521, 1966.

\bibitem[Han87]{Han87}
Per~Christian Hansen.
\newblock The truncated {{SVD}} as a method for regularization.
\newblock {\em BIT Numerical Mathematics}, 27:534--553, 1987.

\bibitem[HBH{\etalchar{+}}18]{hbh18}
Jae~Hyun Han, Kang~Min Bae, Seong~Kwang Hong, Hyunsin Park, Jun-Hyuk Kwak,
  Hee~Seung Wang, Daniel~Juhyung Joe, Jung~Hwan Park, Young~Hoon Jung, Shin
  Hur, et~al.
\newblock Machine learning-based self-powered acoustic sensor for speaker
  recognition.
\newblock {\em Nano Energy}, 53:658--665, 2018.

\bibitem[HIKP12a]{hikp12a}
Haitham Hassanieh, Piotr Indyk, Dina Katabi, and Eric Price.
\newblock Nearly optimal sparse fourier transform.
\newblock In {\em Proceedings of the forty-fourth annual ACM symposium on
  Theory of computing}, pages 563--578, 2012.

\bibitem[HIKP12b]{hikp12_soda}
Haitham Hassanieh, Piotr Indyk, Dina Katabi, and Eric Price.
\newblock Simple and practical algorithm for sparse {F}ourier transform.
\newblock In {\em Proceedings of the twenty-third annual ACM-SIAM symposium on
  Discrete Algorithms (SODA)}, pages 1183--1194. SIAM, 2012.

\bibitem[HJ12]{horn_matrix_2012}
Roger~A. Horn and Charles~R. Johnson.
\newblock {\em Matrix {Analysis}}.
\newblock Cambridge University Press, October 2012.
\newblock Google-Books-ID: O7sgAwAAQBAJ.

\bibitem[HK15]{hk15}
Qingqing Huang and Sham~M Kakade.
\newblock Super-resolution off the grid.
\newblock {\em Advances in Neural Information Processing Systems}, 28, 2015.

\bibitem[HS90]{hua_matrix_1990}
Y.~Hua and T.K. Sarkar.
\newblock Matrix pencil method for estimating parameters of exponentially
  damped/undamped sinusoids in noise.
\newblock {\em IEEE Transactions on Acoustics, Speech, and Signal Processing},
  38(5):814--824, May 1990.

\bibitem[IK14]{ik14}
Piotr Indyk and Michael Kapralov.
\newblock Sample-optimal {F}ourier sampling in any constant dimension.
\newblock In {\em IEEE 55th Annual Symposium onFoundations of Computer Science
  (FOCS)}, pages 514--523. IEEE, 2014.

\bibitem[IKP14]{ikp14}
Piotr Indyk, Michael Kapralov, and Eric Price.
\newblock (nearly) sample-optimal sparse fourier transform.
\newblock In {\em Proceedings of the twenty-fifth annual ACM-SIAM symposium on
  Discrete algorithms}, pages 480--499. SIAM, 2014.

\bibitem[JLS23]{jls23}
Yaonan Jin, Daogao Liu, and Zhao Song.
\newblock Super-resolution and robust sparse continuous fourier transform in
  any constant dimension: Nearly linear time and sample complexity.
\newblock In {\em Proceedings of the 2023 Annual ACM-SIAM Symposium on Discrete
  Algorithms (SODA)}, pages 4667--4767. SIAM, 2023.

\bibitem[JNG{\etalchar{+}}19]{jng19}
Chi Jin, Praneeth Netrapalli, Rong Ge, Sham~M Kakade, and Michael~I Jordan.
\newblock A short note on concentration inequalities for random vectors with
  subgaussian norm.
\newblock {\em arXiv preprint arXiv:1902.03736}, 2019.

\bibitem[Kap16]{kap16}
Michael Kapralov.
\newblock Sparse fourier transform in any constant dimension with
  nearly-optimal sample complexity in sublinear time.
\newblock In {\em Proceedings of the forty-eighth annual ACM symposium on
  Theory of Computing}, pages 264--277, 2016.

\bibitem[Kap17]{kap17}
Michael Kapralov.
\newblock Sample efficient estimation and recovery in sparse {FFT} via
  isolation on average.
\newblock In Chris Umans, editor, {\em 58th {IEEE} Annual Symposium on
  Foundations of Computer Science, {FOCS} 2017, Berkeley, CA, USA, October
  15-17, 2017}, pages 651--662. {IEEE} Computer Society, 2017.

\bibitem[Kat12]{Tos_1980}
Tosio Kato.
\newblock {\em Perturbation Theory for Linear Operators}.
\newblock Springer Berlin, Heidelberg, 2012.

\bibitem[KP16]{kp16}
Ali Koochakzadeh and Piya Pal.
\newblock Cram{\'e}r--rao bounds for underdetermined source localization.
\newblock {\em IEEE Signal Processing Letters}, 23(7):919--923, 2016.

\bibitem[KV96]{Krim_1996}
H.~Krim and M.~Viberg.
\newblock Two decades of array signal processing research: the parametric
  approach.
\newblock {\em IEEE Signal Processing Magazine}, 13(4):67--94, 1996.

\bibitem[KVZ19]{kvz19}
Michael Kapralov, Ameya Velingker, and Amir Zandieh.
\newblock Dimension-independent sparse {F}ourier transform.
\newblock In {\em Proceedings of the Thirtieth Annual ACM-SIAM Symposium on
  Discrete Algorithms}, pages 2709--2728. SIAM, 2019.

\bibitem[Lee92]{lee92}
Harry~B Lee.
\newblock The cram{\'e}r-rao bound on frequency estimates of signals closely
  spaced in frequency.
\newblock {\em IEEE Transactions on Signal Processing}, 40(6):1507--1517, 1992.

\bibitem[LF16]{lf16}
Wenjing Liao and Albert Fannjiang.
\newblock Music for single-snapshot spectral estimation: Stability and
  super-resolution.
\newblock {\em Applied and Computational Harmonic Analysis}, 40(1):33--67,
  2016.

\bibitem[LL21]{ll21}
Weilin Li and Wenjing Liao.
\newblock Stable super-resolution limit and smallest singular value of
  restricted fourier matrices.
\newblock {\em Applied and Computational Harmonic Analysis}, 51:118--156, 2021.

\bibitem[LLF20]{9000636}
Weilin Li, Wenjing Liao, and Albert Fannjiang.
\newblock Super-resolution limit of the esprit algorithm.
\newblock {\em IEEE Transactions on Information Theory}, 66(7):4593--4608,
  2020.

\bibitem[LNY23]{PhysRevA.108.062408}
Haoya Li, Hongkang Ni, and Lexing Ying.
\newblock Adaptive low-depth quantum algorithms for robust multiple-phase
  estimation.
\newblock {\em Phys. Rev. A}, 108:062408, Dec 2023.

\bibitem[LRSS15]{Li_2015}
Jian Li, Yuval Rabani, Leonard~J. Schulman, and Chaitanya Swamy.
\newblock Learning arbitrary statistical mixtures of discrete distributions.
\newblock In {\em Proceedings of the Forty-Seventh Annual ACM Symposium on
  Theory of Computing}, STOC '15, page 743–752, 2015.

\bibitem[LT22]{LinTong2022}
Lin Lin and Yu~Tong.
\newblock {Heisenberg}-limited ground state energy estimation for early
  fault-tolerant quantum computers.
\newblock {\em PRX Quantum}, 3:010318, 2022.

\bibitem[LZGL22]{li2022stability}
Weilin Li, Zengying Zhu, Weiguo Gao, and Wenjing Liao.
\newblock Stability and super-resolution of music and esprit for multi-snapshot
  spectral estimation.
\newblock {\em IEEE Transactions on Signal Processing}, 70:4555--4570, 2022.

\bibitem[Mac15]{MR3443860}
I.~G. Macdonald.
\newblock {\em Symmetric functions and {H}all polynomials}.
\newblock Oxford Classic Texts in the Physical Sciences. The Clarendon Press,
  Oxford University Press, New York, second edition, 2015.
\newblock With contribution by A. V. Zelevinsky and a foreword by Richard
  Stanley, Reprint of the 2008 paperback edition [ MR1354144].

\bibitem[Mec07]{Meckes2007OnTS}
Mark~W. Meckes.
\newblock On the spectral norm of a random toeplitz matrix.
\newblock {\em Electronic Communications in Probability}, 12:315--325, 2007.

\bibitem[Moi15]{moitra_super-resolution_2015}
Ankur Moitra.
\newblock Super-resolution, {Extremal} {Functions} and the {Condition} {Number}
  of {Vandermonde} {Matrices}.
\newblock In {\em Proceedings of the forty-seventh annual {ACM} symposium on
  {Theory} of {Computing}}, {STOC} '15, pages 821--830, New York, NY, USA, June
  2015. Association for Computing Machinery.

\bibitem[MS24]{ms24}
Cameron Musco and Kshiteej Sheth.
\newblock Sublinear time low-rank approximation of toeplitz matrices.
\newblock In {\em Proceedings of the 2024 Annual ACM-SIAM Symposium on Discrete
  Algorithms (SODA)}, pages 5084--5117. SIAM, 2024.

\bibitem[NSW19]{nsw19}
Vasileios Nakos, Zhao Song, and Zhengyu Wang.
\newblock (nearly) sample-optimal sparse fourier transform in any dimension;
  ripless and filterless.
\newblock In {\em 2019 IEEE 60th Annual Symposium on Foundations of Computer
  Science (FOCS)}, pages 1568--1577. IEEE, 2019.

\bibitem[PGDB95]{Prony_first}
C.~Prony, F.M. Gaspard, and R.~De~Baron.
\newblock Essai experimental et analytique.
\newblock {\em Journal de l\'ecole Polytechnique de Paris}, 1795.

\bibitem[PRK86]{paulraj_1986_subspace}
A.~Paulraj, R.~Roy, and T.~Kailath.
\newblock A subspace rotation approach to signal parameter estimation.
\newblock {\em Proceedings of the IEEE}, 74(7):1044--1046, 1986.

\bibitem[PS15]{ps15}
Eric Price and Zhao Song.
\newblock A robust sparse fourier transform in the continuous setting.
\newblock In {\em 2015 IEEE 56th Annual Symposium on Foundations of Computer
  Science}, pages 583--600. IEEE, 2015.

\bibitem[QGR{\etalchar{+}}22]{qgr22}
Mingda Qiao, Guru Guruganesh, Ankit Rawat, Kumar~Avinava Dubey, and Manzil
  Zaheer.
\newblock A fourier approach to mixture learning.
\newblock {\em Advances in Neural Information Processing Systems},
  35:20850--20861, 2022.

\bibitem[RK89]{roy_esprit-estimation_1989}
R.~Roy and T.~Kailath.
\newblock {ESPRIT}-estimation of signal parameters via rotational invariance
  techniques.
\newblock {\em IEEE Transactions on Acoustics, Speech, and Signal Processing},
  37(7):984--995, July 1989.

\bibitem[Sau18]{sauer_pronys_2018}
Tomas Sauer.
\newblock Prony’s method: an old trick for new problems.
\newblock {\em Snapshots of modern mathematics from Oberwolfach;2018}, page~04,
  2018.
\newblock Medium: PDF Publisher: Mathematisches Forschungsinstitut Oberwolfach.

\bibitem[SCGM96]{scg96}
Michael~W Senko, Jesse~D Canterbury, Shenheng Guan, and Alan~G Marshall.
\newblock A high-performance modular data system for fourier transform ion
  cyclotron resonance mass spectrometry.
\newblock {\em Rapid Communications in Mass Spectrometry}, 10(14):1839--1844,
  1996.

\bibitem[Sch86]{Schmidt_1986}
R.~Schmidt.
\newblock Multiple emitter location and signal parameter estimation.
\newblock {\em IEEE Transactions on Antennas and Propagation}, 34(3):276--280,
  1986.

\bibitem[SCS{\etalchar{+}}23]{ShenCampsSzaszEtAl2023}
Yizhi Shen, Daan Camps, Aaron Szasz, Siva Darbha, Katherine Klymko, David~B
  Williams-Young, Norm~M Tubman, and Roel Van~Beeumen.
\newblock Estimating eigenenergies from quantum dynamics: A unified
  noise-resilient measurement-driven approach.
\newblock In {\em 2023 IEEE International Conference on Quantum Computing and
  Engineering (QCE)}, volume~2, pages 302--303. IEEE, 2023.

\bibitem[SHT22]{Stroeks_2022}
M~E Stroeks, J~Helsen, and B~M Terhal.
\newblock Spectral estimation for hamiltonians: a comparison between classical
  imaginary-time evolution and quantum real-time evolution.
\newblock {\em New Journal of Physics}, 24(10):103024, 2022.

\bibitem[SM08]{Cryer2008}
Petre Stoica and Randolph~L. Moses.
\newblock {\em Introduction To Spectral Analysis}, pages 319--350.
\newblock Springer New York, 2008.

\bibitem[SN89]{sn89}
Petre Stoica and Arye Nehorai.
\newblock Music, maximum likelihood, and cramer-rao bound.
\newblock {\em IEEE Transactions on Acoustics, speech, and signal processing},
  37(5):720--741, 1989.

\bibitem[SN90]{sn90}
Petre Stoica and Arye Nehorai.
\newblock Music, maximum likelihood, and cramer-rao bound: further results and
  comparisons.
\newblock {\em IEEE Transactions on Acoustics, Speech, and Signal Processing},
  38(12):2140--2150, 1990.

\bibitem[Som19]{Somma2019}
Rolando~D Somma.
\newblock Quantum eigenvalue estimation via time series analysis.
\newblock {\em New J. Phys.}, 21(12):123025, 2019.

\bibitem[SS90]{SS90}
G.~W. Stewart and Ji-guang Sun.
\newblock {\em Matrix Perturbation Theory}.
\newblock Computer Science and Scientific Computing. Academic Press, 1st
  edition edition, 1990.

\bibitem[SS10]{stein2010complex}
Elias~M Stein and Rami Shakarchi.
\newblock {\em Complex analysis}, volume~2.
\newblock Princeton University Press, 2010.

\bibitem[SSF90]{ssf90}
KP~Schwarz, MG~Sideris, and R~Forsberg.
\newblock The use of fft techniques in physical geodesy.
\newblock {\em Geophysical Journal International}, 100(3):485--514, 1990.

\bibitem[SSWZ23]{sswz23}
Zhao Song, Baocheng Sun, Omri Weinstein, and Ruizhe Zhang.
\newblock Quartic samples suffice for fourier interpolation.
\newblock In {\em 2023 IEEE 64th Annual Symposium on Foundations of Computer
  Science (FOCS)}, pages 1414--1425. IEEE, 2023.

\bibitem[TBSR13]{6576276}
Gongguo Tang, Badri~Narayan Bhaskar, Parikshit Shah, and Benjamin Recht.
\newblock Compressed sensing off the grid.
\newblock {\em IEEE Transactions on Information Theory}, 59(11):7465--7490,
  2013.

\bibitem[Tro15]{Tro15}
Joel~A Tropp.
\newblock An introduction to matrix concentration inequalities.
\newblock {\em Foundations and Trends in Machine Learning}, 8(1-2):1--230,
  2015.
\newblock Publisher: Now Publishers.

\bibitem[WBC22]{WanBertaCampbell2022}
Kianna Wan, Mario Berta, and Earl~T Campbell.
\newblock Randomized quantum algorithm for statistical phase estimation.
\newblock {\em Phys. Rev. Lett.}, 129(3):030503, 2022.

\bibitem[WFZ{\etalchar{+}}23]{wfz22}
Guoming Wang, Daniel~Stilck Fran{\c{c}}a, Ruizhe Zhang, Shuchen Zhu, and
  Peter~D Johnson.
\newblock Quantum algorithm for ground state energy estimation using circuit
  depth with exponentially improved dependence on precision.
\newblock {\em Quantum}, 7:1167, 2023.

\bibitem[WY20]{Yihong_2020}
Yihong Wu and Pengkun Yang.
\newblock {Optimal estimation of Gaussian mixtures via denoised method of
  moments}.
\newblock {\em The Annals of Statistics}, 48(4):1981 -- 2007, 2020.

\end{thebibliography}
\end{document}